\makeatletter
\@ifundefined{ifdebugmode}{\let\ifdebugmode\iffalse}{}%
\@ifundefined{ifblind}{\let\ifblind\iftrue}{}%
\@ifundefined{ifconfstuffs}{\let\ifconfstuffs\iftrue}{}%
\@ifundefined{ifappendix}{\let\ifappendix\iftrue}{}%
\makeatother

\let\ifdebug\iffalse\let\ifblind\iffalse\let\ifconfstuffs\iffalse\let\ifappendix\iftrue

\ifconfstuffs
  \ifblind
    \documentclass[sigconf,9pt,anonymous]{acmart}
  \else
    \documentclass[sigconf,9pt]{acmart}
  \fi
\else
  \documentclass[sigconf,9pt,nonacm]{acmart}
\fi


\usepackage{cmll}
\usepackage{tikz}
\usepackage{graphics}
\usepackage[all]{xy}
\usepackage{tikz-cd}
\usepackage{cleveref}
\usepackage{csquotes}
\usepackage{thm-restate}

%

\usepackage{mathtools}
\usepackage{xspace}
\usepackage{mathpartir}
\usepackage{bm}
\usepackage{enumitem}
\usepackage{subcaption}

\newcommand{\PCF}{\mathsf{PCF}}

\newcommand{\finv}[1]{#1^{{-}1}}
\newcommand{\set}[1]{\{#1\}}

\newcommand\Sgroup{\operatorname{\mathcal S}}
\newcommand\permcat{\mathbb{P}}
\newcommand\permplus{\oplus}
\newcommand\nset[1]{[#1]}
\newcommand\plact[2]{#1 \circledless #2}

\newcommand{\intr}[1]{\llbracket #1 \rrbracket}
\newcommand{\lin}{\multimap}
\let\linto\lin
\newcommand{\Mf}{\mathcal{M}}
\newcommand{\iso}{\cong}
\newcommand{\bij}{\simeq}

\let\C\undefined
\newcommand{\C}{\mathcal{C}}
\newcommand{\catC}{\mathcal{C}}
\newcommand{\catD}{\mathcal{D}}
\renewcommand{\see}{\mathsf{see}}
\newcommand{\display}{\partial}
\newcommand\stdisp[2][]{\display^{#2}\ifempty{#1}{}{_{#1}}}
\newcommand\stdispoc[2][]{\display^{#2,\oc}\ifempty{#1}{}{_{#1}}}
\newcommand\altidisp[2]{#2}
\newcommand{\wit}{\mathsf{wit}}

\newcommand{\pwit}{\mathsf{wit}^{+}}
\newcommand{\spwit}{\text{$\sim$-$\mathsf{wit}$}^{+}}

\newcommand{\seq}[1]{\langle #1 \rangle}
\newcommand{\itmvec}[1]{\vec{#1}}
\newcommand{\op}{\mathrm{op}}
\newcommand{\cocat}{{\mathrm{co}}}
\newcommand{\ot}{\leftarrow}
\newcommand{\id}{\mathrm{id}}

\newcommand{\Fam}{\mathbf{Fam}}
\newcommand{\N}{\mathbb{N}}
\newcommand{\PreStrat}{\mathsf{PreStrat}}
\newcommand{\bS}{\mathbf{S}}

\let\U\undefined
\newcommand{\U}{\mathbf{U}}

\newcommand{\T}{\mathbf{T}}

\newcommand{\tensor}{\otimes}

\newcommand{\Thin}{\mathbf{Thin}}
\newcommand{\Thinb}{\mathbf{Thin}_{\oc}}

\newcommand{\tuple}[1]{\langle #1 \rangle}
\newcommand{\sym}{\cong}

\newcommand{\tto}{\Rightarrow}

\newbox\privcpl
\setbox\privcpl\hbox{$\bar l$}

\newbox\privcpr
\setbox\privcpr\hbox{$\bar r$}



\DeclareFontEncoding{LS1}{}{}
\DeclareFontSubstitution{LS1}{stix}{m}{n}


\newcommand{\evm}{\mathbf{ev}}

\newcommand\pl{{l}} 
\newcommand\pr{{r}} 


\newcommand\termcat{\mathrm{1}}
\newcommand\unit[1]{\id_{#1}}
\newcommand\seely{s}

\renewcommand{\v}{\mathsf{v}}
\newcommand{\w}{\mathsf{w}}
\newcommand{\m}{\mathsf{m}}

\newcommand{\q}{\mathsf{q}}

\newcommand{\e}{\mathsf{e}}
\newcommand{\pid}{\mathsf{pid}}
\newcommand{\pcomp}{\mathsf{pcomp}}
\newcommand{\Id}{\mathsf{Id}}
\newcommand{\syms}[1]{|#1|}
\newcommand{\ca}{\mathbf{a}}
\newcommand{\cb}{\mathbf{b}}
\newcommand{\cc}{\mathbf{c}}
\newcommand{\cs}{\mathbf{s}}

\newcommand{\class}[1]{\overline{#1}}
\newcommand{\rep}[1]{\underline{#1}}
\newcommand{\Ni}{\mathbb{N}_{\infty}}
\newcommand{\bt}{\mathbf{w}}

\newcommand\lstconsr{::}
\newcommand\lstconsrp[2]{#1 :: #2}

\newcommand\synto{\rightarrow}

\newcommand\judge{\vdash}
\newcommand\judgem{\vdash}
\newcommand\judgel{\judge}


\newcommand{\itype}[1]{\textcolor{brown}{#1}}
\newcommand{\itmor}[1]{\textcolor{violet}{#1}}
\newcommand{\rest}[1]{\textcolor{teal}{#1}}
\newcommand\simplefj[3]{#1 \judge #2 : #3} 
\newcommand\itrefinfj[3][]{\ifempty{#1}{}{#1 :} \textcolor{brown}{#2} \refinl #3} 
\newcommand\itmrefinfj[5][]{\ifempty{#1}{}{#1 :} \textcolor{violet}{#2} \itsco%
  \textcolor{brown}{#3} \textcolor{violet}{\;\itsto\;}%
  \textcolor{brown}{#4} \ifempty{#5}{}{\refinl #5}} 
\newcommand\rtrefinshort[2]{\textcolor{teal}{#1} \refin #2}
\newcommand\itfj[5]{\textcolor{brown}{#1} \refin #2 \judge #3 : \textcolor{brown}{#4} \refin #5}
\makeatletter
\newcommand\itmfj[7]{\textcolor{violet}{#1} \ifempty{#2}{}{\refin #2} \judge #3 : \textcolor{violet}{#4} \itsco \textcolor{brown}{#5} \textcolor{violet}{\;\itsto\;} \textcolor{brown}{#6} \refin #7}
\newcommand\rtfj[6]{\textcolor{brown}{#1} \ifempty{#2}{}{\refin #2} \judge \textcolor{teal}{#3} \refin #4 : \textcolor{brown}{#5} \refin #6}
\newcommand\rtcolor[1]{\textcolor{teal}{#1}}
\makeatother
\makeatletter
\define@key{itfj}{rctxt}{\def\itfj@rctxt{#1}}
\define@key{itfj}{ctxt}{\def\itfj@ctxt{#1}}
\define@key{itfj}{term}{\def\itfj@term{#1}}
\define@key{itfj}{it}{\def\itfj@it{#1}}
\define@key{itfj}{type}{\def\itfj@type{#1}}
\newcommand{\itfjnamed}[1]{%
  \begingroup%
  \setkeys{itmfj}{#1}
  \itfj{\itfj@rctxt}{\itfj@ctxt}{\itfj@term}{\itfj@it}{\itfj@type}
  \endgroup%
}
\define@key{itmfj}{ctxt}{\def\itmfj@ctxt{#1}}
\define@key{itmfj}{rctxt}{\def\itmfj@rctxt{#1}}
\define@key{itmfj}{term}{\def\itmfj@term{#1}}
\define@key{itmfj}{itm}{\def\itmfj@itm{#1}}
\define@key{itmfj}{itsrc}{\def\itmfj@itsrc{#1}}
\define@key{itmfj}{ittgt}{\def\itmfj@ittgt{#1}}
\define@key{itmfj}{type}{\def\itmfj@type{#1}}
\newcommand{\itmfjnamed}[1]{%
  \begingroup%
  \setkeys{itmfj}{#1}
  \@ifundefined{itmfj@ctxt}{%
    \itmfj{\itmfj@rctxt}{}{\itmfj@term}{\itmfj@itm}{\itmfj@itsrc}{\itmfj@ittgt}{\itmfj@type}%
  }{%
    \itmfj{\itmfj@rctxt}{\itmfj@ctxt}{\itmfj@term}{\itmfj@itm}{\itmfj@itsrc}{\itmfj@ittgt}{\itmfj@type}%
  }%
  \endgroup%
}
\define@key{rtfj}{rctxt}{\def\rtfj@rctxt{#1}}
\define@key{rtfj}{ctxt}{\def\rtfj@ctxt{#1}}
\define@key{rtfj}{rt}{\def\rtfj@rt{#1}}
\define@key{rtfj}{term}{\def\rtfj@term{#1}}
\define@key{rtfj}{it}{\def\rtfj@it{#1}}
\define@key{rtfj}{type}{\def\rtfj@type{#1}}
\newcommand{\rtfjnamed}[1]{%
  \begingroup%
  \setkeys{rtfj}{#1}
  \@ifundefined{rtfj@ctxt}{%
    \rtfj{\rtfj@rctxt}{}{\rtfj@rt}{\rtfj@term}{\rtfj@it}{\rtfj@type}%
  }{%
    \rtfj{\rtfj@rctxt}{\rtfj@ctxt}{\rtfj@rt}{\rtfj@term}{\rtfj@it}{\rtfj@type}%
  }%
  \endgroup%
}
\makeatother


\let\synlinto\itslinto
\newcommand\itsstar{\star}
\newcommand\itcl[1]{\class{#1}}

\newcommand\itsstarid{\id_\star}
\newcommand\itsco{::}
\newcommand\itsto{\To}
\newcommand\itsid[1]{\unit{#1}}
\newcommand\itscirc{\circ}

\newcommand\refin{\triangleleft}
\newcommand\refinl{\refin}
 

\newcommand{\itgpd}[1]{\mathbf{IT}(#1)}
\newcommand{\itgpdn}[1]{\mathbf{IT}_-(#1)}
\newcommand{\itgpdp}[1]{\mathbf{IT}_+(#1)}
\newcommand{\itgpdoc}[1]{\mathbf{IT}_{\oc}(#1)}

\newcommand\ctxtwf{\vdash_{\mathrm {wf}}}

\newcommand\rctxtplus{\bullet}
\newcommand\rmctxtplus{\bullet}

\newcommand\rmcplact[2]{#1 \circledless #2}
\newcommand\ctxtcirc{\circ^{\mathrm{ctxt}}}
\newcommand\ctxtid[1]{\unit{#1}^{\mathrm{ctxt}}}
\newcommand\itsctxtmu{\tilde \mu^\mathrm{it}}

\newcommand\seqplus{\bullet}

\newcommand\ctxtplus{\bullet}
\newcommand\ctxtmu{\tilde \mu}


\newcommand\rtmterm[1]{\vec{#1}}

\newcommand\rtseq[1]{\langle #1 \rangle}

\newcommand\rtcl[1]{\class{#1}}

\newcommand\stdit[1]{\boldsymbol{#1}}
\newcommand\stdmit[1]{\boldsymbol{#1}^*}
\newcommand\stdrt[1]{\mathbf{#1}}
\newcommand\stdmrt[1]{\mathbf{#1}^*}
\newcommand\mset[1]{[ #1 ]}


\newcommand\Var{\mathrm{Var}}

\newcommand\terminterp[1]{\llbracket #1 \rrbracket}
\newcommand\terminterpoc[1]{\llbracket #1 \rrbracket^{\oc}}
\newcommand\typeinterp[1]{\llbracket #1 \rrbracket}
\newcommand\ctxtinterp[1]{\llbracket #1 \rrbracket}
\newcommand\terminterpb[1]{(\!| #1 |\!)}
\newcommand\typeinterpb[1]{(\!| #1 |\!)}
\newcommand\ctxtinterpb[1]{(\!| #1 |\!)}

\newcommand{\Rel}{\mathbf{Rel}}
\newcommand{\WRel}{\mathbf{WRel}}
\newcommand{\Set}{\mathbf{Set}}
\newcommand{\Ob}{\mathbf{Ob}}
\newcommand{\Ar}{\mathbf{Ar}}

\newcommand{\Cat}{\mathbf{Cat}}
\newcommand{\Gpd}{\mathbf{Gpd}}
\newcommand{\Span}{\mathbf{Span}}
\newcommand{\Sym}{\mathbf{Sym}}
\newcommand{\Symp}{\mathbf{Sym}^+}

\newcommand{\Dist}{\mathbf{Dist}}
\newcommand{\Esp}{\mathbf{Esp}}
\newcommand{\Ren}{\mathbf{Ren}}

\newcommand\Bicat{\mathbf{Bicat}}
\mathchardef\mhyphen="2D
\newcommand\pmFunct{\pm\mhyphen\mathbf{Funct}}
\newcommand\src{\partial^-}
\newcommand\tgt{\partial^+}
\newcommand{\dom}{\mathsf{dom}}
\newcommand{\cod}{\mathsf{cod}}
\newcommand{\Res}{\mathbf{Res}}
\newcommand{\bgamma}{\boldsymbol{\gamma}}
\newcommand{\bmu}{\boldsymbol{\mu}}
\newcommand{\bnu}{\boldsymbol{\nu}}

\newcommand\stsupp[1]{#1}





\tikzcdset{%
   myname/.code args={#1}{%
     \tikzcdset{""{auto=false,name=#1}}%
   }
}
\makeatletter
\pgfarrowsdeclare{tripleimplies}{tripleimplies}
{
 \pgfmathsetlength{\pgfutil@tempdima}{.25\pgflinewidth+.25*\pgfinnerlinewidth}%
 \pgfmathsetlength{\pgfutil@tempdimb}{.5\pgflinewidth-.5*\pgfinnerlinewidth}%
 \pgfarrowsrightextend{2.06\pgfutil@tempdima+.5\pgfutil@tempdimb}
 \pgfarrowsleftextend{-1.36\pgfutil@tempdima-.5\pgfutil@tempdimb}
}
{
 \pgfmathsetlength{\pgfutil@tempdima}{.25\pgflinewidth+.25*\pgfinnerlinewidth}%
 \pgfmathsetlength{\pgfutil@tempdimb}{.5\pgflinewidth-.5*\pgfinnerlinewidth}%
 \pgftransformxshift{.06\pgfutil@tempdima}
 \pgfsetlinewidth{\pgfutil@tempdimb}
 \pgfsetdash{}{+0pt}
 \pgfsetroundcap
 \pgfsetroundjoin
 \pgfpathmoveto{\pgfpoint{-1.4\pgfutil@tempdima}{2.65\pgfutil@tempdima}}
 \pgfpathcurveto
 {\pgfpoint{-0.75\pgfutil@tempdima}{1.25\pgfutil@tempdima}}
 {\pgfpoint{1\pgfutil@tempdima}{0.05\pgfutil@tempdima}}
 {\pgfpoint{2\pgfutil@tempdima}{0pt}}
 \pgfpathcurveto
 {\pgfpoint{1\pgfutil@tempdima}{-0.05\pgfutil@tempdima}}
 {\pgfpoint{-.75\pgfutil@tempdima}{-1.25\pgfutil@tempdima}}
 {\pgfpoint{-1.4\pgfutil@tempdima}{-2.65\pgfutil@tempdima}}
 \pgfusepathqstroke
 \pgfpathmoveto{\pgfpoint{0\pgfutil@tempdima}{0\pgfutil@tempdima}}
 \pgfpathlineto{\pgfpoint{2\pgfutil@tempdima}{0\pgfutil@tempdima}}
 \pgfusepathqstroke
}%
\makeatother
\tikzset{tikztriple/.style={preaction={draw,tikzcd
cap-tripleimplies,shorten
     >=0pt,double distance=3pt},tikzcd cap-round cap,shorten >=3.5pt}}
\tikzcdset{%
 Rrightarrow/.code={\tikzset{tikztriple}}
}
\tikzcdset{%
 cs/.code args={#1}{%
   \tikzcdset{column sep={#1}}%
 }
}
\tikzcdset{%
 rs/.code args={#1}{%
   \tikzcdset{row sep={#1}}%
 }
}
\makeatletter
\def\tikzcd@sep@bo#1#2{%
 \pgfkeysifdefined{/tikz/commutative diagrams/#1 sep/#2}%
 {\pgfkeysgetvalue{/tikz/commutative diagrams/#1 sep/#2}\tikzcd@temp%
   \edef\tikzcd@temp{{\tikzcd@temp,between origins}}%
   \pgfkeysalso{/tikz/#1 sep/.expand once=\tikzcd@temp}}%
 {\pgfkeysalso{/tikz/#1 sep={#2,between origins}}}}
\tikzcdset{%
 rsbo/.code={\tikzcd@sep@bo{row}{#1}}}
\tikzcdset{%
 csbo/.code={\tikzcd@sep@bo{column}{#1}}}
\tikzcdset{%
 sbo/.code={\tikzcd@sep@bo{row}{#1}\tikzcd@sep@bo{column}{#1}}}
\makeatother

\newcommand\phar[1][]{\ar[phantom,#1]}

\newcommand\tikzcdin[2][column sep=small]{
 \begin{tikzcd}[cramped,ampersand replacement=\&,#1]
   #2 
 \end{tikzcd}%
}
\newcommand\drcorner{\rotatebox[origin=c]{180}{$\ulcorner$}}

\newcommand{\xto}{\xrightarrow}

\newcommand{\To}{\Rightarrow}

\newcommand{\TO}{\Rrightarrow}

\makeatletter
\newcommand{\xRrightarrow}[2][]{\ext@arrow 0359\Rrightarrowfill@{#1}{#2}}
\newcommand{\Rrightarrowfill@}{\arrowfill@\equiv\equiv\Rrightarrow}
\makeatother

\makeatletter
\newsavebox\my@boxdcorner
\savebox\my@boxdcorner{\begin{tikzpicture}[rotate=45,x=3.85pt,y=3.85pt]
  \draw[line cap=round,line join=round] (0,1) -- (0,0) -- (1,0);
\end{tikzpicture}}
\newcommand\dcorner{\usebox\my@boxdcorner}
\newsavebox\my@boxucorner
\savebox\my@boxucorner{\begin{tikzpicture}[rotate=-135,x=3.85pt,y=3.85pt]
    \draw[line cap=round,line join=round] (0,1) -- (0,0) -- (1,0);
  \end{tikzpicture}}
\newcommand\ucorner{\usebox\my@boxucorner}
\newbox\my@privdorthon
\setbox\my@privdorthon\hbox to 9pt{$\perp$\hss$\perp$}
\newbox\my@privdorthos
\setbox\my@privdorthos\hbox to 7pt{$\scriptstyle\perp$\hss$\scriptstyle\perp$}
\newbox\my@privdorthoss
\setbox\my@privdorthoss\hbox to 6pt{$\scriptscriptstyle\perp$\hss$\scriptscriptstyle\perp$}
\newbox\my@privdorthod
\setbox\my@privdorthod\hbox to 9pt{$\displaystyle\perp$\hss$\displaystyle\perp$}
\newcommand\storthob{\ensuremath{\mathbin{\mathchoice{\copy\my@privdorthod}{\copy\my@privdorthon}{\copy\my@privdorthos}{\copy\my@privdorthoss}}}}
\newcommand{\pperp}{\Bot}
\makeatother

\newcommand\co\colon
\newcommand\cA{A}



\newcommand{\resp}{resp.\xspace}

\crefname{enumi}{}{}
\Crefname{enumi}{}{}
\crefname{lemma}{Lemma}{Lemmas}
\Crefname{lemma}{Lemma}{Lemmas}

\newcommand\nbd{\protect\nobreakdash}
\newcommand\sethyp[2]{%
  \expandafter\def\csname nbd#1\endcsname{\nbd-#2\xspace}%
  \expandafter\ifx\csname #1\endcsname\relax
  \expandafter\def\csname #1\endcsname{\nbd-#2\xspace}%
  \fi
}
\sethyp{cell}{cell}
\sethyp{cells}{cells}
\sethyp{category}{ca\-te\-gory}
\sethyp{categories}{ca\-te\-gories}
\sethyp{categorical}{ca\-te\-go\-rical}
\sethyp{pullback}{pull\-back}
\sethyp{pullbacks}{pull\-backs}
\sethyp{groupoid}{grou\-poid}
\sethyp{groupoids}{grou\-poids}
\sethyp{composition}{com\-po\-si\-tion}
\sethyp{globular}{glo\-bu\-lar}
\sethyp{functor}{fun\-ctor}
\sethyp{functors}{fun\-ctors}
\sethyp{composable}{com\-po\-sa\-ble}
\sethyp{isomorphism}{iso\-mor\-phism}
\sethyp{term}{term}
\sethyp{morphism}{mor\-phism}
\sethyp{morphisms}{mor\-phisms}
\newcommand\trms{\nbd-terms\xspace}

\sethyp{natural}{na\-tu\-ral}
\sethyp{naturality}{na\-tu\-ra\-li\-ty}
\sethyp{dimensional}{di\-men\-sio\-nal}
\sethyp{transformation}{trans\-for\-ma\-tion}
\sethyp{transformations}{trans\-for\-ma\-tions}
\sethyp{modification}{mo\-di\-fi\-ca\-tion}
\sethyp{modifications}{mo\-di\-fi\-ca\-tions}
\DeclareRobustCommand\pmfunctor{$\pm$\functor}
\DeclareRobustCommand\pmfunctors{$\pm$\functors}
\DeclareRobustCommand\pmtransformation{$\pm$\transformation}
\DeclareRobustCommand\pmtransformations{$\pm$\transformations}
\DeclareRobustCommand\pmmodification{$\pm$\modification}
\DeclareRobustCommand\pmmodifications{$\pm$\modifications}
\DeclareRobustCommand\lterm{$\lambda$\term}
\DeclareRobustCommand\lterms{$\lambda$\trms}

\makeatletter
\@ifundefined{ifdebugmode}{\let\ifdebugmode\iffalse}{}
\makeatother
\def\ifparam#1#2#3{\csname if#1\endcsname #2\else #3\fi}
\newcommand{\activatecomments}{\let\ifdebugmode\iftrue}
\newcommand{\TODO}[1]{\ifparam{debugmode}{\marginpar{\tiny #1}}{}}
\newcommand{\todo}[1]{\ifparam{debugmode}{\marginpar{\tiny #1}}{}}

\newcommand{\simon}[1]{\ifparam{debugmode}{\TODO{\color{blue} SF: #1}}{}}
\newcommand{\pierre}[1]{\ifparam{debugmode}{\TODO{\color{red} PC: #1}}{}}

\newcommand{\incomment}[1]{}
\newcommand{\syndef}{\mathrel{\vcenter{\hbox{$::$}}{=}}}
\newcommand\eq{\enquote}

\newcommand{\qp}[1]{\quad #1\quad}
\newcommand{\qqp}[1]{\qquad #1\qquad}

\newcommand{\qand}{\quad\text{and}\quad}
\newcommand{\qqand}{\qquad\text{and}\qquad}
\newcommand\zbox[1]{\makebox[0pt][l]{#1}}
\makeatletter
\DeclareRobustCommand\ifdash[3]{%
  \def\ifdash@o{-}%
  \def\ifdash@t{#1}%
  \def\ifdash@yes{#2}%
  \def\ifdash@no{#3}%
  \ifx\ifdash@o\ifdash@t
  \ifdash@yes
  \else
  \ifdash@no
  \fi
}
\makeatother
\newcommand\ifempty[3]{%
  \def\ifemptytemp{#1}%
  \ifx\ifemptytemp\empty%
  #2%
  \else%
  #3%
  \fi%
}

\newcommand{\D}[1]{\|#1\|}

\AtBeginDocument{%
  \providecommand\BibTeX{{%
      Bib\TeX}}}

\ifconfstuffs
  \setcopyright{acmlicensed}
  \copyrightyear{2018}
  \acmYear{2018}
  \acmDOI{XXXXXXX.XXXXXXX}
\else
\fi

\begin{document}

\title{An Analysis of Symmetry in Quantitative Semantics}

\author{Pierre Clairambault}
\affiliation{%
  \institution{Aix Marseille Univ, Université de Toulon, CNRS, LIS}
  \city{Marseille}
  \country{France}
}
\email{Pierre.Clairambault@cnrs.fr}

\author{Simon Forest}
\affiliation{%
  \institution{Aix Marseille Univ, CNRS, LIS}
  \city{Marseille}
  \country{France}
}
\email{Simon.Forest@univ-amu.fr}

\renewcommand{\shortauthors}{Clairambault\and Forest}

\begin{abstract}
  Drawing inspiration from linear logic, \emph{quantitative semantics}
aim at representing quantitative information about programs and their
executions: they include the relational model and its numerous
extensions, game semantics, and syntactic approaches such as
non-idempotent intersection types and the Taylor expansion of
$\lambda$-terms. The crucial feature of these models is that 
programs are interpreted as witnesses which consume \eq{bags} of
resources. 

\eq{Bags} are often taken to be finite multisets, \emph{i.e.} quotiented
structures. Another approach typically seen in categorifications of the
relational model is to work with unquotiented structures (\emph{e.g.}
sequences) related with explicit morphisms referred to here as
\emph{symmetries}, which express the exchange of resources.
Symmetries are obviously at the core of these categorified models, but
we argue their interest reaches beyond those -- notably, symmetry
\emph{leaks} in some non-categorified quantitative models (such as the
weighted relational model, or Taylor expansion) under the form of
numbers whose combinatorial interpretation is not always clear.

In this paper, we build on a recent bicategorical model called \emph{thin
spans of groupoids}, introduced by Clairambault and Forest.
Notably, thin spans feature a decomposition of symmetry into two
sub-groupoids of \emph{polarized} -- \emph{positive} and
\emph{negative} -- symmetries. We first construct a variation of the
original exponential of thin spans, based on sequences rather than
families. Then we give a syntactic characterisation of the
interpretation of simply-typed $\lambda$-terms in thin spans, in terms
of rigid intersection types and rigid resource terms. Finally, we
formally relate thin spans with the weighted relational model and
generalized species of structure. This allows us to show how some
quantities in those models reflect polarized symmetries: in particular
we show that the weighted relational model counts witnesses from
generalized species of structure, \emph{divided} by the cardinal of a
group of positive symmetries.

%
%

\end{abstract}


\ifconfstuffs
\begin{CCSXML}
<ccs2012>
<concept>
<concept_id>10003752.10010124.10010131.10010133</concept_id>
<concept_desc>Theory of computation~Denotational
semantics</concept_desc>
<concept_significance>500</concept_significance>
</concept>
</ccs2012>
\end{CCSXML}
\ccsdesc[500]{Theory of computation~Denotational semantics}
\else
\fi

\ifconfstuffs
  \keywords{Denotational semantics, quantitative semantics}
\else
\fi


\maketitle

\section{Introduction}

\emph{Denotational semantics} is an approach to the semantics of
programming languages that consists in associating to every program a
denotation in an adequate mathematical universe; crucially this is done
compositionally, by induction on syntax. Most denotational models are
\emph{qualitative}: a term $\vdash M : A \to B$ is typically
represented by a function from the denotation of $A$ to the denotation
of $B$, giving us the input/output behaviour of $M$, but omitting
\emph{quantitative} information, such as resources, time,
probabilities\ldots

Within denotational semantics, \emph{quantitative semantics} is a
family of models whose distinguishing feature is to record quantitative
information -- first and foremost, displaying \emph{how many times} a
function $\vdash M : A \to B$ must evaluate its argument in
order to produce a given result.  Originally prompted by Girard's
linear logic \cite{DBLP:journals/tcs/Girard87}, quantitative semantics
has developed into a wide research topic with numerous models and
approaches, including the relational model
\cite{DBLP:journals/tcs/Girard87} and its weighted
\cite{DBLP:journals/tcs/Lamarche92,DBLP:conf/lics/LairdMMP13,DBLP:journals/mscs/Ehrhard02}
or categorical \cite{fiore2008cartesian,DBLP:conf/lics/ClairambaultF23}
extensions, resource terms and the Taylor expansion of $\lambda$-terms
\cite{DBLP:journals/tcs/EhrhardR03}, non-idempotent intersection types
\cite{de2007semantiques,DBLP:conf/tacs/Gardner94}, game semantics
\cite{DBLP:journals/iandc/HylandO00,DBLP:journals/iandc/AbramskyJM00},
and others. This is not merely a subjective methodological difference:
quantitative models are well-suited to model quantitative features
such as probabilistic \cite{DBLP:journals/jacm/EhrhardPT18} or quantum
\cite{DBLP:conf/popl/PaganiSV14} primitives, reflecting quantitative
property such as execution time \cite{DBLP:journals/tcs/CarvalhoPF11},
or the number of non-deterministic branches
\cite{DBLP:conf/lics/LairdMMP13}, and many others. 

To keep track of quantitative information, quantitative models must
represent all individual resource accesses, but this is trickier than
it might seem. Linear logic decomposes the intuitionistic arrow $A \to
B$ as $\oc A \lin B$ where $\lin$ is the \emph{linear arrow} (for
functions calling their argument exactly once), and $\oc$ is the
\emph{exponential modality}, allowing arbitrary duplications of
resources. Typically, the difficulty in designing a quantitative model
arises with handling the exponential: how to keep track of all
individual resource accesses while ensuring the laws required for a
$\oc$ in models of linear logic? 

\paragraph{Quotients.}
If resource accesses in $\oc A$ are ordered in a sequence
\[
\seq{\alpha_1, \dots, \alpha_n}\,,
\]
then this will generally fail the commutations laws for the
exponential, which require a \emph{commutative comonoid}
\cite{panorama}\footnote{Though some games models, notably simple games
with the Hyland exponential \cite{hyland1997game}, get away with that
exploiting that copy accesses are totally chronologically ordered.}. So
sequences are often quotiented out by commutativity, as in the
relational model~\cite{DBLP:journals/tcs/Girard87} (and in general the
so-called web-based models of linear logic), where $\oc A =
\Mf(A)$ the set of finite multisets. This quotient is also
found in quantitative notions of program approximation: for instance,
the \emph{Taylor expansion} of $\lambda$-terms
\cite{DBLP:journals/tcs/EhrhardR03} approximates $\lambda$-terms via
the \emph{resource calculus}, a strongly finitary calculus
where an application $M\,N$ from the $\lambda$-calculus is approximated
with
\[
  m\,[n_1, \dots, n_k]
\]
the application of a resource term $m$, approximating $M$, to a
\emph{finite multiset} of resource terms $n_1, \dots, n_k$, all
approximating $M$. This expresses one of the possible behaviours of
$M\,N$, where $M$ will call its argument \emph{exactly $k$ times}, each
call associated to one of the $n_i$'s.

This quotient, at the heart of quantitative semantics, is by no means
innocent: in situations when quantitative semantics manipulate
numerical coefficients, the underlying symmetries on multisets
\emph{leak}, yielding scalars which are not clearly related to the
computational situation, but instead reflect some aspect of its
underlying symmetries. For instance, the relational model weighted by
(completed) natural numbers \cite{DBLP:conf/lics/LairdMMP13}, which in
this paper we refer to as $\WRel_\oc$, counts distinct execution
branches for non-deterministic programs when applied at ground type.
But at higher-order type it yields non-trivial coefficients, even for
plain simply-typed $\lambda$-terms: what do these numbers mean? Are
those numbers related to the coefficients appearing in the Taylor
expansion of $\lambda$-terms?

\paragraph{Rigid structures} It is tempting to avoid these quotients:
in the quantitative semantics literature, the corresponding structures
are often called \emph{rigid}. Developping rigid models is subtle; for
instance naively replacing finite multisets with sequences in the
resource calculus yields a non-confluent reduction
\cite{DBLP:journals/lmcs/OlimpieriA22}; while naive rigid
non-idempotent intersection types fail subject reduction.

Proper treatments of rigid structures may be found in 
\emph{categorifications} of the relational model, the prime example
being the cartesian closed bicategory $\Esp$ of generalized species of
structure \cite{fiore2008cartesian}. There, types are interpreted as
\emph{categories} (or \emph{groupoids}) and the exponential $\oc A$ is
the \emph{free strict symmetric monoidal category} $\Sym(A)$ on $A$,
where objects are sequences $\seq{a_1, \dots, a_n}$ of objects of $A$,
and where a morphism from $\seq{a_1, \dots, a_n}$ to $\seq{a'_1, \dots,
a'_m}$ is a bijection $\sigma : n \bij m$ along with $f_i :
a_i \to a'_{\sigma(i)}$ in $A$ for all $1\leq i \leq n$. 
A term $\Gamma \vdash M : A$ is interpreted as a \emph{distributor}
from $\Sym(\intr{\Gamma})$ to $\intr{A}$, \emph{i.e.}
\[
\intr{M}_{\Esp} : \Sym(\intr{\Gamma})^{\op} \times \intr{A} \to \Set\,,
\]
a functor which to $\vec{\gamma} \in \Ob(\Sym(\intr{\Gamma}))$ and
$a \in \Ob(\intr{A})$ associates a set $\intr{M}_{\Esp}(\gamma, a)$ of
\emph{witnesses} -- crucially, $\intr{M}_{\Esp}$ also has a functorial
action, making the symmetries (morphisms) of $\Sym(\intr{\Gamma})$ and
$\intr{A}$ \emph{act} on witnesses. Tsukada \emph{et al.}
\cite{DBLP:conf/lics/TsukadaAO17} and Olimpieri \cite{ol:intdist} have
studied the nature of these witnesses, showing that they can be
regarded as terms of a rigid resource calculus. Their calculi are not
the naive rigid resource calculus mentioned above: they refine it by
letting resource terms carry \emph{morphisms}/\emph{symmetries} from
the types -- but the precise location of these symmetries in the term
is irrelevant, and it must be forgotten by yet another quotient!

Nevertheless, as $\Esp$ is a generalization of $\Rel$ properly
accounting for symmetries, it looks a natural candidate to illuminate
the scalars arising from the weighted relational model: we may
expect
\begin{eqnarray}
(\intr{M}_{\WRel_\oc})_{\gamma, a} &=& \#\,
(\intr{M}_{\Esp})(\gamma, a)
\label{eq:esp_wrel}
\end{eqnarray}
(conflating for now objects and symmetry classes). But this fails, and
we shall see that the link between the two involves data that is
missing from the theory of $\Esp$: \emph{polarized symmetries}.

\paragraph{Contributions} Recently, Clairambault and Forest have
introduced a new bicategorical model $\Thin$, called \emph{thin spans
of groupoids} \cite{DBLP:conf/lics/ClairambaultF23}, also a
categorification of the relational model, inspired concurrent game
semantics \cite{cg2} -- our first contribution is to show that it
supports an exponential based on sequences rather than families.

We then delve deeper into the interpretation of the simply-typed
$\lambda$-calculus in the Kleisli bicategory $\Thinb$.
Just like for $\Esp$ \cite{DBLP:conf/lics/TsukadaAO17,ol:intdist}, we
show that an intersection type system (and matching resource terms) is
implicit in thin spans. Perhaps surprisingly, it turns out to be the
naive rigid intersection type system discussed above, obtained by
merely replacing finite multisets with sequences (or the similarly
naive rigid resource calculus), not carrying any symmetries, and
without any quotient. Though subject reduction fails on the nose, our
results entail that it does hold in a relaxed sense, \emph{up to
symmetry}. Beyond just characterising the witnesses as in
\cite{DBLP:conf/lics/TsukadaAO17,ol:intdist}, we go further and also
give a syntactic description of \emph{symmetries} between derivations,
obtaining a syntactic description of the full groupoid obtained as the
interpretation of a term.

A central feature of $\Thin$ is that objects are certain groupoids $A$
admitting two sub-groupoids $A_-$ and $A_+$, respectively of
\emph{negative} and \emph{positive} symmetries.  Those are reminiscent
from ideas in game semantics: \emph{negative} symmetries exchange
resources controlled by the environment, while \emph{positive}
symmetries exchange resources controlled by the program. Not every
symmetry is negative or positive, but every symmetry factors uniquely
as a negative composed with a positive. Far from being a technicality
of the model construction, we argue that these polarized sub-symmetries
are \emph{fundamental}. In particular, they are the key to illuminate
some of the questions mentioned earlier: in this paper, we
characterise the coefficients obtained by $\WRel_\oc$ as counting
witnesses in $\Thinb$ -- \emph{i.e.} rigid resource terms -- \emph{up
to positive symmetry}, or symmetry classes of witnesses -- \emph{i.e.}
standard resource terms -- with a correcting coefficient involving
\emph{negative symmetries}. Drawing inspiration from recent work
linking thin concurrent games with generalized species of
structure~\cite{DBLP:conf/lics/ClairambaultOP23}, we also construct an
interpretation-preserving pseudofunctor from $\Thinb$ to $\Esp$,
allowing us overall to express the coefficients obtained through
$\WRel_\oc$ directly in terms of $\Esp$, correcting \eqref{eq:esp_wrel}
-- again, the correct equation involves polarized symmetries.

\paragraph{Related work} \emph{Polarized symmetries} are central to the
construction of thin spans of groupoids (and before that, thin
concurrent games \cite{cg2}), but they predate those models: to our
knowledge, they first appear in Melliès' approach to uniformity by
bi-invariance, in the setting of asynchronous games
\cite{mellies2003asynchronous}. They also make an appearance in Tsukada
\emph{et al.}'s study of \emph{weighted generalized
species}~\cite{DBLP:conf/lics/TsukadaAO18}, though they are not part of
the general theory but computed \emph{a posteriori} for groupoids
arising from simple types.  

This work is part of an ongoing effort from the community to refine our
understanding of resources in quantitative models, replacing quotients
with rigid structures related with explicit morphisms and explore the
corresponding categorical structures. Aside from work on generalized
species of structure, a work complementary to ours is Melliès'
\emph{homotopy template games} \cite{DBLP:conf/lics/Mellies19}, also
based on categorical spans, focusing on links with
homotopy theory. 

\paragraph{Outline}
In Section \ref{sec:thin_sym} we recall the definition of $\Thin$
from~\cite{DBLP:conf/lics/ClairambaultF23}, replacing their exponential
with a new one based on $\Sym$. In Section
\ref{sec:intersection-resource}, we give our syntactic characterisation
of the interpretation of the simply-typed $\lambda$-terms in $\Thinb$.
Finally, in Section \ref{sec:relational} we explore the link between
$\Thinb$ and relational models: first the plain relational model
$\Rel$, then the weighted (by completed natural numbers) relational
model $\WRel$, and finally generalized species $\Esp$. 

\section{Thin Spans on Sequences}
\label{sec:thin_sym}

We start with a brief reminder on
$\Thin$~\cite{DBLP:conf/lics/ClairambaultF23}, along with the
definition of the new exponential based on sequences.
%
%
%
In the following, we write $\Gpd$ for the $2$-category of
groupoids, functors between groupoids and natural transformations
between such functors. We will also often call \textbf{symmetries} the
morphisms of a groupoid.

\subsection{Reminder on Thin Spans of Groupoids}


A \textbf{span} from $A$ to $B$ in a category $\C$ is simply a diagram
like
\[
  \begin{tikzcd}[cramped]
    A
    &
    S
    \ar[l,"\display^S_{\altidisp A l}"']
    \ar[r,"\display^S_{\altidisp B r}"]
    &
    B
  \end{tikzcd}
\]
which in $\Set$ (or $\Cat$, of $\Gpd$) is regarded as a
generalized relation: a pair $(a, b)$ may be related via a number of
distinct \textbf{witnesses}, \emph{i.e.} elements $s \in S$ \emph{s.t.}
$\display^S_{\altidisp A l}(s) = a$ and $\display^S_{\altidisp B r}(s) = b$ -- in this
paper, we often write $\display^S_{\altidisp A l}(s) = s_A$ and
$\display^S_{\altidisp B r}(s) =
s_B$, keeping $\display^S_{\altidisp A l}$ and $\display^S_{\altidisp B r}$ implicit.
Here we focus on spans over groupoids: those form a bicategory $\Span$
where objects are groupoids, and a morphism from $A$ to $B$ is a span
$A \ot S \to B$. The identity span $\Id_A$ is $A \ot A \to A$ with two
identity functors, and spans are composed by pullback.  

In $\Span$, the $2$-cells from a span $A \ot S \to B$ to $A \ot T \to
B$ are functors $S \to T$ making the two triangles commute, and their
horizontal composition is given by the universal property of 
pullbacks. Unfortunately, these $2$-cells are too strict for many
purposes; in particular they are incompatible with the laws required
for the exponential modality of linear logic. Alternative $2$-cells
relax the hypothesis that the two triangles commute, asking instead for
\[
\xymatrix@R=5pt@C=20pt{
&S	\ar[dl]
	\ar[dr]
	\ar[dd]\\
A	\ar@{}[r]|{~~~~~~~~\Downarrow\!\!\!\!\!\!}&&
B	\ar@{}[l]|{\!\!\!\!\!\!\Downarrow~~~~~~~~}\\
&T	\ar[ul]
	\ar[ur]
}
\]
two natural isomorphisms. This allows us to relate more spans and
indeed supports the laws for the exponential modality. However, the
universal property of pullbacks then fails to provide a definition
of horizontal composition for those. This mismatch has different
solutions, either replacing the pullbacks with adequate notions of
homotopy pullbacks, or requiring additional fibrational conditions on
spans -- in almost all cases this concretely means importing the
morphisms of groupoids inside witnesses, as in generalized species of
structure or in template games \cite{DBLP:conf/lics/Mellies19}.

In~\cite{DBLP:conf/lics/ClairambaultF23}, an alternative idea was
introduced. In $\Span$, some pullbacks happen to behave well
\emph{w.r.t.} homotopy (they are \emph{bipullbacks}, see below). The
key observation is that as it turns out, the pullbacks arising from the
denotational interpretation of programs actually \emph{always are}
bipullbacks! The bicategory $\Thin$ of \emph{thin spans} captures this
via a biorthogonality construction, morally cutting $\Span$ down and
keeping only certain spans -- those deemed ``uniform'' -- ensuring that
their composition pullbacks are always bipullbacks.

\subsubsection{Uniformity}

Given a groupoid $A$, a \textbf{prestrategy}\footnote{Some
terminology in \cite{DBLP:conf/lics/ClairambaultF23} is
game-theoretic, reflecting the game semantics inspirations.} on $A$ is a pair
$(S,\display^S)$ of a groupoid $S$ and a functor $\display^S \co S \to
A$, the \textbf{display map}. We write $\PreStrat(A)$ for the
class of prestrategies on $A$.

Given two prestrategies $(S,\display^S)$ and $(T,\display^T)$, we write
$(S,\display^S) \perp (T,\display^T)$ (or, more simply, $S \perp T$), when the
following pullback
\begin{equation}
  \label{eq:pb-of-prestrategies}
  \begin{tikzcd}[cramped,rsbo=1.7em,csbo=2.5em]
    &
    P
    \ar[rd,"\pr",dashed] 
    \ar[ld,"\pl"',dashed] 
    \phar[dd,"\dcorner",very near start]
    &
    \\
    S
    \ar[rd,"\display^S"']
    &&
    T
    \ar[ld,"\display^T"]
    \\
    &
    A
  \end{tikzcd}
\end{equation}
is a \textbf{bipullback}. In $\Gpd$, this means that
for every $s \in S, t \in T$ and $\theta \co s_A \to t_B$, there is $u
\co s \to s' \in S$ and $v \co t' \to t \in T$ such that $\theta = v_A
\circ u_A$ in $A$: when two states can synchronize up to symmetry, we
can find symmetric states that can synchronize on the nose, coherently.
Given a set, or even a
class $\bS$ of prestrategies on $A$, we write $\bS^\perp$ for the class
$\{T \in \PreStrat(A) \mid \forall S \in \bS,~ S \perp T\}$.

A \textbf{uniform groupoid} is a pair $A = (A,\U_A)$ where $A$ is a
groupoid and $\U_A \subseteq\PreStrat(A)$ is a class of prestrategies
such that $\bS^{\perp \perp} = \bS$. One can define several
constructions on uniform
groupoids~\cite{DBLP:conf/lics/ClairambaultF23}. The \textbf{dual}
$A^\perp$ of the uniform groupoid $A$ has $(A,\U_A^\perp)$. Given
another uniform groupoid $B = (B,\U_B)$, one can define binary
constructions like the \textbf{tensor} $A \otimes B$ and its de Morgan
dual the \textbf{par} $A \parr B$, both having underlying groupoid $A
\times B$. From these two constructions, one then defines the
\textbf{linear arrow} $A \linto B$ as $A^\perp \parr B$. Finally, the
\textbf{with} $A \with B$ has underlying groupoid $A + B$.

\subsubsection{Spans}
The underlying groupoid of $A \linto B$ is $A \times B$ so that $S \in
\U_{A \linto B}$ is a prestrategy on $A \times B$, equivalently seen as
a span 
\[
A \ot S \to B
\]
in $\Gpd$. In the following, we call such $S$ a \textbf{uniform span} to emphasize
that it is a prestrategy of $\U_{A \linto B}$.
Notably, the identity span on a uniform groupoid $A$, is uniform.
%
Given uniform groupoids $A,B,C$, $S \in \U_{A \linto B}$ and $T\in
\U_{B\linto C}$, the composition via the pullback
\[
  \begin{tikzcd}[cramped,csbo=large,rsbo=normal]
    &&
    T \odot S
    \ar[dl,"\pl"',dashed] 
    \ar[dr,"\pr",dashed] 
    \phar[dd,"\dcorner",very near start]
    &&
    \\
    &
    S	\ar[dl,"\display^S_{\altidisp A l}"']
    \ar[dr,"\display^S_{\altidisp B r}"]
    &&
    T
    \ar[dl,"\display^T_{\altidisp B l}"']
    \ar[dr,"\display^T_{\altidisp C r}"]
    \\
    A&&B&&C \zbox.
  \end{tikzcd}
\]
is uniform (\emph{i.e.} in $U_{A \linto C}$) by
\cite[Lem.~2]{DBLP:conf/lics/ClairambaultF23} -- and the composition
pullback is a bipullback, as stated in our motivation for $\Thin$.

\subsubsection{Morphisms of spans}
As introduced above, uniform spans must be related via adequate notions
of morphisms between spans:
\begin{definition}[{\cite[Def.~1]{DBLP:conf/lics/ClairambaultF23}}]\label{def:weakmor}
  A \textbf{weak morphism} from $A \leftarrow S \rightarrow B$ to $A
  \leftarrow S' \rightarrow B$ is $(F, F^A, F^B)$, with $F^A$
and $F^B$ natural isos, and
  \[
    \begin{tikzcd}[cramped,rsbo=normal]
      &&
      S	\ar[dll,"\display^S_{\altidisp A l}"']
      \ar[drr,"\display^S_{\altidisp B r}"]
      \ar[dd,"F"{description}]
      &&
      \\
      A
      &
      F^A\!\Downarrow\hspace{-15pt} && \hspace{-15pt}\Downarrow \! F^B&B\\
      &&
      S'
      \ar[ull,"\display^{S'}_{\altidisp A l}"]
      \ar[urr,"\display^{S'}_{\altidisp B r}"']
    \end{tikzcd}
  \]

We call this a \textbf{strong morphism} if $F^A$ and $F^B$ are
identities.
\end{definition}

The bipullback property, for the composition pullback, ensures the
existence of candidates for the horizontal composition of weak
morphisms. However, it is not uniquely defined, and the bipullback
property is insufficient to guarantee a canonical choice satisfying the
laws of a bicategory (see~\cite[Par.
III-B4]{DBLP:conf/lics/ClairambaultF23}). We thus need additional
structure in order to ensure the existence of a canonical choice.

%

\subsubsection{Thinness} 
For this we must capture a more subtle property observed in the
denotational interpretation of programs: non-trivial symmetries between
states always originate from the environment -- in a closed world
interaction, no non-trivial symmetry is left. This is called
\emph{thinness}, and again is captured by orthogonality. 

Given a uniform groupoid $A$, $S \in \U_A$ and $T \in \U_{A}^\perp$, we write $S
\pperp T$ when the pullback vertex of~\eqref{eq:pb-of-prestrategies} is a
discrete groupoid. Given a class $\bS \subseteq \U_A$, we write
$\bS^{\pperp}$ for the class $\{T \in \U_{A}^\perp \mid \forall S \in \bS,~ S
\pperp T\}$.

\begin{definition}[{\cite[Def.~10]{DBLP:conf/lics/ClairambaultF23}}]\label{def:thin_gpd}
  A \textbf{thin groupoid} is a tuple $A = (A,A_-,A_+,\U_A,\T_A)$ where
$(A, \U_A)$ is a uniform groupoid, and
\begin{itemize}
  \item $A_-$ and $A_+$ are subgroupoids of $A$ with the same objects,
with embedding functors $\id^-_A \co A_- \to A$ and $\id^+_A
\co A_+ \to A$; 
  \item $\T_A \subseteq \U_A$ is a class of prestrategies such that
    $\T_A^{\pperp\pperp} = \T_A$, satisfying that
$(A_-,\id^-_A) \in \T_A$ and $(A_+,\id^+_A) \in \T_A^{\pperp}$.
  \end{itemize}
\end{definition}

In a groupoid $G$ with $x, y \in G$, we often write $\theta : x \sym_G
y$ to mean that $\theta \in G[x, y]$. For $A$ a thin
groupoid, $\theta : a \sym_A^+ a'$ indicates that $\theta \in A_+[a,
a']$ -- we say that $\theta$ is a \textbf{positive} symmetry --
likewise, $\theta : a \sym_A^- a'$ indicates that $\theta \in A_-[a,
a']$, and we say that $\theta$ is \textbf{negative}. Intuitively, this
polarity tells us who, among the program or the environment, is
responsible for a permutation. If it is a permutation among resources
called upon by the environement (\emph{e.g.}, coming from an occurrence
of $\oc$ in covariant position), then the symmetry is \emph{negative}.
If it permutes resources controlled by the program (\emph{e.g.} with a
$\oc$ in contravariant position), then the symmetry is \emph{positive}.
In general a symmetry may mix the two and can be neither negative nor
positive, but from Defininition \ref{def:thin_gpd} we get:

\begin{lemma}\label{lem:factor}
For any $\theta : a \sym_A a'$ in a thin groupoid $A$, there are unique
$a'' \in A$ and $\theta^+ : a \sym_A^+ a''$, $\theta^- : a'' \sym_A^-
a'$ \emph{s.t.} $\theta = \theta^- \circ \theta^+$. 
\end{lemma}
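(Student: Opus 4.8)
The plan is to read off \emph{existence} of the factorisation from uniformity, and \emph{uniqueness} from thinness, exploiting that the polarised subgroupoids $A_-$ and $A_+$ are by definition related through both orthogonality conditions of the model.

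First I would unpack what thinness says concretely. Since $(A_-,\id^-_A) \in \T_A$ and $(A_+,\id^+_A) \in \T_A^{\pperp}$, the definition of $\T_A^{\pperp}$ gives $(A_-,\id^-_A) \pperp (A_+,\id^+_A)$, so the vertex of the pullback \eqref{eq:pb-of-prestrategies} of $\id^-_A$ against $\id^+_A$ is a discrete groupoid. Computing this (strict) pullback: as both embeddings are the identity on objects, its objects are objects $a$ of $A$, and a morphism $a \to a'$ is a pair $(u,v)$ with $u \in A_-$, $v \in A_+$ that become equal in $A$. Discreteness therefore says exactly that the only symmetry of $A$ which is simultaneously positive and negative is an identity, i.e. $A_-$ and $A_+$ meet only in identities. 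This is all that is needed for uniqueness: given two factorisations $\theta = \theta^-_1 \circ \theta^+_1 = \theta^-_2 \circ \theta^+_2$ with $\theta^+_i : a \to a''_i$ positive and $\theta^-_i : a''_i \to a'$ negative, the comparison symmetry $(\theta^-_2)^{-1} \circ \theta^-_1 = \theta^+_2 \circ (\theta^+_1)^{-1} : a''_1 \to a''_2$ is both negative and positive (subgroupoids are closed under composition and inverse), hence an identity; this forces $a''_1 = a''_2$, $\theta^+_1 = \theta^+_2$ and $\theta^-_1 = \theta^-_2$.

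For existence I would instead use uniformity. Since $\T_A \subseteq \U_A$ and, by definition of $\T_A^{\pperp}$, also $\T_A^{\pperp} \subseteq \U_A^{\perp}$, we have $(A_-,\id^-_A) \in \U_A$ and $(A_+,\id^+_A) \in \U_A^{\perp}$; as $\U_A^{\perp\perp} = \U_A$, these two are orthogonal in the uniformity sense, so the pullback of $\id^-_A$ against $\id^+_A$ is a \emph{bipullback}. Orthogonality being symmetric, I may read the bipullback condition spelled out under \eqref{eq:pb-of-prestrategies} with $S = A_+$ and $T = A_-$: applied to the objects $a \in A_+$ and $a' \in A_-$ and the symmetry $\theta : a \to a'$, it produces an object $a''$, a positive $\theta^+ : a \to a''$ (the image of the obtained morphism of $A_+$) and a negative $\theta^- : a'' \to a'$ (the image in $A_-$) with $\theta = \theta^- \circ \theta^+$, which is precisely the desired factorisation.

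The step requiring the most care, and the main potential pitfall, is the bookkeeping of the two distinct orthogonalities ($\perp$ for uniformity, $\pperp$ for thinness) together with the \emph{orientation} of the bipullback condition: the factorisation must come out as negative-after-positive, so I have to invoke the condition in the orientation with $A_+$ playing the role of $S$ and $A_-$ that of $T$ (legitimate since $\perp$ is symmetric), rather than in the opposite orientation which would yield a positive-after-negative composite. Once the orientation is fixed, everything reduces to translating the abstract discreteness and bipullback properties of \eqref{eq:pb-of-prestrategies} into elementary manipulations of composites of positive and negative symmetries.
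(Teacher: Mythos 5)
Your proof is correct and is essentially the argument this paper relies on (the paper does not reprove the statement but defers to \cite[Lem.~3]{DBLP:conf/lics/ClairambaultF23}): existence is read off from the bipullback property of $(A_-,\id^-_A) \perp (A_+,\id^+_A)$, which holds because $A_- \in \T_A \subseteq \U_A$ and $A_+ \in \T_A^{\pperp} \subseteq \U_A^{\perp}$, while uniqueness comes from the discreteness of the pullback vertex given by $A_- \pperp A_+$, i.e.\ the fact that a symmetry that is simultaneously positive and negative must be an identity. Your bookkeeping of the two orthogonalities and of the orientation of the factorisation (positive first, negative second, via symmetry of $\perp$) is exactly right.
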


See~\cite[Lem.~3]{DBLP:conf/lics/ClairambaultF23}.
The constructions introduced before on uniform groupoids
($(-)^\perp, \tensor, \parr, \with$) extend to thin
groupoids~\cite{DBLP:conf/lics/ClairambaultF23}.
%

\subsubsection{Thin spans}
Given thin groupoids $A$ and $B$, a \textbf{thin span} is a prestrategy $S \in
\T_{A \linto B}$. As above the underlying groupoid of $A \linto B$ is $A \times
B$, so $S$ can be seen as a span between $A$ and $B$. Given a thin groupoid $A$,
we have $\Id_A \in \T_{A \linto A}$; and for thin spans $A \ot S \to B$ and $B
\ot T \to C$, we have $T \odot S \in \T_{A \linto C}$
(see~\cite[Prop.~2]{DBLP:conf/lics/ClairambaultF23}).

Together, uniformity and thinness guarantee strong properties for the
composition of thin spans. For thin spans $A \ot S \to B$ and $B \ot T
\to C$, recall that (following the obvious pullback construction in
$\Gpd$) elements of $T \odot S$ are simply pairs $(s, t)$ such that
$s_B = t_B$. However, it is central in the construction of $\Thin$ (in
particular for the horinzontal composition of $2$-cells that we shall
not detail here) that thin spans may synchronize \emph{up to symmetry}:

\begin{lemma}\label{lem:comp_upto_sym}
Consider $A \ot S \to B$ and $B \ot T \to C$ thin spans, $s \in S, t
\in T$, linked with a symmetry $\theta : s_B \sym_B t_B$.

Then there are \emph{unique} $s'\in S, t' \in T$ and $\varphi : s
\sym_S s'$, $\psi : t' \sym_T t$ such that $\varphi_A$ negative,
$\psi_C$ positive, and $\theta = \psi_B \circ \varphi_B$.
\end{lemma}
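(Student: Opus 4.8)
The plan is to reduce the statement to the polarised factorisation of symmetries provided by \Cref{lem:factor}, applied inside $B$, together with two \emph{lifting} properties of thin spans that transport a polarised symmetry of $B$ through $S$ and $T$ respectively. Concretely, I would first isolate two sub-lemmas, both consequences of the thin-span structure of~\cite{DBLP:conf/lics/ClairambaultF23}: \textbf{(i)} for a thin span $A \ot S \to B$, $s \in S$ and a \emph{positive} symmetry $\beta : s_B \sym_B^+ b'$, there is a unique $\varphi : s \sym_S s'$ with $\varphi_B = \beta$ and $\varphi_A$ negative; and \textbf{(ii)} dually, for a thin span $B \ot T \to C$, $t \in T$ and a \emph{negative} symmetry $\gamma : b' \sym_B^- t_B$, there is a unique $\psi : t' \sym_T t$ with $\psi_B = \gamma$ and $\psi_C$ positive (existence of such backward lifts reduces to the forward case since symmetries are invertible). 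Existence of these lifts comes from uniformity, \ie the bipullback property of~\eqref{eq:pb-of-prestrategies}, which produces synchronising symmetries up to symmetry; their \emph{uniqueness} comes from thinness, \ie discreteness of the associated closed interaction.

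For existence in \Cref{lem:comp_upto_sym}, I would apply \Cref{lem:factor} to $\theta : s_B \sym_B t_B$ in the thin groupoid $B$, obtaining a midpoint $b''$ and a factorisation $\theta = \theta^- \circ \theta^+$ with $\theta^+ : s_B \sym_B^+ b''$ positive and $\theta^- : b'' \sym_B^- t_B$ negative. Lifting $\theta^+$ through $S$ by (i) produces $s'$ and $\varphi : s \sym_S s'$ with $\varphi_B = \theta^+$ and $\varphi_A$ negative; lifting $\theta^-$ through $T$ by (ii) produces $t'$ and $\psi : t' \sym_T t$ with $\psi_B = \theta^-$ and $\psi_C$ positive. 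Since $s'_B = b'' = t'_B$ and $\psi_B \circ \varphi_B = \theta^- \circ \theta^+ = \theta$, the tuple $(s', t', \varphi, \psi)$ witnesses existence.

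For uniqueness, let $(\tilde s, \tilde t, \tilde\varphi, \tilde\psi)$ be a second solution. The typing of $\theta = \tilde\psi_B \circ \tilde\varphi_B$ forces $\tilde s_B = \tilde t_B$, so this is a second factorisation of $\theta$ through a midpoint. The crux is to check that $\tilde\varphi_B$ is positive and $\tilde\psi_B$ is negative: this follows by feeding the constraints ``$\tilde\varphi_A$ negative'' and ``$\tilde\psi_C$ positive'' into the componentwise description of the polarised subgroupoids of $A \linto B$ and $B \linto C$ (where $(-)^\perp$ swaps the two polarities and $\parr$ acts componentwise) together with the uniqueness clause of \Cref{lem:factor}. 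Once $\tilde\varphi_B,\tilde\psi_B$ are known to be positive and negative, uniqueness of the polarised factorisation of $\theta$ gives $\tilde\varphi_B = \theta^+$, $\tilde\psi_B = \theta^-$ and $\tilde s_B = b'' = \tilde t_B$, and then the uniqueness parts of (i) and (ii) yield $\tilde\varphi = \varphi$ and $\tilde\psi = \psi$. An equivalent, more intrinsic packaging is to form the ``difference'' symmetry $(\chi,\omega)$ of the composite $T \odot S$, with $\chi = \tilde\varphi \circ \varphi^{-1}$ and $\omega = \tilde\psi^{-1} \circ \psi$ agreeing on $B$, observe that its external display $(\chi_A,\omega_C)$ is a positive symmetry of $A \linto C$, and use thinness of $T \odot S \in \T_{A \linto C}$ to force it to be trivial.

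The main obstacle is exactly this polarity bookkeeping, which underlies both the lifting sub-lemmas and the uniqueness step: one must track precisely how $A_\pm, B_\pm, C_\pm$ interact across $(-)^\perp$ and $\parr$, and verify that a positive symmetry received on the output of a thin span is answered by a uniquely determined negative symmetry on its input, and dually for $T$. This rigidity is precisely the combination of uniformity (existence) and thinness (uniqueness) developed in~\cite{DBLP:conf/lics/ClairambaultF23}; \Cref{lem:factor} and the bipullback property are then the two levers that turn that machinery into the stated synchronisation up to symmetry.
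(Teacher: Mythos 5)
Your attempt has a genuine gap, and it is in the existence part. (For reference: the paper does not prove this lemma at all, it imports it as \cite[Lem.~2]{DBLP:conf/lics/ClairambaultF23}, so the comparison is with that proof.) Both of your lifting sub-lemmas (i) and (ii) are false. Take $S$ to be the identity span $\Id_A$, which the paper explicitly asserts is thin ($\Id_A \in \T_{A \linto A}$): both display maps are identity functors, so any $\varphi : s \sym_S s'$ has $\varphi_A = \varphi_B = \varphi$. If $\beta : s_B \sym_B^+ b'$ is a non-identity positive symmetry, a lift with $\varphi_B = \beta$ forces $\varphi_A = \beta$ to be simultaneously positive and negative; but by the uniqueness clause of \Cref{lem:factor}, $A_+ \cap A_-$ contains only identities (compare the two factorizations $\theta = \theta \circ \id$ and $\theta = \id \circ \theta$), a contradiction. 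The same test refutes (ii). The underlying problem is that your polarity bookkeeping is backwards: already for $S = T = \Id_A$, the factorization $\theta = \psi_B \circ \varphi_B$ demanded by the lemma has $\varphi_B$ \emph{negative} and $\psi_B$ \emph{positive} — it is the factorization of \Cref{lem:factor} applied to $\theta^{-1}$, not to $\theta$ — whereas you factor $\theta$ positive-first and try to realize $\varphi_B = \theta^+$. For the same reason, the ``crux'' of your first uniqueness argument (that any solution has $\tilde\varphi_B$ positive and $\tilde\psi_B$ negative) is false: in general $\varphi_A$ negative implies nothing about the polarity of $\varphi_B$, and in the identity-span case it implies the opposite of what you claim.

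More structurally, on-the-nose lifting of symmetries through a thin span is simply not available in this framework: the reindexing property thin spans do enjoy, \Cref{lem:act_sym} (proved in Appendix~\ref{app:reindexing}), only yields a lift \emph{up to a positive residual} ($\theta_B = \vartheta_B^+ \circ \varphi_B$, not $\varphi_B = \theta_B$), and trying to push residuals back and forth between $S$ and $T$ gives no terminating construction. The proof of the cited lemma is not pointwise: one considers the two polarized restrictions over $B$ — $S$ with its morphisms cut down to those $\varphi$ with $\varphi_A$ negative, and $T$ cut down to those $\psi$ with $\psi_C$ positive — and shows, using $(A_-, \id^-_A) \in \T_A$ and $(C_+, \id^+_C) \in \T_C^{\pperp}$ together with the uniformity and thinness of $S$ and $T$, that their pullback over $B$ is a bipullback (this \emph{is} existence of the stated factorization, by the concrete characterisation of bipullbacks) and discrete (this is uniqueness). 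Your second, ``intrinsic'' packaging of uniqueness is in fact correct and is essentially this last step: the difference symmetry $(\chi, \omega)$ is a symmetry of $T \odot S$ displaying to a positive symmetry of $A \linto C$, hence an identity by thinness of the composite. So uniqueness can be salvaged from your write-up, but existence requires the orthogonality argument and cannot be repaired within your factor-and-lift scheme.
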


See~\cite[Lem.~2]{DBLP:conf/lics/ClairambaultF23}.  Another important
consequence of the definition of thin spans is that symmetries
\emph{act} on thin spans:

\begin{restatable}{lemma}{actsym}\label{lem:act_sym}
Consider $A \ot S \to B$ a thin span, $s \in S$, with $\theta_A : a
\sym_A s_A$ and $\theta_B : s_B \sym_B b$. Then, there are unique $s'
\in S$, $\varphi : s \sym_S s'$, $\vartheta_A^-$ and $\vartheta_B^+$
such that the two triangles commute:
\[
\xymatrix@R=5pt{   
&s_A	\ar[dd]^{\varphi_A}\\
a	\ar[ur]^{\theta_A}
	\ar[dr]_{\vartheta_A^-}\\
&s'_A
}
\qquad
\xymatrix@R=5pt{
s_B	\ar[dd]_{\varphi_B}
	\ar[dr]^{\theta_B}\\
&b\\
s'_B	\ar[ur]_{\vartheta_B^+}
}
\]  
\end{restatable}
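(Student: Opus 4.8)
The plan is to reduce the two-triangle statement to a single factorization problem on the boundary groupoid $A \linto B$, and then solve it using the two orthogonality properties that $S$ enjoys against the canonical positive test object. Since the underlying groupoid of $A \linto B$ is $A \times B$, I would first package the two boundary symmetries into one, setting $\Theta := (\theta_A, \theta_B^{-1}) : (a,b) \to (s_A, s_B) = \display^S s$, a symmetry of $A \linto B$ ending at the boundary of $s$. Writing $\display^S\varphi = (\varphi_A, \varphi_B)$, a direct computation shows that the pair of triangle conditions is equivalent to the single equation $\display^S\varphi \circ \Theta = \mu$, where $\mu := (\vartheta_A^-, (\vartheta_B^+)^{-1})$. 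Using the explicit description of the polarized structure of $A \linto B = A^\perp \parr B$ --- namely $(A \linto B)_+ = A_- \times B_+$, obtained because dualization swaps polarities and $\parr$ takes products --- the requirements that $\vartheta_A^-$ be negative in $A$ and $\vartheta_B^+$ be positive in $B$ become exactly the requirement that $\mu$ be \emph{positive} in $A \linto B$. So it suffices to prove that there are unique $s' \in S$, $\varphi : s \sym_S s'$ and a positive $\mu : (a,b) \sym^+_{A \linto B} \display^S s'$ with $\display^S \varphi \circ \Theta = \mu$.

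For existence I would invoke uniformity. Since $S \in \T_{A \linto B} \subseteq \U_{A \linto B} = \U_{A \linto B}^{\perp\perp}$ and the test prestrategy $((A\linto B)_+, \id^+) \in \T_{A\linto B}^\pperp \subseteq \U_{A\linto B}^\perp$, we have $S \perp (A\linto B)_+$: their defining pullback is a bipullback. Applying the bipullback characterisation to $s$, the object $(a,b)$ of $(A\linto B)_+$, and the symmetry $\Theta^{-1} : \display^S s \to (a,b)$ yields $\varphi : s \sym_S s'$ and a positive $v : \display^S s' \to (a,b)$ with $\Theta^{-1} = v \circ \display^S\varphi$. Inverting gives $\display^S\varphi \circ \Theta = v^{-1}$, and $v^{-1}$ is positive since $(A\linto B)_+$ is a subgroupoid; so $\mu := v^{-1}$ does the job, and reading off its two components recovers $\vartheta_A^-$ and $\vartheta_B^+$.

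For uniqueness I would use thinness, i.e. the discreteness orthogonality $S \pperp (A\linto B)_+$, which holds because $S \in \T^{\pperp\pperp}$ and $(A\linto B)_+ \in \T^\pperp$. Concretely, discreteness of the pullback of $\display^S$ against $\id^+$ says that any symmetry $\chi$ of $S$ with $\display^S\chi$ positive is an identity. Given two solutions $(\varphi_i, \mu_i)$, the symmetry $\chi := \varphi_2 \circ \varphi_1^{-1} : s'_1 \sym_S s'_2$ satisfies $\display^S\chi = \mu_2 \circ \mu_1^{-1}$, a composite of positive symmetries hence positive; thus $\chi = \id$, forcing $s'_1 = s'_2$, $\varphi_1 = \varphi_2$, and then $\mu_1 = \mu_2$. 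The main obstacle is the polarity bookkeeping of the first step: one must correctly trace the polarized substructure through the definitions of $(-)^\perp$ and $\parr$ to obtain $(A\linto B)_+ = A_- \times B_+$, and then keep straight which orthogonality ($\perp$ for existence, $\pperp$ for uniqueness) is supplied by which axiom of the thin groupoid $A\linto B$; once these are pinned down, both halves are routine.
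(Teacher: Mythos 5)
Your proof is correct and takes essentially the same route as the paper's: existence comes from the uniformity orthogonality $S \perp ((A \linto B)_+, \id^+)$ via the bipullback characterisation, and uniqueness from thinness ($\pperp$-discreteness), i.e.\ the fact that a symmetry of $S$ displaying to a positive symmetry must be an identity (which the paper obtains by citing a lemma of the original thin-spans paper rather than re-deriving it). The only presentational difference is the order of reduction: the paper proves the one-sided statement over a single thin groupoid $B$ and then instantiates it at $A^\perp \parr B$, whereas you package the two boundary symmetries into a single symmetry of $A \linto B$ from the start --- the same step, performed in the opposite direction.
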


See Appendix~\ref{app:reindexing}. So $s \in S$ may be reindexed by
symmetries $\theta_A$ and $\theta_B$, though we will not exactly hit
the targets $a$ and $b$: only up to positive (or negative, depending on
the variance) symmetry.

\subsubsection{Positive weak morphisms} This additional structure may
be leveraged to get the canonicity of horizontal composition of
$2$-cells -- modulo a final fine-tuning of their definition:

\begin{definition}
Given two thin groupoids $A$ and $B$, a weak morphism $(F, F^A, F^B)$
between $A$ and $B$ as in \Cref{def:weakmor} is \textbf{positive} when,
for every $s \in S$, $F^B_s \co s_B \sym_B^+ F(s)_B$ and $F^A_s \co s_A
\sym_A^- F(s)_A$. 
\end{definition}

We call it \emph{positive} since it is positive on $A \lin B$.
Positivity lets us use the uniqueness property of Lemma
\ref{lem:comp_upto_sym} to give a \emph{unique} choice for
horizontal composition of positive weak morphisms, and:


\begin{theorem}[{\cite[Thm~2]{DBLP:conf/lics/ClairambaultF23}}]
  There is a bicategory $\Thin$ of thin groupoids, thin spans, and positive weak
  morphisms. The identity on $A$ is $\Id_A$, and the composition of
thin spans is given by plain pullbacks. 
\end{theorem}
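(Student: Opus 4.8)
The plan is to assemble the full bicategorical data on top of the three layers already in place — thin groupoids as objects, thin spans as $1$-cells, positive weak morphisms as $2$-cells — and then verify the bicategory axioms, with essentially all the real work concentrated in the horizontal composition of $2$-cells. Horizontal composition of $1$-cells is the pullback composition $T \odot S$, which lands in $\T_{A \linto C}$ by the closure result recalled above, and the identity $1$-cell is $\Id_A \in \T_{A \linto A}$. For fixed $A, B$, I would first check that the positive weak morphisms $S \To S'$ over $A \ot - \to B$ form a category under vertical composition: the underlying functors compose, the families of natural isos paste, and positivity is preserved because $A_-$ and $A_+$ are subgroupoids, hence closed under composition and inverses; the identity weak morphism $(\id_S, \id, \id)$ is strong and therefore positive. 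This makes each $\Thin(A,B)$ a category.

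The crux is horizontal composition of $2$-cells, and this is exactly where \Cref{lem:comp_upto_sym} does the work. Given $\alpha = (F, F^A, F^B) : S \To S'$ over $A \ot - \to B$ and $\beta = (G, G^B, G^C) : T \To T'$ over $B \ot - \to C$, I define $\beta \odot \alpha : T \odot S \To T' \odot S'$ as follows. An object of $T \odot S$ is a pair $(s,t)$ with $s_B = t_B$; applying $F$ and $G$ yields $F(s) \in S'$ and $G(t) \in T'$, together with the polarized isos $F^B_s : s_B \sym_B^+ F(s)_B$ and $G^B_t : t_B \sym_B^- G(t)_B$. Their combination $\theta := G^B_t \circ (F^B_s)^{-1} : F(s)_B \sym_B G(t)_B$ synchronizes $F(s)$ and $G(t)$, so \Cref{lem:comp_upto_sym} produces \emph{unique} $\tilde s \in S'$, $\tilde t \in T'$ and $\varphi : F(s) \sym_{S'} \tilde s$, $\psi : \tilde t \sym_{T'} G(t)$ with $\varphi_A$ negative, $\psi_C$ positive, and $\theta = \psi_B \circ \varphi_B$; the typing of this last equation forces $\tilde s_B = \tilde t_B$, so $(\tilde s, \tilde t) \in T' \odot S'$. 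I set $(\beta \odot \alpha)(s,t) := (\tilde s, \tilde t)$, with $A$-component $\varphi_A \circ F^A_s : s_A \sym_A^- \tilde s_A$ (negative, being a composite of negatives) and $C$-component $(\psi_C)^{-1} \circ G^C_t : t_C \sym_C^+ \tilde t_C$ (positive). Thus $\beta \odot \alpha$ is again a positive weak morphism, and the uniqueness clause of \Cref{lem:comp_upto_sym} is what pins the choice down; functoriality of the assignment on morphisms of $T \odot S$ and naturality of the two iso-families follow from that same uniqueness, applied to the factorization problem attached to each morphism.

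With the data in hand, the remaining axioms are verified by reducing equalities of $2$-cells to coincidences of solutions of such uniqueness problems. For functoriality of $\odot$, both sides of the interchange law $(\beta' \circ \beta) \odot (\alpha' \circ \alpha) = (\beta' \odot \alpha') \circ (\beta \odot \alpha)$ solve the same factorization problem posited by \Cref{lem:comp_upto_sym}, hence agree, and likewise $\id \odot \id = \id$. The associator and the left/right unitors are the canonical isomorphisms coming from the universal property of pullbacks (associativity of pullback composition, and $\Id_B \odot S \cong S \cong S \odot \Id_A$); since their $A$- and $C$-components are identities they are \emph{strong}, hence positive (identities are both negative and positive). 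Their naturality with respect to the $2$-cells above, along with the pentagon and triangle coherences, is inherited from the underlying span structure on $\Gpd$ — the structural cells being exactly those of $\Span(\Gpd)$, since the composition pullbacks are genuine pullbacks — and is again confirmed through uniqueness.

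The main obstacle is precisely the horizontal composition of $2$-cells: showing (i) that the object-assignment $(s,t) \mapsto (\tilde s, \tilde t)$ extends to a functor with natural polarized iso-families, (ii) that $\odot$ is functorial on each hom-category, and (iii) that the associators and unitors are natural with respect to it. Without thinness none of these would even be well-defined — the bipullback property alone furnishes candidate synchronizations but not a canonical one — whereas the \emph{uniqueness} built into \Cref{lem:comp_upto_sym} turns every such verification into the identification of two solutions of one and the same factorization problem. Managing the bookkeeping of these uniqueness arguments coherently across all the axioms is the real content of the proof.
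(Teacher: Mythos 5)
Your proposal is correct and follows essentially the same route as the paper (which cites the construction of Clairambault–Forest): composition of $1$-cells by plain pullbacks with thinness closure taken from the cited result, and horizontal composition of $2$-cells defined by synchronizing $F(s)$ and $G(t)$ through $G^B_t \circ (F^B_s)^{-1}$ and invoking the \emph{uniqueness} clause of Lemma~\ref{lem:comp_upto_sym}, with positivity of the weak morphisms being exactly what makes the polarities line up and what renders the choice canonical. The coherence cells being the strong (hence positive) isomorphisms inherited from span composition, and all axioms reducing to coincidences of solutions of the same factorization problem, is likewise the intended argument.
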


\subsection{The $\Sym$ Exponential on $\Thin$}

$\Thin$ was originally developped
using the $\Fam$ functor as exponential, mapping a groupoid $A$ to 
$\Fam(A)$ with objects families $(a_i)_{i\in I}$ indexed
by finite sets of integers $I$. Instead, we consider here the $\Sym$
functor (used as exponential modality on distributors to construct
generalized species of structure), which extends to groupoids the list
functor of $\Set$. This seems a minor difference since
$\Fam$ and $\Sym$ are equivalent as endofunctors of $\Gpd$, but it is
actually a non-trivial shift since thin spans do not respect the
principle of equivalence, by relying on strict pullbacks in a
$2$-categorical setting.

\subsubsection{The $\Sym$ monad on $\Gpd$}

We start by considering the functor
\[
  \Sym \co \Gpd \to \Gpd
\]
mapping $A$ to the free strict symmetric monoidal groupoid $\Sym(A)$.
Concretely, the objects of $\Sym(A)$ are sequences $\seq{a_i}_{i \in
\set{1,\ldots,n}} = \seq{a_1,\ldots,a_n}$ of objects of $A$, and its
morphisms from $\seq{a_1,\ldots,a_n}$ to $\seq{b_1,\ldots,b_m}$ are
pairs $(\pi,\seq{f_i}_{i \in \set{1,\ldots,n}})$ where $\pi$ is a
bijection between $\set{1,\ldots,n}$ and $\set{1,\ldots,m}$, and
$\seq{f_i}_i$ is a sequence of morphisms $f_i \co a_i \to b_{\pi(i)}$
for $i \in \set{1,\ldots,n}$. $\Sym$ can be extended to a monad
$(\Sym,\eta,\mu)$ on $\Gpd$: on objects, the unit $\eta_A \co A \to
\Sym(A)$ maps $a \in A$ to $\seq{a}$, and $\mu_A \co \Sym(\Sym(A)) \to
\Sym(A)$ concatenates sequences -- this extends to symmetries as
expected. 

\subsubsection{The pseudocomonad}

The definition of a pseudocomonad $\oc$ for $\Thin$ based on $\Sym$ is done as
in~\cite[Sec.~IV-A]{DBLP:conf/lics/ClairambaultF23}, we recall the salient
elements here.
Given $A = (A,A_-,A_+,\U_A,\T_A)$, we set
\[
  \oc A = (\Sym(A), \Sym(A_-),\Symp(A_+), (\Sym\,\U_A)^{\perp\perp},
(\Sym\,\T_A)^{\pperp\pperp})
\]
where $\Symp(A_+)$ is a subgroupoid of $\Sym(A_+)$ with the same
objects but morphisms only the $(\id,\seq{f_i}_i)$; where
$\Sym\,\U_A$ has all $(\Sym(S), \Sym(\display^S))$ for all $(S,
\display^S) \in \U_A$, and likewise for $\Sym\,\T_A$.

$\Sym$ lifts to a pseudofunctor $\oc$ on $\Thin$ via the functorial
action
\[
\oc \left(
\raisebox{10pt}{$
\xymatrix@R=10pt@C=10pt{
&S	\ar[dl]_{\display^S_{\altidisp A l}}
	\ar[dr]^{\display^S_{\altidisp B r}}\\
A&&B
}$}
\right)
\qquad=\qquad
\raisebox{10pt}{$
\xymatrix@R=10pt@C=0pt{
&\Sym(S)
	\ar[dl]_{\Sym(\display^S_{\altidisp A l})}
	\ar[dr]^{\Sym(\display^S_{\altidisp B r})}\\
\Sym(A)&&\Sym(B)
}$}
\]
on thin spans, defining similarly the image of $2$-cells
as the image by $\Sym$ of their underlying components.

When instantiated on the underlying groupoid of a thin groupoid $A$, the natural
transformations $\eta_A$ and $\mu_A$ are not only functors, but
\textbf{renamings} in the sense of~\cite{DBLP:conf/lics/ClairambaultF23}. Recall
from there the pseudofunctor $\check{-} : \Ren^{\op} \to \Thin$ from the
(dualized) $2$-category of renamings to the bicategory of thin spans, mapping a
renaming $f \co A \to B$ to 
\[
B
\quad \stackrel{f}{\ot} \quad
A
\quad \stackrel{\id_A}{\to} \quad
A
\]
a thin span, yielding a counit $\check{\eta_A} \in \Thin[\oc A, A]$
and a comultiplication $\check{\mu_A} \in \Thin[\oc A, \oc \oc A]$ for
$\oc$. We have (see Appendix \ref{app:sym}):

\begin{theorem}
  \label{thm:sym-pseudocomonad}
  We have a pseudocomonad $\oc$ on $\Thin$ based on $\Sym$.
\end{theorem}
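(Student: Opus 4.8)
The plan is to obtain the pseudocomonad by transporting the strict monad structure of $\Sym$ on $\Gpd$ across the contravariant pseudofunctor $\check{-}\co\Ren^{\op}\to\Thin$, whose action reverses $1$-cells and hence turns a monad into a comonad. First I would take as counit and comultiplication the thin spans $\psu_A\in\Thin[\oc A,A]$ and $\psm_A\in\Thin[\oc A,\oc\oc A]$, which are legitimate thin spans precisely because $\eta_A$ and $\mu_A$ are renamings. That these assemble into pseudonatural transformations $\oc\To\Id$ and $\oc\To\oc\oc$ --- i.e. that for an arbitrary thin span $f\co A\to B$ one has invertible $2$-cells $\psu_B\sthc\oc f\cong f\sthc\psu_A$ and $\psm_B\sthc\oc f\cong\oc\oc f\sthc\psm_A$ --- I would derive from the naturality of $\eta$ and $\mu$ as families of renamings, combined with the coherence of $\check{-}$ and of the pseudofunctor $\oc$ established in \cite{DBLP:conf/lics/ClairambaultF23}.

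The comonad coherence modifications then come directly from the monad laws. Since $(\Sym,\eta,\mu)$ is a \emph{strict} monad on $\Gpd$ whose unit and multiplication are renamings, the two unit laws $\mu_A\circ\Sym(\eta_A)=\id$ and $\mu_A\circ\eta_{\Sym A}=\id$ and the associativity law $\mu_A\circ\Sym(\mu_A)=\mu_A\circ\mu_{\Sym A}$ hold on the nose in $\Ren$. Applying $\check{-}$, which sends a composite to the reversed composite up to its compositor $\isominterp{-}$, identities to $\Id$ up to its unitor, and which intertwines the actions of $\Sym$ and $\oc$ via a coherent iso $\check{\Sym(f)}\cong\oc\check{f}$, these three strict equalities translate respectively into the invertible modifications $\oc\psu_A\sthc\psm_A\cong\Id_{\oc A}$, $\psu_{\oc A}\sthc\psm_A\cong\Id_{\oc A}$ and $\oc\psm_A\sthc\psm_A\cong\psm_{\oc A}\sthc\psm_A$ demanded of a pseudocomonad.

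It remains to verify the pseudocomonad coherence axioms relating these modifications. I expect these to reduce entirely to the pseudofunctor coherence of $\check{-}$: because the underlying monad laws are strict equalities in $\Gpd$, there is no higher data to track at the level of groupoids, so each axiom unfolds into a pasting of compositor and unitor isos of $\check{-}$ that already commutes by its coherence. The main obstacle, and the reason this does not follow formally from the $\Fam$-based construction of \cite[Sec.~IV-A]{DBLP:conf/lics/ClairambaultF23}, is that $\Thin$ is built on \emph{strict} pullbacks and so does not respect the principle of equivalence: although $\Sym$ and $\Fam$ are equivalent endofunctors of $\Gpd$, the pseudocomonad cannot simply be transported along that equivalence. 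The verification must therefore be redone directly for $\Sym$, checking in particular that the thin structure $(\Sym(A_-),\Symp(A_+),(\Sym\,\U_A)^{\perp\perp},(\Sym\,\T_A)^{\pperp\pperp})$ is well-defined and orthogonally closed and that $\eta_A,\mu_A$ remain renamings for it; these computations, together with the coherence diagram chases, are what I would relegate to the appendix.
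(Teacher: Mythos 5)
Your skeleton agrees with the paper's at the top level: the counit and comultiplication are indeed the spans $\check\eta_A$ and $\check\mu_A$ obtained from the renamings $\eta_A,\mu_A$, the comonad-law modifications do ultimately come from the strict monad laws of $(\Sym,\eta,\mu)$, and you are right that the construction must be redone for $\Sym$ rather than transported from $\Fam$ along the equivalence. The genuine gap is at the step you dispatch in one sentence: pseudonaturality. A pseudocomonad on $\Thin$ requires $\check\eta$ and $\check\mu$ to be pseudonatural with respect to \emph{every} $1$-cell of $\Thin$, i.e.\ every thin span $A \ot S \to B$, and such spans are not in the image of $\check{-} \co \Ren^{\op} \to \Thin$. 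So no formal ``pseudofunctors turn monads into comonads'' argument applies: naturality of $\eta$ and $\mu$ in $\Gpd$ only yields commuting squares against \emph{functors}, whereas the required invertible $2$-cells $\check\eta_B \odot \oc S \cong S \odot \check\eta_A$ compare composites computed by \emph{strict pullback} in $\Thin$. For those pullbacks to collapse as needed, the naturality squares of $\eta$ and $\mu$ at the display maps of $S$ must themselves be pullbacks, and moreover bipullbacks (so that the comparison is a legitimate $2$-cell and composes horizontally); mere commutativity is not enough. The same issue already arises one level down, for $\oc$ to be a pseudofunctor at all: its compositor exists only because $\Sym$ preserves the composition pullbacks and sends pullbacks that are bipullbacks to bipullbacks.

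These bicartesianness verifications are precisely the technical core of the paper's proof, and they are absent from your checklist (thin structure on $\oc A$ orthogonally closed, $\eta_A,\mu_A$ renamings) --- note that being a renaming is a property of each component functor, whereas what is needed is a property of the naturality \emph{squares}. Concretely, the appendix proves: $\Sym$ and $\Symp$ preserve pullbacks and send pullbacks that are bipullbacks to bipullbacks; the inclusion $\iota \co \Symp \To \Sym$ has naturality squares that are pullbacks and bipullbacks; and likewise for $\eta$ and $\mu$, together with positive restrictions $\eta^+$ and $\mu^+$ compatible with $\iota$. This exhibits $(\Sym,\Symp,\iota)$ as a \pmfunctor and $(\eta,\eta^+)$, $(\mu,\mu^+)$ as \pmtransformations, i.e.\ a monad in the $3$-category $\pmFunct$; the general transport theorem (the pseudofunctor $\check{(-)} \co \pmFunct^{\cocat} \to \Bicat(\Thin,\Thin)$ sends monads in $\pmFunct$ to pseudocomonads on $\Thin$, adapted from \cite{DBLP:conf/lics/ClairambaultF23}) then yields the statement. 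Without establishing these pullback/bipullback properties, your transport argument does not go through, since nothing connects the data defined on the image of $\Ren$ to the rest of $\Thin$.
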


\subsubsection{The exponential}
$\Sym$ enjoys a Seely equivalence in $\Thin$, 
derived from an equivalence already existing in $\Gpd$:
\begin{equation}
  \label{eq:seely-gpd}
  \begin{tikzcd}[cramped]
    \Sym(A+B)
    \ar[rr,shift left,"s_{A, B}"]&&
    \Sym(A) \times \Sym(B)
    \ar[ll,shift left, "\bar{s}_{A, B}"]
  \end{tikzcd}
  \quad\in \Gpd
\end{equation}
for groupoids $A,B$, with $s_{A,B}$ mapping the sequence $\seq{a_1,b_1,b_2,a_2}$
to $(\seq{a_1,a_2},\seq{b_1,b_2})$, and with $\bar s_{A,B}$ mapping
$(\seq{a_1,a_2},\seq{b_1,b_2})$ to $\seq{a_1,a_2,b_1,b_2}$ for instance. When
$A$ and $B$ are thin groupoids, $s_{A,B}$ and $\bar{s}_{A,B}$ are moreover
renamings, so that we can take the image of the above equivalence by 
$\check{-}$ to obtain the Seely equivalence
\[
  \begin{tikzcd}[cramped]
    \oc A \otimes \oc B
    \ar[rr,shift left, "\check s_{A, B}"]
    &&
    \oc(A\with B)
    \ar[ll,shift left,"\check{\bar{s}}_{A, B}"]
  \end{tikzcd}
  \quad\in \Thin
  \zbox.
\]

\subsubsection{The cartesian closed bicategory}

Equipped with the pseudocomonad $\oc$, we derive a Kleisli bicategory
$\Thinb$, whose $1$-morphisms are thus thin spans of the form $\oc A
\ot S \to B$,
composed using the comonadic structure.
By following the proofs in~\cite{DBLP:conf/lics/ClairambaultF23}, which
were mostly non-specific to the $\Fam$ pseudomonad used there, we get:
\begin{theorem}\label{th:cc_bicat}
  $\Thinb$ is a cartesian closed bicategory.
\end{theorem}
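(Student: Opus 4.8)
The plan is to obtain $\Thinb$ as the Kleisli bicategory of a \emph{linear exponential pseudocomonad}, via the bicategorical incarnation of Seely's theorem. A cartesian closed bicategory is built from three ingredients: a symmetric monoidal closed bicategory $(\Thin, \tensor, I, \lin)$; finite products on $\Thin$ supplied by the with $\with$, with terminal object the unit thin groupoid; and a pseudocomonad $\oc$ equipped with coherent Seely equivalences relating $\with$ and $\tensor$. The first two ingredients never mention the exponential, so they are inherited verbatim from~\cite{DBLP:conf/lics/ClairambaultF23}; the genuinely new work concerns the third, because---as stressed above---thin spans rely on strict pullbacks, so we cannot simply transport the $\Fam$-based structure along the equivalence $\Fam \bij \Sym$.

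First I would collect the data already at hand. Theorem~\ref{thm:sym-pseudocomonad} provides the pseudocomonad $\oc$ based on $\Sym$, with counit $\check\eta_A$ and comultiplication $\check\mu_A$, and the previous paragraph provides the Seely equivalence $\oc A \tensor \oc B \bij \oc(A \with B)$ together with its nullary companion $I \bij \oc \top$, both obtained as images under $\check{-}$ of equivalences already living in $\Gpd$. The first step is then to upgrade this into a \emph{symmetric monoidal pseudofunctor} $\oc \co (\Thin, \with) \to (\Thin, \tensor)$: one checks that the Seely maps are coherent, interacting correctly with the associators and unitors and with the comonad structure. Since $\check{-}$ is a pseudofunctor, these coherences reduce to statements about the underlying renamings in $\Gpd$, which is where the concrete combinatorics of sequences (concatenation, and the shuffle underlying $s_{A,B}$) must be verified by hand.

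With $\oc$ promoted to a Seely pseudocomonad, the cartesian closed structure of $\Thinb$ follows by the standard formal calculation. Products in $\Thinb$ are given by $\with$ and the terminal object by the unit, their universal properties lifted from $\Thin$ through the Kleisli construction, while the exponential object is $A \Rightarrow B \defeq \oc A \lin B$. The defining equivalence is then the chain
\[
\Thinb[C \times A, B] = \Thin[\oc(C \with A), B] \bij \Thin[\oc C \tensor \oc A, B] \bij \Thin[\oc C, \oc A \lin B] = \Thinb[C, A \Rightarrow B],
\]
using the Seely equivalence and then monoidal closure of $\Thin$; one checks that this composite is pseudonatural in $C$ and $B$.

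The main obstacle is not this last formal step but the coherence bookkeeping behind it: in a bicategory every equation becomes an invertible $2$-cell subject to its own coherence axioms, and one must assemble the Seely modifications so that all the linear-exponential-comonad diagrams commute. Following~\cite{DBLP:conf/lics/ClairambaultF23}, most of these verifications are insensitive to whether the exponential is $\Fam$ or $\Sym$; the points that genuinely differ are exactly those touching the monoidal structure of the free construction, where $\Sym$'s concatenation replaces the disjoint-union indexing of $\Fam$. These are finite, explicit computations in $\Gpd$, and carrying them out is the technical heart of the proof.
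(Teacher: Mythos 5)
Your proposal is correct and follows essentially the same route as the paper: the paper's proof is precisely to take the $\Sym$-based pseudocomonad of Theorem~\ref{thm:sym-pseudocomonad} together with the Seely equivalence obtained through $\check{-}$, and then to re-run the cartesian-closure proofs of~\cite{DBLP:conf/lics/ClairambaultF23}, which were mostly non-specific to the $\Fam$ pseudomonad. Your identification of where the genuinely new work lies --- the coherences reducing via $\check{-}$ to concrete computations in $\Gpd$ involving $\Sym$'s concatenation rather than $\Fam$'s indexing --- matches the paper's reasoning exactly.
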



\todo{fusionner le sous-tex une fois fini}%

\section{Intersections and Resource Terms}
\label{sec:intersection-resource}

\subsection{Interpreting programs as spans}

\Cref{th:cc_bicat} automatically provides an interpretation of
simply-typed \lterms.
Suppose fixed a countable set $\Var$ of \textbf{variables}. 

The \lterms are defined by the inductive grammar
\[
  M,N,\ldots \qp\syndef x \in \Var \qp\mid M\,N \qp\mid \lambda x. M
  \zbox,
\]
and the \textbf{simple types} are $A, B,\ldots ::= o \mid A \synto B$. A
\textbf{context} is a sequence of bindings $x_1:A_1,\ldots,x_n:A_n$ where the
$x_i$ are (distinct) elements of $\Var$ and the $A_i$ are simple types.
We write
$x \in \Gamma$ when there is a binding $x:B$, for some $B$,
appearing in the sequence of $\Gamma$. 
We consider the standard typing relation $\simplefj \Gamma M A$ for the
simply-typed $\lambda$-calculus.


\subsubsection{Kleisli interpretation}
Given a simple type $A$ we define inductively its
\textbf{interpretation}~$\terminterpb A$, by $\terminterpb{o} = 1$ the
unique thin groupoid based on the terminal (singleton) groupoid, and 
$\terminterpb{A \synto B} = \oc \typeinterpb{A} \linto
\typeinterpb{B}$.
Given a
context $\Gamma = x_1:A_1,\ldots,x_n:A_n$, we define its \textbf{Kleisli
  interpretation} $\ctxtinterpb \Gamma$ as $\typeinterpb {A_1} \with \cdots
\with \typeinterpb {A_n}$. The underlying
groupoid of $\oc\ctxtinterpb \Gamma$ has a monoid structure
in the cartesian category $\Gpd$ giving resource management
operations:
the \eq{multiplication} $\gamma \seqplus \gamma'$ of $\gamma$ and
$\gamma'$ in $\oc G$ is simply their concatenation as sequences; the neutral
element of $\oc G$ is the empty sequence $\seq{}$.

A simply-typed \lterm $\simplefj{\Gamma}{M}{A}$ then admits an
interpretation
%
\begin{gather*}
  \terminterpb{M} \qqp= \tikzcdin[csbo=4em]{\oc \ctxtinterpb \Gamma \&
\terminterpb{M} \ar[l] \ar[r] \& \typeinterpb{A}}
\end{gather*}
in $\Thinb$
via the standard clauses of the interpretation of the
simply-typed $\lambda$-calculus into a cartesian closed category -- we
call this the \textbf{Kleisli interpretation}.
The soundness theorem of cartesian closed categories ensures that
$\beta\eta$-equivalent terms map to positively isomorphic thin
spans; the results of Fiore and Saville
\cite{DBLP:journals/mscs/FioreS21} even yield a coherent interpretation
of reduction sequences as positive isos.

We now set to show that this interpretation is a rigid intersection
type system in disguise; but this will be more visible after we
cope with two aspects of the Kleisli interpretation:
\emph{(1)} elements of $\oc \ctxtinterpb{\Gamma}$ are
sequences \emph{over the whole context}, interleaving accesses to all
variables -- whereas in intersection type systems it is more natural to
have a distinct sequence for each variable; and \emph{(2)} unfolding
the categorical interpretation of $\lambda$-terms in a cartesian closed
category itself constructed as a Kleisli category yields some heavy
bureaucracy, involving compositions with many structural maps, blurring
out the connection with syntax. To mitigate these, we first give a more
syntax-directed characterisation of the interpretation.

\subsubsection{Direct interpretation}
%

We first change the
interpretation of contexts: the interpretation of $\Gamma$ as above is
the thin groupoid $\ctxtinterp \Gamma = \oc \typeinterp {A_1} \otimes
\cdots \otimes \oc \typeinterp {A_n}$ -- for $A$ a type, we 
write $\typeinterp{A}$ as a synonym for $\typeinterpb{A}$. Note
$\ctxtinterp \Gamma$ still has a monoid structure:
the multiplication of 
$\gamma = (\alpha_1,\ldots,\alpha_n)$ and $\gamma' =
(\alpha'_1,\ldots,\alpha'_n)$, two 
elements of $\ctxtinterp{\Gamma}$, is
\[
  \gamma \ctxtplus \gamma'
  \qp=
  (\alpha_1 \seqplus \alpha'_1,\ldots,\alpha_n \seqplus \alpha'_n)
  \mathmakebox[0pt][l]{\qquad
  \in \ctxtinterp \Gamma}
\]
and the neutral element is the $n$-tuple of empty sequences.

Given a typed \lterm $\simplefj{\Gamma}{M}{A}$, 
we now describe its \textbf{direct interpretation} in $\Thinb$ as a span
$\ctxtinterp\Gamma \ot \terminterp M \to \typeinterp A$
given by induction on the typing derivation. In the case of a variable $x_i$ typed in a
context $\Gamma = x_1:A_1,\ldots,x_n:A_n$, we define $\terminterp{x_i}$ as
\[
  \begin{tikzcd}[cramped]
    \oc \typeinterp{A_1} \times \cdots \times \oc \typeinterp{A_n}
    &[5em]
    \typeinterp{A_i}
    \ar[l,"{(\seq{},\ldots,\eta_{\typeinterp{A_i}},\ldots,\seq{})}"']
    \ar[r,"\id_{\typeinterp{A_i}}"]
    &
    \typeinterp{A_i}
    \zbox.
  \end{tikzcd}
\]

For $\simplefj{\Gamma}{M~N}{B}$ where $\simplefj{\Gamma}{M}{A
  \synto B}$ and $\simplefj{\Gamma}{N}{A}$, we set:
\[
  \begin{tikzcd}[csbo=5em,rsbo=3.4em]
    &&
    \terminterp {M~N}
    \ar[dl,dotted,"\pl'"']
    \ar[dr,dotted,"\pr'"]
    \phar[dd,very near start,"\dcorner"]
    &&
    \\
    &
    \terminterp {M} \times \terminterpoc {N}
    \ar[dl,"{(\ctxtplus) \circ (\stdisp[l]{\terminterp M} \times
      \stdisp[l]{\terminterpoc N})}"']
    \ar[dr,"{\stdisp[r]{\terminterp M} \times \stdisp[r]{\terminterpoc N}}"{description,yshift=2pt}]
    &&
    \oc \typeinterp A \times \typeinterp B
    \ar[dl,"{((\pl,\pr),\pl)}"{description,yshift=2pt}]
    \ar[dr,"\pr"]
    \\
    \ctxtinterp{\Gamma}
    &&
    (\oc \typeinterp A \times \typeinterp B) \times \oc \typeinterp A
    &&
    \typeinterp B
  \end{tikzcd}
\]
where we used $\terminterp{M}^{\oc}$, the \textbf{promotion} of
$\terminterp{M}$, defined as the span
\[
  \begin{tikzcd}
    \ctxtinterp\Gamma
    &
    \oc\ctxtinterp\Gamma
    \ar[l,"\ctxtmu_\Gamma"']
    &
    \oc\terminterp{M}
    \ar[l,"\oc \display^{\terminterp{M}}_{\altidisp{\ctxtinterp\Gamma}{l}}"']
    \ar[r,"\oc \display^{\terminterp{M}}_{\altidisp{\typeinterp A}{r}}"]
    &
    \oc \typeinterp{A}
  \end{tikzcd}
\]
where $\ctxtmu_\Gamma : \oc \ctxtinterp{\Gamma} \to \ctxtinterp{\Gamma}$
is the obvious functor sending a sequence of tuples of sequences into
the tuple of concatenated sequences.


Finally, for $\simplefj{\Gamma}{\lambda x.\,M}{A\synto B}$, we set 
$\terminterp{\lambda x.\,M}$ to be the span
\[
  \begin{tikzcd}[cramped]
    \ctxtinterp\Gamma
    &
    \terminterp{M}
    \ar[l,"\display^{\terminterp{M}}_{ll}"']
    \ar[r,"{(\display^{\terminterp{M}}_{lr},\display^{\terminterp{M}}_{\altidisp{\typeinterp B}{r}})}"]
    &[4em]
    \oc \typeinterp{A} \times \typeinterp{B}
  \end{tikzcd}
\]
where $\display^{\terminterp{M}}_{ll}$ and $\display^{\terminterp{M}}_{lr}$ are
obtained from
$\display^{\terminterp{M}}_{\altidisp{\ctxtinterp{\Gamma,x:A}}{l}}$ by
adequately projecting from $\ctxtinterp{\Gamma,x:A} \cong \ctxtinterp\Gamma
\times \oc\typeinterp A$.

We relate the two interpretations: given a context $\Gamma$, we write
\[
  \seely_\Gamma \co \oc (\typeinterp{A_1} + \cdots + \typeinterp{A_n})
\to \oc \typeinterp{A_1} \times \cdots \times   \oc \typeinterp{A_n}
\]
for the evident generalization of the Seely functor
from~\eqref{eq:seely-gpd}. Then:
\begin{theorem}\label{th:direct_carac}
  Given a simply-typed term $\simplefj{\Gamma}{M}{A}$, the span
  \[
    \begin{tikzcd}[cs=5em]
      \oc \typeinterp{A_1} \times \cdots \times \oc\typeinterp{A_n}
      &
      \terminterpb{M}
      \ar[l,"{\seely_\Gamma \circ \display^{\terminterpb{M}}_{\altidisp{\ctxtinterpb \Gamma}{l}}}"']
      \ar[r,"{\display^{\terminterpb{M}}_{\altidisp{\typeinterpb A}{r}}}"]
      &
      \typeinterp{A}
    \end{tikzcd}
  \]
  is thin
  and moreover strongly isomorphic to the span $\terminterp M$.
\end{theorem}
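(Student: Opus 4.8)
The plan is to proceed by induction on the typing derivation of $\simplefj{\Gamma}{M}{A}$, and in each of the three cases (variable, abstraction, application) to unfold both interpretations and exhibit a functor between their vertices commuting strictly with both display maps. Thinness, however, I would dispatch once and for all, abstractly: by \Cref{th:cc_bicat} the Kleisli span $\terminterpb{M}$ is a thin span $\oc\ctxtinterpb{\Gamma} \ot \terminterpb{M} \to \typeinterpb{A}$, and since $\seely_\Gamma$ is a renaming, its image $\check{\seely_\Gamma} \co \ctxtinterp{\Gamma} \ot \oc\ctxtinterpb{\Gamma} \to \oc\ctxtinterpb{\Gamma}$ (with right leg the identity) is a thin span; as that right leg is the identity, the composite $\terminterpb{M} \odot \check{\seely_\Gamma}$ is, up to the canonical iso of a pullback along an identity, exactly the span of the statement. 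Thinness then follows from the fact that composition of thin spans is thin.

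For the remaining, \emph{strong isomorphism} part, the variable and abstraction cases are structural. For $x_i$, the Kleisli interpretation unfolds as the $i$-th projection post-composed with dereliction; post-composing its left leg with $\seely_\Gamma$ routes the single accessed copy to the $i$-th component and the empty sequence to the others, matching the explicit left leg $(\seq{},\ldots,\eta_{\typeinterp{A_i}},\ldots,\seq{})$ of $\terminterp{x_i}$ on the nose, so the strong iso is the identity on witnesses. For $\lambda x.\,M$ I invoke the induction hypothesis on $\simplefj{\Gamma,x:A}{M}{B}$: the direct interpretation is obtained by reindexing $\terminterp{M}$ along $\ctxtinterp{\Gamma,x:A} \iso \ctxtinterp{\Gamma}\times\oc\typeinterp{A}$, while the Kleisli currying preserves the shape of the display maps; the only content is that $\seely_{\Gamma,x:A}$ agrees with $\seely_\Gamma$ on the first $n$ components, a coherence property of the Seely equivalence \eqref{eq:seely-gpd} with respect to $\with$, and feeding this through the induction hypothesis yields the strong iso.

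The application $M\,N$ is the main case and the principal obstacle. Here the Kleisli composition unfolds into a composite involving the comultiplication $\check{\mu}$, the Seely maps and the evaluation span, whereas $\terminterp{M\,N}$ is a single strict pullback built from $\terminterp{M}\times\terminterpoc{N}$ using the promotion $\terminterpoc{N}$. The crux is to check that, once the left leg is post-composed with $\seely_\Gamma$, the comonadic plumbing collapses \emph{strictly} onto this pullback. Concretely this reduces to a compatibility lemma stating that $\seely_\Gamma$ intertwines the concatenation $\ctxtmu_\Gamma$ on $\oc\ctxtinterpb{\Gamma}$ with the componentwise operation $\ctxtplus$ on $\ctxtinterp{\Gamma}$, so that promoting $N$ through the monolithic context exponential and then splitting by Seely coincides with promoting componentwise. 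Granting this, the two strict pullbacks acquire canonically isomorphic vertices commuting with the legs, and the induction hypotheses for $M$ and $N$ assemble into the required strong isomorphism. I expect essentially all the difficulty to lie here, in keeping every structural isomorphism strict and verifying that the Seely/comonad interaction holds on the nose rather than merely up to coherent isomorphism — precisely the point where the reliance of thin spans on strict pullbacks, rather than the principle of equivalence, makes the bookkeeping delicate.
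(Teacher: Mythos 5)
The paper never actually proves \Cref{th:direct_carac}: it is asserted in the main text, and the appendices only treat the intersection-type side of the story (\Cref{prop:corres-typeinterp-it} through \Cref{thm:isom-terminterp-itgpd}), so there is no official proof to compare yours against. Judged on its own terms, your plan is sound and is the natural one given the paper's machinery. The thinness argument is clean and correct: $\seely_\Gamma$ is a renaming, so $\check{\seely_\Gamma}$ is a thin span with identity right leg, composition of thin spans is thin, and composing along that identity leg reproduces the span of the statement up to strong iso. The one step you leave tacit is that the classes $\T_{(-)}$ are closed under strong isomorphism of spans --- immediate, since both $\perp$ and $\pperp$ only depend on the cospan up to iso, but it should be said, because the strict composition pullback is only isomorphic to, not equal to, the statement's span.

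For the strong isomorphism, your induction has the right shape and you correctly locate all the content in the application case. Two strict compatibilities are needed there, and you name essentially both: that $\seely_\Gamma$ is a monoid homomorphism, $\seely_\Gamma(\vec x \seqplus \vec y) = \seely_\Gamma(\vec x) \ctxtplus \seely_\Gamma(\vec y)$, which is what makes the contraction hidden in the Kleisli pairing $\langle \terminterpb{M}, \terminterpb{N}\rangle$ collapse onto the $(\ctxtplus)$ in the left leg of $\terminterp{M\,N}$; and the intertwining $\seely_\Gamma \circ \mu_{\ctxtinterpb{\Gamma}} = \ctxtmu_\Gamma \circ \oc\,\seely_\Gamma$, which matches the Kleisli promotion of $\terminterpb{N}$ against the paper's promotion $\terminterpoc{N}$. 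Both hold on the nose for the $\Sym$-based exponential, precisely because concatenation of sequences is strictly associative and unital --- this is the payoff of $\Sym$ over $\Fam$, and it is why the theorem can claim a \emph{strong} rather than weak isomorphism. One minor slip: $\ctxtmu_\Gamma$ is defined on $\oc\ctxtinterp{\Gamma}$, not on $\oc\ctxtinterpb{\Gamma}$; the operation you mean on the latter is the monad multiplication $\mu$. With these two lemmas stated and proved by direct computation on sequences, the application case closes as you describe, and I see no genuine gap.
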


\subsection{Intersection types for spans}
\label{subsec:int_span}

As the direct interpretation is syntax-directed, it is fairly easy to
represent it purely syntactically as an intersection type system.

\subsubsection{Rigid intersection types}
The \textbf{rigid intersection types} are:
\begin{align*}
  \itype{\alpha,\beta,\ldots}
  \qqp\syndef
  &
    \itype{\itsstar}
    \qp\mid
    \itype{\vec{\alpha} \synlinto \beta}
  \\
  \itype{\vec{\alpha},\vec{\beta},\ldots}
  \qqp\syndef
  &
    \itype{\seq{\alpha_1,\ldots,\alpha_n}}
    \quad (n \in \N)
    \zbox.
\end{align*}

As we study the simply-typed $\lambda$-calculus, we shall not
consider these intersection types as standalone objects but only as
refinements of simple types -- we now move to the refinement relation.

\subsubsection{Refinement}
The refinement relation is defined with
\begin{gather*}
  \inferrule{\zbox{}}{\itrefinfj{\itsstar}{o}}
  \qquad
  \inferrule{\itrefinfj{\vec{\alpha}}{A} \qquad \itrefinfj{\beta}{B}}%
  {\itrefinfj{\vec{\alpha} \synlinto \beta}{A \synto B}}
  \qquad
  \inferrule{\forall i \in \set{1,\ldots,n}\qquad \itrefinfj{\alpha_i}{A}}{\itrefinfj{\seq{\alpha_1,\ldots,\alpha_n}}{A}}
  \zbox,
\end{gather*}
noting that both intersection and sequence types may refine simple
types. This refinement judgement correctly captures the objects in the
groupoid interpreting a type $A$, as expressed by the following: 
\begin{proposition}
  \label{prop:corres-typeinterp-it}
  For every simple type $A$, there are bijections
  \[
    K_A \co \Ob(\typeinterp A) \bij \set{\itype{\alpha} \mid \itrefinfj{\alpha}{A}}\,,
\quad
    K^\oc_{A}\co \Ob(\oc \typeinterp A) \bij \set{\itype{\vec{\alpha}} \mid \itrefinfj{\vec{\alpha}}{A}}
    \zbox.
  \]
\end{proposition}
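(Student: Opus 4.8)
The plan is to prove both bijections simultaneously by structural induction on the simple type $A$, defining $K^\oc_A$ at each stage as the \emph{list extension} of $K_A$. First I would record the object-level descriptions of the groupoids involved, all immediate from the definitions recalled above: since dual, tensor and par leave the underlying groupoid unchanged up to the cartesian product, the underlying groupoid of $\typeinterp{A \synto B} = \oc\typeinterp A \linto \typeinterp B$ is $\Sym(\typeinterp A) \times \typeinterp B$, so that $\Ob(\typeinterp{A \synto B}) = \Ob(\oc \typeinterp A) \times \Ob(\typeinterp B)$; and by definition of $\Sym$, an object of $\oc\typeinterp A = \Sym(\typeinterp A)$ is exactly a finite sequence $\seq{a_1, \ldots, a_n}$ of objects of $\typeinterp A$, \ie a finite list over $\Ob(\typeinterp A)$.

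Then the key observation is a \enquote{list lemma} that handles $K^\oc$ uniformly and lets me reduce everything to the bijections $K_A$ on proper types. By the third refinement rule, a sequence type $\seq{\alpha_1,\ldots,\alpha_n}$ refines $A$ iff each $\alpha_i$ refines $A$, so $\set{\vec\alpha \mid \vec\alpha \refin A}$ is precisely the set of finite lists over $\set{\alpha \mid \alpha \refin A}$. Since $\Ob(\oc\typeinterp A)$ is the set of finite lists over $\Ob(\typeinterp A)$, I would set $K^\oc_A$ to be the componentwise application of $K_A$, \ie $K^\oc_A(\seq{a_1,\ldots,a_n}) = \seq{K_A(a_1),\ldots,K_A(a_n)}$; this is a bijection as soon as $K_A$ is, with inverse the componentwise application of $K_A^{-1}$. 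For the base case $A = o$, the groupoid $\typeinterp o$ is terminal, so $\Ob(\typeinterp o)$ is a singleton, and the only rule with conclusion refining $o$ is $\itsstar \refin o$, whence $\set{\alpha \mid \alpha \refin o} = \set{\itsstar}$ is also a singleton; $K_o$ is then the unique bijection, and $K^\oc_o$ follows by the list lemma.

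For the arrow case $A = B \synto C$, the induction hypothesis supplies bijections $K^\oc_B$ and $K_C$, and I would define $K_{B\synto C}(\vec b, c) = K^\oc_B(\vec b) \synlinto K_C(c)$ on $\Ob(\typeinterp{B\synto C}) = \Ob(\oc\typeinterp B) \times \Ob(\typeinterp C)$. To see this is a bijection I would isolate the one step carrying genuine content, namely an \emph{inversion} property of refinement: the rules are syntax-directed, so a proper type refines $B \synto C$ if and only if it has the shape $\vec\beta \synlinto \gamma$ with $\vec\beta \refin B$ and $\gamma \refin C$ (in particular $\itsstar$ refines no arrow type, and no proper arrow type refines $o$). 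Granting this, $\set{\alpha \mid \alpha \refin B \synto C}$ is in canonical bijection with $\set{\vec\beta \mid \vec\beta \refin B} \times \set{\gamma \mid \gamma \refin C}$, and $K_{B\synto C}$ is the composite of (the inverse of) this bijection with $K^\oc_B \times K_C$, hence a bijection; the list lemma then closes the induction. No step is genuinely difficult: the main obstacle is only bookkeeping, namely checking the syntax-directedness/inversion of $\refin$ and correctly unfolding the definitions to identify the underlying groupoid of $\oc\typeinterp A \linto \typeinterp B$ at the level of objects.
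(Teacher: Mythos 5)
Your proof is correct and takes essentially the same route as the paper: induction on $A$, with $K_o$ the unique bijection between singletons, $K_{B \synto C}$ obtained from $K^\oc_B \times K_C$ composed with the canonical bijection onto arrow refinements, and $K^\oc_A$ defined as the componentwise (list) extension of $K_A$, which is exactly the paper's $\oc K_A$. Your explicit statement of the inversion/syntax-directedness of the refinement rules is just a spelled-out version of the bijection the paper leaves implicit in its final ``$\sim$'' step.
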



\subsubsection{Resource contexts}

To extend this to contexts, it is convenient to introduce \emph{resource
  contexts}. A \textbf{resource context} for $\Gamma = x_1 : A_1, \dots, x_n :
A_n$ is a sequence of bindings $\Theta = (\itrefinfj[x_1]{\vec{\alpha}_1}{A_1},
\ldots, \itrefinfj[x_n]{\vec{\alpha}_n}{A_n})$ -- we then write
$\itrefinfj{\Theta}{\Gamma}$. Clearly, the bijections above extend to $K_\Gamma
: \Ob(\ctxtinterp{\Gamma}) \bij \{\Theta \mid \itrefinfj{\Theta}{\Gamma}\}$. Given
resource contexts for $\Gamma$
\[
  \Sigma = (\itrefinfj[x_i]{\vec{\alpha}_i}{A_i})_{1 \le i \le n}
  \qqand
  \Theta = (\itrefinfj[x_i]{\vec{\beta}_i}{A_i})_{1 \le i \le n}
  \zbox,
\]
their \textbf{concatenation} $\Sigma\rctxtplus\Theta$ is the resource context
$(\itrefinfj[x_i]{(\vec{\alpha}_i \seqplus \vec{\beta}_i)}{A_i})_i$, where
$\itype{\vec{\alpha}_i \seqplus \vec{\beta}_i}$ is the \textbf{concatenation} of
sequence types.

\subsubsection{Intersection type judgements}
We now introduce typing judgements for rigid intersection types. There
are two kinds of judgements, respectively for single intersection
types and for sequences:
\[
  \itfj{\Theta}{\Gamma}{M}{ \alpha }{A}
  \qand
  \itfj{\Theta}{\Gamma}{M}{ \vec{\alpha} }{A}
  \zbox.
\]

The rules appear in~\Cref{fig:it-rules} ignoring, for the moment, the $\cdots
\rtcolor{u/\rtmterm v} \refin \cdots$ parts in the middle.
In the variable rule, we only display variables with non-empty
sequences. The rules may appear heavy due to the multiple components of
jugdments as required for the simple type refinement. But ignoring
simple type refinements, what remains is the standard ruleset for
non-idempotent intersection types as appears \emph{e.g.} in
\cite{DBLP:journals/mscs/Carvalho18}, just without commutativity.

\begin{figure}
  \centering
  \columnwidth=\linewidth
  \begin{gather*}
    \inferrule{\incomment{(x_1:A_1,\ldots,x_n:A_n)\ctxtwf\quad} 
(\itrefinfj{\alpha}{A_i}) }{
 \rtfj{\ldots, \itrefinfj[x_i]{\seq{\alpha}}{A_i}, \ldots}{x_1:A_1,\ldots,x_n:A_n}{x^\alpha_i}{x_i}{\alpha}{A_i}}
    \\[1em]
    \inferrule{\rtfj{\Theta}{\Gamma}{m}{M}{\vec{\alpha} \synlinto \beta}{A \synto B}\quad%
      \rtfj{\Theta'}{\Gamma}{\rtmterm n}{N}{\vec{\alpha}}{A}}%
    {\rtfj{\Theta \ctxtplus\Theta'}{\Gamma}{m\;\rtmterm{n}}{M\,N}{\beta}{B}}
    \\[1em]
    \qquad
    \inferrule{%
      \text{$\forall i \in \set{1,\ldots,k}$,}\quad%
      \rtfj{\Theta_i}{\Gamma}{m_i}{M}{\alpha_i}{A}%
    }%
    {\rtfj{\Theta_1 \ctxtplus\;\cdots\;\ctxtplus
\Theta_n}{\Gamma}{\rtseq{m_1,\ldots,m_k}}{M}{
\seq{\alpha_1,\ldots,\alpha_k} }{A}}
    \\[1em]
    \inferrule{\rtfj{(\Theta,\itrefinfj[x]{\vec{\alpha}}{A})}{\Gamma,
        x:A}{m}{M}{\beta}{B}}%
    {\rtfj{\Theta}{\Gamma}{\lambda x.\,m}{\lambda x.\,M}{\vec{\alpha} \synlinto \beta}{A\synto B}}
  \end{gather*}
  \caption{Intersection types and approximation}
  \label{fig:it-rules}
\end{figure}

Given a derivation $\simplefj{\Gamma}{M}{A}$, $\gamma \in
\Ob(\ctxtinterp \Gamma)$ and $a \in \Ob(\typeinterp A)$, we write
$\terminterp M_{\gamma,a}$ for the \textbf{witnesses} of $\gamma, a$,
\emph{i.e.} the objects of $\terminterp M$
that project on $\gamma$ and $a$ through $\display^{\terminterp M}_l$
and $\display^{\terminterp M}_r$. As the definition of $\terminterp{M}$
directly follows the syntax, it is relatively direct that:
\begin{proposition}
  \label{prop:charact-terminterp-obj}
  Given a simply-typed $\simplefj{\Gamma}{M}{A}$,
  for every $\gamma \in \Ob(\typeinterp \Gamma)$, for every $a
  \in \Ob(\typeinterp A)$,
  we have a bijection
  \begin{gather*}
    \stsupp{\terminterp M}_{\gamma,a} \bij \set{\pi \mid \text{$\pi$ is a
        derivation of $\itfj{K_\Gamma(\gamma) }{\Gamma }{M }{K_A(a) }{A}$}}
    \zbox.
  \end{gather*}
\end{proposition}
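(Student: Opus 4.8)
The plan is to proceed by induction on the typing derivation of $\simplefj{\Gamma}{M}{A}$, exploiting the fact that the direct interpretation $\terminterp{M}$ was itself defined by exactly this induction. Since the rigid system has two kinds of judgements — for single types $\alpha$ and for sequences $\vec{\alpha}$ — and since the promotion $\terminterpoc{N}$ of the argument intervenes in the application clause, I would strengthen the statement to a \emph{simultaneous} claim: alongside the stated bijection $\terminterp{M}_{\gamma,a}\bij\{\pi\mid\cdots\}$, I also establish
\[
  \terminterpoc{M}_{\gamma,\vec a} \bij \set{\pi \mid \text{$\pi$ derives $\itfj{K_\Gamma(\gamma)}{\Gamma}{M}{K^\oc_A(\vec a)}{A}$}}
\]
for the sequence judgements. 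Here $K_\Gamma$, $K_A$ and $K^\oc_A$ are the object/type bijections of \Cref{prop:corres-typeinterp-it}; throughout, they let me translate freely between objects of the interpreting groupoids and the intersection (sequence) types refining $\Gamma$ and $A$, so that the combinatorial content of the argument reduces to matching witnesses with the premises of the corresponding rule of \Cref{fig:it-rules}.

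For each term constructor I would unfold the relevant clause of the direct interpretation and read off its witnesses. In the variable case, $\terminterp{x_i}$ sends its unique witness over $a$ to $(\seq{},\ldots,\eta_{\typeinterp{A_i}}(a),\ldots,\seq{})$ on the left and to $a$ on the right, so $\terminterp{x_i}_{\gamma,a}$ is a singleton exactly when $\gamma$ carries $\seq{a}$ in position $i$ and empty sequences elsewhere — which under $K_\Gamma$ and $K_{A_i}$ is precisely the side-condition of the variable rule. For $\lambda x.\,M$, the span $\terminterp{\lambda x.\,M}$ merely reindexes $\terminterp{M}$ along $\ctxtinterp{\Gamma,x:A} \cong \ctxtinterp\Gamma \times \oc\typeinterp A$, so a witness over $(\gamma,a)$ with $K_{A\synto B}(a)=\vec\alpha\linto\beta$ is literally a witness of $\terminterp{M}$ over the context object $(\gamma,\vec\alpha)$ and output $\beta$; the induction hypothesis on $M$ then reproduces the abstraction rule. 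The application case is the substantial one: by the defining pullback a witness of $\terminterp{M\,N}_{\gamma,b}$ is a pair $(w,\vec w)$, with $w\in\terminterp{M}$ and $\vec w\in\terminterpoc{N}$, whose displays agree in the middle object $(\oc\typeinterp A\times\typeinterp B)\times\oc\typeinterp A$. Unwinding the two composite maps — the reconstruction $((\pl,\pr),\pl)$ on one side and $\stdisp[r]{\terminterp M}\times\stdisp[r]{\terminterpoc N}$ on the other — shows that the agreement forces the $\oc\typeinterp A$-component of the right display of $w$ to coincide with the right display of $\vec w$; translated by $K$ this says $M$ has type $\vec\alpha\linto\beta$ and the promotion of $N$ has sequence type $\vec\alpha$ for a common $\vec\alpha$, while the left display factors through $\ctxtplus$ as $\gamma=\gamma_1\ctxtplus\gamma_2$. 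The induction hypotheses for $M$ (single) and for $N$ (sequence, through its promotion) then match exactly the premises of the application rule.

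The sequence judgement is handled by the promotion clause. An object of $\oc\terminterp{M}$ over $\vec a=\seq{a_1,\ldots,a_k}$ is a sequence $\seq{w_1,\ldots,w_k}$ with $w_i\in\terminterp{M}$ lying over $a_i$ on the right; its left display, passing through $\ctxtmu_\Gamma$, concatenates the contexts of the $w_i$, so that $\gamma=\gamma_1\ctxtplus\cdots\ctxtplus\gamma_k$ with $w_i\in\terminterp{M}_{\gamma_i,a_i}$. The induction hypothesis turns each $w_i$ into a derivation $\pi_i$ of $\itfj{K_\Gamma(\gamma_i)}{\Gamma}{M}{K_A(a_i)}{A}$, and the sequence rule — which concatenates the resource contexts via $\rctxtplus$ — assembles the $\pi_i$ into a derivation of the sequence judgement; this assignment is visibly invertible, completing the mutual induction.

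The main obstacle I expect is the bureaucracy of the application case: one must unwind the composite display maps of the pullback defining $\terminterp{M\,N}$ carefully enough to be sure the pullback condition forces \emph{exactly} the splitting $\gamma=\gamma_1\ctxtplus\gamma_2$ and the sharing of a single argument sequence type $\vec\alpha$ between $M$ and the promotion of $N$, with no spurious identifications introduced by the structural maps. The supporting observation — that the monoid multiplication $\ctxtplus$ on $\ctxtinterp\Gamma$ corresponds under $K_\Gamma$ precisely to the concatenation $\rctxtplus$ of resource contexts — is the small lemma that makes both the application and the sequence rules line up on the nose, and is where I would take the most care.
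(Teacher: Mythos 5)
Your proposal is correct and follows essentially the same route as the paper: the paper also strengthens the statement to a simultaneous claim covering the multilinear (sequence) judgement for $\terminterpoc{M}$, and then proceeds by induction on the typing derivation, unfolding the variable, application, abstraction and promotion clauses of the direct interpretation and matching witnesses (via $K_\Gamma$, $K_A$, $K^\oc_A$) to the premises of the corresponding rules of \Cref{fig:it-rules}. Your identification of the application case as the delicate one, and of the compatibility of $\ctxtplus$ with concatenation of resource contexts as the key supporting observation, matches exactly what the paper's case analysis relies on.
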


Combined with Theorem \ref{th:direct_carac}, this shows that for any
simply-typed $\lambda$-term $\simplefj{\Gamma}{M}{A}$, for any $\gamma \in
\ctxtinterp{\Gamma}$ and $a \in \typeinterp{A}$, the set of $m \in
\Ob(\terminterpb{M})$ mapping to $\gamma, a$ may be regarded as the set of
derivations of $\itfj{K_\Gamma(\gamma)}{\Gamma}{M}{K_A(a)}{A}$ in our rigid
intersection type system. This result is to be compared with existing works
providing similar characterisations in generalized species of structure
\cite{DBLP:conf/lics/TsukadaAO17,ol:intdist}, where the rigid intersection type
systems considered are much more complex, in particular importing symmetries in
derivations -- and derivations must be quotiented by relations forgetting the
exact position of symmetries in the derivations. In contrast, our derivations
are the simple inductive structures they appear to be, no quotient is required
to obtain our characterisation.

\subsection{Extension to symmetries}

Proposition \ref{prop:charact-terminterp-obj} is analogous to earlier
results of Tsukada \emph{et al.} \cite{DBLP:conf/lics/TsukadaAO17} and
Olimpieri \cite{ol:intdist} set in generalized species of structures,
but here we go further and characterise the full \emph{groupoid} by
also giving an inductive, syntax-directed presentation of the
\emph{symmetries}.

\subsubsection{Intersection type morphisms}

The linear, sequence and multilinear intersection type morphisms are
defined by the grammar
\[
\begin{array}{rcll}
  \itmor{\phi,\psi,\ldots}
  &
  \qqp\syndef
  &
  \itmor{\itsstarid}
  \qp\mid
  \itmor{\widetilde{\phi} \synlinto \psi}
  \\
  \itmor{\vec{\phi}, \vec{\psi}, \ldots}
  &
  \qqp\syndef
  &
   \itmor{\seq{\phi_1, \dots, \phi_n}}
  & (n \in \N)\\
  \itmor{\widetilde{\phi}, \widetilde{\psi}, \ldots}
  &
  \qqp\syndef
  &
    \itmor{(\sigma, \vec{\phi})}
  & (\sigma \in \Sgroup_n, |\itmor{\vec{\phi}}| = n)
\end{array}
\]
where $\Sgroup_n$ is the symmetric group on $n$ elements.
Given two multilinear morphisms
$\itmor{\widetilde{\phi}_1}$ and $\itmor{\widetilde{\phi}_2}$ where
$\itmor{\widetilde{\phi}_i} =
\itmor{(\sigma_i,\seq{\phi_{i,1},\ldots,\phi_{i,n_i}})}$, we define their
\textbf{concatenation} $\itmor{\widetilde{\phi}_1 \seqplus
\widetilde{\phi}_2}$
as $\itmor{(\sigma_1 \permplus \sigma_2,\seq{\phi_{1,i}}_i \seqplus
\seq{\phi_{2,i'}}_{i'})}$. 

\subsubsection{Groupoids of refinements for types}
We extend our refinement relations to morphisms and introduce the
\textbf{linear and multilinear morphism refinement judgements}\simon{est-ce
  qu'on essaye de renommer au max les \eq{morphismes} en \eq{symétries} quand
  c'est possible comme ici?}
\pierre{On peut, j'ai pas d'avis très fort}, of the
form $\itmrefinfj{\phi}{\alpha }{\alpha' }{A}$ and
$\itmrefinfj{\widetilde{\phi} }{\vec{\alpha} }{\vec{\alpha}' }{A}$.
The former states that $\itmor{\phi}$ is a linear morphism from
$\itype{\alpha}$ to $\itype{\alpha'}$ within refinements of simple type
$A$, and likewise for the latter.  Those are defined inductively
through: 
\begin{gather*}
  \inferrule{\hbox{}}{\itmrefinfj{\itsstarid }{\itsstar }{\itsstar }{o}}
  \qquad
  \inferrule{\itmrefinfj{\widetilde{\phi} }{\vec{\alpha} }{\vec{\alpha}'}{A}
    \qquad
    \itmrefinfj{\psi }{\beta }{\beta' }{B}}%
  {(\itmrefinfj{\widetilde{\phi} \synlinto \psi) }{(\vec{\alpha} \synlinto \beta) }{(\vec{\alpha}' \synlinto \beta') }{A \synto B}}
  \\
  \inferrule{n\in \N\qquad \sigma \in \Sgroup_n\qquad \forall i \in
    \set{1,\ldots,n}\quad \itmrefinfj{\phi_i}{\alpha_i}{\alpha'_{\sigma(i)}}{A}}{\itmrefinfj{(\sigma,\seq{\phi_1,\ldots,\phi_n}) }{\seq{\alpha_1,\ldots,\alpha_n} }{\seq{\alpha'_1,\ldots,\alpha'_n} }{A}}
\end{gather*}

It is immediate that if $\itmrefinfj{\phi}{\alpha }{\alpha' }{A}$, then
$\itrefinfj{\alpha}{A}$ and $\itrefinfj{\alpha'}{A}$, and that likewise, if
$\itmrefinfj{\widetilde{\phi} }{\vec{\alpha} }{\vec{\alpha}' }{A}$, then
$\itrefinfj{\vec{\alpha}}{A}$ and $\itrefinfj{\vec{\alpha}'}{A}$.


As suggested by the syntax, the linear (\emph{resp.} multilinear)
intersection types and the associated morphisms that refine a common
simple type $A$ organize into a groupoid $\itgpd A$ (\emph{resp.}
$\itgpdoc{A}$). The composition operation is defined by induction on
derivations, with:
\[
\begin{array}{rcl}
\itmor{\itsstarid \circ \itsstarid}
  &=& \itmor{\itsstarid}\\
\itmor{(\widetilde{\phi}' \synlinto \psi') \circ (\widetilde{\phi}
\synlinto \psi)}
  &=&
\itmor{(\widetilde{\phi}' \circ \widetilde{\phi}) \synlinto (\psi'
\circ \psi)}\\
\itmor{(\sigma',\seq{\phi'_i}_{1 \le i \le n}) \circ
(\sigma,\seq{\phi_i}_{1 \le i \le n}) }
  &=&
\itmor{(\sigma'\circ\sigma,\seq{\phi'_{\sigma(i)} \circ \phi_{i}}_{1
\le i \le n})}
\end{array}
\]

The inverse of a morphism is defined by induction similarly. This allows us to
extend the correspondence of \Cref{prop:corres-typeinterp-it}:
\begin{proposition}
  \label{prop:isom-itgpd-typeinterp}
  For $A$ a simple type, there are groupoid isos:
  \[
    K_A \co \typeinterp A \iso \itgpd A
    \qand
    K_A^\oc \co \oc \typeinterp {A} \iso \itgpdoc {A}
    \zbox.
  \]
\end{proposition}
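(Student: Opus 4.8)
The plan is to establish both isomorphisms simultaneously by induction on the simple type $A$, extending to morphisms the object-level bijections already supplied by \Cref{prop:corres-typeinterp-it}. The two statements are mutually dependent, since $\itgpd{B \synto C}$ is built from $\itgpdoc{B}$ and $\itgpd{C}$ while $\itgpdoc{A}$ is built from $\itgpd{A}$; so the cleanest packaging is to prove $K_A \co \typeinterp A \iso \itgpd A$ by induction on $A$, deriving $K_A^\oc$ from $K_A$ along the way and checking that on objects it recovers the componentwise bijection of \Cref{prop:corres-typeinterp-it}.

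The first thing I would record is that the intersection-type groupoids are, by construction, strict instances of the very groupoid operations used semantically. Comparing definitions, the objects of $\itgpdoc{A}$ are sequences of objects of $\itgpd{A}$, and a morphism $(\sigma, \seq{\phi_i})$ is a permutation together with a componentwise family of morphisms of $\itgpd{A}$, composed by $(\sigma',\seq{\phi'_i}) \circ (\sigma,\seq{\phi_i}) = (\sigma'\circ\sigma, \seq{\phi'_{\sigma(i)}\circ\phi_i})$. This is precisely the description of $\Sym(\itgpd{A})$, so $\itgpdoc{A}$ coincides with $\Sym(\itgpd{A})$. Likewise, unfolding the refinement and morphism rules for $\synlinto$ (whose composition and inverse are taken componentwise) gives $\itgpd{B\synto C} \iso \itgpdoc{B} \times \itgpd{C}$. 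On the semantic side, the underlying groupoid of $\oc\typeinterp{A}$ is by definition $\Sym(\typeinterp{A})$, and since $(-)^\perp$ and $\parr$ leave underlying groupoids unchanged, the underlying groupoid of $\typeinterp{B\synto C} = \oc\typeinterp B \linto \typeinterp C$ is $\Sym(\typeinterp B) \times \typeinterp C$.

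With these identifications the induction is short. For the base case, $\typeinterp{o} = 1$ and $\itgpd{o}$ both carry a single object and only its identity, so $K_o$ is the unique iso, and $K_o^\oc = \Sym(K_o) \co \Sym(1) \iso \Sym(\itgpd o) = \itgpdoc{o}$. For the inductive step $A = B \synto C$, the hypothesis provides isos $K_B \co \typeinterp B \iso \itgpd B$ and $K_C \co \typeinterp C \iso \itgpd C$; applying the functor $\Sym \co \Gpd \to \Gpd$ to $K_B$ yields $K_B^\oc := \Sym(K_B) \co \oc\typeinterp B \iso \itgpdoc B$, and I then set $K_{B\synto C} := K_B^\oc \times K_C$ through the identifications above. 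Because $\Sym$ and $\times$ are functorial on $\Gpd$ and functors preserve isomorphisms, $K_B^\oc$ and $K_{B\synto C}$ are genuine groupoid isos; preservation of composition, identities and inverses, together with bijectivity on hom-sets, is then inherited for free, and the matching composition formula noted above is exactly what makes $\Sym(K_B)$ land in $\itgpdoc B$ as written.

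The one genuinely delicate point to nail down is the variance at the arrow. I would verify carefully that the linear arrow contributes no dualization at the level of underlying groupoids: the construction $(-)^\perp$ acts only on the orthogonality and thinness data $\U,\T$ (and on the polarized subgroupoids $A_-, A_+$), leaving the underlying groupoid untouched, so that $\typeinterp{B\synto C}$ is a genuine \emph{covariant} product. This matches the covariant product structure of $\itgpd{B\synto C}$ visible in the morphism rule for $\synlinto$, where $\widetilde{\phi}$ runs from $\vec{\alpha}$ to $\vec{\alpha}'$ rather than in the reverse direction. This is precisely where one could be misled by the contravariance familiar from distributor models such as $\Esp$; confirming its absence here is what makes the two sides strictly matching products rather than merely equivalent, and is the crux that lets the whole argument reduce to functoriality of $\Sym$ and $\times$.
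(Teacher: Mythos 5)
Your proof is correct and follows essentially the same route as the paper: the paper also defines $K_A$ by induction on $A$, extending the object-level bijections of \Cref{prop:corres-typeinterp-it} to symmetries by exploiting exactly your key observation that the multilinear morphism rules mirror the action of $\Sym$ (\emph{i.e.} $\itgpdoc{A} = \Sym(\itgpd{A})$, $\itgpd{B\synto C} \iso \itgpdoc{B}\times\itgpd{C}$, and $K_A^\oc = \oc K_A$). Your additional check that $(-)^\perp$ leaves the underlying groupoid untouched is a worthwhile explicit confirmation of a fact the paper uses implicitly.
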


As $\intr{A}$ is a thin groupoid, it comes equipped with its two
polarized sub-groupoids $\intr{A}_-$ and $\intr{A}_+$ -- via the
proposition above, they transport to two sub-groupoids $\itgpdn{A}$ and
$\itgpdp{A}$ of $\itgpd{A}$.

\subsubsection{Groupoids of refinements for contexts} Consider $\Gamma$
a context and $\itrefinfj{\Theta, \Theta'}{\Gamma}$. A \textbf{context
morphism} from $\itype{\Theta}$ to $\itype{\Theta'}$ is a sequence
\[
  \itmor{\Xi} = (\itmrefinfj[x_1]{\widetilde{\phi}_1}{\vec{\alpha}_1}{\vec{\alpha}'_1}{},
       \ldots,
       \itmrefinfj[x_n]{\widetilde{\phi}_n}{\vec{\alpha}_n}{\vec{\alpha}'_n}{})
\]
where $\itype{\Theta} = (\itrefinfj[x_i]{\vec{\alpha}_i}{A_i})_{1\leq i \leq n}$ and
$\itype{\Theta'} = (\itrefinfj[x_i]{\vec{\alpha}'_i}{A_i})_{1\leq i \leq n}$ -- we also
write $\itmrefinfj{\Xi}{\Theta}{\Theta'}{\Gamma}$ to mean that
$\itmor{\Xi}$ is a morphism of refinements of $\Gamma$ from
$\itype{\Theta}$ to $\itype{\Theta'}$; in that case we write
$\itype{\Theta} = \dom(\itmor{\Xi})$ and $\itype{\Theta'} =
\cod(\itmor{\Xi})$. Given two such morphisms
$\itmrefinfj{\Xi_1}{\Theta_1}{\Theta'_1}{\Gamma}$ and
$\itmrefinfj{\Xi_2}{\Theta_2}{\Theta'_2}{\Gamma}$ for a common context $\Gamma$,
their \textbf{concatenation}
\[
  \itmrefinfj{\Xi_1 \rctxtplus \Xi_2}{\Theta_1 \rctxtplus \Theta_2}{\Theta'_1 \rctxtplus \Theta'_2}{\Gamma}
\]
is defined by componentwise concatenation. The resource contexts and
resource context morphisms form a groupoid $\itgpd\Gamma$ which can be
seen as the product of the $\itgpdoc{A_i}$, so we have a groupoid iso 
\[
  K_\Gamma : \ctxtinterp{\Gamma} \iso \itgpd{\Gamma}\,.
\]

\subsubsection{Morphisms between derivations} We finally set to
construct a \emph{groupoid} of derivations in our rigid intersection
type system. The morphisms will be given by two kinds of judgements,
of the form
\[
  \itmfj{\Xi}{\Gamma}{M}{\phi}{\alpha}{\alpha'}{A}
  \qand
  \itmfj{\Xi}{\Gamma}{M}{\widetilde{\phi}}{\vec{\alpha}}{\vec{\alpha}'}{A}
\]
read as stating that $\itmor{\phi}$ is a morphism from
$\itfj{\dom(\Xi)}{\Gamma}{M}{\alpha}{A}$ to
$\itfj{\cod(\Xi)}{\Gamma}{M}{\alpha'}{A}$, and likewise for multilinear
refinements.

The rules appear in Figure \ref{fig:itm-rules}. The most subtle case is
the last, corresponding to \emph{promotion} and introducing new
symmetries following an arbitrary permutation $\sigma$. In
particular, swapping derivations for $M$ by $\sigma$ requires swapping
accordingly the resource accesses in the context. This uses an
operation that to a family 
$(\itmrefinfj{\widetilde{\phi}_i }{\vec{\alpha}_i }{\vec{\alpha}'_i }{A})_{1\leq i \leq n}$ of morphisms of refinements of $A$ associates
\[
  \itmrefinfj{\rmcplact{\sigma} (\widetilde{\phi}_i)_{1\leq i \leq n}}%
  {\vec{\alpha}_1 \seqplus \ldots \seqplus \vec{\alpha}_n}%
  {\vec{\alpha}'_{\sigma^{-1}(1)} \seqplus \ldots \seqplus \vec{\alpha}'_{\sigma^{-1}(n)}}{A}
\]
a single morphism
defined in the obvious way. This generalizes to context refinement
morphisms transparently, variable by variable. 

\begin{figure*}
  \centering
  \columnwidth=\linewidth
  \begin{gather*}
\scalebox{.95}{$
    \inferrule{\incomment{(x_1:A_1,\ldots,x_n:A_n)\ctxtwf\quad} 
      (\itmrefinfj{\phi}{\alpha}{\alpha'}{A_i})
      }{
        \itmfj{\ldots, \itmrefinfj[x_i]{(\unit{\set{1}},\seq{\phi})}{\seq{\alpha}}{\seq{\alpha'}}{A_i}, \ldots }%
        {}{x_i }{\phi }{\alpha }{\alpha' }{A_i}}$}
    \quad
\scalebox{.95}{$
  \inferrule{\itmfj{\Xi}{}{M}{(\widetilde{\phi} \synlinto \psi)}{(\vec{\alpha} \synlinto \beta)}{(\vec{\alpha}' \synlinto \beta')}{A \synto B}%
      \quad%
      \itmfj{\Xi'}{}{N}{\widetilde{\phi}}{\vec{\alpha}}{\vec{\alpha}'}{A}}%
    {\itmfj{\Xi \rmctxtplus\Xi' }{}{M\,N}{\psi }{\beta }{\beta' }{B}}
$}
    \\[1em]
\scalebox{.95}{$
    \inferrule{\itmfj{\Xi,\itmrefinfj[x]{\widetilde{\phi}}{\vec{\alpha}}{\vec{\alpha}'}{A}}{}{M}{\psi}{\beta}{\beta'}{B}}%
    {\itmfj{\Xi}{}{\lambda x.\,M}{(\widetilde{\phi} \synlinto \psi)}{(\vec{\alpha} \synlinto \beta)}{(\vec{\alpha}' \synlinto \beta')}{A \synto B}}
$}
    \quad
\scalebox{.95}{$
    \inferrule{n\in \N\quad \sigma \in \Sgroup_n\quad \forall i \in
\set{1,\ldots,n},\;\itmfj{\Xi_i}{}{M}{\phi_i}{\alpha_i}{\alpha'_i}{A}}%
    {\itmfj{\plact{\sigma}{(\Xi_i)_{1 \le i\le n}} }{}{M}{(\sigma,\seq{\phi_1,\ldots,\phi_n}) }{\seq{\alpha_1,\ldots,\alpha_n} }{\seq{\alpha'_{\finv\sigma(1)},\ldots,\alpha'_{\finv\sigma(n)}}}{A}}$}
  \end{gather*}
  \caption{The rules for rigid intersection type morphisms}
  \label{fig:itm-rules}
\end{figure*}

Now, given a derivation $\simplefj{\Gamma}{M}{A}$, its associated intersection type
derivations $\itfj{\Theta}{\Gamma}{M}{\alpha}{A}$ and intersection type morphism
derivations $\itmfj{\Xi}{\Gamma}{M}{\phi}{\alpha}{\alpha'}{A}$ organize into a
groupoid $\itgpd M$, whose composition is directly derived from the ones of refinement
types and resource contexts. By considering the two projection functors defined
in the obvious way, we get a span
\[
\begin{tikzcd}
  \itgpd \Gamma
  &
  \itgpd M
  \ar[l,"\display^M_l"']
  \ar[r,"\display^M_r"]
  &
  \itgpd A
\end{tikzcd}
\]
which can be seen as a syntactic description of $\terminterp M$ by the
result: 
\begin{theorem}
  \label{thm:isom-terminterp-itgpd}
  For any simply-typed $\lambda$-term $\simplefj{\Gamma}{M}{A}$, there is
  an iso of groupoids $K_M \co \terminterp M \to \itgpd M$ making the 
  diagram commute:
  \begin{equation*}
    \begin{tikzcd}
      \ctxtinterp\Gamma
      \ar[d,"K_\Gamma"']
      &
      \terminterp M
      \ar[l,"{\stdisp[\altidisp{\ctxtinterp\Gamma}{l}]{\terminterp M}}"']
      \ar[r,"{\stdisp[\altidisp{\typeinterp A}{r}]{\terminterp M}}"]
      \ar[d,"K_M"]
      &
      \typeinterp A
      \ar[d,"K_A"]
      \\
      \itgpd\Gamma
      &
      \itgpd M
      \ar[l,"{\stdisp[l]{M}}"]
      \ar[r,"{\stdisp[r]{M}}"']
      &
      \itgpd A
    \end{tikzcd}
  \end{equation*}
\end{theorem}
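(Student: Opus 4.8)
The plan is to argue by induction on the typing derivation of $\simplefj{\Gamma}{M}{A}$, following the syntax-directed clauses that define the direct interpretation $\terminterp M$. On objects there is nothing new to do: \Cref{prop:charact-terminterp-obj} already gives, for each $\gamma \in \Ob(\ctxtinterp\Gamma)$ and $a \in \Ob(\typeinterp A)$, a bijection between the witnesses of $\terminterp M$ over $(\gamma,a)$ and the derivations of $\itfj{K_\Gamma(\gamma)}{\Gamma}{M}{K_A(a)}{A}$, and we take $K_M$ to act this way on objects. The real content is therefore to define $K_M$ on symmetries, to check that it is functorial and bijective on each hom-set, and to verify that the two outer squares commute not only on objects (which is implicit in \Cref{prop:charact-terminterp-obj}) but on morphisms. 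The guiding observation is that $\terminterp{-}$ and $\itgpd{-}$ are built by the same recipe, and that the base isomorphisms $K_A \co \typeinterp A \iso \itgpd A$, $K_A^{\oc} \co \oc\typeinterp A \iso \itgpdoc A$ and $K_\Gamma \co \ctxtinterp\Gamma \iso \itgpd\Gamma$ of \Cref{prop:corres-typeinterp-it,prop:isom-itgpd-typeinterp} are compatible with all the structural maps ($\eta$, $\mu$, the Seely functors and the product projections) occurring in the interpretation. One then transports the construction of $\terminterp M$ across these isomorphisms and matches it, step by step, against the construction of $\itgpd M$.

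The variable and abstraction cases are direct. For $x_i$, the right leg of $\terminterp{x_i}$ is an identity, so its underlying groupoid is $\typeinterp{A_i}$ and $K_{A_i}$ already provides the iso to $\itgpd{A_i}$; one only checks that the left leg, given by $\eta_{\typeinterp{A_i}}$ in position $i$, is sent to the context morphism $(\unit{\set{1}},\seq{\phi})$ of the variable rule in \Cref{fig:itm-rules}, which is exactly compatibility of $K$ with $\eta$. For $\lambda x.\,M$, the span $\terminterp{\lambda x.\,M}$ is obtained from $\terminterp M$ through the iso $\ctxtinterp{\Gamma,x:A} \iso \ctxtinterp\Gamma \times \oc\typeinterp A$ that relocates the resources of $x$ from the context into the domain of the arrow; syntactically this is precisely the abstraction rule moving the binding $\itrefinfj[x]{\vec\alpha}{A}$ into $\vec\alpha \synlinto \beta$, and moving $\widetilde\phi$ into $\widetilde\phi \synlinto \psi$ on morphisms. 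Thus $K_{\lambda x.\,M}$ is obtained from $K_M$ by this relabelling, and functoriality together with the squares are inherited from the inductive hypothesis.

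The application and promotion clauses carry the real work. For $\simplefj{\Gamma}{M\,N}{B}$ we have $K_M$ by induction, and we first treat the promotion $\terminterpoc N = \Sym(\terminterp N)$, whose left leg is precomposed with $\ctxtmu_\Gamma$. Its witnesses are sequences $\seq{m_1,\dots,m_k}$ matching the sequence rule, and --- this is the key point --- its morphisms are the pairs $(\pi,\seq{f_i})$ arising from the free symmetric monoidal structure of $\Sym$, which under $K_A^{\oc}$ become exactly the multilinear morphisms $(\sigma,\vec\phi)$ of the promotion rule, with $\sigma$ supplied by $\pi$; the precomposition with $\ctxtmu_\Gamma$ matches the resource-context concatenation and its morphism-level version $\plact{\sigma}{(\widetilde\phi_i)_i}$. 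The application span is then a pullback of $\terminterp M \times \terminterpoc N$ along structural maps, and its composition must be matched with the syntactic concatenation $\Xi \rmctxtplus \Xi'$ and the inductively defined composition of morphisms from \Cref{subsec:int_span}.

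I expect the main obstacle to be exactly this promotion/application interaction: the composition of morphisms in the pullback groupoid $\terminterp{M\,N}$ is computed by realising synchronisations on the nose through the uniqueness statements of \Cref{lem:comp_upto_sym,lem:act_sym}, and one must verify that this coincides with the inductively defined composition of intersection type morphisms --- in particular that the permutation bookkeeping introduced by $\Sym$ in the promotion is tracked faithfully by $\sigma$ and by $\plact{\sigma}{(-)}$. Once the compatibility of $K_\Gamma$ and $K_A$ with the structural maps is in place, fullness, faithfulness and the commutation of both squares follow formally from the inductive hypotheses, so that $K_M$ is a groupoid isomorphism by construction.
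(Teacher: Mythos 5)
Your plan coincides with the paper's own proof: a mutual induction on the typing derivation that simultaneously constructs $K_M \co \terminterp M \to \itgpd M$ and its multilinear companion $K^\oc_M \co \terminterpoc M \to \itgpdoc M$ (the paper makes this strengthening of the statement explicit, since the application case consumes $K^\oc_N$), with the variable and abstraction cases handled by relabelling across $\ctxtinterp{\Gamma,x:A} \cong \ctxtinterp\Gamma \times \oc\typeinterp A$, and the promotion case by matching the morphisms $(\pi,\seq{f_i})$ of $\Sym(\terminterp M)$ against the multilinear rule $(\sigma,\vec{\phi})$, and $\ctxtmu_\Gamma$ against context concatenation and $\plact{\sigma}{(-)}$.

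One point deserves correction, because it would send you hunting for an argument that does not exist: your anticipated ``main obstacle'' in the application case is a non-issue. The pullback defining $\terminterp{M\,N}$ is a \emph{strict} pullback in $\Gpd$: its objects are pairs of witnesses whose projections agree on the nose, its morphisms are pairs of symmetries agreeing on the nose, and its composition is componentwise. \Cref{lem:comp_upto_sym} and \Cref{lem:act_sym} play no role in this theorem at all --- they govern the horizontal composition of $2$-cells and the relational collapses, not the $1$-dimensional composition of spans. Accordingly, the paper handles application purely formally: it exhibits $\itgpd{M\,N}$ as the pullback of the cospan $\itgpd M \to \itgpd{A \synto B} \ot \itgpdoc{N}$ (the syntactic application rule \emph{is} a pullback condition, since a morphism derivation for $M\,N$ is exactly a pair of derivations for $M$ and $N$ agreeing on the $\oc$-part of the arrow type), exhibits $\terminterp{M\,N}$ as the pullback of the cospan $\terminterp M \to \oc\typeinterp A \times \typeinterp B \ot \terminterpoc{N}$, and obtains $K_{M\,N}$ as the isomorphism induced between pullbacks of isomorphic cospans, using $K_M$, $K^\oc_N$ and $K^\oc_A$. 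The agreement of componentwise composition with the syntactically defined composition of derivations is then a direct check; no permutation bookkeeping arises beyond what the promotion case already tracks, and no synchronisation up to symmetry occurs anywhere.
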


By Theorem \ref{th:direct_carac}, this also applies to the
Kleisli interpretation. From this connection to the interpretation in
the cartesian closed bicategory $\Thinb$, we immediately get the
following corollary: 

\begin{corollary}
  \label{coro:interp-red}
  Consider $\simplefj{\Gamma}{M, M'}{A}$ simply-typed $\lambda$-term,
\emph{s.t.} $M \to_{\beta} M'$. Then, there is a weak iso
  of spans $\itgpd M \iso \itgpd{M'}$.
\end{corollary}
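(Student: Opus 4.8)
The plan is to obtain the weak iso by transporting the soundness of the cartesian closed bicategory $\Thinb$ along the chain of isomorphisms established in Theorems~\ref{th:direct_carac} and~\ref{thm:isom-terminterp-itgpd}, so that \Cref{coro:interp-red} becomes a routine assembly of already-proved facts rather than a fresh argument.

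First, since $M \to_{\beta} M'$, the terms $M$ and $M'$ are in particular $\beta\eta$-equivalent, so by the soundness of the cartesian closed bicategory $\Thinb$ (Theorem~\ref{th:cc_bicat}) their Kleisli interpretations $\terminterpb M$ and $\terminterpb{M'}$ are positively isomorphic: there is an invertible $2$-cell between them in $\Thinb$. By the construction of $\Thin$, such a $2$-cell is exactly an invertible positive weak morphism of the underlying spans $\oc\ctxtinterpb\Gamma \ot \terminterpb M \to \typeinterpb A$ and $\oc\ctxtinterpb\Gamma \ot \terminterpb{M'} \to \typeinterpb A$, hence in particular a weak iso of spans in the sense of Definition~\ref{def:weakmor}.

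Next I would transport this weak iso to the direct interpretations. By Theorem~\ref{th:direct_carac}, the span obtained from $\terminterpb M$ by post-composing its left leg with the Seely functor $\seely_\Gamma$ is strongly isomorphic to $\terminterp M$, and likewise for $M'$. Since $\seely_\Gamma$ is an isomorphism, and since a weak iso may be pre- and post-composed with strong isos of spans to yield another weak iso, conjugating the weak iso of the previous paragraph by these strong isos produces a weak iso of spans $\terminterp M \iso \terminterp{M'}$, now over $\ctxtinterp\Gamma$ and $\typeinterp A$. Finally, Theorem~\ref{thm:isom-terminterp-itgpd} provides groupoid isos $K_M \co \terminterp M \to \itgpd M$ and $K_{M'} \co \terminterp{M'} \to \itgpd{M'}$ which, together with $K_\Gamma$ and $K_A$, are strong isos of spans; conjugating $\terminterp M \iso \terminterp{M'}$ by $K_M$ and $K_{M'}$ then yields the desired $\itgpd M \iso \itgpd{M'}$.

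The only point requiring care is the bookkeeping in the first two steps: one must check that a $2$-cell in $\Thinb$ coincides with a positive weak morphism of the underlying spans, and that weak isos are stable under conjugation by the strong isomorphisms $K_M, K_{M'}$ and by $\seely_\Gamma$. Both facts are essentially definitional — the former by the very construction of $\Thin$ and $\Thinb$, the latter because conjugating a pair of natural isos by invertible functors again gives natural isos — so no genuine obstacle arises and the statement follows as an immediate corollary.
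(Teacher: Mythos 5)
Your proof is correct and takes essentially the same route as the paper, which derives the corollary immediately by combining the soundness of the cartesian closed bicategory $\Thinb$ (positively isomorphic interpretations for $\beta$-related terms) with the strong span isos of Theorems~\ref{th:direct_carac} and~\ref{thm:isom-terminterp-itgpd}. One incidental slip worth noting: $\seely_\Gamma$ is only an \emph{equivalence} of groupoids, not an isomorphism (distinct interleavings of a sequence collapse to the same tuple), but your argument never actually needs this claim --- whiskering the left-leg natural isomorphism of the weak iso with $\seely_\Gamma$ already transports it to the Seely-composed spans.
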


This shows that although rigid intersection types do not
enjoy subject reduction as observed in the
introduction, the interpretation in $\Thinb$ associates to every
$\beta$-reduction $M \to_\beta M'$ a bijective \emph{transport}
between derivations of $M$ and $M'$ ``correcting'' the error, up to
some residual symmetries in the groupoids for $\Gamma$ and $A$.

\subsection{Rigid Resource Calculus}

As derivations are somewhat heavy, it seems helpful to remark that they
can be equivalently presented as certain \emph{rigid resource terms}.

%

\subsubsection{Resource terms}

The grammar for \textbf{rigid resource terms} is:
\[
  \begin{array}{lcl}
    \rest{m,n, \ldots} &\qp\syndef& 
    \rest{x^\alpha\;}\mid 
    \rest{\lambda x. m \;}\mid 
    \rest{m\;\rtmterm n}\\
    \rest{\rtmterm m, \rtmterm n \ldots}
                 &\qp\syndef& 
    \rest{\rtseq{m_1, \ldots, m_k}\,,}
  \end{array}
\]
where $\rest{x^\alpha}$ is the data of a variable $x \in \Var$ and of a
\textbf{labelling} intersection type $\itmor{\alpha}$. 
Our resource terms depart from standard resource
terms~\cite{ehrhard2008uniformity} in two significant ways. Firstly, as
in \cite{DBLP:journals/lmcs/OlimpieriA22}
our calculus is \emph{rigid}: argument subterms are sequences rather
than finite multisets. Secondly, we label variable occurrences with
intersection types, so as to guarantee the correspondence with
derivations.


\subsubsection{Approximation relations}
Those resource terms are already implicitely present in our
derivations.
%
%
%
To formalize
that, we introduce the \textbf{linear and multilinear approximation judgements}
\[
  \rtfj{\Theta}{\Gamma}{m }{M }{ \alpha }{A}
  \qand
  \rtfj{\Theta}{\Gamma}{\rtmterm m }{M }{ \itmvec\alpha }{A}
\]
which are defined by the (full) rules of \Cref{fig:it-rules}.
We have a canonical forgetful
function $U$ mapping a derivation $\pi$ of $\rtfj{\Theta}{\Gamma}{m}{M}{
  \alpha }{A}$ to the corresponding derivation $U(\pi)$
of $\itfj{\Theta}{\Gamma}{M}{\alpha}{A}$ and similarly for
multilinear judgements. We easily check that: 
\begin{proposition}
  \label{prop:unicities-for-rterms}
  The following two properties hold:
  \begin{enumerate}[label=(\alph*),ref=(\alph*)]
  \item
    \label{prop:unicities-for-rterms:unique-derivation}
Given a term $\simplefj{\Gamma}{M}{A}$ and resource term $\rest{m}$, there
    is at most one $(\itype{\Theta},\itype{\alpha},\pi)$ with $\pi$ a derivation of
    $\rtfj{\Theta}{\Gamma}{m}{M}{\alpha}{A}$,
  \item
    \label{prop:unicities-for-rterms:unique-lifting}
    For a derivation $\pi$ of $\itfj{\Theta}{\Gamma}{M}{\alpha}{A}$, there is a unique $(u,\tilde\pi)$ \emph{s.t.} $\tilde\pi$ is a
    derivation of $\rtfj{\Theta}{\Gamma}{m}{M}{\alpha}{A}$
    and $U(\tilde\pi) = \pi$.
  \end{enumerate}
\end{proposition}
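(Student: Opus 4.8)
Both statements are proved by straightforward structural inductions, and the whole point is that the rigid resource term in \ref{prop:unicities-for-rterms:unique-derivation}, and the intersection type derivation in \ref{prop:unicities-for-rterms:unique-lifting}, each carry exactly enough bookkeeping to make the rules of \Cref{fig:it-rules} fully deterministic. Before starting I would record two facts read off the refinement rules: a refinement of $o$ is necessarily $\itsstar$, and a refinement of $A \synto B$ is necessarily of the form $\vec{\alpha} \synlinto \beta$. These fix the shape of the types occurring in each rule as soon as the underlying simple type is known.

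For \ref{prop:unicities-for-rterms:unique-derivation}, I would argue by induction on the resource term, simultaneously on linear terms $m$ and multilinear terms $\rtmterm m$, with the underlying $M$ given alongside. The outermost constructor of the resource term determines which rule of \Cref{fig:it-rules} is applied last (and which shape $M$ must have for a derivation to exist at all), so there is no choice of rule. In the variable case $m = x_i^\alpha$, the label $\alpha$ forces the type, and the variable rule then forces both $\Theta$ (the singleton $\seq{\alpha}$ on $x_i$, empty elsewhere) and the derivation; this is precisely where the labelling of variable occurrences is needed, since without it the type $\alpha$ could not be recovered from the resource term. In the application case $m\,\rtmterm n$ against $M\,N$, the induction hypothesis yields at most one datum $(\Theta_1,\gamma,\pi_1)$ for $m$ against $M$, whose type $\gamma$ refines $A \synto B$ and is hence of the form $\vec{\alpha} \synlinto \beta$, and at most one datum $(\Theta_2,\vec{\alpha}',\pi_2)$ for $\rtmterm n$ against $N$; the application rule applies precisely when $\vec{\alpha} = \vec{\alpha}'$, in which case the context $\Theta_1 \rctxtplus \Theta_2$, the type $\beta$, and the derivation are all forced, and otherwise no derivation exists. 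The abstraction and sequence cases are analogous, reconstructing all data from the uniquely-determined sub-data supplied by the induction hypothesis.

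For \ref{prop:unicities-for-rterms:unique-lifting}, I would argue by induction on the derivation $\pi$ of $\itfj{\Theta}{\Gamma}{M}{\alpha}{A}$. The requirement $U(\tilde\pi) = \pi$ forces $\tilde\pi$ to use, rule by rule, the same skeleton as $\pi$, so it suffices to check that each rule admits a unique resource annotation. The variable rule dictates $m = x_i^\alpha$ with $\alpha$ the type already present in $\pi$, while the application, abstraction, and sequence rules read their resource term ($m\,\rtmterm n$, $\lambda x.\,m$, and $\rtseq{m_1,\ldots,m_k}$) off the resource terms of their immediate subderivations, which are unique by the induction hypothesis. This determines a unique pair $(m,\tilde\pi)$.

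The only place where uniqueness could conceivably fail is the application and sequence rules, where the conclusion's context is a concatenation $\Theta_1 \rctxtplus \cdots \rctxtplus \Theta_k$, and concatenation of sequences admits in general several decompositions. I expect this to be the one point deserving care, but it never actually bites: in \ref{prop:unicities-for-rterms:unique-derivation} the resource term already separates the premises, hence their contexts, and in \ref{prop:unicities-for-rterms:unique-lifting} the sub-contexts are literally part of the given derivation $\pi$. In both directions we are therefore only ever asked to \emph{form} a concatenation from already-separated pieces, never to invert $\rctxtplus$, and both inductions go through.
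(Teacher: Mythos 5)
Your proof is correct and takes essentially the same route as the paper, which omits the argument entirely (``We easily check that\ldots''): a routine structural induction in which the variable labels $\rest{x^\alpha}$ make part \ref{prop:unicities-for-rterms:unique-derivation} deterministic and the requirement $U(\tilde\pi)=\pi$ makes part \ref{prop:unicities-for-rterms:unique-lifting} deterministic. Your closing observation that concatenation $\rctxtplus$ never needs to be inverted—because premise contexts always come from the induction hypothesis or from $\pi$ itself—is exactly the point that makes the ``easy check'' go through.
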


For a term $\simplefj{\Gamma}{M}{A}$, we write $\Res(M)$ for the
set of resource terms $\rest{m}$ such that $\rtfj{\Theta}{\Gamma}{m}{M}{\alpha }{A}$
is derivable, for some rigid intersection types / contexts $\itype{a},
\itype{\Theta}$. The proposition above gives 
\[
\Res(M) \bij \Ob(\itgpd{M})
\]
a bijection showing that up to isomorphism, $\Thinb$ interprets a
simply-typed $\lambda$-term as a set of rigid resource terms.


\subsubsection{Resource terms and reduction}
This representation lets us examine the action of the interpretation of
reduction steps given by \Cref{coro:interp-red}. Consider a $\beta$-redex
$\vdash (\lambda x.\,M)\,N$. There is an iso
\[
\terminterpb{(\lambda x.\,M)\,N \to_\beta M[N/x]} :
\terminterpb{(\lambda x.\,M)\,N} \iso \terminterpb{M[N/x]}
\]
obtained via the cartesian closed bicategorical structure of
$\Thinb$~\cite{DBLP:journals/mscs/FioreS21}, and through our results 
it yields a bijection
$\Omega : \Res((\lambda x.\,M)\,N) \bij \Res(M[N/x])$
which we can compute. Considering a resource term $\rest{u} =
\rest{(\lambda x.\,m)\,\seq{n_1, \dots, n_k}} \in \Res((\lambda
x.\,M)\,N)$ for $\rtrefinshort{m}{M}$, $\rtrefinshort{\vec{n}}{N}$, we
get 
\begin{eqnarray}
\Omega(\rest{(\lambda x.\,m)\,\seq{n_1, \dots, n_k}}) &=& 
\rest{m[n_1/x_1, \dots, n_k/x_k]}
\label{eq:toplevel_beta}
\end{eqnarray}
where $x_1, \dots, x_k$ are the occurrences of $x$ in $\rest{m}$, \emph{in
order from left to right} -- there must indeed be $k$ occurrences with
the right intersection types, because $\rest{u}$ matches an intersection
type derivation.

But this apparent simplicity for toplevel $\beta$-reductions is
misleading: $\Thinb$ interprets reduction as \emph{weak} span
isos. If we have
\[
  \rtfj{\Theta}{\Gamma}{m}{M}{\alpha}{A}\,,
\]
for $\simplefj{\Gamma}{M}{A}$ with $M \to_\beta M'$, then we do not have
$\itfj{\Theta}{\Gamma}{M'}{\alpha}{A}$ but only
$\itfj{\Theta'}{\Gamma}{M'}{\alpha'}{A}$ for $\Theta' \sym_\Gamma^- \Theta$ and
$\alpha' \sym_A^+ \alpha$; so we cannot directly perform
\eqref{eq:toplevel_beta} deep within $\rest{m}$ as the resulting resource term would
fail to typecheck in our rigid intersection type system.

$\Thinb$ \emph{does} provide some $\rest{m'} = \terminterpb{M \to_\beta
M'}(\rest{m})$, obtained through an interactive reindexing of all components
of $\rest{m}$, correcting the typing mismatches. But its
construction fully exploits the bicategorical structure of $\Thinb$,
and in particular the horizontal composition of $2$-cells (via the
uniqueness property of Lemma \ref{lem:comp_upto_sym}) and it does
not seem to have a simple syntactic presentation.

\subsubsection{Link with multiset resource terms}
\label{sec:res_sym}
To conclude this section, we show how our rigid resource terms do not
have a self-contained rewriting theory; however we show here how they
can be used as representatives for more standard (multiset-based)
resource terms.

We consider \textbf{multiset resource terms} generated by the grammar:
\[
  \begin{array}{lcl}
    \rest{\stdrt u, \stdrt v, \ldots} &\qp\syndef& 
                                                   \rest{x^{\stdit\alpha}}\;\mid \rest{\lambda x. \stdrt u\;}\mid
                                                   \rest{\stdrt u\;\stdmrt v}\\
    \mathrlap{\rest{\stdmrt u, \stdmrt v \ldots}}\hphantom{\itype{\stdmit{\alpha},\stdmit{\beta},\ldots}}
                                      &\qp\syndef& 
                                                   \mathrlap{\rest{\mset{\stdrt u_1, \ldots, \stdrt u_n}\,}}\hphantom{\itype{\mset{\stdit\alpha_1,\ldots,\stdit\alpha_n}}
                                                   \quad (n \in \N)}
  \end{array}
\]
using the (multiset) non-idempotent intersection types defined by
\[
  \begin{array}{lcl}
    \itype{\stdit\alpha,\stdit\beta,\ldots}
    &
      \qp\syndef
    &
      \itype{\itsstar}
      \qp\mid
      \itype{\stdmit{\alpha} \synlinto \stdit\beta}
    \\
    \itype{\stdmit{\alpha},\stdmit{\beta},\ldots}
    &
    \qp\syndef
    &
      \itype{\mset{\stdit\alpha_1,\ldots,\stdit\alpha_n}}
      \quad (n \in \N)
  \end{array}
\]
where, as expected, we use multisets $[\cdots]$ instead of sequences
$\seq{\cdots}$. 
%
Given a rigid intersection type $\itype{\alpha}$, one can obtain a multiset
intersection type $\itype{\itcl{\alpha}}$ by replacing inductively the
sequences $\itype{\seq{\cdots}}$ with multisets $\itype{[\cdots]}$. Similarly,
given a rigid resource term $\rest{m}$, one obtains a multiset resource
term $\rest{\rtcl{m}}$ with the same operation. Then:
\begin{proposition}
  Take $\beta$-normal $\simplefj{\Gamma}{M}{A}$, and $\rest{m},
\rest{n} \in \Res(M)$.

  Then, $\rest{m} \sym \rest{n}$ if and only if $\rest{\rtcl m} =
\rest{\rtcl n}$.
\end{proposition}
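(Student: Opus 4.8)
The plan is to isolate a purely type-level fact and then treat the two implications separately, with the backward direction carrying all the difficulty and being the only place where $\beta$-normality is used. Recall via \cref{prop:unicities-for-rterms} and \cref{thm:isom-terminterp-itgpd} that $\Res(M) \bij \Ob(\itgpd M)$, so that $\rest m \sym \rest n$ means the two corresponding objects are isomorphic in the groupoid $\itgpd M$, i.e.\ related by a morphism derivation as in \Cref{fig:itm-rules}.

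First I would prove, by mutual induction on rigid intersection types and sequences refining a common simple type $A$, the key lemma: $\itcl{\alpha} = \itcl{\alpha'}$ if and only if there is a symmetry $\itmor{\phi} : \alpha \sym \alpha'$ in $\itgpd A$ (and similarly in $\itgpdoc A$). The forward part is clear: at base type only $\itsstarid$ exists, while at arrow and sequence types a symmetry is built from component symmetries and a permutation, which the multiset collapse discards, the component collapses agreeing by induction. For the converse, equal collapses at a sequence type force equal lengths and a permutation $\sigma$ with $\itcl{\alpha_i} = \itcl{\alpha'_{\sigma(i)}}$; induction supplies $\phi_i : \alpha_i \sym \alpha'_{\sigma(i)}$, and $(\sigma,\seq{\phi_1,\ldots,\phi_n})$ is the desired morphism, the arrow case being immediate.

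For the forward direction of the proposition I would induct structurally on a morphism derivation of $\rest m \sym \rest n$ in $\itgpd M$, following \Cref{fig:itm-rules}: the variable case reduces to the lemma, the application and abstraction cases follow directly from the induction hypothesis, and in the promotion rule the permutation $\sigma$ is forgotten by the collapse while the component morphisms preserve collapses. This direction uses no normality. For the backward direction I would induct on the $\beta$-normal $M$, written in head-normal form $M = \lambda x_1 \ldots x_p.\, y\, N_1 \cdots N_q$ (whose subterms $N_j$ are again normal). Any $\rest m, \rest n \in \Res(M)$ then have shape $\lambda \vec x.\, y^{\gamma}\, \vec p_1 \cdots \vec p_q$ with $p_{j,i} \in \Res(N_j)$, and $\rtcl m = \rtcl n$ forces $\itcl{\gamma_m} = \itcl{\gamma_n}$ and $\rtcl{\vec p_j} = \rtcl{\vec q_j}$ for each $j$. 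From the latter I extract a permutation $\tau_j$ with $\rtcl{p_{j,i}} = \rtcl{q_{j,\tau_j(i)}}$, and by induction a symmetry $p_{j,i} \sym q_{j,\tau_j(i)}$; packaging these with $\tau_j$ through $\plact{\tau_j}{(-)}$ yields a sequence symmetry $\widetilde\phi_j : \vec p_j \sym \vec q_j$. Taking $\psi : \beta_m \sym \beta_n$ from $\itcl{\beta_m}=\itcl{\beta_n}$ via the lemma, the head symmetry $\phi_\gamma = \widetilde\phi_1 \synlinto \cdots \synlinto \widetilde\phi_q \synlinto \psi$ is built consistently with the arguments, and the application and abstraction rules glue the head morphism and the $\widetilde\phi_j$ into a morphism over $M$. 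The one verification is that its source and target resource contexts are exactly $\Theta_m, \Theta_n$: this holds because the $\finv\sigma$-reshuffling built into $\plact{\sigma}{(-)}$ realigns each $\Theta_{q_{j,\tau_j(i)}}$ back into the order $\Theta_{q_{j,1}} \seqplus \cdots \seqplus \Theta_{q_{j,k_j}}$.

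The main obstacle, and the reason normality is indispensable, lies in this gluing step. In a redex $(\lambda x.\, P)\, N$ the resource term rigidly links the occurrences of $x$ in $p$, read left to right and recorded as a binder sequence, to the argument sequence approximating $N$, the $i$-th argument being forced to match the type of the $i$-th occurrence. A symmetry must then reindex both with a single permutation, so equal collapses demanding incompatible reorderings of $p$ and of $\vec n$ cannot be reconciled and the backward implication can fail. Restricting to $\beta$-normal $M$ forces every application head to be a variable, removing these constraints and letting the argument sequences be permuted freely, which is precisely what makes the inductive assembly go through. The remaining index-chasing in the context bookkeeping is routine given the $\finv\sigma$-convention in the definition of $\plact{\sigma}{(-)}$.
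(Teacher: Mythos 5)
Your proof is correct and takes essentially the same approach as the paper: the paper dispatches this proposition as ``direct by induction'' (with $\rest{m} \sym \rest{n}$ read through the correspondence with derivations), and your argument --- the type-level collapse lemma, the structural induction for the forward direction, and the head-normal-form assembly via $\plact{\sigma}{(-)}$ for the backward direction --- is precisely the fleshed-out version of that induction. Your closing analysis of why $\beta$-normality is indispensable also matches the paper's own explanation, namely its counterexample showing that in a redex the rigid matching of variable occurrences to argument positions blocks the backward implication.
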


This is direct by induction --
here $\rest{m} \sym \rest{n}$ is defined via the correspondence with
derivations. This shows that standard resource terms
fit in the theory of thin spans of groupoids as symmetry classes in the
interpretation of terms, albeit for $\beta$-normal terms. For
non-normal terms this correspondence fails: we have
\begin{eqnarray*}
\rest{(\lambda y.\,x\,y\,y)\,\seq{z, w}}
&\not \sym&
\rest{(\lambda y.\,x\,y\,y)\,\seq{w, z}}
\end{eqnarray*}
though they both map to $\rest{(\lambda y.\,x\,y\,y)\,[w, z]}$ --
in rigid resource terms, $\beta$-redexes explicitly match
variable occurrences and resources in the
argument sequence, while usual resource terms do not.


\todo{fusionner le sous-tex une fois fini}%

\section{Thin Spans and Relational Models}
\label{sec:relational}

Now, we relate thin spans and other extensions of the relational model.
This shall let us re-interpret what these compute in
terms of rigid resource terms and symmetries of rigid intersection
types.

\subsection{The Relational Model}

First of all, we start by describing the relationship between thin
spans of groupoids and the relational model
\cite{DBLP:journals/apal/Girard88}.
It is fairly straightforward, but is hopefully helpful for the
generalizations to come.

\subsubsection{Introducing the relational model.} The relational model
builds on the category $\Rel$ of \emph{sets} and \emph{relations}.
$\Rel$ has a symmetric monoidal structure, obtained by defining the
tensor $A \tensor B = A \times B$ as the cartesian product of sets --
the unit is any singleton set. $\Rel$ is actually compact closed: the
\emph{dual} $A^*$ of a set $A$ is itself, and there are a unit $I \to A
\tensor A^*$ and co-unit $A^* \tensor A \to I$ given by the obvious diagonal
relations. This turns $\Rel$ into a symmetric monoidal closed category,
and as such a model of the linear $\lambda$-calculus -- in particular,
it supports a linear arrow defined as $A \lin B = A \times B$.

But $\Rel$ also has an \emph{exponential modality}, given by
$\oc A = \Mf(A)$
the set of finite multisets of elements of $A$. This 
extends to a comonad $\oc$ on $\Rel$ and for each $A, B$ there is an
isomorphism
$\oc (A\with B) \iso \oc A \tensor \oc B$,
the \emph{Seely isomorphism}. Together with additional coherence
conditions \cite{panorama}, this makes $\Rel$ a \emph{Seely category},
a model of intuitionistic linear logic, and the Kleisli category
$\Rel_{\oc}$ is cartesian closed.

\subsubsection{From $\Thin$ to $\Rel$.} \label{subsubsec:thin_rel}
It seems clear how to relate $\Thin$ and $\Rel$:
on objects, simply send a thin groupoid $A$ to $\syms{A} = A/\sym$ its
\textbf{symmetry classes} (or connected components) -- clearly,
$\syms{\Sym(A)} = \Mf(\syms{A})$. Likewise, given
a thin span $A \ot S \to B$, we can obtain 
\[
\syms{S} = \{(\class{s_A}, \class{s_B}) \mid s \in S\} \in \Rel[\syms{A},
\syms{B}]
\]
called its \textbf{relational collapse}, for $\class{(-)}$ the
equivalence class. Then:

\begin{restatable}{proposition}{functrel}
This yields a functor $\syms{-} : \Thin \to \Rel$.
\end{restatable}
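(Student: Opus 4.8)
The plan is to verify the defining equations of a functor: well-definedness on objects and $1$-cells, preservation of identities, and preservation of composition; and, since the target $\Rel$ is a $1$-category viewed as a locally discrete bicategory, to check additionally that the $2$-cells of $\Thin$ are collapsed, so that $\syms{-}$ descends to the underlying $1$-category. Well-definedness is immediate: $\syms A = A/\!\sym$ is a set and, by construction, $\syms S \subseteq \syms A \times \syms B$ is a relation from $\syms A$ to $\syms B$. Associativity and unitality in $\Rel$ are strict, so once the two functoriality equations below hold, functoriality follows.

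For identities, $\Id_A$ is the span $A \ot A \to A$ with both legs the identity functor, so every witness $a$ satisfies $a_A = a_B = a$ and $\syms{\Id_A} = \{(\class a, \class a) \mid a \in A\} = \id_{\syms A}$. For composition, fix thin spans $A \ot S \to B$ and $B \ot T \to C$; I claim $\syms{T \odot S} = \syms T \circ \syms S$. Recall that witnesses of $T \odot S$ are pairs $(s,t)$ with $s_B = t_B$, whose legs are $s_A$ and $t_C$. The inclusion $\syms{T \odot S} \subseteq \syms T \circ \syms S$ is routine: any such $(s,t)$ witnesses $(\class{s_A}, \class{s_B}) \in \syms S$ and $(\class{t_B}, \class{t_C}) \in \syms T$ with $\class{s_B} = \class{t_B}$, hence $(\class{s_A}, \class{t_C}) \in \syms T \circ \syms S$.

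The reverse inclusion is the heart of the proof and is where thinness is needed. Suppose $(\alpha, \gamma) \in \syms T \circ \syms S$, so there are $s \in S$, $t \in T$ and $\beta \in \syms B$ with $\class{s_A} = \alpha$, $\class{s_B} = \beta = \class{t_B}$, and $\class{t_C} = \gamma$. As $B$ is a groupoid and symmetry classes are its connected components, the equality $\class{s_B} = \class{t_B}$ provides a symmetry $\theta : s_B \sym_B t_B$. The subtlety is that $s$ and $t$ synchronize only up to $\theta$, so the bare pair $(s,t)$ is generally not a witness of $T \odot S$. This is exactly the content of \Cref{lem:comp_upto_sym}: applied to $s$, $t$ and $\theta$, it yields $s' \in S$, $t' \in T$ and symmetries $\varphi : s \sym_S s'$, $\psi : t' \sym_T t$ with $\theta = \psi_B \circ \varphi_B$, whence $s'_B = t'_B$ and $(s', t') \in T \odot S$. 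Transporting along $\varphi$ and $\psi$ gives $\class{s'_A} = \class{s_A} = \alpha$ and $\class{t'_C} = \class{t_C} = \gamma$, so $(\alpha, \gamma) \in \syms{T \odot S}$. I expect this realignment to be the main obstacle, as it relies essentially on the synchronization-up-to-symmetry property of thin spans rather than on naive set-theoretic composition.

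It remains to see that $2$-cells collapse. Given a weak morphism $(F, F^A, F^B)$ from $S$ to $S'$ (\Cref{def:weakmor}), the natural isos $F^A, F^B$ yield, for each $s \in S$, symmetries $s_A \sym_A F(s)_A$ and $s_B \sym_B F(s)_B$, so that $(\class{s_A}, \class{s_B}) = (\class{F(s)_A}, \class{F(s)_B})$ lies in $\syms{S'}$; thus $\syms S \subseteq \syms{S'}$. Applying the same argument to the inverse weak morphism (the $2$-cells relevant to functoriality being invertible) gives the reverse inclusion, hence $\syms S = \syms{S'}$. The relational collapse is therefore invariant under $2$-cells, and combined with the two functoriality equations this exhibits $\syms{-}$ as a well-defined functor $\Thin \to \Rel$.
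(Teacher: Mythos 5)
Your proof is correct and follows essentially the same route as the paper: identity preservation is immediate, the inclusion $\syms{T \odot S} \subseteq \syms{T} \circ \syms{S}$ holds by definition, and the nontrivial reverse inclusion is obtained exactly as in the paper's proof, by realigning the witnesses $s$ and $t$ through Lemma \ref{lem:comp_upto_sym} to produce a genuine element of $T \odot S$ in the same symmetry classes. Your extra paragraph on collapsing $2$-cells goes beyond what the paper verifies (and its parenthetical appeal to invertibility is not justified, since positive weak morphisms in $\Thin$ need not be invertible functors, giving only $\syms{S} \subseteq \syms{S'}$ in general), but this is immaterial to the proposition as the paper states and proves it.
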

\begin{proof}
This requires us to compose witnesses \emph{up to symmetry}, which we
do thanks to Lemma \ref{lem:comp_upto_sym} -- see Appendix
\ref{subsec:funct_rel}.
\end{proof}

\subsubsection{Preservation of further
structure.}\label{subsubsec:pres_fur}
From the definition,
it is straightforward that we have bijection yielding isos in $\Rel$:
\[
\begin{array}{lcccc}
t^\tensor_{A, B} &:& \syms{A} \tensor
\syms{B} &\iso &\syms{A\tensor B}\\
t^\with_{A, B} &:& \syms{A} \with \syms{B} &\iso& \syms{A \with B}\\
t^\oc_A &:& \oc \syms{A} &\iso & \syms{\oc A}
\end{array}
\]
for $A$ and $B$ thin groupoids; in particular the third amounts to
$\syms{\oc A} \bij \Mf(\syms{A})$ for $A$ any thin groupoid.  It is a
routine verification that these components satisfy the coherence
conditions required to make $\syms{-} : \Thin \to \Rel$ a Seely functor
(see Appendix \ref{app:seely_lifting}), so that:

\begin{theorem}
Setting, for any $\oc A \ot S \to B$ in $\Thin_{\oc}[A, B]$,
\[
\syms{S}_\oc = \syms{S} \circ t^\oc_A \in
\Rel_\oc[\syms{A}, \syms{B}]\,,
\]
yields $\syms{-}_{\oc} : \Thin_{\oc} \to \Rel_{\oc}$ a cartesian closed
functor.
\end{theorem}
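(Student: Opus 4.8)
The plan is to recognize the statement as an instance of a general principle: a Seely functor, \ie a morphism of models of intuitionistic linear logic, induces a cartesian closed functor between the co-Kleisli categories of the two exponential comonads. The preceding result already supplies that $\syms{-} : \Thin \to \Rel$ is a Seely functor, equipped with its structural isomorphisms $t^\tensor$, $t^\with$ and $t^\oc$; so the work is to unfold this principle here and check that the coherences needed are exactly the ones packaged by these isos. On objects $\syms{-}_\oc$ agrees with $\syms{-}$, and on a morphism $\oc A \ot S \to B$ of $\Thinb$ it is given by the stated formula $\syms{S}_\oc = \syms{S} \circ t^\oc_A$, precomposing the relational collapse with the Seely iso so as to land in $\Rel_\oc[\syms{A},\syms{B}]$.

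First I would check functoriality. Preservation of identities is immediate: the co-Kleisli identity on $A$ is the counit $\check{\eta}_A$, and the counit-coherence of $\syms{-}$ gives that $\syms{\check{\eta}_A}\circ t^\oc_A$ equals the counit of the $\Rel$-comonad at $\syms{A}$, which is the co-Kleisli identity in $\Rel_\oc$. Preservation of composition is the crux. Writing the co-Kleisli composite of $S : A \to B$ and $T : B \to C$ as $T \odot \oc S \odot \check{\mu}_A$, using the comultiplication $\check{\mu}_A : \oc A \to \oc\oc A$, I would push $\syms{-}$ through using three facts: that $\syms{-}$ is a strict functor on the base, so $\syms{T \odot \oc S \odot \check{\mu}_A} = \syms{T}\circ\syms{\oc S}\circ\syms{\check{\mu}_A}$; that naturality of $t^\oc$ yields $\syms{\oc S} = t^\oc_B \circ \oc\syms{S}\circ (t^\oc_A)^{-1}$; and the comultiplication-coherence square of the Seely functor relating $\syms{\check{\mu}_A}$ to $\check{\mu}_{\syms{A}}$ through $t^\oc$ and the induced double-bang comparison $t^{\oc\oc}$. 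Chaining these, the intermediate $t^\oc$'s telescope and the expression rearranges into $\syms{T}_\oc \odot_\oc \syms{S}_\oc$, the co-Kleisli composite in $\Rel_\oc$.

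Next I would check preservation of the cartesian closed structure. Finite products in a co-Kleisli category are computed by $\with$ with the $\with$-unit as terminal object, and since $\syms{-}$ carries $\with$ to $\with$ via $t^\with$, the induced functor preserves finite products up to $t^\with$. For the closed structure, the exponential object is $A \Rightarrow B = \oc A \lin B$; since $\syms{-}$ preserves $\oc$ via $t^\oc$ and carries the linear arrow $\lin$ to $\lin$ (at the level of symmetry classes the collapse of $A \lin B$ is the cartesian product $\syms{A}\times\syms{B}$, controlled by the same data as $t^\tensor$), the exponential is preserved up to the composite of these isomorphisms. Finally I would confirm that these comparison isos are compatible with evaluation and currying, equivalently that they assemble into the canonical data of a cartesian closed functor, which follows formally once the Seely isos are known to be coherent, the content of the preceding result.

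The main obstacle is the composition-preservation step. Its difficulty is not conceptual but bookkeeping, compounded by a strictification issue: $\Thin$ carries only a pseudocomonad, so the comultiplication coherence holds a priori up to invertible $2$-cells, and one must check that applying $\syms{-}$, which quotients by symmetry and thereby collapses those $2$-cells to equalities, turns the coherence into the strict equality needed to match the honestly strict comonad of $\Rel$. Once this strictification is in hand the $t^\oc$'s telescope as above, and the remaining verifications for products and exponentials are routine.
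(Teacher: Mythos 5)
Your proposal is correct and takes essentially the same route as the paper: the paper establishes that $\syms{-} : \Thin \to \Rel$ is a Seely functor (the isos $t^\tensor, t^\with, t^\oc$ satisfying the coherence conditions of its appendix definition) and then invokes the folklore theorem, proved in its appendix, that any Seely functor $F$ between Seely categories lifts to a cartesian closed functor $F_\oc$ between Kleisli categories via $F_\oc(f) = Ff \circ t^\oc_A$ --- exactly the general principle you unfold, including the same treatment of products via $t^\with$ and of exponentials via $t^\oc$ and $t^\lin$, with canonicity reduced to the projection and evaluation coherence diagrams. Your explicit flagging of the pseudocomonad strictification issue (that the bicategorical coherences of $\Thin$ must collapse to strict equalities under $\syms{-}$) is a point the paper leaves implicit in its ``routine verification'' remark.
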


It follows that this preserves the interpretation of the
simply-typed $\lambda$-calculus: for every
simple type $A$ there is a bijection $t_A : \intr{A}_{\Rel_\oc} \bij
\syms{\typeinterpb{A}}$ -- and likewise for contexts -- so that if
$\Gamma \vdash M : A$, $\gamma \in
\intr{\Gamma}_{\Rel_\oc}$, $a \in \intr{A}_{\Rel_\oc}$, $(\gamma, a)
\in \intr{M}_{\Rel_\oc}$ iff $(t_\Gamma\incomment{cette notation c'est quoi exactement?}\,\gamma, t_A\,a) \in
\syms{\terminterpb{M}}_\oc$.

\subsection{Weighted Relations}

The \emph{weighted relational model} is due to Larmarche
\cite{DBLP:journals/tcs/Lamarche92}, though its application to
semantics was fleshed out by Laird \emph{et al.}
\cite{DBLP:conf/lics/LairdMMP13}. In full generality, its construction
is parametrized by a complete semiring; but for the purposes of this
paper we will only work with the semiring $\Ni = \mathbb{N}
\cup \{+\infty\}$ of completed natural numbers.

\subsubsection{The weighted relational model} Rather than merely
collecting the completed executions, the weighted relational assigns a
\emph{weight} -- here, an element of $\Ni$ -- to any execution. In
other words, a weighted relation from set $A$ to set $B$ is a function
$A \times B \to \Ni$.

This lets us \emph{count} properties of execution: for instance, it is
shown in \cite{DBLP:conf/lics/LairdMMP13} how the relational model
weighted by $\Ni$ counts how many distinct executions may lead to a
given result at ground type, for a non-deterministic extension of
$\PCF$. But even for purely deterministic programs (in fact,
simply-typed $\lambda$-terms), the weighted relational model computes
non-trivial coefficients.

\begin{example}
Considering the simply-typed $\lambda$-term
\[
f : o \to o \to o, x : o, y : o \vdash f\,(f\,y\,x)\,(f\,x\,y) : o\,,
\]
then the point of the web written in intersection type notation as
\[
f : \itype{[[\itsstar] \lin [] \lin \itsstar, [] \lin [\itsstar] \lin
\itsstar]}, x : \itype{[\itsstar]}, y: \itype{[]} \vdash
\itype{\itsstar}
\]
has a weight of $2$ in the weighted relational model -- this reflects
the fact that this point can be realized in two distinct ways,
depending on which occurrence of $f$ calls which argument;
seemingly corresponding to two distinct normal resource terms:
\[
\rest{f\,[f\,[]\,[x^\itsstar]]\,[]}
\qquad
\rest{f\,[]\,[f\,[x^\itsstar]\,[]]\,,}
\]
or (via Section \ref{sec:res_sym}) to two symmetry classes of rigid terms.
\end{example}

This suggests that, maybe, the weighted relational model counts the
number of resource terms inhabiting a certain intersection type.
But that is not actually
the case, as illustrated by this next example.

\begin{example}\label{ex:non_res}
Considering now the simply-typed $\lambda$-term
\[
f : o \to o, g : o \to o, y:o \vdash f\,(g\,y) : o\,,
\]
then the point of the web written in intersection type notation as
\[
f : \itype{[[\itsstar,\itsstar] \lin \itsstar]}, g : \itype{[[] \lin
\itsstar, [\itsstar] \lin \itsstar]}, y :
\itype{[\itsstar]} \vdash \itype{\itsstar}
\]
is \emph{also} assigned a weight of $2$ by the weighted relational
model, even though the reader can check that there is only one resource
term inhabiting that type. Clearly here we are somehow accounting for
the \emph{symmetries} of this resource term -- but which symmetries?
\end{example}

\subsubsection{Categorical structure} The weighted relational model is
structured around the category $\WRel$: its objects are sets, and a
morphism from $A$ to $B$ is $\alpha \in \Ni^{A\times B}$ -- for
$a \in A$ and $b \in B$, we write $\alpha_{a, b} \in \Ni$ for
$\alpha(a, b)$. Identity is $(\id_A)_{a, a'} = \delta_{a, a'}$.
Composition is
\[
(\beta \circ \alpha)_{a, c} = \sum_{b\in B} \alpha_{a, b} \cdot
\beta_{b, c}
\]
for $\alpha \in \WRel[A, B]$, $\beta \in \WRel[B, C]$, $a\in A$ and
$c\in C$. This potentially infinite sum always ``converges'' because
our set of weights $\Ni$ includes the infinity. Just like $\Rel$,
$\WRel$ is a compact closed category with biproducts, see 
\cite{DBLP:conf/lics/LairdMMP13} for details.
%

Finally, there is an exponential modality $\oc A = \Mf(A)$ on
sets. On morphisms, the critical definition is that of
\emph{functorial promotion}:
\[
(\oc \alpha)_{\mu, [b_1, \dots, b_n]} = 
\sum_{\substack{(a_1, \dots, a_n)\\
\emph{s.t.}~\mu = [a_1, \dots, a_n]}}
\prod_{i=1}^n \alpha_{a_i, b_i}\,.
\]

Altogether, just like $\Rel$, $\WRel$ is a Seely category, and thus the
associated Kleisli category $\WRel_{\oc}$ is cartesian closed.

\subsubsection{Positive witnesses.} We must make the functor of
Section \ref{subsubsec:thin_rel} quantitative -- from a thin span $A \ot
S \to B$ and symmetry classes $\ca \in \syms{A}, \cb \in \syms{B}$, we
must assign a number $\syms{S}_{\ca, \cb} \in \Ni$. We naturally expect
this number to be the cardinal of a set of \emph{witnesses}
\[
\syms{S}_{\ca, \cb} = \#\,\wit_S(\ca, \cb)\,,
\]
thus our question boils down to the following: what is the adequate
notion of witnesses, in a thin span, for symmetry classes $\ca, \cb$?
It is tempting to count symmetry classes in $S$, however we have
seen in Section \ref{sec:res_sym} that (for normal terms) those correspond to
resource terms, and Example \ref{ex:non_res} shows that it is not
what the weighted relational model counts; in fact we shall see it
accounts for 
\begin{eqnarray}
f\,\tuple{\lambda x.\,g\,\tuple{y}, \lambda x\,g\,\tuple{}}\,,
\qquad
f\,\tuple{\lambda x.\,g\,\tuple{},\lambda x.\,g\,\tuple{y}}\,,
\label{eq:exnonres}
\end{eqnarray}
the \emph{two} rigid resource terms that intuitively inhabit the
intersection type of Example \ref{ex:non_res} -- even though the two
are symmetric. 
But it is not the case that we are simply counting rigid resource
terms! If we were to replace $y$ with $x$ in Example \ref{ex:non_res},
then the weight given by $\WRel$ becomes one and thus the two rigid
resource terms displayed in \eqref{eq:exnonres} with $x$ instead of $y$
should suddenly just account for one...  

$\Thin$ will help sort this out. Assume that all groupoids interpreting
types come equipped with a function $\rep{(-)}$ associating to each
symmetry class $\ca \in \syms{A}$ a representative $\rep{\ca} \in \ca$.
Then we set
\begin{eqnarray}
\pwit_S(\ca, \cb) &=& \{s \in S \mid \rep{\ca} \sym_A^- s_A \,\&\,s_B
\sym_B^+ \rep{\cb}\}
\label{eq:defpwit}
\end{eqnarray}
where $a \sym_A^+ a'$ means there is $\theta^+ \in A_+[a,
a']$ and likewise for $\sym_A^-$; we call those the \textbf{positive
witnesses} of $\ca$ and $\cb$ in $S$. This depends on a
choice of representatives for symmetry classes -- our development
will apply for thin groupoids equipped with representatives:

\begin{definition}\label{def:repr_tg}
A \textbf{representation} for a thin groupoid $A$ is a function
$\rep{(-)} : (\ca \in \syms{A}) \to \ca$
such that for all $\ca \in \class{A}$, $\rep{\ca}$ is
\textbf{canonical}, in the sense that for all $\theta \in A[\rep{\ca},
\rep{\ca}]$, the unique factorization $\theta = \theta^- \circ
\theta^+$ given by Lemma \ref{lem:factor} satisfies $\theta^- \in
A_-[\rep{\ca}, \rep{\ca}]$ and $\theta^+ \in A_+[\rep{\ca},
\rep{\ca}]$.
\end{definition}

If $A$ is a thin groupoid with a representation and $\ca \in A$, we
write $\m(\ca) = \# A(\rep{\ca}, \rep{\ca})$ the \textbf{symmetry
degree} of $\ca$. Likewise, we write $\m_+(\ca) = \#
A_+(\rep{\ca},\rep{\ca})$ (resp. $\# A_-(\rep{\ca}, \rep{\ca})$) the
\textbf{positive symmetry degree} (resp. negative) of $\ca$. From
Definition \ref{def:repr_tg}, we then have
\begin{eqnarray}
\m(\ca) &=& \m_+(\ca) \cdot \m_-(\ca)
\label{eq:symmetry_degrees}
\end{eqnarray}
reflecting quantitatively the factorization of Lemma \ref{lem:factor}.

One can build a representation for all constructions on thin groupoids
so far. The non-trivial case is the exponential: if we have canonical
$a_1, \dots, a_n \in A$, then so is $\tuple{a_1, \dots, a_n} \in \oc A$,
provided that whenever $a_i \sym_A a_j$ then $a_i = a_j$. Thus given
$\ca = [\ca_1, \dots, \ca_n] \in \syms{\oc A}$ we first consider
$[\rep{\ca_1}, \dots, \rep{\ca_n}]$, which we present in a sequential
ordering, following some total order on objects of $A$ that we assume
globally fixed in advance.
From now on, we consider all thin groupoids equipped with a
canonical representation.

Summing up, to any thin span $A \ot S \to B$ we associate 
$\syms{S}_{\ca, \cb} = \# \pwit_S(\ca, \cb)$, and we now aim to prove
that this extends to a functor. 

\subsubsection{Functoriality} \label{subsubsec:funct}
Preservation of the identity is obvious by the factorization property
of Lemma \ref{lem:factor}.  Composition is more subtle. Naturally, for
$A \ot S \to B$ and $B \ot T \to C$ we expect a bijection
\begin{eqnarray}
\pwit_{T\odot S}(\ca, \cc) &\bij& \sum_{\cb \in \syms{B}} \pwit_S(\ca,
\cb) \times \pwit_T(\cb, \cc)\,,\label{eq:exp_bij}
\end{eqnarray}
and while our results imply that such a bijection exists for
cardinality reasons, it is not actually what we shall build directly.
In fact, there appears to be no natural function from the right-hand
side to the left-hand side. We must assemble $s \in \pwit_S(\ca, \cb)$
and $t \in \pwit_T(\cb, \cc)$ into an element of $\pwit_{T\odot S}(\ca,
\cc)$ but we cannot do that directly, as we only have $s_B \sym_B t_B$
and not $s_B = t_B$. We can, as in the proof of Proposition
\ref{prop:compositor}, compose $s$ and $t$ via any symmetry $\theta_B :
s_B \sym_B t_B$ to obtain an element of $\pwit_{T\odot S}(\ca, \cc)$;
but this does not yield a function as the result depends on the choice
of $\theta_B$.

To address this dependency in the undetermined mediating symmetry, we
consider instead the composition of witnesses carrying explicit
symmetries: the \textbf{$\sim$-witnesses} from $\ca$ to $\cb$ are
triples
\[
\spwit_{S}(\ca, \cb) = 
\{(\theta_A^-, s, \theta_B^+) \mid \theta_A : \rep{\ca} \sym_A^-
s_A~\&~s_B \sym_B^+ \rep{\cb}\}\,;
\]
so $(\theta^-_A, s, \theta^+_B) \in \spwit_{S}(\ca, \cb)$ and
$(\vartheta^-_B, t, \vartheta^+_C) \in \spwit_T(\cb, \cc)$ providing 
$\vartheta_B^- \circ \theta_B^+$ used to compose $s$ and $t$
via Lemma \ref{lem:comp_upto_sym}.

While in a thin span $A \ot S \to B$ the display $S \to A
\times B$ is not a fibration, $\sim$-witnesses do enjoy a
fibration-like property:

\begin{proposition}\label{prop:act_strat}
Consider $A \ot S \to B$ a thin span, $s \in S$, and
\[
\theta_A^- : a \sym_A^- s_A\,
\qquad
\theta_B^+ : s_B \sym_B^+ b\,.
\]

For $\Omega_A : a' \sym_A a$ and $\Omega_B : b \sym_B b'$,
there are unique $\varphi^S : s \sym_S s'$ and
$\vartheta_A^- : a' \sym_A^- s'_A$, $\vartheta_B^+ : s'_B \sym_B^+ b'$
s.t. the diagrams commute: 
\[
\xymatrix@R=15pt@C=15pt{
a     \ar[r]^{\theta_A^-}
        \ar@{<-}[d]_{\Omega_A}&
s_A
        \ar[d]^{\varphi^S_A}\\
a'     \ar[r]_{\vartheta_A^-}&
s'_A
}
\qquad
\xymatrix@R=15pt@C=15pt{
s_B
        \ar[r]^{\theta_B^+}
        \ar[d]_{\varphi^S_B}&
b     \ar[d]^{\Omega_B}\\
s'_B
        \ar[r]_{\vartheta_B^+}&
b'
}
\]
\end{proposition}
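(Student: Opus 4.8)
The plan is to derive this directly from the reindexing Lemma~\ref{lem:act_sym}, which already performs a two-sided reindexing of a single witness, but only starting from the points $s_A$ and $s_B$. First I would form the two composite symmetries $\theta_A^- \circ \Omega_A : a' \sym_A s_A$ and $\Omega_B \circ \theta_B^+ : s_B \sym_B b'$. Note these are plain (unpolarized) symmetries, of exactly the shape $a' \sym_A s_A$ and $s_B \sym_B b'$ required to feed into Lemma~\ref{lem:act_sym} at the witness $s$. The polarity assumptions on $\theta_A^-$ and $\theta_B^+$ play no role here: they are simply absorbed into the composites.

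Applying Lemma~\ref{lem:act_sym} to $s$ with these two symmetries yields unique $s' \in S$, $\varphi^S : s \sym_S s'$, and residual symmetries $\vartheta_A^- : a' \sym_A^- s'_A$ and $\vartheta_B^+ : s'_B \sym_B^+ b'$ making the two triangles of that lemma commute, i.e.\ $\vartheta_A^- = \varphi^S_A \circ \theta_A^- \circ \Omega_A$ and $\vartheta_B^+ \circ \varphi^S_B = \Omega_B \circ \theta_B^+$. Reading these two equations as the two composite paths around each square in the statement, they are exactly the commutation of the left and right squares of Proposition~\ref{prop:act_strat}. This establishes existence, with the residuals automatically carrying the correct polarities supplied by Lemma~\ref{lem:act_sym}.

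For uniqueness I would run the same rewriting backwards: any solution $(\varphi^S, \vartheta_A^-, \vartheta_B^+)$ to the two squares satisfies precisely the triangle identities of Lemma~\ref{lem:act_sym} for the composite symmetries $\theta_A^- \circ \Omega_A$ and $\Omega_B \circ \theta_B^+$, and hence must coincide with the unique solution furnished by that lemma. So the data is unique.

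There is no substantial obstacle here: the mathematical content is entirely contained in Lemma~\ref{lem:act_sym}, and this proposition is its natural repackaging as a two-sided reindexing of a $\sim$-witness along arbitrary ambient symmetries $\Omega_A$ and $\Omega_B$. The only point demanding genuine care is the bookkeeping of composition order together with the orientation of $\Omega_A$ (which points $a' \to a$), so that precomposing on the $A$-side and postcomposing on the $B$-side lands the residual polarities on the correct sides.
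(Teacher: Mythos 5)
Your proposal is correct and is exactly the paper's route: the paper's entire proof is the sentence ``This follows from Lemma~\ref{lem:act_sym}'', and you have filled in precisely the intended details — precompose with $\Omega_A$ and postcompose with $\Omega_B$ to form the composites $\theta_A^- \circ \Omega_A$ and $\Omega_B \circ \theta_B^+$, apply the lemma at $s$, and observe that its two triangle identities are literally the two squares, with uniqueness transferring by the same rewriting. Nothing further is needed.
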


This follows from Lemma \ref{lem:act_sym}. 
We can now establish the bijection patching
\eqref{eq:exp_bij}.
Consider $A \ot S \to B$ and $B \ot T \to C$, $\ca
\in \syms{A}, \cb \in \syms{B}$ and $\cc \in \syms{C}$, we write
$\spwit_{S,T}(\ca, \cb, \cc)$ for the 
\textbf{$\sim$-interaction witnesses}, \emph{i.e.} tuples
$(\theta_A^-, s, \Theta, t, \theta_C^+)$ where $\theta_A^- : \rep{\ca}
\sym_A^- s_A, s_B = t_B = b$ and $\theta_C^+ : t_C \sym_C^+ \rep{\cc}$
so that $(s, t) \in T\odot S$; and $\Theta : \rep{\cb} \sym_B b$.

\begin{restatable}{proposition}{witbij}
For $S, T, \ca, \cb, \cc$ as above, there is a bijection
\[
\Upsilon
~:~
\spwit_S(\ca, \cb) \times \spwit_T(\cb, \cc) 
~\bij~
\spwit_{S,T}(\ca, \cb, \cc)
\]
\emph{s.t.} for any $\Upsilon((\theta_A^-, s, \theta_B^+), (\Omega_B^-,
t, \Omega_B^+)) = (\psi_A^-, s', \Theta, t', \psi_C^+)$, there are
unique $\omega^S : s \sym_S s'$ and $\nu^T : t \sym_T t'$ making the
diagrams commute:
\[
\xymatrix@R=0pt@C=15pt{
&s_A
        \ar@{<-}[dl]_{\theta_A^-}
        \ar[dd]^{\omega_A^S}&
s_B
        \ar[r]^{\theta_B^+}
        \ar[dd]_{\omega^S_B}&
\rep{\cb}
        \ar[dd]^{\Theta}&
t_B
        \ar[dd]^{\nu^T_B}
        \ar@{<-}[l]_{\Omega_B^-}&
t_C
        \ar[dr]^{\Omega_C^+}
        \ar[dd]_{\nu^T_C}\\
        \rep{\ca}\ar[dr]_{\psi_A^-}&&&&&&\rep{\cc}\\
&s'_A
        &
s'_B   \ar@{=}[r]&
b&
t'_B   \ar@{=}[l]&
t'_C   \ar[ur]_{\psi_C^+}
}
\]
\end{restatable}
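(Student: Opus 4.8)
The plan is to build $\Upsilon$ from the ``synchronization up to symmetry'' of Lemma~\ref{lem:comp_upto_sym}, and to invert it using the reindexing of Lemma~\ref{lem:act_sym} (equivalently Proposition~\ref{prop:act_strat}); bijectivity, and the uniqueness of the mediating $\omega^S,\nu^T$, will then both reduce to the uniqueness clauses of those two lemmas. For the forward map, given $(\theta_A^-, s, \theta_B^+) \in \spwit_S(\ca,\cb)$ and $(\Omega_B^-, t, \Omega_C^+) \in \spwit_T(\cb,\cc)$, I form the mediating symmetry $\theta := \Omega_B^- \circ \theta_B^+ : s_B \sym_B t_B$ and apply Lemma~\ref{lem:comp_upto_sym}. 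This yields unique $s' \in S$, $t' \in T$ and $\varphi : s \sym_S s'$, $\psi : t' \sym_T t$ with $\varphi_A$ negative, $\psi_C$ positive and $\theta = \psi_B \circ \varphi_B$; in particular $s'_B = t'_B =: b$, so that $(s',t') \in T\odot S$. I then set $\omega^S := \varphi$, $\nu^T := \psi^{-1}$, and
\[
  \psi_A^- := \varphi_A \circ \theta_A^-, \qquad
  \psi_C^+ := \Omega_C^+ \circ \psi_C, \qquad
  \Theta := \varphi_B \circ (\theta_B^+)^{-1}.
\]

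Since $A_-,C_+$ are subgroupoids (closed under composition and inverse), $\psi_A^-$ is negative and $\psi_C^+$ is positive, so $(\psi_A^-, s', \Theta, t', \psi_C^+) \in \spwit_{S,T}(\ca,\cb,\cc)$. Commutation of all the triangles and squares of the displayed diagram is immediate by construction; the only computation is the second formula $\Theta = \psi_B^{-1} \circ \Omega_B^-$ for the central symmetry, which follows from $\psi_B \circ \varphi_B = \theta = \Omega_B^- \circ \theta_B^+$, and is exactly the commutation of the two middle squares. For the \emph{uniqueness} of $\omega^S$ and $\nu^T$ required by the statement, I would note that any pair making the diagrams commute satisfies the hypotheses of Lemma~\ref{lem:comp_upto_sym}: the triangles $\psi_A^- = \omega_A^S\circ\theta_A^-$ and $\Omega_C^+ = \psi_C^+\circ\nu_C^T$ force $\omega_A^S$ negative and $\nu_C^{T}$ (hence its inverse) positive by subgroupoid closure, while the middle squares recover $\theta = (\nu^T)^{-1}_B \circ \omega^S_B$; uniqueness in Lemma~\ref{lem:comp_upto_sym} then pins $\omega^S = \varphi$, $\nu^T = \psi^{-1}$.

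For the inverse I would run the reindexing in the opposite direction. Given $(\psi_A^-, s', \Theta, t', \psi_C^+)$ with $s'_B = t'_B = b$, I apply Lemma~\ref{lem:act_sym} to $s' \in S$ along $\id : s'_A \sym_A s'_A$ and $\Theta^{-1} : s'_B \sym_B \rep{\cb}$, obtaining a unique $s \in S$, a symmetry $\varphi' : s' \sym_S s$ with $\varphi'_A$ negative, and a \emph{positive} residual $\theta_B^+ : s_B \sym_B^+ \rep{\cb}$; I set $\theta_A^- := \varphi'_A \circ \psi_A^-$, negative. Dually, viewing $B\ot T\to C$ with source $B$ and target $C$, I apply Lemma~\ref{lem:act_sym} to $t'$ along $\Theta : \rep{\cb} \sym_B t'_B$ and $\psi_C^+ : t'_C \sym_C \rep{\cc}$, obtaining $t \in T$ with a \emph{negative} residual $\Omega_B^- : \rep{\cb} \sym_B^- t_B$ and a \emph{positive} residual $\Omega_C^+ : t_C \sym_C^+ \rep{\cc}$. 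This produces a pair in $\spwit_S(\ca,\cb) \times \spwit_T(\cb,\cc)$, defining the candidate inverse.

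Finally I would verify the two round-trips, and this is where I expect the main obstacle to lie. The point is to match the uniqueness statements of the two lemmas: starting from a pair and applying $\Upsilon$, the data $(\varphi,\psi)$ delivered by Lemma~\ref{lem:comp_upto_sym} must be recognized as the reindexing witnesses of Lemma~\ref{lem:act_sym}. Concretely, one checks that $\varphi^{-1} : s' \sym_S s$ together with the positive residual $\theta_B^+$ satisfies the defining triangles of the reindexing of $s'$ along $(\id,\Theta^{-1})$ — the $B$-triangle being precisely $\Theta^{-1} = \theta_B^+ \circ \varphi_B^{-1}$, i.e.\ the inverse of the forward formula for $\Theta$ — so by uniqueness in Lemma~\ref{lem:act_sym} the reindexing recovers $(s,\theta_A^-,\theta_B^+)$, and symmetrically for $t$; the reverse round-trip is identical, reading the forward synchronization of the reindexed $(s,t)$ off the same diagram. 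Once this correspondence of uniqueness properties is established, everything else is routine polarity bookkeeping (negative$\circ$negative$=$negative, positive$\circ$positive$=$positive) and the consistency of the two formulas for $\Theta$ noted above.
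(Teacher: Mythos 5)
Your proof is correct and follows essentially the same route as the paper's own (very terse) argument: the forward map is obtained by synchronizing $s$ and $t$ via $\Omega_B^-\circ\theta_B^+$ through Lemma~\ref{lem:comp_upto_sym}, the inverse by reindexing $s'$ and $t'$ along $\Theta^{-1}$ and $\Theta$ via Lemma~\ref{lem:act_sym}/Proposition~\ref{prop:act_strat}, with the uniqueness clauses of those two results yielding both the round-trips and the uniqueness of $\omega^S,\nu^T$. Your write-up simply makes explicit the polarity bookkeeping and the matching of uniqueness statements that the paper leaves implicit.
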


This is direct from Lemma \ref{lem:comp_upto_sym} and Proposition
\ref{prop:act_strat}, see App. \ref{app:bijquant}.

We now have a bijection that somewhat looks like \eqref{eq:exp_bij},
but we must sum over all symmetry classes in $B$ and check that the
cardinality of added symmetries cancels out. Indeed it is easy that
\begin{eqnarray*}
\#\,\spwit_S(\ca, \cb) &=& \m_-(\ca) \cdot \#\,\pwit_S(\ca, \cb) \cdot
\m_+(\cb)\,;
\end{eqnarray*}
from the definition, and since $\sim$-interaction witnesses carry a
symmetry class in $B$ and an endo-symmetry, it is also direct that 
\[
\#\,\spwit_{T\odot S}(\ca, \cc) = 
\sum_{\cb\in \syms{B}} \frac{1}{\m(\cb)} \cdot \#\,\spwit_{S, T}(\ca,
\cb, \cc)\,.
\]

From there and \eqref{eq:symmetry_degrees}, \eqref{eq:exp_bij} follows
from a simple computation. So:
\begin{corollary}
This yields a functor $\syms{-} : \Thin \to \WRel$.
\end{corollary}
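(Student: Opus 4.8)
The plan is to verify directly that $\syms{-}$ satisfies the two functor laws, a thin span $A \ot S \to B$ being sent to the weighted relation $\syms{S}_{\ca,\cb} = \#\,\pwit_S(\ca,\cb)$. Preservation of identities is the easy half. A witness of $\Id_A$ between $\ca$ and $\cb$ is an object $a$ with $\rep{\ca} \sym_A^- a$ and $a \sym_A^+ \rep{\cb}$; composing these two symmetries yields one from $\rep{\ca}$ to $\rep{\cb}$, forcing $\ca = \cb$. When $\ca = \cb$, the inverse of this composite is an endo-symmetry of $\rep{\ca}$ presented in the negative-after-positive form with intermediate object $a$; since $\rep{\ca}$ is canonical (Definition \ref{def:repr_tg}), the uniqueness clause of Lemma \ref{lem:factor} forces $a = \rep{\ca}$. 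Hence $\#\,\pwit_{\Id_A}(\ca,\cb) = \delta_{\ca,\cb}$, which is exactly the identity of $\WRel$.

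The substance is preservation of composition, i.e. establishing $\#\,\pwit_{T \odot S}(\ca,\cc) = \sum_{\cb}\#\,\pwit_S(\ca,\cb)\cdot\#\,\pwit_T(\cb,\cc)$, the cardinal form of \eqref{eq:exp_bij}. The obstruction, as noted above, is that there is no natural way to assemble $s \in \pwit_S(\ca,\cb)$ and $t \in \pwit_T(\cb,\cc)$ into a witness of $T \odot S$: one only has $s_B \sym_B t_B$, and the composite obtained via Lemma \ref{lem:comp_upto_sym} depends on the choice of mediating symmetry. I would sidestep this exactly as the surrounding development sets up: replace plain witnesses by $\sim$-witnesses carrying their polarized symmetries explicitly, and use the bijection $\Upsilon$ between $\spwit_S(\ca,\cb)\times\spwit_T(\cb,\cc)$ and the $\sim$-interaction witnesses $\spwit_{S,T}(\ca,\cb,\cc)$. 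This bijection rests on the fibration-like reindexing of Proposition \ref{prop:act_strat} (itself a consequence of Lemma \ref{lem:act_sym}) together with the uniqueness in Lemma \ref{lem:comp_upto_sym}, which is precisely what pins down the otherwise-undetermined mediating symmetry.

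With $\Upsilon$ in hand, the argument becomes a bookkeeping of symmetry degrees. Each plain witness $s \in \pwit_S(\ca,\cb)$ lifts to $\sim$-witnesses by freely choosing the negative symmetry $\rep{\ca}\sym_A^- s_A$ and the positive symmetry $s_B \sym_B^+ \rep{\cb}$, which form torsors over $A_-(\rep{\ca},\rep{\ca})$ and $B_+(\rep{\cb},\rep{\cb})$, giving $\#\,\spwit_S(\ca,\cb) = \m_-(\ca)\cdot\#\,\pwit_S(\ca,\cb)\cdot\m_+(\cb)$; dually, a $\sim$-witness of $T \odot S$ records a symmetry class $\cb$ and an extra endo-symmetry of its representative, so $\#\,\spwit_{T \odot S}(\ca,\cc) = \sum_{\cb}\frac{1}{\m(\cb)}\#\,\spwit_{S,T}(\ca,\cb,\cc)$. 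Substituting $\Upsilon$ and the degree formula for $\spwit_S$ and $\spwit_T$, the accumulated factors are $\m_-(\ca)$ and $\m_+(\cc)$ (which match those relating $\pwit_{T \odot S}$ to $\spwit_{T \odot S}$ and so cancel) and, for each $\cb$, a factor $\m_+(\cb)\m_-(\cb)/\m(\cb)$. The main obstacle is ensuring this last factor is exactly $1$: this is precisely the polarized factorization \eqref{eq:symmetry_degrees}, $\m(\cb) = \m_+(\cb)\cdot\m_-(\cb)$, itself guaranteed by canonicity of representatives. Its cancellation leaves $\sum_{\cb}\#\,\pwit_S(\ca,\cb)\cdot\#\,\pwit_T(\cb,\cc)$, establishing \eqref{eq:exp_bij} and hence functoriality.
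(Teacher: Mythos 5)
Your proposal is correct and follows essentially the same route as the paper's own proof: identities via the factorization/canonicity argument of Lemma \ref{lem:factor} (which the paper leaves as ``obvious''), and composition via $\sim$-witnesses, the bijection $\Upsilon$, the two counting identities relating $\spwit$ to $\pwit$ and to $\sim$-interaction witnesses, and cancellation through \eqref{eq:symmetry_degrees}. No gaps; your spelled-out identity case and the torsor justification of the degree formulas are exactly the details the paper elides.
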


\subsubsection{Exponential.} The crucial point remaining is that the
functorial action of ${\oc}$ is preserved. For this section, we adopt
notations inlining the bijections of Section \ref{subsubsec:pres_fur}: 
in particular, we write elements of $\syms{\oc A}$ as finite multisets
of elements of $\syms{A}$. We must give
\begin{eqnarray}
\pwit_{\oc S}(\bmu, [\cb_1, \dots, \cb_n]) 
&\bij& \sum_{\substack{\tuple{\ca_1, \dots, \ca_n}\\\emph{s.t.} [\ca_1,
\dots, \ca_n] = \bmu}} \prod_{i=1}^n \pwit_S(\ca_i, \cb_i)  
\end{eqnarray}
a bijection, for any thin span $A \ot S \to B$. 

From left to right, recall that writing $\bnu = [\cb_1,
\dots, \cb_n]$, $\pwit_{\oc S}(\bmu, \bnu)$ comprises those $\vec{s}$
such that $\rep{\bmu} \sym_A^- \vec{s}_{\oc A}$ and $\vec{s}_{\oc B}
\sym_B^+ \rep{\bnu}$. Let us write $\rep{\bnu} = \tuple{b_1, \dots,
b_n}$. On the right-hand side, as positive symmetries cannot exchange
elements of a sequence, we have $\vec{s} = \tuple{s^1, \dots, s^n}$
where $s^i_B \sym_B^+ b_i$. However on the left-hand side symmetries
\emph{can} exchange elements, so that there must exist an
(unspecified) permutation $\sigma \in \varsigma(n)$ such that
$\rep{\ca}_{\sigma(i)} \sym_A^- s^i_A$, informing
$\tuple{\ca_{\sigma(1)}, \dots, \ca_{\sigma(n)}}$ satisfying
$[\ca_{\sigma(1)}, \dots, \ca_{\sigma(n)}] = \bmu$ as needed.
Reciprocally, it is clear that data on the right-hand side can be
assembled into an element of $\pwit_{\oc S}(\bmu, \bnu)$ and that those
operations are inverse of one another.

This shows that modulo the bijection $t^{\oc}_A$ of Section
\ref{subsubsec:pres_fur}, the functorial action of $\oc$ is preserved.
The other bijections of Section \ref{subsubsec:pres_fur} still yield
isomorphisms in $\WRel$ -- for which, by a slight abuse, we keep the
same notation. All necessary coherence conditions are satisfied, so
that this operation lifts to the Kleisli (bi)categories.

\begin{theorem}
We have $\syms{-}_\oc :  \Thin_{\oc} \to \WRel_{\oc}$
cartesian closed.
\end{theorem}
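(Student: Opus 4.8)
The plan is to follow the exact pattern of the $\Rel$ case: having already shown in the preceding corollary that $\syms{-} : \Thin \to \WRel$ is a functor (so the linear layer is settled), I would exhibit $\syms{-}$ as a \emph{Seely functor} and then invoke the standard fact that a morphism of Seely categories lifts to a cartesian closed functor between the associated co-Kleisli categories, with the action on Kleisli morphisms given by $\syms{S}_\oc = \syms{S} \circ t^\oc_A$. This is precisely how the corresponding $\Rel$-theorem was obtained; the only genuinely new work is checking that the upgrade from bijections of witness \emph{sets} to equalities of witness \emph{counts} (weights in $\Ni$) goes through.

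First I would check that the components $t^\tensor_{A,B}$, $t^\with_{A,B}$ and $t^\oc_A$ of Section~\ref{subsubsec:pres_fur} are isomorphisms in $\WRel$ and that they make $\syms{-}$ symmetric monoidal. Since each arises from a bijection on objects, it corresponds to a permutation-like weighted relation (weight $1$ on matched pairs, $0$ elsewhere), hence is manifestly invertible; naturality and the monoidal coherence squares are inherited verbatim from the $\Rel$ computation, the underlying bijections being unchanged.

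The hard part, and the heart of the theorem, is preservation of the functorial action of $\oc$: I must show $\#\,\pwit_{\oc S}(\bmu, \bnu)$ equals $(\oc\,\syms{S})_{\bmu,\bnu}$ as defined by functorial promotion in $\WRel$. This is exactly the displayed bijection established just before the statement. The combinatorial core is the asymmetry between the two polarities: a \emph{positive} symmetry cannot permute the entries of a sequence, so an element of $\pwit_{\oc S}(\bmu,\bnu)$ decomposes as a tuple $\seq{s^1,\dots,s^n}$ with $s^i_B \sym_B^+ b_i$ (a product over positions $i$), whereas on the source side a \emph{negative} symmetry may reorder, forcing a sum over the orderings $\seq{\ca_1,\dots,\ca_n}$ with $[\ca_1,\dots,\ca_n] = \bmu$; taking cardinalities yields $\sum_{[\ca_i]=\bmu}\prod_{i=1}^n \syms{S}_{\ca_i,\cb_i}$, which is precisely the $\WRel$ promotion formula. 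The representation and canonicity data of Definition~\ref{def:repr_tg}, together with the factorization $\m = \m_+ \cdot \m_-$ of \eqref{eq:symmetry_degrees}, are what make these counts well defined and independent of the chosen representatives, so I would be careful to invoke them when pinning down the decomposition.

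Finally I would discharge the Seely-functor coherence conditions --- compatibility of $t^\oc$ with dereliction (the counit $\check\eta$) and with digging (the comultiplication $\check\mu$), plus the Seely-isomorphism coherence relating $t^\oc$, $t^\tensor$ and $t^\with$. These are routine and, as in the $\Rel$ case, follow from the fact that the relevant bijections agree; only the bookkeeping over weights differs. With all axioms in place, the generic Kleisli lifting delivers $\syms{-}_\oc : \Thin_\oc \to \WRel_\oc$ cartesian closed. I expect no obstruction beyond the exponential step above, since that is the unique place where the weights in $\Ni$ genuinely interact with the polarized symmetries rather than merely tracking set-theoretic bijections.
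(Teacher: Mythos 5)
Your proposal matches the paper's own argument essentially step for step: the paper likewise reduces the theorem to exhibiting $\syms{-}$ as a Seely functor and invoking the generic Kleisli lifting (the appendix result on Seely functors), with the only substantive work being the bijection $\pwit_{\oc S}(\bmu,\bnu) \bij \sum_{[\ca_i]=\bmu}\prod_i \pwit_S(\ca_i,\cb_i)$, proved exactly by your observation that positive symmetries cannot permute sequence entries while negative ones can, and noting that the structural isos $t^\tensor$, $t^\with$, $t^\oc$ transfer to $\WRel$ with routine coherence. The proof is correct and takes the same approach.
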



\subsubsection{Consequences} Since a cartesian closed functor preserves
the interpretation of the simply-typed $\lambda$-calculus, this gives
us a combinatorial description of the coefficients computed by
$\WRel_{\oc}$:

\begin{corollary}\label{cor:count_wit}
Consider $\Gamma \vdash M : A$ a simply-typed $\lambda$-term.

For every $\bgamma \in \intr{\Gamma}_{\WRel_\oc}$ and $\ca \in
\intr{A}_{\WRel_\oc}$, we have
\[
(\intr{M}_{\WRel_{\oc}})_{\bgamma, a} =
\#\pwit_{\terminterpb{M}}(t_\Gamma\,\bgamma\incomment{c'est quoi cette notation?}, t_A\,\ca)\,.
\]
\end{corollary}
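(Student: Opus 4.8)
The plan is to deduce the corollary from the preceding theorem, which asserts that $\syms{-}_{\oc} : \Thin_{\oc} \to \WRel_{\oc}$ is cartesian closed, via the general principle that a cartesian closed functor preserves the interpretation of the simply-typed $\lambda$-calculus. The weighted statement is then simply the quantitative refinement of the relational statement already recorded in Section~\ref{subsubsec:pres_fur}: where that one records boolean membership, this one records the cardinal $\#\pwit$.

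First I would recall that both interpretations are generated by the same inductive clauses over the typing derivation, using only cartesian closed structure: identities, Kleisli composition, products, and the currying/evaluation data. In $\Thin_{\oc}$ these clauses produce $\terminterpb{M}$, while in $\WRel_{\oc}$ the very same clauses produce $\intr{M}_{\WRel_{\oc}}$. Since $\syms{-}_{\oc}$ is cartesian closed, it commutes with each of these operations up to the coherent comparison isomorphisms $t^\tensor$, $t^\with$, $t^\oc$ of Section~\ref{subsubsec:pres_fur}. The source is a bicategory, so $\terminterpb{M}$ is a priori only determined up to $2$-cell (this is the coherence of Fiore and Saville~\cite{DBLP:journals/mscs/FioreS21} used earlier); but the target $\WRel_{\oc}$ is an ordinary $1$-category, so $\syms{-}_{\oc}$ sends isomorphic witnesses to \emph{equal} relational data and the comparison becomes an equality. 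Packaging the comparisons into web bijections $t_A : \intr{A}_{\WRel_{\oc}} \bij \syms{\typeinterpb{A}}$ and $t_\Gamma$, exactly as in the relational case, I obtain
\[
(\intr{M}_{\WRel_{\oc}})_{\bgamma, a} = (\syms{\terminterpb{M}}_{\oc})_{t_\Gamma\,\bgamma,\, t_A\,a}\,.
\]

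Second I would unfold the right-hand side. By the very definition of $\syms{-}$ on morphisms we have $\syms{S}_{\ca, \cb} = \#\pwit_S(\ca, \cb)$ for any thin span $A \ot S \to B$, and $\syms{-}_{\oc}$ merely precomposes with the isomorphism $t^\oc$; tracing this through identifies $(\syms{\terminterpb{M}}_{\oc})_{t_\Gamma\,\bgamma,\, t_A\,a}$ with $\#\pwit_{\terminterpb{M}}(t_\Gamma\,\bgamma, t_A\,\ca)$, which is the claimed equation. The main obstacle is bookkeeping rather than conceptual: I must verify that the bijections $t_A, t_\Gamma$ assembled from $t^\tensor, t^\with, t^\oc$ are precisely those intertwining the two inductively defined interpretations, and that the collapse of the bicategorical $2$-cells of $\Thin_{\oc}$ into equalities in $\WRel_{\oc}$ is compatible with the pseudofunctorial coherence data of $\syms{-}_{\oc}$. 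Once the $\Rel$-level preservation of Section~\ref{subsubsec:pres_fur} is granted, no fresh induction on $M$ is required, since the weighted version differs from it only by replacing boolean membership with the cardinal of $\pwit$.
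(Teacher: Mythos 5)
Your proposal is correct and follows exactly the paper's own route: the paper derives this corollary in one line from the preceding theorem that $\syms{-}_{\oc} : \Thin_{\oc} \to \WRel_{\oc}$ is cartesian closed, invoking preservation of the simply-typed interpretation and then unfolding $\syms{S}_{\ca,\cb} = \#\pwit_S(\ca,\cb)$ modulo the web bijections $t_\Gamma, t_A$. The additional bookkeeping you flag (collapse of bicategorical $2$-cells into equalities in the $1$-category $\WRel_{\oc}$, compatibility of the comparison isos) is exactly what the paper leaves implicit, so you have added justification rather than diverged from it.
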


By the results in Section \ref{subsec:int_span}, this is also the number of
derivations
$\itfj{\Theta}{\Gamma}{M}{\alpha}{A}$
(or their representations as rigid resource terms) where $\itype{\Theta}$ is
negatively symmetric (\emph{resp.} $\itype{\alpha}$ is positively
symmetric) to the intersection type matching a chosen canonical rigid
representative for $\bgamma$ (\emph{resp.} for $\ca$). 
Note that we can also derive:
\begin{proposition}\label{prop:weight_symclass}
Consider $\Gamma \vdash M : A$ a simply-typed $\lambda$-term.

For every $\bgamma \in \intr{\Gamma}_{\WRel_\oc}$ and $\ca \in
\intr{A}_{\WRel_\oc}$, we have
\[
(\intr{M}_{\WRel_{\oc}})_{\bgamma, \ca} =
\sum_{\bt \in W}
\frac{\m_+(t_\Gamma\,\bgamma)\cdot \m_-(t_A\,\ca)}{\m(\bt)}
\]
where $W$ is the set of symmetry classes in $\terminterpb{M}$
mapping to $(t_\Gamma\,\bgamma, t_A\,\ca)$, and $\m(\bt)$ is the size of the
group of symmetries on $\bt$.
\end{proposition}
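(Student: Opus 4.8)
The plan is to start from Corollary \ref{cor:count_wit}, which already identifies the left-hand side as $(\intr{M}_{\WRel_\oc})_{\bgamma,\ca} = \#\,\pwit_{\terminterpb{M}}(t_\Gamma\,\bgamma, t_A\,\ca)$. Writing $S = \terminterpb{M}$ and abbreviating $\bgamma,\ca$ for the chosen representatives, it then suffices to prove the per-component identity
\[
\#\big(\pwit_S(\bgamma,\ca)\cap\bt\big) \;=\; \frac{\m_+(\bgamma)\cdot\m_-(\ca)}{\m(\bt)}\,,\qquad \text{for each }\bt\in W\,,
\]
and sum over $\bt\in W$: the positive witnesses partition according to their symmetry class, and by definition of $W$ each $\bt$ collects exactly the witnesses mapping to $(\bgamma,\ca)$. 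I will abbreviate $p_\bt = \#(\pwit_S(\bgamma,\ca)\cap\bt)$.

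To establish the identity, fix $\bt\in W$ and pick any $s_*\in\bt$, so that $s_{*,\Gamma}\sym_\Gamma\rep{\bgamma}$ and $s_{*,A}\sym_A\rep{\ca}$ since $\bt\in W$. Consider the reindexing of Lemma \ref{lem:act_sym}: to a pair $(\Theta_\Gamma,\Theta_A)$ of symmetries $\Theta_\Gamma : \rep{\bgamma}\sym_\Gamma s_{*,\Gamma}$ and $\Theta_A : s_{*,A}\sym_A\rep{\ca}$ of arbitrary polarity, reindexing $s_*$ along the boundaries $\rep{\bgamma}$ and $\rep{\ca}$ produces a unique $s'\in\bt$ together with $\varphi : s_*\sym_S s'$, a negative $\vartheta_\Gamma^- : \rep{\bgamma}\sym^-_\Gamma s'_\Gamma$ and a positive $\vartheta_A^+ : s'_A\sym^+_A\rep{\ca}$; in particular $s'$ is a positive witness in $\bt$. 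This defines a map $(\Theta_\Gamma,\Theta_A)\mapsto s'$ from a set of size $\m(\bgamma)\cdot\m(\ca)$ (each hom-set being a torsor over the relevant automorphism group) onto $\pwit_S(\bgamma,\ca)\cap\bt$; surjectivity follows since any positive witness $s'$, with witnessing $\vartheta_\Gamma^-,\vartheta_A^+$ and any $\varphi : s_*\sym_S s'$, is hit by $(\varphi_\Gamma^{-1}\circ\vartheta_\Gamma^-,\ \vartheta_A^+\circ\varphi_A)$.

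The heart of the argument is to compute the fibres using the uniqueness clause of Lemma \ref{lem:act_sym}. The fibre over a positive witness $s'$ is in bijection with triples $(\varphi,\vartheta_\Gamma^-,\vartheta_A^+)$ with $\varphi\in S[s_*,s']$, $\vartheta_\Gamma^-$ negative and $\vartheta_A^+$ positive: given such a triple, the pair $(\Theta_\Gamma,\Theta_A) := (\varphi_\Gamma^{-1}\circ\vartheta_\Gamma^-,\ \vartheta_A^+\circ\varphi_A)$ is a preimage whose reindexing output is forced to be exactly $(s',\varphi,\vartheta_\Gamma^-,\vartheta_A^+)$ by uniqueness, and conversely every preimage arises this way from its own (unique) output data. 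Counting: there are $\m(\bt)$ choices of $\varphi$, the negative symmetries $\rep{\bgamma}\sym^-_\Gamma s'_\Gamma$ form a torsor over $\Gamma_-[\rep{\bgamma},\rep{\bgamma}]$ (hence $\m_-(\bgamma)$ of them) and the positive symmetries $s'_A\sym^+_A\rep{\ca}$ a torsor over $A_+[\rep{\ca},\rep{\ca}]$ (hence $\m_+(\ca)$), so each fibre has the constant size $\m(\bt)\cdot\m_-(\bgamma)\cdot\m_+(\ca)$, crucially independent of $s'$.

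Equating domain size with (number of positive witnesses) times (fibre size) yields $\m(\bgamma)\,\m(\ca) = p_\bt\cdot\m(\bt)\,\m_-(\bgamma)\,\m_+(\ca)$; substituting the factorization \eqref{eq:symmetry_degrees}, namely $\m(\bgamma)=\m_+(\bgamma)\,\m_-(\bgamma)$ and $\m(\ca)=\m_+(\ca)\,\m_-(\ca)$, and simplifying gives $p_\bt = \m_+(\bgamma)\,\m_-(\ca)/\m(\bt)$, which after summation completes the proof. The main obstacle I anticipate is the polarity bookkeeping together with the two appeals to uniqueness in Lemma \ref{lem:act_sym} — ensuring both that each triple yields a genuinely distinct preimage and that no preimage is missed — alongside verifying that the torsor counts for the boundary symmetries do not depend on $s'$, which is precisely what makes the fibre size constant and lets the sum telescope into the stated formula.
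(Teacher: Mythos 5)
Your proof is correct, and it reaches the per-class identity by a genuinely different route than the paper. The paper also starts from Corollary \ref{cor:count_wit} and partitions positive witnesses by symmetry class, but it then proves a \emph{one-sided} counting statement (Theorem \ref{th:count_symclass}, via Proposition \ref{prop:bijwitc}): after fixing auxiliary data for each class $\cs$ (a representative $\rep{\cs}$ with $(\rep{\cs})_A \sym_A^+ \rep{\ca}$, a reference positive symmetry $\theta_\cs^+$, and connecting symmetries $\kappa_s$), it constructs a bijection $\spwit_S[\cs] \times \mathcal{S}(\cs) \bij \mathcal{S}(\ca)$, whose backward direction factors an arbitrary automorphism of $\rep{\ca}$ by canonicity and then reindexes, and whose injectivity invokes the thinness property that a symmetry of $S$ displaying to a positive symmetry is an identity; the two-sided formula is then obtained by instantiating at $A^\perp \parr B$. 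You instead count directly in the two-sided setting: your surjection from the full hom-sets $\Gamma[\rep{\bgamma}, s_{*,\Gamma}] \times A[s_{*,A}, \rep{\ca}]$ onto $\pwit \cap \bt$, with fibres identified (via the uniqueness clause of Lemma \ref{lem:act_sym}) with triples $(\varphi, \vartheta_\Gamma^-, \vartheta_A^+)$ of constant cardinality $\m(\bt)\cdot\m_-(\bgamma)\cdot\m_+(\ca)$, yields the same numerical identity $\m(\bgamma)\,\m(\ca) = p_\bt\cdot\m(\bt)\,\m_-(\bgamma)\,\m_+(\ca)$ after applying \eqref{eq:symmetry_degrees}. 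What your version buys: no auxiliary choices of class representatives or reference symmetries, no separate nonemptiness lemma (the paper needs one to know each class contains a positive witness, whereas your surjectivity from a visibly nonempty domain gives it for free), and all uniqueness reasoning is delegated to Lemma \ref{lem:act_sym} rather than to the underlying thinness lemma. What the paper's version buys: the one-sided Theorem \ref{th:count_symclass} is a modular statement about a single thin groupoid, reusable beyond this proposition, and the duality step makes the polarity bookkeeping automatic rather than handled by hand on each side as you do.
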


This is because to each symmetry class $\bt$ correspond a number of
positive witnesses equal to the negative symmetries of the matching
rigid intersection type, divided by the symmetries of $\bt$ -- the
proof appears in Appendix \ref{app:aux_thin}. Thus, one can obtain the
right coefficient from symmetry classes (and therefore for normal
standard resource terms following Section \ref{sec:res_sym}), but the
weight of each symmetry class must be corrected suitably accounting for
symmetries.

\subsection{Distributors and Generalized Species} 

We now establish a link between thin spans and the bicategory
of distributors (\emph{i.e.} profunctors). We keep this section
succinct; to a large extent, it is a simplification of the construction
in \cite{DBLP:conf/lics/ClairambaultOP23}.

\subsubsection{The bicategory of groupoids and distributors}
A \textbf{distributor} from groupoid $A$ to $B$
(\emph{a.k.a.} \emph{profunctor}) is a functor
$\alpha : A^{\op} \times B \to \Set$
giving, for all $a \in A, b \in B$, a set $\alpha(a, b)$ of
\emph{witnesses}, along with an action of symmetries: if
$x \in \alpha(a, b)$ and $\theta \in B(b,
b')$, we write $\theta \cdot x$ for the functorial action $\alpha(\id,
\theta)(x) \in \alpha(a, b')$. Similarly, if $\vartheta \in A(a', a)$,
we write $x\cdot \vartheta \in \alpha(a', b)$ for $\alpha(\vartheta,
\id)$. 

The bicategory $\Dist$ has groupoids as objects, distributors
as morphisms, and natural transformations as 2-cells. The
\textbf{identity distributor} on $A$ is the hom-set functor
$\id_A = A[-, -] : A^\op \times A \to \Set$.
The \textbf{composition} of two distributors
$\alpha : A^\op \times B \to \Set$ and $\beta : B^\op \times C \to
\Set$ is defined in terms of the coend formula:
\[
(\beta \bullet \alpha)(a, c) = \int^{b \in B} \alpha(a, b) \times
\beta(b,
c)\,,
\]
meaning that concretely, $(\beta \bullet \alpha)(a, c)$ consists in
pairs $(x, y)$, where $x \in \alpha(a, b)$ and $y \in \beta(b, c)$ for
some $b \in B$, quotiented by $(g \cdot x, y) \sim (x, y \cdot g)$ for
$x \in \alpha(a, b)$, $g \in B(b, b')$ and $y \in \beta(b', c)$. 
The bicategory $\Dist$ has 
%
cartesian products given by the disjoint union $A + B$.

\subsubsection{Extracting distributors from thin spans} 
On objects, we send a thin groupoid $(A, A_-, A_+, \U_A, \T_A)$ to its
underlying groupoid $A$. 

On morphisms, given a thin span $A \ot S \to B$, for all $a\in A$ and
$b \in B$ we must specify a set $\D{S}(a, b)$. It is tempting to set
simply the pre-image $(\display^S)^{-1}(a, b)$, but there is no
functorial action
\[
\D{S}(\theta_A, \theta_B) : \D{S}(a, b) \to \D{S}(a', b')
\]
for $\theta_A \in A(a', a)$ and $\theta_B \in B(b, b')$ as
$\display^S$ is not a fibration. We need
a finer symmetry lifting property of thin spans -- and we
have one, seen in Proposition \ref{prop:act_strat}.
Thus, we set instead $\D{S}(a, b)$ as the set $\spwit_S(a, b)$ of
\textbf{$\sim$-witnesses} of $(a, b)$ in $S$, \emph{i.e.} triples
$(\theta_A^-, s, \theta_B^+)$ \emph{s.t.} $s \in S$, $\theta_A^- \in
A_-(a, s_A)$ and $\theta_B^+ \in B_+(s_B, b)$. Though we keep the same
terminology and notation as in Section \ref{subsubsec:funct}, those are
$\sim$-witnesses of \emph{specific} objects of the groupoids $A$ and
$B$, not symmetry classes.

We get a functorial action by setting $\D{S}(\Omega_A,
\Omega_B)(\theta_A^-, s, \theta_B^+)$ as the positive witness
$(\vartheta_A^-, s', \vartheta_B^+)$ as in the statement of Proposition
\ref{prop:act_strat}, yielding a distributor for every thin span $A \ot
S \to B$: 

\begin{proposition}
\label{prop:strategy-to-distributor}
We have a distributor $\D{S} : A^{\op} \times B \to \Set$.
\end{proposition}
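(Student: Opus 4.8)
The plan is to take the object assignment $\D{S}(a,b) = \spwit_S(a,b)$ as given and to read the functorial action straight off \Cref{prop:act_strat}. A morphism $(a,b) \to (a',b')$ in $A^{\op} \times B$ is a pair $(\Omega_A,\Omega_B)$ with $\Omega_A : a' \sym_A a$ and $\Omega_B : b \sym_B b'$. Given a $\sim$-witness $(\theta_A^-, s, \theta_B^+) \in \spwit_S(a,b)$, \Cref{prop:act_strat} produces a \emph{unique} triple $(\vartheta_A^-, s', \vartheta_B^+)$ together with a witnessing symmetry $\varphi^S : s \sym_S s'$, where $\vartheta_A^- : a' \sym_A^- s'_A$ and $\vartheta_B^+ : s'_B \sym_B^+ b'$, so the triple indeed lies in $\spwit_S(a',b')$. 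I set $\D{S}(\Omega_A,\Omega_B)(\theta_A^-,s,\theta_B^+) = (\vartheta_A^-, s', \vartheta_B^+)$; the uniqueness clause makes this a well-defined function $\spwit_S(a,b) \to \spwit_S(a',b')$ with no hidden choices. It then remains only to verify the two functor laws, and in both cases the same uniqueness does the work. Throughout I write $\varphi_A$ and $\varphi_B$ for the images of a symmetry $\varphi$ of $S$ under the two display legs $\display^S_l$ and $\display^S_r$, which being functors satisfy $(\psi \circ \varphi)_A = \psi_A \circ \varphi_A$ and likewise on the $B$-side.

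For identities, I would apply $\D{S}(\id_a,\id_b)$ to a $\sim$-witness $(\theta_A^-,s,\theta_B^+)$ and observe that the triple itself, paired with the witness $\varphi^S = \id_s$, already satisfies the two commuting squares of \Cref{prop:act_strat}: with $\Omega_A = \id_a$ and $\Omega_B = \id_b$ these reduce to $\theta_A^- = \id_{s_A} \circ \theta_A^-$ and $\theta_B^+ = \theta_B^+ \circ \id_{s_B}$. By the uniqueness in that proposition the action must return $(\theta_A^-,s,\theta_B^+)$, so $\D{S}(\id_a,\id_b)$ is the identity on $\spwit_S(a,b)$.

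For composition I would paste two reindexing squares. Take $(\Omega_A,\Omega_B) : (a,b) \to (a',b')$ and $(\Omega_A',\Omega_B') : (a',b') \to (a'',b'')$; their composite in $A^{\op} \times B$ has underlying $A$-morphism $\Omega_A \circ \Omega_A' : a'' \sym_A a$ (the order reversed relative to $A^{\op}$) and $B$-morphism $\Omega_B' \circ \Omega_B : b \sym_B b''$. Starting from $(\theta_A^-,s,\theta_B^+)$, the first action yields $(\rho_A^-, s', \rho_B^+)$ with witness $\varphi^S : s \sym_S s'$ satisfying $\rho_A^- = \varphi^S_A \circ \theta_A^- \circ \Omega_A$ and $\Omega_B \circ \theta_B^+ = \rho_B^+ \circ \varphi^S_B$; the second yields $(\tau_A^-, s'', \tau_B^+)$ with witness $\psi^S : s' \sym_S s''$ satisfying $\tau_A^- = \psi^S_A \circ \rho_A^- \circ \Omega_A'$ and $\Omega_B' \circ \rho_B^+ = \tau_B^+ \circ \psi^S_B$. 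I then claim that $(\tau_A^-, s'', \tau_B^+)$, together with the composite witness $\psi^S \circ \varphi^S : s \sym_S s''$, satisfies the defining squares of $\D{S}(\Omega_A \circ \Omega_A', \Omega_B' \circ \Omega_B)(\theta_A^-,s,\theta_B^+)$. This is a short chase: on the $A$-side, $(\psi^S \circ \varphi^S)_A \circ \theta_A^- \circ (\Omega_A \circ \Omega_A') = \psi^S_A \circ (\varphi^S_A \circ \theta_A^- \circ \Omega_A) \circ \Omega_A' = \psi^S_A \circ \rho_A^- \circ \Omega_A' = \tau_A^-$; on the $B$-side, $\tau_B^+ \circ (\psi^S \circ \varphi^S)_B = (\tau_B^+ \circ \psi^S_B) \circ \varphi^S_B = (\Omega_B' \circ \rho_B^+) \circ \varphi^S_B = \Omega_B' \circ \Omega_B \circ \theta_B^+$. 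Since the composite triple satisfies the defining equations, uniqueness in \Cref{prop:act_strat} forces $\D{S}(\Omega_A \circ \Omega_A', \Omega_B' \circ \Omega_B) = \D{S}(\Omega_A',\Omega_B') \circ \D{S}(\Omega_A,\Omega_B)$, which is functoriality.

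The argument is essentially bookkeeping, and I do not expect a genuine obstacle once \Cref{prop:act_strat} is available: that proposition carries all the real content, namely the fibration-like lifting of symmetries along the non-fibrational display map. The only points demanding care are the variance — so that the $A$-components compose as $\Omega_A \circ \Omega_A'$ whereas the $B$-components compose as $\Omega_B' \circ \Omega_B$ — and the recognition that the \emph{two} separately chosen reindexings paste to a reindexing along the composite, which is exactly what the uniqueness statement lets me assert.
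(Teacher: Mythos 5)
Your proposal is correct and follows exactly the paper's route: the paper likewise sets $\D{S}(a,b) = \spwit_S(a,b)$ and defines the functorial action by sending $(\theta_A^-,s,\theta_B^+)$ along $(\Omega_A,\Omega_B)$ to the unique triple $(\vartheta_A^-,s',\vartheta_B^+)$ supplied by Proposition~\ref{prop:act_strat}, leaving the functor laws implicit in that uniqueness. Your explicit verification of the identity and composition laws (pasting the two reindexing squares and invoking uniqueness, with the correct variance on the $A^{\op}$ side) is precisely the bookkeeping the paper omits.
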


\subsubsection{Constructing natural transformations.} Consider $S, T$
thin spans from $A$ to $B$, and $(F, F^A,F^B) : S \to T$ a positive
morphism; consisting for each $s \in S$ of $F^A_s \in A_-(s_A, (F
t)_A)$ and $F^B_s \in B_+(s_B, (F s)_B)$.

To each $\w = (\theta_A^-, s, \theta_B^+) \in \D{S}(a, b)$, we
set $\D{S}(F, F^A,F^B)(\w)$ to
\[
(a \stackrel{\theta_A^-}{\to} s_A \stackrel{F^A_s}{\to} (F t)_A,
\qquad
F t,
\qquad
(F t)_B \stackrel{F^B_s}{\to} s_B \stackrel{\theta_B^+}{\to} b)
\]
which by the uniqueness property of Proposition \ref{prop:act_strat}
can be easily verified to give a natural transformation from $\D{S}$ to
$\D{T}$.

\subsubsection{Further components.} To complete the pseudofunctor, we
need two natural isomorphisms, the \emph{unitor} and the
\emph{compositor}.

\begin{proposition}
  Given a thin span $A$, there is a natural iso
\[
\pid^A : \D{\Id_A} \stackrel{\iso}{\Rightarrow} A[-, -] : 
A^{\op} \times A \to \Set\,.
\]
\end{proposition}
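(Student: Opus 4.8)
The plan is to realise $\pid^A$ as the \emph{composition} map, with inverse a polarized \emph{factorization}. Recall that the identity span is $A \xleftarrow{\id} A \xrightarrow{\id} A$, so for $\Id_A$ we have $S = A$ with both display maps the identity; consequently a $\sim$-witness in $\D{\Id_A}(a,b)$ is a triple $(\theta^-, s, \theta^+)$ with $\theta^- \in A_-(a, s)$ and $\theta^+ \in A_+(s, b)$ (here $s_A = s_B = s$). I would define the component $\pid^A_{a,b} \co \D{\Id_A}(a,b) \to A[a,b]$ by $(\theta^-, s, \theta^+) \mapsto \theta^+ \circ \theta^-$, which is manifestly a function $\D{\Id_A}(a,b) \to A[a,b]$.

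The substance is that each $\pid^A_{a,b}$ is a bijection, which amounts to a \emph{dual} version of \Cref{lem:factor}: every $\theta \co a \sym_A b$ factors uniquely as $\theta = \theta^+ \circ \theta^-$ with $\theta^- \co a \sym_A^- s$ and $\theta^+ \co s \sym_A^+ b$ for a unique intermediate object $s$. I would derive this by passing to inverses. Applying \Cref{lem:factor} to $\theta^{-1} \co b \sym_A a$ gives a unique $s$ and $\rho^+ \co b \sym_A^+ s$, $\rho^- \co s \sym_A^- a$ with $\theta^{-1} = \rho^- \circ \rho^+$; since $A_-$ and $A_+$ are subgroupoids, hence closed under inverses, setting $\theta^- = (\rho^-)^{-1}$ and $\theta^+ = (\rho^+)^{-1}$ yields $\theta = \theta^+ \circ \theta^-$ with the required polarities, and uniqueness transfers verbatim from \Cref{lem:factor}. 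This assignment $\theta \mapsto (\theta^-, s, \theta^+)$ is exactly the two-sided inverse of $\pid^A_{a,b}$, so each component is an isomorphism in $\Set$.

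It remains to check naturality in $(a,b) \in A^{\op} \times A$. Given $\Omega_A \in A(a', a)$ and $\Omega_B \in A(b, b')$, the functorial action $\D{\Id_A}(\Omega_A, \Omega_B)$ sends $(\theta^-, s, \theta^+)$ to the triple $(\vartheta^-, s', \vartheta^+)$ produced by \Cref{prop:act_strat} (with $S = A$, so $\varphi^S \co s \sym_A s'$), characterised by the commutations $\vartheta^- = \varphi^S \circ \theta^- \circ \Omega_A$ and $\vartheta^+ \circ \varphi^S = \Omega_B \circ \theta^+$. Composing, $\vartheta^+ \circ \vartheta^- = \Omega_B \circ \theta^+ \circ (\varphi^S)^{-1} \circ \varphi^S \circ \theta^- \circ \Omega_A = \Omega_B \circ (\theta^+ \circ \theta^-) \circ \Omega_A$, which is precisely the image of $\theta^+ \circ \theta^-$ under the hom-functor action $A[\Omega_A, \Omega_B]$. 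Hence the naturality square commutes on the nose, and $\pid^A$ is a natural isomorphism.

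The only delicate point, and the thing to flag, is the polarity ordering: $\sim$-witnesses compose negative-then-positive, whereas \Cref{lem:factor} factors positive-then-negative, so the bijection genuinely requires the inverse-passing argument above rather than a direct appeal to \Cref{lem:factor}; once that is in place and the functorial action is unfolded from \Cref{prop:act_strat}, everything else is bookkeeping.
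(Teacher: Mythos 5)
Your proof is correct and takes essentially the same route as the paper, whose entire argument is the one-line remark that the result ``is straightforward from the factorization result of Lemma~\ref{lem:factor}.'' The point you flag is real and well handled: since $\sim$-witnesses compose negative-then-positive while Lemma~\ref{lem:factor} factors positive-then-negative, one must apply the lemma to $\theta^{-1}$ and pass to inverses (using closure of $A_-$, $A_+$ under inverses) to get the unique dual factorization, and your naturality check via the functorial action of Proposition~\ref{prop:act_strat} is exactly the intended bookkeeping.
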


This is straightforward from the factorization result of Lemma
\ref{lem:factor}. Now, we focus on the preservation of composition. For
two thin spans $A \ot S \to B$ and $B \ot T \to C$,  we have the
\textbf{compositor}: 

\begin{proposition}\label{prop:compositor}
There is a natural isomorphism:
\[
\pcomp^{S, T} : \D{T \odot S} \Rightarrow
\D{T} \bullet \D{S} :
A^{\op} \times B \to \Set\,.
\]
\end{proposition}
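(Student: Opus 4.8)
The plan is to define $\pcomp^{S,T}$ in the direction $\D{T\odot S} \Rightarrow \D{T}\bullet\D{S}$ directly, exploiting that an element of $T \odot S$ is a pair $(s,t)$ with $s_B = t_B$ \emph{on the nose}, so that no mediating symmetry of $B$ is needed. Concretely, for $a \in A$ and $c \in C$, I would send a $\sim$-witness $(\theta_A^-, (s,t), \theta_C^+) \in \D{T\odot S}(a, c)$ -- where $\theta_A^- \in A_-(a, s_A)$, $\theta_C^+ \in C_+(t_C, c)$ and $s_B = t_B =: b$ -- to the coend class of
\[
\big((\theta_A^-, s, \id_b),\; (\id_b, t, \theta_C^+)\big) \in \D{S}(a, b) \times \D{T}(b, c)\,,
\]
which typechecks because $\id_b$ is at once positive and negative. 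This assignment involves no choices, so it is manifestly a well-defined function into the coend; the routine part is then to verify naturality in $A^{\op}\times C$, by comparing the functorial action on $\D{T\odot S}$ (defined through Proposition~\ref{prop:act_strat}) with the componentwise actions inside the coend and concluding by the uniqueness clause of that proposition.

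The heart of the proof is bijectivity, and for the inverse I would use Lemma~\ref{lem:comp_upto_sym}. Given a coend class with representative $\big((\theta_A^-, s, \theta_B^+), (\Omega_B^-, t, \Omega_C^+)\big)$, where $\theta_B^+ \in B_+(s_B, b)$ and $\Omega_B^- \in B_-(b, t_B)$, I form the mediating symmetry $\Omega_B^- \circ \theta_B^+ : s_B \sym_B t_B$ and feed it to Lemma~\ref{lem:comp_upto_sym}, obtaining unique $s', t'$ together with $\varphi : s \sym_S s'$ and $\psi : t' \sym_T t$ such that $\varphi_A$ is negative, $\psi_C$ is positive, and $\Omega_B^- \circ \theta_B^+ = \psi_B \circ \varphi_B$. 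Since this forces $s'_B = t'_B$, the pair $(s', t')$ lies in $T \odot S$, and I set the image to be $(\varphi_A \circ \theta_A^-, (s', t'), \Omega_C^+ \circ \psi_C)$, whose $A$- and $C$-components are respectively negative and positive as composites of such. The round-trip backward-then-forward is immediate: feeding $\id_b = \id_b \circ \id_b$ to Lemma~\ref{lem:comp_upto_sym} has unique solution $\varphi = \id_s$, $\psi = \id_t$, so one recovers the original $(\theta_A^-, (s,t), \theta_C^+)$.

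The main obstacle is the other round-trip, equivalently that the backward map is independent of the chosen coend representative (equivalently, that the forward map is injective). My strategy is to normalise representatives. Using the reindexing identity $\theta_B^+ \cdot (\theta_A^-, s, \id_{s_B}) = (\theta_A^-, s, \theta_B^+)$ -- itself obtained by recognising the right-hand side as the unique solution of the reindexing problem of Proposition~\ref{prop:act_strat} -- together with the coend relation $(g\cdot x, y)\sim(x, y\cdot g)$, I can slide $\theta_B^+$ onto the $\D{T}$-factor and so assume the $\D{S}$-factor carries an identity positive $B$-symmetry; the two factors are then linked only by the \emph{negative} symmetry $\Omega_B^-$, to which Lemma~\ref{lem:comp_upto_sym} applies, synchronising $s$ and $t$ on the nose into $(s',t') \in T\odot S$. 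A second reindexing identity $\varphi_B \cdot (\theta_A^-, s, \id_{s_B}) = (\varphi_A\circ\theta_A^-, s', \id_{s'_B})$ (and its $\D{T}$-counterpart), again read off from the uniqueness in Proposition~\ref{prop:act_strat}, lets me slide the residual $\varphi_B$, $\psi_B$ off both factors, leaving precisely the canonical pair that the forward map produces from $(\varphi_A\circ\theta_A^-, (s',t'), \Omega_C^+\circ\psi_C)$. This establishes that forward and backward are mutually inverse; combined with the naturality square of the first paragraph, $\pcomp^{S,T}$ is the desired natural isomorphism.
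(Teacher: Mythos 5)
Your proposal is correct and follows essentially the same route as the paper's proof: the same forward map $(\theta_A^-,(s,t),\theta_C^+) \mapsto [((\theta_A^-,s,\id_b),(\id_b,t,\theta_C^+))]$, the same use of Lemma~\ref{lem:comp_upto_sym} on the mediating $B$-symmetry to synchronise $s$ and $t$, and the same coend-relation sliding via reindexing identities read off from the uniqueness clause of Proposition~\ref{prop:act_strat} (the paper performs exactly this with its symmetry $\Theta_B$). The only difference is bookkeeping: you package the argument as an explicit inverse plus two round-trips, whereas the paper proves bijectivity of the same map directly, with surjectivity given by the sliding argument and injectivity delegated to the uniqueness clause of Lemma~\ref{lem:comp_upto_sym}.
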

\begin{proof}
The map $\pcomp^{S, T}_{a, c}$ sends $(\theta_A^-, (s, t), \theta_C^+)
\in \D{T \odot S}(a, c)$ (with $s_B = t_B = b$) to (the equivalence
class of) the pair 
\[
((\theta_A^-, s, \id_{b}), (\id_{b}, t, \theta_C^+))
\in (\D{T} \bullet \D{S})(a, c)\,.
\]

For each $a \in A$ and $c \in C$, this forms a bijection. Consider
indeed
\[
\w^S = (\theta_A^-, s, \theta_B^+)  
        \in \D{S}(a, b)
\qquad
\w^T = (\theta_B^-, t, \theta_C^+) 
        \in \D{T}(b, c)
\]
composable witnesses. By Lemma \ref{lem:comp_upto_sym} we compose
$s$ and $t$ through $\theta_B^- \circ \theta_B^+$, yielding
unique $\varphi^S \in S[s, s'], \varphi^T \in
T[t, t'], \vartheta_A^-, \vartheta_C^+$ s.t.: 
%
\[
\begin{tikzcd}[row sep=0.1em, column sep=1.5em]
  &  s_A \ar[dd, "\varphi^S_A"] & s_B \ar[dd,
"\varphi^S_B"] \ar[r, "\theta_B^+"] & b \ar[r, "\theta_B^-"] &
t_B \ar[dd, "\varphi^T_B"] & t_C \ar[dd,
"\varphi^T_C"'] \ar[dr, "\theta_C^+"] \\
  a \ar[ur, "\theta_A^-"] \ar[dr, "\vartheta_A^-"'] & & & & & & c
\\
  & s'_A & s'_B \ar[r, Rightarrow, no head]& b' \ar[r,
  Rightarrow, no head] & t'_B & t'_C \ar[ur,
  "\vartheta_C^+"'] &
\end{tikzcd}
\]
which, writing $\Theta_B = \varphi^S_B \circ {\theta_B^+}^{-1} =
\varphi^T_B \circ \theta_B^-$, entails
\[  
\begin{array}{rcrcl}
\v^{S} &=& (\vartheta_A^-, s', \id_{y_B}) &=& \Theta_B \cdot
(\theta_A^-, s, \theta_B^+)\\
\v^T &=& (\id_{y_B}, t', \vartheta_C^+) &=& (\theta_B^-, t,
\theta_C^+) \cdot \Theta_B
\end{array}
\]
so $(\v^S, \v^T) = (\Theta_B \cdot \w^S, \v^T)
\sim (\w^S, \v^T \cdot \Theta_B) = (\w^S, \w^T)$.
Now $(\v^S, \v^T) = \pcomp^{S, T}(\vartheta_A^-,
t'\odot s', \vartheta_C^+)$, showing surjectivity -- injectivity also
follows from the uniqueness clause in Lemma \ref{lem:comp_upto_sym}.
\end{proof}

The naturality and coherence requirements hold, and altogether:

\begin{theorem}\label{th:oplax_thin}
  This yields a peudofunctor $\D{-} : \Thin \to \Dist$.
\end{theorem}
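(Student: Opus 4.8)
The plan is to assemble the data already constructed into a pseudofunctor and check the remaining axioms. We have the action on objects (a thin groupoid is sent to its underlying groupoid), the assignment $S \mapsto \D{S}$ on $1$-cells (\Cref{prop:strategy-to-distributor}), the action $(F, F^A, F^B) \mapsto \D{F, F^A, F^B}$ on $2$-cells, the unitor $\pid$, and the compositor $\pcomp$ (\Cref{prop:compositor}), which are already known to be well-defined and (for $\pid$, $\pcomp$) natural isomorphisms pointwise. What remains is: functoriality of $\D{-}$ on each hom-category, naturality of $\pid$ and $\pcomp$ with respect to $2$-cells, and the two coherence axioms. My guiding observation is that every one of these verifications reduces to a uniqueness statement, namely the uniqueness clauses of \Cref{lem:comp_upto_sym} and \Cref{prop:act_strat}: whenever two reindexings or two composites satisfy the same defining commutation, they must be equal.

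First I would check that for each pair of thin groupoids $A, B$ the assignment $\D{-}$ restricts to a functor $\Thin[A, B] \to \Dist[A, B]$. Concretely this means verifying that $\D{-}$ sends the identity positive weak morphism on $S$ to the identity natural transformation on $\D{S}$, and that it preserves vertical composition of $2$-cells. Both are immediate from the explicit formula for $\D{F, F^A, F^B}$ on a $\sim$-witness $(\theta_A^-, s, \theta_B^+)$: the identity morphism induces the identity reindexing, and the reindexing obtained by transporting along $F$ and then along a second morphism $G$ coincides with the one obtained by transporting along the composite, because by \Cref{prop:act_strat} each is the unique $\sim$-witness fitting the relevant commuting triangles. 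The same uniqueness yields the naturality of $\pcomp^{S, T}$ in the pair $(S, T)$: given $2$-cells $S \Rightarrow S'$ and $T \Rightarrow T'$, the two sides of the naturality square are $\sim$-interaction witnesses built from the same underlying data, hence equal.

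Finally, and this is the main obstacle, I would verify the two coherence axioms. The unit coherence---that $\pcomp$ composed with $\pid$ on either side recovers the unitors of $\Dist$---is a short diagram chase using the factorization of \Cref{lem:factor}, since the identity thin span contributes only trivially factoring endo-symmetries. The substantial point is the associativity coherence for composable thin spans $A \ot S \to B$, $B \ot T \to C$, $C \ot U \to D$: one must show that the two natural isomorphisms from $\D{U \odot (T \odot S)}$ to $\D{U} \bullet (\D{T} \bullet \D{S})$ agree, one built by applying $\pcomp^{T \odot S, U}$ then $\pcomp^{S, T}$, the other by transporting through the associator of $\Thin$, applying $\pcomp^{S, T \odot U}$ wait---precisely, after inserting the associators of $\Thin$ on the source and of $\Dist$ on the target, the hexagon commutes. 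Both composites send a $\sim$-witness of the triple composite to a single equivalence class, and the intermediate mediating symmetries at $B$ and $C$ arising from two applications of \Cref{lem:comp_upto_sym} are uniquely determined; hence the two coherences coincide. I expect the bookkeeping of these nested mediating symmetries to be the only delicate part, but no new idea beyond the uniqueness in \Cref{lem:comp_upto_sym} is required, so the axioms hold and $\D{-} : \Thin \to \Dist$ is a pseudofunctor.
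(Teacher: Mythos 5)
Your proposal is correct and takes essentially the same route as the paper: assemble the already-constructed data (the distributor $\D{S}$ of Proposition~\ref{prop:strategy-to-distributor}, the action on positive morphisms, the unitor $\pid$, and the compositor $\pcomp$ of Proposition~\ref{prop:compositor}) and discharge functoriality on hom-categories, naturality, and the two coherence axioms via the uniqueness clauses of Lemma~\ref{lem:comp_upto_sym} and Proposition~\ref{prop:act_strat}, together with the factorization of Lemma~\ref{lem:factor}. The paper compresses exactly these final verifications into the single assertion that ``the naturality and coherence requirements hold,'' so your sketch (modulo the stray self-correction and the slip writing $\pcomp^{S, T \odot U}$ for $\pcomp^{S, U \odot T}$) simply makes explicit the same bookkeeping the paper leaves implicit.
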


\subsubsection{Lifting to Kleisli bicategories.} Recall that 
$\Esp$ is the Kleisli bicategory $\Dist_{\Sym}$.
%
Composition of $F : \Sym(A)^{\op} \times B \to \Set$ and $G :
\Sym(B)^{\op} \times C \to \Set$ is $G \bullet F^{\Sym}$, where the
\textbf{promotion} is 
\[
F^{\Sym}(\vec{a}, \tuple{b_1,\dots, b_n}) = 
\int^{\vec{a}'_1, \dots, \vec{a}'_n}
A[\vec{a}, \vec{a}'_1\dots\vec{a}'_n] \times \Pi_{i=1}^n F(\vec{a}'_i, b_i)
\]
comprising a morphism in $A[\vec{a}, \vec{a}'_1, \dots, \vec{a}'_n]$
along with a family in $\Pi_{i = 1}^n F(\vec{a}'_i, b_i)$, quotiented
by an equivalence relation.

Likewise, the promotion $S^{\Sym}$ of a thin span, constructed as
\[
\Sym(A) \ot \Sym(\Sym(A)) \ot \Sym(S) \to \Sym(B)\,,
\]
yields by $\D{-}$ the distributor associating to
$\vec{a}, \tuple{b_1, \dots, b_n}$ triples
\begin{eqnarray}
(\theta^-_{\Sym(A)}, \tuple{s_1, \dots, s_n}, \theta^+_{\Sym(B)}) 
\in \D{S^{\Sym}}(\vec{a}, \vec{b})\,,
\label{eq:triple}
\end{eqnarray}
but $\theta^+_{\Sym(B)}$ is positive, so cannot
reindex the $b_i$s and must be $(\id_{1\dots n}, (\theta^+_i)_{1 \leq i
\leq n})$ for $\theta^+_i$ is positive in $B$. 
So we map \eqref{eq:triple} to
\[
(\theta^-_{\Sym(A)}, \tuple{(\id, s_i, \theta^+_i)\mid 1 \leq i \leq
n}) \in \D{S}^{\Sym}(\vec{a}, \vec{b})
\]
inducing a natural bijection $\D{S^\Sym}_{\vec{a},
\vec{b}} \bij \D{S}^{\Sym}_{\vec{a}, \vec{b}}$.

Combined with $\pcomp^{S, T}$ this provides a natural
iso for preservation of Kleisli composition. Together with a
straightforward natural isomorphism for Kleisli identity laws
and lengthy verifications for coherence, we obtain
a pseudofunctor $\D{-} : \Thin_{\oc} \to \Esp$.

\subsubsection{A cartesian closed pseudofunctor.} We check that this
extends to a \emph{cc-pseudofunctor} \cite{DBLP:journals/mscs/FioreS21}. 
First, $\D{-}$ preserves constructions on objects strictly.
The notion of a \emph{fp-pseudofunctor}
\cite{DBLP:journals/mscs/FioreS21} requires that for each
$(A_i)_{1\leq i \leq n}$, $\tuple{\D{\pi_1}, \dots, \D{\pi_n}}$ is part
of an adjoint equivalence
\[
\xymatrix@R=-5pt{
\prod_{i=1}^n A_i\incomment{problème ici}
	\ar@/^1pc/|{\tuple{\D{\pi_1}, \dots, \D{\pi_n}}}[rr]&\bot&
\prod_{i=1}^n A_i
	\ar@/^1pc/|{\q^\times_{\vec{A}}}[ll]
}
\]
in $\Esp$: here $\q^\times$ can be taken to be the identity in
$\Esp$, completed to an adjoint equivalence in the obvious way.
On top of that, the definition of a \emph{cc-pseudofunctor}
\cite{DBLP:journals/mscs/FioreS21} then additionally requires
that $\e_{A, B} = \Lambda(\D{\evm_{A, B}} \bullet_\Sym \q^\times) :
A\tto B \to A\tto B$ is also part of
\[
\xymatrix@R=-5pt{
A\tto B
        \ar@/^1pc/|{\e_{A, B}}[rr]&\bot&
A\tto B
        \ar@/^1pc/|{\q^\tto_{A, B}}[ll]
}
\]
an adjoint equivalence. But $\e_{A, B}$ can be computed to be naturally
isomorphic to the identity on $A\tto B$ in $\Esp$; constructing the
adjoint equivalence is then straightforward. Altogether: 

\begin{theorem}\label{th:final_esp}
$\D{-} : \Thin_{\oc} \to \Esp$ is a cc-pseudofunctor.
\end{theorem}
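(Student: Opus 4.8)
The plan is to verify, against the pseudofunctor $\D{-} : \Thinb \to \Esp$ already assembled, the three requirements of Fiore and Saville's notion of cc-pseudofunctor \cite{DBLP:journals/mscs/FioreS21}: that it is a pseudofunctor (in hand), that it is an fp-pseudofunctor, and that the canonical comparison for the exponential is an adjoint equivalence. The decisive structural fact I would isolate first is that $\D{-}$ preserves the cartesian closed structure \emph{strictly on objects}: the product $\with$ in $\Thinb$ carries underlying groupoid the coproduct of groupoids, which is exactly the product $+$ of $\Esp$, and $\oc A \linto B$ is sent to the internal hom $\intr{A} \tto \intr{B}$ on the nose. Consequently every comparison $1$-cell lives between \emph{equal} objects, so each can be taken to be, or compared against, an identity distributor, and the whole theorem reduces to checking that the comparison cells are equivalences and that they cohere.

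For the fp-condition, I would compute the tupling $\tuple{\D{\pi_1}, \dots, \D{\pi_n}}$ explicitly. Each $\D{\pi_i}$ is extracted from the projection thin span by Proposition \ref{prop:strategy-to-distributor}, and its $\sim$-witnesses, after applying the factorization of Lemma \ref{lem:factor}, are in natural bijection with the relevant hom-sets. Taking the comparison $\q^\times$ to be the identity distributor, the two composites of $\q^\times$ with $\tuple{\D{\pi_i}}$ are then naturally isomorphic to the respective identities, which exhibits $\tuple{\D{\pi_i}}$ as one leg of an adjoint equivalence, the unit and counit being exactly these canonical isos; the triangle identities are immediate. The only real verification is the naturality of these isomorphisms in the symmetries of the $A_i$, which follows from the uniqueness clause of Proposition \ref{prop:act_strat}.

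For the closed-structure condition, I would unfold $\e_{A,B} = \Lambda(\D{\evm_{A,B}} \bullet_\Sym \q^\times)$ using the two ingredients set up just before the theorem: the compositor $\pcomp$ of Proposition \ref{prop:compositor} for the Kleisli composition $\bullet_\Sym$, and the promotion-preservation bijection $\D{S^{\Sym}} \bij \D{S}^{\Sym}$. Since a positive symmetry in $\Sym(B)$ cannot permute the entries of a sequence, the evaluation distributor collapses componentwise, so that $\D{\evm_{A,B}} \bullet_\Sym \q^\times$ computes to a distributor naturally isomorphic to the identity on $\intr{A} \tto \intr{B}$; currying yields $\e_{A,B} \iso \id$, which I would complete to an adjoint equivalence in the evident way.

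The main obstacle is not the existence of these equivalences — each reduces, after unfolding, to the rigidity of positive symmetries together with the uniqueness clauses of Lemma \ref{lem:comp_upto_sym} and Proposition \ref{prop:act_strat} — but the \emph{coherence bookkeeping}: one must check that the product comparison, the exponential comparison, the unitor $\pid$, and the compositor $\pcomp$ paste together into all the coherence diagrams demanded by the cc-pseudofunctor axioms. I expect the fiddliest part to be the coherence tying $\e_{A,B}$ to the Kleisli composition and to $\pcomp$, where the many mediating natural isomorphisms must be composed and shown equal; as throughout the paper, each such equality ultimately collapses to a uniqueness property of thin spans, so the difficulty here is organisational rather than conceptual.
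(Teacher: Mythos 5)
Your proposal is correct and follows essentially the same route as the paper: strict preservation of the cartesian closed structure on objects, taking $\q^\times$ to be the identity distributor completed to an adjoint equivalence, and computing $\e_{A,B}$ to be naturally isomorphic to the identity before completing it likewise. The extra detail you supply (the factorization of Lemma \ref{lem:factor}, the uniqueness clauses of Lemma \ref{lem:comp_upto_sym} and Proposition \ref{prop:act_strat}, and the rigidity of positive symmetries on sequences) is exactly the machinery the paper relies on implicitly for these computations and for the coherence verifications it leaves unstated.
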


\subsubsection{Consequences.}
Fix a simply-typed $\lambda$-term $\Gamma \vdash M : A$.

By Theorem \ref{th:final_esp}, we have a natural isomorphism
$I : \intr{M}_{\Esp} \iso \D{\terminterpb{M}}$ showing that up to iso,
generalized species of structure compute positive witnesses in the sense of thin
spans of groupoids.

By the results of Section \ref{sec:intersection-resource}, this
can be reformulated as:
\begin{corollary}
  For $\gamma \in \ctxtinterpb \Gamma$ and $a \in \typeinterpb A$, we have a
  bijection
\[
  \intr{M}_{\Esp}(\gamma, a) \iso 
  \left\{
    (\itmor{\theta_\Gamma^-}, \bt, \itmor{\theta_A^+})
    \quad\middle|\quad
    \begin{array}{l}
      \itmor{\theta_\Gamma^-} \in \itgpdn{\Gamma}[\itype{K_\Gamma
(s_\Gamma\,\gamma)}, \itype{\Theta}],\\
      \bt \in \itgpd{M}_{\itype{\Theta}, \itype{\alpha}},\\
      \itmor{\theta_A^+} \in \itgpdp{A}[\itype{\alpha}, \itype{K_A\,a}]
    \end{array}
  \right\}\,.
\]
\end{corollary}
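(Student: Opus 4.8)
The plan is to chain the cartesian closed pseudofunctor $\D{-}$ of \Cref{th:final_esp} with the syntactic presentation of $\terminterpb{M}$ from \Cref{sec:intersection-resource}. First I would invoke the natural isomorphism $I : \intr{M}_{\Esp} \iso \D{\terminterpb{M}}$ already obtained from \Cref{th:final_esp}: since $\D{-} : \Thinb \to \Esp$ is a cc-pseudofunctor, it preserves the interpretation of the simply-typed $\lambda$-calculus up to coherent isomorphism \cite{DBLP:journals/mscs/FioreS21}. This reduces the claim to exhibiting a bijection between $\D{\terminterpb{M}}(\gamma, a)$ and the right-hand set. By \Cref{prop:strategy-to-distributor} — using that the $\Esp$-distributor underlying a Kleisli $1$-cell is just $\D{-}$ applied to the thin span, the Kleisli promotion intervening only in composition — we have $\D{\terminterpb{M}}(\gamma, a) = \spwit_{\terminterpb{M}}(\gamma, a)$, \emph{i.e.} the set of triples $(\theta^-, m, \theta^+)$ with $m \in \terminterpb{M}$, left display $\display_l m$, right display $\display_r m$, and $\theta^- \in (\oc\ctxtinterpb{\Gamma})_-(\gamma, \display_l m)$, $\theta^+ \in (\typeinterpb{A})_+(\display_r m, a)$, where $\gamma \in \oc\ctxtinterpb{\Gamma} = \Sym(\ctxtinterpb{\Gamma})$ and $a \in \typeinterpb{A}$.

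Second I would transport this triple through the groupoid isomorphisms established earlier. The witness $m$ maps under $K_M : \terminterp{M} \iso \itgpd{M}$ (\Cref{thm:isom-terminterp-itgpd}, combined with the strong iso of \Cref{th:direct_carac} relating the Kleisli interpretation $\terminterpb{M}$ to the direct one $\terminterp{M}$) to a derivation $\bt = K_M(m) \in \itgpd{M}_{\itype{\Theta},\itype{\alpha}}$, where by the commuting display squares the context part is $\itype{\Theta} = K_\Gamma(\seely_\Gamma\,\display_l m)$ and the type part is $\itype{\alpha} = K_A(\display_r m)$. The negative symmetry $\theta^-$ is sent, through $K_\Gamma \circ \seely_\Gamma$, to $\itmor{\theta_\Gamma^-} \in \itgpdn{\Gamma}[\itype{K_\Gamma(\seely_\Gamma\,\gamma)}, \itype{\Theta}]$, and the positive $\theta^+$ is sent through $K_A$ to $\itmor{\theta_A^+} \in \itgpdp{A}[\itype{\alpha}, \itype{K_A\,a}]$. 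As $K_M$, $K_\Gamma$, $K_A$ and the Seely renaming $\seely_\Gamma$ are all isomorphisms (equivalences) of groupoids, this assignment is a bijection: its inverse reassembles a triple from $(\itmor{\theta_\Gamma^-}, \bt, \itmor{\theta_A^+})$ by applying the inverse isos, the domain/codomain data forcing the intermediate objects to agree. Naturality in $(\gamma, a)$ follows since every ingredient is a morphism of groupoids.

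The main obstacle I expect is the bookkeeping of \emph{polarities} across these transports, and in particular across the Seely functor $\seely_\Gamma$ and the $\Sym$ exponential on the context. Concretely I must check that $K_\Gamma \circ \seely_\Gamma$ carries the negative subgroupoid $(\oc\ctxtinterpb{\Gamma})_-$ \emph{onto} $\itgpdn{\Gamma}$ (not merely into $\itgpd{\Gamma}$), and dually that $K_A$ identifies $(\typeinterpb{A})_+$ with $\itgpdp{A}$; these are exactly the polarized refinements of \Cref{prop:isom-itgpd-typeinterp} (extended to contexts), relying on the fact that the exponential sets $\oc A = (\Sym(A),\Sym(A_-),\Symp(A_+),\dots)$, so that negativity on $\oc A$ permits permutations while positivity is componentwise with trivial permutation. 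Granting this polarity preservation, and using the unique factorization of \Cref{lem:factor} to confirm that the three components $(\itmor{\theta_\Gamma^-}, \bt, \itmor{\theta_A^+})$ of a $\sim$-witness are genuinely independent data rather than subject to hidden coherence constraints, the bijection follows at once.
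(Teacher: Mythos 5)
Your proposal is correct and takes essentially the same route as the paper: it expands the paper's two-step argument --- the natural isomorphism $I : \intr{M}_{\Esp} \iso \D{\terminterpb{M}}$ obtained from the cc-pseudofunctor of Theorem~\ref{th:final_esp}, followed by transport along the groupoid isomorphisms $K_\Gamma$, $K_M$, $K_A$ of Section~\ref{sec:intersection-resource} --- into explicit form, including the unwinding of $\D{\terminterpb{M}}(\gamma,a)$ as $\sim$-witnesses via Proposition~\ref{prop:strategy-to-distributor}. The polarity bookkeeping you single out as the main obstacle is indeed the only delicate point, and it holds essentially by definition, since $\itgpdn{\Gamma}$ and $\itgpdp{A}$ are \emph{defined} as the transports of the polarized subgroupoids along these very isomorphisms.
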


This captures the interpretation of simply-typed $\lambda$-terms in
$\Esp$ syntactically.
This is analogous to results by Tsukada \emph{et
al.} \cite{DBLP:conf/lics/TsukadaAO17} and Olimpieri \cite{ol:intdist},
except our derivations are simpler, without quotient.

Finally, altogether, the isomorphism $I$ and Corollary
\ref{cor:count_wit} entail:
\begin{corollary}
For any $\bgamma \in \intr{\Gamma}_{\WRel_\oc}$ and $\ca \in
\intr{A}_{\WRel_\oc}$,
\[
(\intr{M}_{\WRel_\oc})_{\bgamma, \ca} = 
\frac{\# \intr{M}_{\Esp}(t_\Gamma\,\bgamma,
t_A\,\ca)}{\m_-(t_\Gamma\,\bgamma)\cdot \m_+(t_A\,\ca)}
\]
where $\# \intr{M}_{\Esp}(t_\Gamma\,\bgamma, t_A\,\ca)$ is defined for any
representative.
\end{corollary}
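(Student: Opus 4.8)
The plan is to obtain the formula by chaining four ingredients already available: the natural isomorphism $I$ coming from \Cref{th:final_esp}, the definition of the distributor $\D{-}$ on morphisms, the counting relation between $\sim$-witnesses and positive witnesses from \Cref{subsubsec:funct}, and finally \Cref{cor:count_wit}. Fix representatives $\gamma_0 = \rep{t_\Gamma\,\bgamma}$ and $a_0 = \rep{t_A\,\ca}$ of the two symmetry classes. I would first record that $\# \intr{M}_{\Esp}(\gamma_0, a_0)$ does not depend on the chosen representatives: since $\intr{M}_{\Esp}$ is a distributor on groupoids, it sends isomorphic objects to isomorphic, hence equinumerous, sets. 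This justifies the notation $\# \intr{M}_{\Esp}(t_\Gamma\,\bgamma, t_A\,\ca)$ appearing in the statement, understood as this common cardinality.

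Next I would unfold both sides at $(\gamma_0, a_0)$. By the natural isomorphism $I : \intr{M}_{\Esp} \iso \D{\terminterpb{M}}$ of \Cref{th:final_esp}, we have $\# \intr{M}_{\Esp}(\gamma_0, a_0) = \# \D{\terminterpb{M}}(\gamma_0, a_0)$; and by \Cref{prop:strategy-to-distributor}, $\D{\terminterpb{M}}(\gamma_0, a_0)$ is the set $\spwit_{\terminterpb{M}}(\gamma_0, a_0)$ of $\sim$-witnesses of the \emph{objects} $\gamma_0, a_0$. Because $\gamma_0, a_0$ are precisely the canonical representatives of $t_\Gamma\,\bgamma$ and $t_A\,\ca$, this set coincides on the nose with the symmetry-class $\sim$-witnesses $\spwit_{\terminterpb{M}}(t_\Gamma\,\bgamma, t_A\,\ca)$ of \Cref{subsubsec:funct} (both being triples $(\theta_\Gamma^-, s, \theta_A^+)$ with $\gamma_0 \sym^-_{\ctxtinterpb{\Gamma}} s_\Gamma$ and $s_A \sym^+_{\typeinterpb A} a_0$). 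Hence
\[
\# \intr{M}_{\Esp}(t_\Gamma\,\bgamma, t_A\,\ca) = \#\,\spwit_{\terminterpb{M}}(t_\Gamma\,\bgamma, t_A\,\ca)\,.
\]

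Now I would invoke the counting relation established in \Cref{subsubsec:funct},
\[
\#\,\spwit_{\terminterpb{M}}(t_\Gamma\,\bgamma, t_A\,\ca) = \m_-(t_\Gamma\,\bgamma) \cdot \#\,\pwit_{\terminterpb{M}}(t_\Gamma\,\bgamma, t_A\,\ca) \cdot \m_+(t_A\,\ca)\,,
\]
and identify $\#\,\pwit_{\terminterpb{M}}(t_\Gamma\,\bgamma, t_A\,\ca)$ with $(\intr{M}_{\WRel_\oc})_{\bgamma, \ca}$ using \Cref{cor:count_wit}. Substituting the previous line and dividing through by $\m_-(t_\Gamma\,\bgamma)\cdot\m_+(t_A\,\ca)$ yields exactly the claimed equality.

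I expect the only delicate point to be the bookkeeping in the middle step, namely verifying that the $\sim$-witnesses produced by $\D{-}$ at the representatives $\gamma_0, a_0$ literally agree with the symmetry-class $\sim$-witnesses of \Cref{subsubsec:funct}, reconciling the two overloaded uses of the notation $\spwit$ (over specific objects versus over symmetry classes). This identification is immediate once one substitutes the representatives, but it is the hinge of the whole argument: it is what converts the $\Esp$-count into a count of $\sim$-witnesses, to which the quantitative relation with positive witnesses — and thereby \Cref{cor:count_wit} — applies.
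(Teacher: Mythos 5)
Your proof is correct and takes essentially the same route as the paper, whose entire argument for this corollary is "altogether, the isomorphism $I$ and Corollary~\ref{cor:count_wit} entail" -- i.e.\ exactly your chain. The details you supply (representative-independence of $\#\intr{M}_{\Esp}$, evaluating $\D{\terminterpb{M}}$ at the canonical representatives so that the object-level $\sim$-witnesses coincide on the nose with the symmetry-class $\sim$-witnesses of Section~\ref{subsubsec:funct}, and the counting relation $\#\,\spwit = \m_-\cdot\#\,\pwit\cdot\m_+$) are precisely the intermediate facts the paper leaves implicit.
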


This is independent of $\Thinb$, though it does require the positive
and negative symmetries -- this shows that these are fundamental in
quantitative semantics, independently of their role in $\Thin$.

\section{Conclusion}

We have illustrated our results on the simply-typed
$\lambda$-calculus for the economy of presentation and since it already
features the phenomena of interest, but $\Thin$ readily supports
non-determinism and can be easily extended with quantitative
(probabilistic and quantum) primitives, for which we expect our results
still hold.

Our results show that the interpretation of the simply-typed
$\lambda$-calculus in $\Thin$ can be regarded as a rigid Taylor
expansion. Section \ref{sec:res_sym} then suggests a link with the
standard Taylor expansion of $\lambda$-terms which may illuminate the
coefficients appearing there; however we could not find an exposition
of the simply-typed Taylor expansion in the literature, so we had to
omit this by lack of space.  Detailing that, and the untyped case, is
left for future work.

\bibliographystyle{ACM-Reference-Format}
\bibliography{main}

\todo{transformer le corps du texte pour prendre en compte ifappendix}%
\ifappendix
  \clearpage
  \appendix
  
\onecolumn
\newgeometry{twoside=true,
  includeheadfoot, head=13pt, foot=2pc,
  paperwidth=6.75in, paperheight=10in,
  top=58pt, bottom=44pt, inner=4cm, outer=4cm,
  marginparwidth=2pc,heightrounded}%
\headwidth=\linewidth%

\begin{center}
  \LARGE \bf Appendix
\end{center}
\simon{c'est ok si on passe en une colonne pour l'annexe?}%
\todo{si une colonne, placer le onecolumn à la place du clearpage avant appendix}%

\section{Additional properties in $\Thin$}
\label{app:aux_thin}

Here, we provide the proofs of properties of thin spans of groupoids
that this paper need, which were not provided in
\cite{DBLP:conf/lics/ClairambaultF23}.

\subsection{Reindexing by a symmetry}
\label{app:reindexing}

Here, we show the detailed proof of Lemma \ref{lem:act_sym}, which
expresses how $\Thin$ lets us reindex witnesses by symmetries.

\makeatletter
\@ifundefined{actsym}{}{\actsym*}%
\makeatother
\begin{proof}
We show this ignoring the left-hand side, for $S \in \T_B$, $s \in S$,
$\theta_B : s_B \sym_B b$; the general case follows by applying this to
$A^\perp \parr B$.

\emph{Existence.} By hypothesis, we know that $S \in \T_B \subseteq
\U_B$. By definition of thin groupoids, we know that $(B_+, \id_B^+)
\in \T_B^\pperp \subseteq \U_B^\perp$, so that $S \perp (B_+,
\id_B^+)$. Hence, the pullback
\[
  \begin{tikzcd}[cramped,rsbo=1.7em,csbo=2.5em]
    &
    \cdot
    \ar[rd,"\pr",dashed] 
    \ar[ld,"\pl"',dashed] 
    \phar[dd,"\dcorner",very near start]
    &
    \\
    S
    \ar[rd,"\display^S"']
    &&
    B_+
    \ar[ld,"\id_B^+"]
    \\
    &
    B
  \end{tikzcd}
\]
is a bipullback. By our concrete characterisation of bipullbacks in
$\Gpd$, applying this to $s \in S$, $b \in B_+$ and $\theta_B : s_B
\sym_B b$, this gives us $\varphi : s \sym_S s'$ and $\vartheta_B^+ :
s'_B \sym_B^+ b$ such that $\theta_B = \vartheta_B^+ \circ \varphi_B$
as required.

\emph{Uniqueness.} Consider another solution, comprising $\psi : s
\sym_S s''$ and $\nu_B^+ : s''_B \sym_B^+ b$ such that $\nu_B^+ \circ
\psi_B = \theta_B$. Then, $\nu_B^+ \circ \psi_B = \vartheta_B^+ \circ
\varphi_B$, so
\[
(\psi \circ \varphi^{-1})_B = \psi_B \circ \varphi^{-1}_B =
(\vartheta_B^+)^{-1} \circ \nu_B^+
\]
a positive morphism. But by
\cite[Lem.~3]{DBLP:conf/lics/ClairambaultF23}, a morphism $\psi \circ
\varphi^{-1}$ in $S$ which maps to a positive morphism in $B$ must be
an identity; hence $s' = s''$ and $\psi \circ \varphi^{-1} = \id_{s'}$,
so that $\varphi = \psi$. Additionally, $\vartheta_B^+ = \theta_B \circ
(\varphi_B)^{-1} = \theta_B \circ (\psi_B)^{-1} = \nu_B^+$ as desired,
concluding the proof.
\end{proof}

\subsection{Counting symmetry classes}

Our aim here is to provide a characterisation of the number of positive
witnesses inhabiting a given symmetry class; providing the missing
brick for the proof of Proposition \ref{prop:weight_symclass}. For this
section, let us fix a thin groupoid $A$ and some $S \in \T_A$; we shall
derive the two-sided version of the result by simply applying it to
$A^\perp \parr B$.


First, we show that any symmetry class in $S$ has a representative that
is positively symmetric to (the chosen representative) of the
corresponding symmetry class in $A$:

\begin{lemma}
Consider $\cs \in S/\sym_S$, and consider $\ca$ its display.

Then, there is $s \in \cs$ such that $s_A \sym_A^+ {\rep{\ca}}$.
\end{lemma}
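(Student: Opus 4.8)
The plan is to pick an arbitrary representative of the class $\cs$ and then transport it, along a symmetry, into the desired position relative to $\rep{\ca}$, using the reindexing property of thin spans (\Cref{lem:act_sym}). First I would choose any $s_0 \in \cs$. Since $\ca$ is by definition the display class of $\cs$, we have $\class{(s_0)_A} = \ca$, so $(s_0)_A$ and the chosen representative $\rep{\ca}$ lie in the same connected component of $A$; as $A$ is a groupoid, I may fix a symmetry $\theta_A : (s_0)_A \sym_A \rep{\ca}$.

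Next I would apply the reindexing lemma to move $s_0$. Since $S \in \T_A$ is a single-sided prestrategy on $A$ — precisely the covariant instance of \Cref{lem:act_sym}, matching the case $S \in \T_B$ established directly in its proof — reindexing $s_0$ along $\theta_A$ yields a unique $s' \in S$, a symmetry $\varphi : s_0 \sym_S s'$, and a \emph{positive} symmetry $\vartheta_A^+ : s'_A \sym_A^+ \rep{\ca}$ with $\theta_A = \vartheta_A^+ \circ \varphi_A$. Because $\varphi$ connects $s_0$ and $s'$ in $S$, we get $s' \in \cs$; and $\vartheta_A^+$ exhibits $s'_A \sym_A^+ \rep{\ca}$. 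Thus $s = s'$ satisfies the claim, and the argument is essentially a one-line consequence of the reindexing lemma once the representative and the connecting symmetry have been named.

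The only genuine subtlety — which I would flag explicitly — is the polarity bookkeeping. \Cref{lem:act_sym} as stated for a two-sided span $A \ot S \to B$ produces a negative symmetry on the contravariant ($A$) side and a positive one on the covariant ($B$) side. Here $S$ is a prestrategy on a single thin groupoid $A$, which plays the role of the covariant side; hence the symmetry produced is positive, exactly as required. The main care needed is therefore to invoke the covariant (one-sided) instance of the lemma and to orient $\theta_A$ so that it points from $(s_0)_A$ towards the fixed target $\rep{\ca}$, so that the lemma delivers $\vartheta_A^+ : s'_A \sym_A^+ \rep{\ca}$ rather than a negative symmetry.
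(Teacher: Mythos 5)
Your proof is correct, and it is essentially the paper's argument with one step compressed. The paper also starts from an arbitrary $s \in \cs$ and a symmetry $\theta_A : s_A \sym_A \rep{\ca}$, but it first factors $\theta_A$ using Lemma \ref{lem:factor} (applied to $\theta_A^{-1}$) as $s_A \sym_A^- a \sym_A^+ \rep{\ca}$, then applies Proposition \ref{prop:act_strat} only to the negative leg to obtain $\varphi : s \sym_S s'$ and $\vartheta_A^+ : s'_A \sym_A^+ a$, and finally composes two positive symmetries to conclude $s'_A \sym_A^+ \rep{\ca}$. You instead feed the whole of $\theta_A$ to the one-sided, covariant instance of Lemma \ref{lem:act_sym} --- which indeed accepts an \emph{arbitrary} symmetry and returns a positive residual --- so the factorization and the final composition disappear. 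Both routes rest on the same underlying fact, namely orthogonality of $S$ against $(A_+, \id^+_A)$, which is exactly what the proof of Lemma \ref{lem:act_sym} exploits; and since Proposition \ref{prop:act_strat} is itself derived from Lemma \ref{lem:act_sym}, nothing is lost by your shortcut --- it is, if anything, the cleaner of the two proofs. Your polarity bookkeeping is also precisely the point to check: the covariant one-sided instance is the case $S \in \T_B$ proved directly in Appendix \ref{app:reindexing}, and orienting $\theta_A$ from $(s_0)_A$ towards $\rep{\ca}$ is what makes the residual come out as $\vartheta_A^+ : s'_A \sym_A^+ \rep{\ca}$ rather than with the wrong polarity or on the wrong side.
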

\begin{proof}
Consider first any $s \in \cs$. By hypothesis, there is $\theta_A : s_A
\sym_A \rep{\ca}$. It might not be positive, but by Lemma
\ref{lem:factor} (applied to $\theta_A^{-1}$) it factors uniquely as
$s_A \stackrel{\theta_A^-}{\sym_A^-} a \stackrel{\theta_A^+}{\sym_A^+}
\rep{\ca}$
and now, by Proposition \ref{prop:act_strat}, there are unique $\varphi
: s \sym_S s'$ and $\vartheta_A^+ : s'_A \sym_A^+ a$ such that
$\theta_A^- = \vartheta_A^+ \circ \varphi_A$. But then $s' \in \cs$ and
$s'_A \sym_A^+ a \sym_A^+ \rep{\ca}$.
\end{proof}

So, for each $\cs \in S/\sym_S$, we choose a representative $\rep{\cs}
\in \cs$ such that $(\rep{\cs})_A \sym_A^+ \rep{\ca}$; and we also
choose a ``reference'' positive symmetry $\theta_\cs^+ : (\rep{\cs})_A
\sym_A^+ \rep{\ca}$. Finally, for every $s \in \cs$ we choose some 
$\kappa_s : \rep{\cs} \sym_S s$.

Our aim is, for a fixed $\ca \in \syms{A}$ and for every symmetry class
$\cs$ such that $\cs_A = \ca$, to count the number of concrete positive
witnesses in $\cs$. We introduce some notations for this set -- let us
write 
\begin{eqnarray*}
\wit^+_S[\cs] &=& \{s \in \cs \mid s_A \sym_A^+ \rep{\ca}\}\\
\spwit_S[\cs] &=& 
\{(s, \theta_A^+) \mid s \in \cs ~\&~ \theta_A^+ : s_A \sym_A^+
\rep{\ca}\}
\end{eqnarray*}
for the concrete witnesses (resp. $\sim^+$-witnesses) within a
symmetry class $\cs$ for $\ca$.

Then, we prove the following bijection, for $\mathcal{S}(\cs) =
S[\rep{\cs}, \rep{\cs}]$ and $\mathcal{S}(\ca) = A[\rep{\ca},
\rep{\ca}]$.

\begin{proposition}\label{prop:bijwitc}
There is a bijection
$\spwit_\sigma[\cs] \times \mathcal{S}(\cs) \bij
\mathcal{S}(\ca)$.
\end{proposition}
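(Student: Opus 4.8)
The plan is to build an explicit bijection $\Phi : \spwit_S[\cs] \times \mathcal{S}(\cs) \to \mathcal{S}(\ca)$ and verify injectivity and surjectivity by hand; conceptually $\Phi$ realizes the decomposition of $\mathcal{S}(\ca)$ into cosets of the (injective, by thinness) image of $\mathcal{S}(\cs)$ under the display functor, but it is cleaner to argue directly. Using the data fixed before the statement --- the representative $\rep{\cs}$, the reference positive symmetry $\theta_\cs^+ : (\rep{\cs})_A \sym_A^+ \rep{\ca}$, and the chosen $\kappa_s : \rep{\cs} \sym_S s$ for each $s \in \cs$ --- I would send a pair $((s,\theta_A^+),\rho)$ to
\[
  \Phi((s,\theta_A^+),\rho) = \theta_A^+ \circ (\kappa_s)_A \circ \rho_A \circ (\theta_\cs^+)^{-1} \in A[\rep{\ca},\rep{\ca}] = \mathcal{S}(\ca),
\]
which is well-defined since each factor displays between the expected objects.

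For injectivity I would exploit the rigidity of thin spans: as $S \in \T_A$, thinness gives $S \perp (A_+,\id^+_A)$ with a \emph{discrete} pullback, so any symmetry of $S$ whose display lands in $A_+$ is an identity (the fact already used in the proof of \Cref{lem:act_sym}). Given two pairs with the same image, cancelling $(\theta_\cs^+)^{-1}$ exhibits the $S$-symmetry $\psi = \kappa_{s'}\circ\rho'\circ\rho^{-1}\circ\kappa_s^{-1} : s \sym_S s'$ whose display computes to $\psi_A = (\theta_A'^+)^{-1}\circ\theta_A^+$, a composite of positive symmetries; rigidity forces $\psi = \id$, hence $s = s'$ and $\theta_A^+ = \theta_A'^+$, and one further cancellation plus a last appeal to rigidity (for $\rho'^{-1}\rho$, whose display is an identity) gives $\rho = \rho'$.

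Surjectivity is the crux, and the main obstacle is to manufacture a genuinely \emph{positive} witness out of an arbitrary $\omega \in A[\rep{\ca},\rep{\ca}]$. Here I would use the positive reindexing that is available exactly because $S \in \T_A$: orthogonality $S \perp (A_+,\id^+_A)$ makes the relevant pullback a bipullback, which concretely yields, for any $\theta_A : s_A \sym_A a$, some $\varphi : s \sym_S s'$ and a \emph{positive} $\vartheta_A^+ : s'_A \sym_A^+ a$ with $\theta_A = \vartheta_A^+ \circ \varphi_A$ --- this is precisely the one-sided form established in the proof of \Cref{lem:act_sym}. Applying it to $s = \rep{\cs}$, $a = \rep{\ca}$ and $\theta_A = \omega \circ \theta_\cs^+$ produces $\varphi : \rep{\cs} \sym_S s'$ and positive $\vartheta_A^+ : s'_A \sym_A^+ \rep{\ca}$. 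Setting $s = s'$, $\theta_A^+ = \vartheta_A^+$ so that $(s,\theta_A^+) \in \spwit_S[\cs]$, and $\rho = \kappa_s^{-1}\circ\varphi \in \mathcal{S}(\cs)$, the identity $\omega \circ \theta_\cs^+ = \vartheta_A^+ \circ \varphi_A$ immediately gives $\Phi((s,\theta_A^+),\rho) = \omega$. This completes the bijection; note that only the existence of the reference $\theta_\cs^+$ (from the preceding lemma) is used, not canonicity of $\rep{\ca}$.
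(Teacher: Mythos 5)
Your proof is correct, and your map $\Phi$ is exactly the paper's correspondence: the paper defines it implicitly as the unique $\psi_A \in \mathcal{S}(\ca)$ closing the square $\psi_A \circ \theta_\cs^+ = \vartheta_A^+ \circ (\kappa_s \circ \varphi)_A$, which unfolds to precisely your formula, and your injectivity argument is the paper's uniqueness argument: both exhibit a symmetry of $S$ displaying to a positive symmetry of $A$ and conclude it is an identity by \cite[Lem.~3]{DBLP:conf/lics/ClairambaultF23}. The genuine divergence is the surjectivity step. The paper first factors $\psi_A = \psi_A^+ \circ \psi_A^-$ by canonicity of $\rep{\ca}$ (\Cref{def:repr_tg}, via \Cref{lem:factor}), reindexes the reference $\sim$-witness $(\rep{\cs},\theta_\cs^+)$ along the negative part $\psi_A^-$ using \Cref{prop:act_strat}, and then absorbs $\psi_A^+$ into the resulting positive symmetry; you instead feed the unfactored composite $\omega \circ \theta_\cs^+$ to the one-sided existence statement underlying \Cref{lem:act_sym} (the bipullback property coming from $S \perp (A_+,\id_A^+)$), with no factorization at all. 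Your route is leaner: it uses neither canonicity of $\rep{\ca}$ nor the uniqueness clause of \Cref{prop:act_strat} (your injectivity supplies uniqueness separately), and your closing remark that the bijection only needs the existence of the reference $\theta_\cs^+$, not canonicity of $\rep{\ca}$, is a correct mild strengthening of the statement --- though only a local one, since canonicity is needed immediately afterwards in \Cref{th:count_symclass} to split $\#\mathcal{S}(\ca)$ as $\m_+(\ca)\cdot\m_-(\ca)$. What the paper's detour buys is essentially uniformity of toolkit --- it stays within the packaged polarized-reindexing interface of \Cref{prop:act_strat} rather than reaching back to the raw orthogonality --- not any logical necessity.
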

\begin{proof}
First we show that for every $(s, \vartheta^+_A) \in
\spwit_S(\ca)$ and $\varphi \in \mathcal{S}(\cs)$, there is a unique
$\psi_A \in \mathcal{S}(\ca)$ such that the following diagram
commutes:
\[
\xymatrix@R=15pt@C=15pt{
(\rep{\cs})_A
        \ar[r]^{\theta_{\cs}^+}
        \ar[d]_{(\kappa_{s} \circ \varphi)_A}&
\rep{\ca}
        \ar[d]^{\psi_A}\\
s_A
        \ar[r]_{\vartheta^+_A}&
\rep{\ca}
}
\]
but this is obvious, as $\psi_A$ is determined by composition from the
other components.

Reciprocally, we show that for all $\psi_A \in \mathcal{S}(\ca)$, there
are unique $(s, \vartheta^+_A) \in \spwit_S(\ca)$ and $\varphi \in
\mathcal{S}(\cs)$ such that the same diagram above commutes. First, by
canonicity of $\rep{\ca}$, $\psi_A$ factors as $\psi_A = \psi^+_A \circ
\psi^-_A$ for $\psi^-_A \in \mathcal{S}(\ca)$ negative and $\psi^+_A
\in \mathcal{S}(\ca)$ positive. By Proposition \ref{prop:act_strat},
there are unique $(s', \omega^+_A) \in \spwit_S(\ca)$ and $\phi :
\rep{\cs} \sym_S s'$ such that the following diagram commutes:   
\[
\xymatrix@R=15pt@C=15pt{
(\rep{\cs})_A
        \ar[r]^{\theta^+_{s}}
        \ar[d]_{\phi_A}&
\rep{\ca}
        \ar[d]^{\psi^-_A}\\
s'_A
        \ar[r]_{\omega^+_A}&
\rep{\ca}
}
\]

We may then define
$s := s'$,
$\vartheta^+_A := \psi^+_A \circ \omega^+_A$, and
$\varphi := (\kappa_{s'})^{-1} \circ \phi$
and the diagram is obviously satisfied. It remains to prove uniqueness,
so assume we have $(t, \nu_A^+) \in \spwit_\sigma(\ca)$ and
$\xi \in \mathcal{S}(\cs)$ such that the following diagram commutes:
\[
\xymatrix@R=15pt@C=15pt{
(\rep{\cs})_A
        \ar[r]^{\theta^+_{s}}
        \ar[d]_{(\kappa_{t} \circ \xi)_A}&
\rep{\ca}
        \ar[d]^{\psi_A}\\
t_A
        \ar[r]_{\nu_A^+}&
\rep{\ca}
}
\]

But then $(\kappa_{t} \circ \xi)\circ (\kappa_{s} \circ \varphi)^{-1}$
is a symmetry in $S$ displaying to a positive symmetry in $A$, so
must be an identity by \cite[Lem.~3]{DBLP:conf/lics/ClairambaultF23}.
Thus $s = t$, $\xi = \varphi$, and so also $\vartheta^+_A = \nu_A^+$ as
it is uniquely determined from the other components by the diagram.
This gives constructions in both directions, and that they are inverses
follows directly from the uniqueness properties. 
\end{proof}

From that bijection, we may conclude the following result:

\begin{theorem}\label{th:count_symclass}
Consider $A$ a thin groupoid, $S \in \T_A$ and $\ca \in \syms{A}$,
$\cs$ displaying to $\ca$. Then,
\[
\sharp \wit_S^+[\cs] = \frac{\sharp \mathcal{S}_-(\ca)}{\sharp
\mathcal{S}(\cs)}
\]
where $\mathcal{S}_-(\ca)$ is the group of negative symmetries on
$\rep{\ca}$.
\end{theorem}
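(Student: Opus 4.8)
The plan is to read off cardinalities from the bijection of Proposition~\ref{prop:bijwitc} and then massage the result using two elementary counting identities. Taking cardinalities in Proposition~\ref{prop:bijwitc} immediately gives $\sharp\spwit_S[\cs] \cdot \sharp\mathcal{S}(\cs) = \sharp\mathcal{S}(\ca)$, so the work reduces to re-expressing $\sharp\spwit_S[\cs]$ in terms of $\sharp\wit_S^+[\cs]$ on the one hand, and $\sharp\mathcal{S}(\ca)$ in terms of $\mathcal{S}_-(\ca)$ on the other.

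First I would relate the $\sim^+$-witnesses to the plain positive witnesses. By definition an element of $\spwit_S[\cs]$ is a pair $(s, \theta_A^+)$ with $s \in \cs$ and $\theta_A^+ : s_A \sym_A^+ \rep{\ca}$; the first projection therefore has image exactly $\wit_S^+[\cs]$, with fibre over a given $s$ the hom-set $A_+[s_A, \rep{\ca}]$. Since $A_+$ is a groupoid and this hom-set is nonempty for $s \in \wit_S^+[\cs]$, it is a torsor over $A_+[\rep{\ca}, \rep{\ca}]$ and hence has cardinality $\m_+(\ca)$. This yields $\sharp\spwit_S[\cs] = \sharp\wit_S^+[\cs] \cdot \m_+(\ca)$.

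Second, I would invoke canonicity of $\rep{\ca}$: by Lemma~\ref{lem:factor} together with Definition~\ref{def:repr_tg}, the unique factorization $\theta = \theta^- \circ \theta^+$ sets up a bijection $\mathcal{S}(\ca) \bij \mathcal{S}_-(\ca) \times A_+[\rep{\ca}, \rep{\ca}]$, which is precisely the content of~\eqref{eq:symmetry_degrees} and gives $\sharp\mathcal{S}(\ca) = \sharp\mathcal{S}_-(\ca) \cdot \m_+(\ca)$. Substituting both identities into the equation obtained from Proposition~\ref{prop:bijwitc} produces $\sharp\wit_S^+[\cs] \cdot \m_+(\ca) \cdot \sharp\mathcal{S}(\cs) = \sharp\mathcal{S}_-(\ca) \cdot \m_+(\ca)$; cancelling the common factor $\m_+(\ca)$ and dividing by $\sharp\mathcal{S}(\cs)$ gives the claimed formula.

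All of the conceptual difficulty has already been absorbed into Proposition~\ref{prop:bijwitc} (which itself rests on the reindexing property of Proposition~\ref{prop:act_strat} and the rigidity clause of \cite[Lem.~3]{DBLP:conf/lics/ClairambaultF23}); what remains here is pure bookkeeping. The only step deserving a word of care is the cancellation of $\m_+(\ca)$: it is legitimate since $\m_+(\ca) \geq 1$, and well-behaved because the groupoids arising from type interpretations have finite hom-sets, so all the cardinals in play are finite.
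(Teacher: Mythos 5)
Your proof is correct and takes essentially the same route as the paper's: read off cardinalities from Proposition~\ref{prop:bijwitc}, combine with the identity $\sharp\spwit_S[\cs] = \m_+(\ca)\cdot\sharp\wit_S^+[\cs]$ (which the paper invokes as an ``easy fact'' and you justify properly via the torsor argument on fibres) and with the canonicity factorization $\sharp\mathcal{S}(\ca) = \sharp\mathcal{S}_+(\ca)\cdot\sharp\mathcal{S}_-(\ca)$, then cancel. Your closing remark on finiteness is a reasonable extra precaution that the paper's own proof leaves implicit.
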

\begin{proof}
By Proposition \ref{prop:bijwitc}, we have
$\sharp \spwit_S[\cs] \times \sharp \mathcal{S}(\cs) = \sharp
\mathcal{S}(\ca)$,  so we have
\[
\sharp \mathcal{S}_+(\ca) \times \sharp \wit^+_S[\cs] \times
\sharp \mathcal{S}(\cs) = \sharp \mathcal{S}_+(\ca) \times \sharp
\mathcal{S}_-(\ca)
\]
via the easy fact that $\sharp \spwit_S[\cs] = \sharp
\mathcal{S}_+(\ca) \times \sharp \pwit_S[\cs]$ and canonicity of
$\rep{\ca}$. The identity follows. 
\end{proof}

And now, we can finally deduce:

\begin{corollary}
Consider $A, B$ thin groupoids, $A \ot S \to B$ a thin span, $\ca \in
\syms{A}$ and $\cb \in \syms{B}$. Then,
\[
\syms{S}_{\ca, \cb} = \sum_{\cs \in W} \frac{\m_+(\ca) \cdot
\m_-(\cb)}{\m(\cs)}
\]
for $W$ the set of symmetry classes in $S$ mapping to $\ca, \cb$.
\end{corollary}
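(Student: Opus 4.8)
The plan is to deduce this two-sided statement from the one-sided count already established in Theorem~\ref{th:count_symclass}, exactly as announced at the start of the appendix. I would regard the thin span $A \ot S \to B$ as a single prestrategy $S \in \T_{A \linto B}$ on the thin groupoid $C \defeq A^\perp \parr B$, whose underlying groupoid is $A \times B$, and then instantiate Theorem~\ref{th:count_symclass} at $C$. For this to work I must check that the one-sided notions, read inside $C$, specialize to the two-sided notions in the statement.

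First I would unfold the polarities of $C = A^\perp \parr B$. By the definition of the dual and of $\parr$ on thin groupoids, a symmetry of $C$ on a pair $(a,b)$ is a pair $(\theta_A, \theta_B)$, and it is positive in $C$ exactly when $\theta_A$ is positive in $A^\perp$ — i.e.\ negative in $A$ — and $\theta_B$ is positive in $B$; dually it is negative in $C$ when $\theta_A$ is positive in $A$ and $\theta_B$ is negative in $B$. Taking $\cc = (\ca, \cb) \in \syms{C}$ with the evident representative $\rep{\cc} = (\rep{\ca}, \rep{\cb})$ — which is canonical in $C$ since the factorization of Lemma~\ref{lem:factor} in $C$ is computed componentwise and $\rep{\ca}$, $\rep{\cb}$ are canonical in $A$, $B$ — two facts fall out. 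On the one hand, a positive witness of $\cc$ in the one-sided sense is an $s \in S$ with $s_C \sym_C^+ \rep{\cc}$, which unfolds precisely to $s_A \sym_A^- \rep{\ca}$ and $s_B \sym_B^+ \rep{\cb}$; so $\wit_C^+[\cs]$, restricted to a symmetry class $\cs$, is exactly the set of positive witnesses of \eqref{eq:defpwit} lying in $\cs$. On the other hand, the group $\mathcal{S}_-(\cc)$ of negative symmetries on $\rep{\cc}$ is the product $A_+[\rep{\ca},\rep{\ca}] \times B_-[\rep{\cb},\rep{\cb}]$, whose cardinal is therefore $\m_+(\ca) \cdot \m_-(\cb)$.

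With this dictionary in place, Theorem~\ref{th:count_symclass} applied to $C$ and to a symmetry class $\cs$ displaying to $\cc$ gives $\# \wit_C^+[\cs] = \m_+(\ca)\cdot \m_-(\cb)/\m(\cs)$, since $\mathcal{S}(\cs) = S[\rep{\cs},\rep{\cs}]$ has cardinal $\m(\cs)$. It then remains to sum over classes: every $s \in \pwit_S(\ca, \cb)$ lies in a unique symmetry class $\cs$, and since $s_A \sym_A \rep{\ca}$ and $s_B \sym_B \rep{\cb}$ this class displays to $(\ca,\cb)$, i.e.\ $\cs \in W$; conversely each such class contributes exactly its positive witnesses. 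Hence $\pwit_S(\ca,\cb) = \bigsqcup_{\cs \in W} \wit_C^+[\cs]$, and taking cardinals yields $\syms{S}_{\ca,\cb} = \#\pwit_S(\ca,\cb) = \sum_{\cs \in W} \m_+(\ca)\cdot\m_-(\cb)/\m(\cs)$. I expect the only real subtlety to be the polarity bookkeeping of the second paragraph — confirming that ``positive in $C$'' and ``negative in $C$'' swap the $A$-polarity while preserving the $B$-polarity, and that the product representative remains canonical — since once that dictionary is fixed the rest is a one-line instantiation followed by a disjoint-union count.
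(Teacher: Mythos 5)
Your proof is correct and follows exactly the paper's own route: the paper likewise obtains the two-sided statement by partitioning $\pwit_S(\ca,\cb)$ into its symmetry classes and applying Theorem~\ref{th:count_symclass} to the thin groupoid $A^\perp \parr B$, precisely the reduction announced at the start of the appendix section. Your explicit polarity dictionary (positive in $A^\perp \parr B$ means negative in $A$ and positive in $B$, negative means the reverse, and the product representative stays canonical) merely spells out the bookkeeping that the paper's three-line computation leaves implicit.
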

\begin{proof}
We calculate:
\begin{eqnarray*}
\syms{S}_{\ca, \cb} 
&=& \sharp \pwit_S(\ca, \cb)\\
&=& \sum_{\cs \in W} \sharp \pwit_S[\cs]\\
&=& \sum_{\cs \in W} \frac{\m_+(\ca) \cdot \m_-(\cb)}{\m(\cs)}
\end{eqnarray*}
using the definition, then partitioning the positive witnesses by
symmetry class, and by Theorem \ref{th:count_symclass}.
\end{proof}

Proposition \ref{prop:weight_symclass} immediately follows from
Corollary \ref{cor:count_wit} in combination with this.

\section{The $\Sym$ pseudocomonad}
\label{app:sym}
\newcommand\defpmfletter{{\mathbf{H}}}%
\newcommand\otherpmfletter{{\mathbf{K}}}%

Here, we give additional details about the definition of the $\Sym$
pseudocomonad on $\Thin$, derived from the $\Sym$ monad on $\Gpd$. Notably, we
reuse the general results presented in
\cite[App.~G]{DBLP:conf/lics/ClairambaultF23}, which were developed to show that
the $\Fam$ pseudomonad on $\Gpd$ lifted to a pseudocomonad, and that we recall
below.

\subsection{General definitions and results}

A functor $F \co \Gpd \to \Gpd$ is called \emph{bicartesian} when it preserves
pullbacks and sends pullbacks that are bipullbacks to bipullbacks. A
\emph{\pmfunctor} is a tuple $(\defpmfletter,\defpmfletter^+,\iota)$ with
$\defpmfletter,\defpmfletter^+$ being functors $\Gpd \to \Gpd$ where
$\defpmfletter$ and $\defpmfletter^+$ are bicartesian and preserve functors
(between groupoids) that are bijective on objects (of the groupoids), and such
that $\defpmfletter^+$ preserves discrete groupoids, and $\iota \co
\defpmfletter^+ \To \defpmfletter$ being a natural transformation which is
pointwise monomorphic (that is, such that each $\iota_X$ is a monomorphism) and
surjective on objects of the groupoids, satisfying moreover that it is
\emph{bicartesian}, meaning that its naturality squares are both pullbacks and
bipullbacks.

Given two \pmfunctors $\defpmfletter = (\defpmfletter,\defpmfletter^+,\iota)$
and $\otherpmfletter = (\otherpmfletter,\otherpmfletter^+,\kappa)$, a
\emph{\pmtransformation} between $\defpmfletter$ and $\otherpmfletter$ is a pair
$(\alpha,\alpha^+)$ of natural transformations where $\alpha \co \defpmfletter
\To \otherpmfletter$ and $\alpha^+ \co \defpmfletter^+ \To \otherpmfletter^+$
are such that $\alpha$ is bicartesian and
\begin{equation}
  \label{eq:pmtransformation-eq}
  \kappa \circ \alpha^+ = \alpha \circ \iota \zbox.
\end{equation}
\begin{lemma}
  \label{lem:pmtransformation-one-pb-bpb}
  Given a \pmtransformation $(\alpha,\alpha^+)$, $\alpha^+$ is bicartesian.
\end{lemma}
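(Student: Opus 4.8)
The plan is to prove that every naturality square of $\alpha^+$ is, simultaneously, a pullback and a bipullback, by realising it as one factor of a composite rectangle whose other factor is already known to be bicartesian, and then invoking the pasting laws for pullbacks and for bipullbacks in $\Gpd$. The defining equation \eqref{eq:pmtransformation-eq} is exactly what glues the two halves of the argument together.

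First I would fix an arbitrary morphism $f \co X \to Y$ in $\Gpd$ and consider two horizontal pastings of naturality squares. On the one hand, whiskering the naturality square of $\alpha^+$ at $f$ (on its right) with the naturality square of $\kappa$ at $f$ yields a rectangle whose top edge is $\kappa_X \circ \alpha^+_X$ and bottom edge $\kappa_Y \circ \alpha^+_Y$. On the other hand, whiskering the naturality square of $\iota$ at $f$ with that of $\alpha$ at $f$ yields a rectangle with top edge $\alpha_X \circ \iota_X$ and bottom edge $\alpha_Y \circ \iota_Y$. Both rectangles have left edge $\defpmfletter^+ f$ and right edge $\otherpmfletter f$, and by \eqref{eq:pmtransformation-eq} applied at $X$ and at $Y$ we have $\kappa_X \circ \alpha^+_X = \alpha_X \circ \iota_X$ and $\kappa_Y \circ \alpha^+_Y = \alpha_Y \circ \iota_Y$; since these are strict equalities of functors, the two rectangles are literally the same square $R$. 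Now I would feed in the hypotheses: $\iota$ is bicartesian (part of the data of the \pmfunctor $\defpmfletter$) and $\alpha$ is bicartesian (part of the data of the \pmtransformation), so the two squares of the second pasting are both pullbacks and bipullbacks; by the composition direction of the pasting laws, $R$ is a pullback and a bipullback. Turning to the first pasting, its right-hand square is the naturality square of $\kappa$, which is a pullback and a bipullback because $\kappa$ is bicartesian (part of the data of the \pmfunctor $\otherpmfletter$). Hence in the first pasting the right square and the outer rectangle $R$ are both pullbacks (resp. bipullbacks), so by the cancellation direction of the pasting law the left square — which is precisely the naturality square of $\alpha^+$ at $f$ — is a pullback (resp. a bipullback). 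As $f$ was arbitrary, $\alpha^+$ is bicartesian.

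The only delicate point is the cancellation direction in the bipullback case: unlike for strict pullbacks, one must justify that if an outer rectangle and its right-hand factor are both bipullbacks, then the left-hand factor is a bipullback as well. This is the homotopy-pullback pasting lemma in $\Gpd$, and what makes it applicable here is exactly that the $\kappa$-square is assumed to be a genuine bipullback — not merely a strict pullback — which is guaranteed by the bicartesianness of $\kappa$. Once this lemma is in place, the argument runs uniformly for pullbacks and for bipullbacks, and it is worth noting that it uses neither the pointwise monomorphy nor the surjectivity-on-objects clauses in the definition of a \pmfunctor: only the three bicartesianness assumptions on $\iota$, $\kappa$ and $\alpha$, together with the compatibility equation \eqref{eq:pmtransformation-eq}, are needed.
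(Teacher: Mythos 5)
Your proof is correct and follows essentially the same route as the paper: the paper's own proof is a one-line appeal to the pasting laws for rectangles of pullbacks and their bipullback analogue (citing Lemma~5 of the earlier work), and your argument is precisely that pasting argument spelled out — identifying the two composite rectangles via the equation $\kappa \circ \alpha^+ = \alpha \circ \iota$, composing the $\iota$- and $\alpha$-squares, and cancelling against the $\kappa$-square. Your closing observations (that the bipullback cancellation is the only point needing separate justification, and that the monomorphy and surjectivity-on-objects clauses are not used) are accurate.
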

\begin{proof}
  The bicartesianness of $\alpha^+$ can be deduced using standard properties of
  rectangles of pullbacks and their adaptation to rectangles of pullbacks that
  are bipullbacks~\cite[Lemma~5]{DBLP:conf/lics/ClairambaultF23}.
\end{proof}

Now, a \emph{\pmmodification} between two such \pmtransformations
$(\alpha,\alpha^+)$ and $(\beta,\beta^+)$ is the data of a modification $m \co
\alpha \TO \beta$ in the $3$\category of $2$\categories.

\begin{definition}
  We write $\pmFunct$ for the $3$\category with one object, \pmfunctors as
  $1$\morphisms, \pmtransformations as $2$\morphisms, and \pmmodifications as
  $3$\morphisms.
\end{definition}

Given a \pmfunctor $(\defpmfletter,\defpmfletter^+,\iota)$ and a thin groupoid
$\cA$, there is a canonical thin groupoid $\defpmfletter \cA$ whose class of
uniform strategies is $\U_{\defpmfletter\cA} = \set{\defpmfletter S \mid S \in
  \U_\cA}^{\perp\perp}$, whose class of thin prestrategies is
$\T_{\defpmfletter\cA} = \set{\defpmfletter S \mid S \in
  \T_\cA}^{\pperp\pperp}$, and whose negative and positive sub-groupoids are
$(\defpmfletter \cA)_- = \defpmfletter A_-$ and $(\defpmfletter\cA)_+ =
\defpmfletter^+A_+$ with embeddings given by the compositions
\[
  \defpmfletter A_-
  \xto{\defpmfletter(\id^-_A)}
  \defpmfletter A
  \qand
  \defpmfletter^+A_+
  \xto{\defpmfletter^+(\id^+_A)}
  \defpmfletter^+A
  \xto{\iota_A}
  \defpmfletter A
  \zbox.
\]
By the conditions of \pmfunctors, they can be shown to be elements of
$\T_{\defpmfletter\cA}$ and $\T_{\defpmfletter\cA}^{\pperp}$ as required.

The mapping $\cA \mapsto \defpmfletter\cA$ can be extended to a pseudofunctor
$\check \defpmfletter \co \Thin \to \Thin$ by mapping a thin span
$\tikzcdin[csbo=4em]{A \& S \ar[l,"\display^S_{\altidisp A l}"{description}]
  \ar[r,"\display^S_{\altidisp B r}"{description}] \& B}$ to the thin span
$\tikzcdin[csbo=6em]{\defpmfletter A \& \defpmfletter S
  \ar[l,"\defpmfletter(\display^S_{\altidisp A l})"{description}]
  \ar[r,"\defpmfletter(\display^S_{\altidisp B r})"{description}] \& \defpmfletter B}$, and by
mapping weak morphisms to their image by $\defpmfletter$.

Similarly, given a \pmtransformation $\alpha = (\alpha,\alpha^+)$ between two
\pmfunctors $\defpmfletter$ and $\otherpmfletter$, one can define a
pseudonatural transformation $\check\alpha$ between $\check{\otherpmfletter}$ and
$\check{\defpmfletter}$ by putting
\[
  \check\alpha_A
  \qqp=
  \tikzcdin[csbo=6em]{\otherpmfletter A \& \defpmfletter A
    \ar[l,"\alpha_A"{description}]
    \ar[r,"\id_{\defpmfletter A}"{description}] \& \defpmfletter A}
\]
and given a \pmmodification between two \pmtransformations $\alpha$ and $\beta$,
one can define a modification $\check m$ between $\check\alpha$ and
$\check\beta$ the expected way. By checking all the details, we get that
\begin{proposition}[{\cite[Proposition~20]{DBLP:conf/lics/ClairambaultF23}}]
  \label{prop:pmfunct-bifunctor}
  Considering $\pmFunct$ as a strict $2$\category by forgetting the dimension
  $0$, $\check{(-)}$ induces a pseudofunctor
  \[
    \check{(-)} : \pmFunct^\cocat \to \Bicat(\Thin,\Thin)
  \]
  between bicategories.
\end{proposition}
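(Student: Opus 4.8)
The plan is to assemble the data already constructed in the preceding discussion---the object-assignment $\defpmfletter \mapsto \check{\defpmfletter}$, the $1$-cell assignment $\alpha \mapsto \check\alpha$, and the $2$-cell assignment $m \mapsto \check m$---into a pseudofunctor, and to verify the required comparison cells and coherence conditions. The three assignments are already well-typed: each $\check{\defpmfletter}$ is a pseudofunctor $\Thin \to \Thin$, each $\check\alpha$ is a pseudonatural transformation $\check{\otherpmfletter} \To \check{\defpmfletter}$ (which is what dictates the $\cocat$ on the source), and each $\check m$ is a modification. So the genuine work is (1) producing invertible comparison modifications witnessing preservation of composition and of identities of $1$-cells, and (2) checking that these satisfy the coherence axioms of a pseudofunctor between bicategories.

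Since we regard $\pmFunct$ as a \emph{strict} $2$-category, composition of $1$-cells there is the strict vertical composition of natural transformations: for composable $\alpha \co \defpmfletter \To \otherpmfletter$ and $\beta \co \otherpmfletter \To \mathbf{L}$ we have $(\beta \circ \alpha)_A = \beta_A \circ \alpha_A$ componentwise. The key point is then a computation at the level of components in $\Thin$. The component $\check\alpha_A$ is the span $\otherpmfletter A \xot{\alpha_A} \defpmfletter A \xto{\id} \defpmfletter A$, and $\check\beta_A$ is $\mathbf{L}A \xot{\beta_A} \otherpmfletter A \xto{\id} \otherpmfletter A$. Their composite $\check\alpha_A \odot \check\beta_A$ is computed by a pullback one of whose legs is an identity, so this pullback is trivial and the composite is canonically isomorphic to $\mathbf{L}A \xot{\beta_A \circ \alpha_A} \defpmfletter A \xto{\id} \defpmfletter A$, which is precisely the component $\check{(\beta \circ \alpha)}_A$. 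Collecting these canonical pullback isomorphisms yields the invertible comparison modification $\check\alpha \circ \check\beta \iso \check{(\beta \circ \alpha)}$. The identity comparison is handled identically: $\check{(\id_{\defpmfletter})}$ has components $\Id_{\defpmfletter A}$, the identity $1$-cell of $\check{\defpmfletter}$ in $\Bicat(\Thin,\Thin)$.

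With the comparison cells identified as canonical pullback isomorphisms, the pseudofunctor coherence axioms reduce to coherence already available in the target. Indeed, these isomorphisms are the structural associators and unitors of $\Thin$ restricted to spans of the special shape $X \xot{f} Y \xto{\id} Y$---equivalently, they are instances of the coherence data of the pseudofunctor $\check{-} \co \Ren^{\op} \to \Thin$ from renamings, of which each $\check\alpha_A$ is an image. Hence the associativity and unit diagrams commute because $\Thin$ is a bicategory and $\check{-}$ on renamings is a pseudofunctor, while naturality of the comparison modifications in the pm-transformation arguments follows from the functoriality of $\odot$ together with the uniqueness of mediating maps into the strict pullbacks defining composition.

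The main obstacle is not the comparison cells, which are essentially trivial, but the bookkeeping ensuring that $\check\alpha$ really is a pseudonatural transformation valued in $\Thin$: one must check that the naturality squares of $\alpha$ induce invertible $2$-cells (positive weak morphisms) in $\Thin$, and that the pm-transformation conditions---in particular the bicartesianness of $\alpha$ and the compatibility $\kappa \circ \alpha^+ = \alpha \circ \iota$ of \eqref{eq:pmtransformation-eq}---guarantee that all spans and pullbacks involved remain uniform, thin, and correctly polarized, so that the construction lands in $\Thin$ rather than merely in $\Span$. Once this is secured, the residual verification is the routine, if lengthy, confirmation that the pasting diagrams for pseudonaturality and for the modification axioms of each $\check m$ commute, which again reduces to the uniqueness clauses for maps into the strict pullbacks computing composition of thin spans.
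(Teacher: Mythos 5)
Your proposal is correct and takes essentially the same approach as the paper: the paper assembles exactly the same data ($\check{(-)}$ on pm-functors, pm-transformations and pm-modifications) in the text preceding the statement and then discharges the remaining verifications by citing Proposition~20 of the prior LICS'23 work, i.e.\ precisely the ``checking all the details'' that your sketch outlines. Your two key observations --- that the compositor is the canonical isomorphism coming from the triviality of strict pullbacks along identity legs (so coherence reduces to uniqueness of mediating maps, equivalently to the coherence of the renaming pseudofunctor $\Ren^{\op} \to \Thin$), and that the pm-conditions (bicartesianness together with $\kappa \circ \alpha^+ = \alpha \circ \iota$) are what guarantee the spans and $2$-cells actually land in $\Thin$ --- are exactly the content of that deferred verification.
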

Now, one can define the notion of monad (or even pseudomonad) in $\pmFunct$ as
particular instances of the general notion of monad (or pseudomonad) expressed
in $\pmFunct$ seen as an abstract $3$\category.
While~\cite{DBLP:conf/lics/ClairambaultF23} considered $\Fam$ which was a
pseudomonad in $\Gpd$, this work is concerned with $\Sym$, which is a monad on
$\Gpd$, so that we only require results for the monadic case. In this regard,
we have
\begin{proposition}
  \label{prop:check-preserves-comonads}
  The functor $\check{(-)}$ send a monad on $\pmFunct$ to a pseudocomonad on $\Thin$.
\end{proposition}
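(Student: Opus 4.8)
The plan is to treat this as a purely formal consequence of Proposition~\ref{prop:pmfunct-bifunctor}, exploiting that a pseudofunctor of bicategories transports (co)monads. First I would unwind what a monad in $\pmFunct$ is: since $\pmFunct$ has a single object, such a monad is a monad on that object, hence consists of an endo-$1$-morphism, namely a \pmfunctor $\defpmfletter$, together with two $2$-morphisms, i.e.\ \pmtransformations, a multiplication $\mu \co \defpmfletter\circ\defpmfletter \To \defpmfletter$ and a unit $\eta \co \id \To \defpmfletter$, subject to the usual associativity and unit equations of a monad. Because $\Sym$ is a genuine monad (and not merely a pseudomonad) on $\Gpd$, as stressed in the text, these equations hold strictly.

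The key observation is that reversing the direction of $2$-morphisms turns this monad into a \emph{comonad} in $\pmFunct^{\cocat}$: the same \pmfunctor $\defpmfletter$, now equipped with a comultiplication $\mu \co \defpmfletter \To \defpmfletter\circ\defpmfletter$ and a counit $\eta \co \defpmfletter \To \id$, the associativity and unit equations being symmetric and thus surviving the reversal. This is exactly why the domain of $\check{(-)}$ is taken to be $\pmFunct^{\cocat}$ rather than $\pmFunct$. I would then apply the pseudofunctor $\check{(-)} \co \pmFunct^{\cocat} \to \Bicat(\Thin,\Thin)$ of Proposition~\ref{prop:pmfunct-bifunctor}, invoking the standard fact that a pseudofunctor of bicategories preserves comonads, sending a strict comonad to a pseudocomonad — the pseudo-ness arising precisely from the structure isomorphisms of the pseudofunctor. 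Concretely, the image lives on the object $\check{\defpmfletter} \co \Thin \to \Thin$ of $\Bicat(\Thin,\Thin)$, which is our candidate $\oc$. Its comultiplication is the pasting of $\psm$ (the image of $\mu$) with the coherence isomorphism $\check{(\defpmfletter\circ\defpmfletter)} \iso \check{\defpmfletter}\circ\check{\defpmfletter}$ of $\check{(-)}$, yielding $\check{\defpmfletter} \To \check{\defpmfletter}\circ\check{\defpmfletter}$; symmetrically, its counit is $\psu$ composed with the unit coherence iso $\check{\id} \iso \id$. Since a pseudocomonad on an object of $\Bicat(\Thin,\Thin)$ is exactly a pseudocomonad on the endofunctor of $\Thin$, this gives the desired pseudocomonad $\oc = \check{\defpmfletter}$.

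It remains to produce the pseudocomonad coherence data and verify its axioms. The invertible modifications witnessing coassociativity and the two counit laws are obtained by pasting the strict comonad equations in $\pmFunct^{\cocat}$ against the coherence $2$-isomorphisms of $\check{(-)}$, and the pseudocomonad coherence axioms then reduce to the coherence axioms of the pseudofunctor $\check{(-)}$ furnished by Proposition~\ref{prop:pmfunct-bifunctor}. This last bookkeeping is the main—though entirely routine—obstacle: one must carefully track the $\cocat$ reversal and the insertion of the structure isomorphisms, so that the strict monad laws upstairs translate into the correctly oriented pseudocomonad cells downstairs. As this is precisely the standard transport of (co)monads along a pseudofunctor, I would either cite it as folklore or discharge it by a direct diagram chase, and no computation specific to $\Sym$ is needed beyond the already-established pseudofunctoriality of $\check{(-)}$.
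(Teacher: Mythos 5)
Your overall strategy---dualize the monad into comonad data via the $\cocat$ twist and push it through $\check{(-)}$---matches the intuition behind the construction, but the formal transport you invoke does not go through with the ingredients you cite, and the missing piece is exactly where the content of the proposition lies. The domain of the pseudofunctor in Proposition~\ref{prop:pmfunct-bifunctor} is the \emph{hom}-$2$-category of $\pmFunct$: its objects are the \pmfunctors, its $1$-cells the \pmtransformations, its $2$-cells the \pmmodifications. Consequently the compositor and coherence data of that pseudofunctor relate only \emph{composition of \pmtransformations} to vertical composition of pseudonatural transformations in $\Bicat(\Thin,\Thin)$. The monad $(\mathbf{H},\mu,\eta)$, by contrast, is a monoid for the \emph{other} composition---composition of \pmfunctors, i.e.\ the monoidal structure of the one-object $3$-category---and in the domain bicategory $\mathbf{H}$, $\mathbf{H}\circ\mathbf{H}$ and the identity \pmfunctor are three distinct \emph{objects}, with $\mu$ and $\eta$ being $1$-cells between them. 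So this is not a comonad internal to a bicategory in the sense of the folklore "pseudofunctors preserve (co)monads" (which, incidentally, yields a comonad whose laws hold \emph{strictly}, not a pseudocomonad); it is a comonoid in a monoidal bicategory, and transporting it requires $\check{(-)}$ to be suitably \emph{monoidal}---equivalently, the local part of a trifunctor into the tricategory of bicategories. The isomorphisms you paste in, $\check{(\mathbf{H}\circ\mathbf{H})} \iso \check{\mathbf{H}}\circ\check{\mathbf{H}}$ and $\check{\id} \iso \Id$, are precisely this missing monoidal data: they are \emph{not} furnished by Proposition~\ref{prop:pmfunct-bifunctor}, and establishing them is not free---one must, for instance, compare the biorthogonally closed uniformity and thinness classes of $\check{\mathbf{H}}(\check{\mathbf{H}}\cA)$ and $\check{(\mathbf{H}\mathbf{H})}\cA$, and check that the candidate counit and comultiplication spans are thin. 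The same problem recurs in your last paragraph: the pseudocomonad axioms involve \emph{whiskerings} (horizontal composition in $\Bicat(\Thin,\Thin)$), so they cannot "reduce to the coherence axioms of the pseudofunctor", which only govern vertical composition.

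This is why the paper does not argue by abstract transport: its proof adapts, generically in the \pmfunctor, the explicit construction of \cite[Theorem~3]{DBLP:conf/lics/ClairambaultF23} (stated there for $\Fam$ as a pseudomonad), where the counit $\check\eta$, the comultiplication $\check\mu$ and the invertible coherence modifications are built by hand and the axioms verified directly; the strictness of the $\Sym$ monad only simplifies that verification. To repair your argument you would first have to prove that $\check{(-)}$ carries the required (monoidal/trifunctorial) structure, which amounts to redoing essentially that explicit work.
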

\begin{proof}
  By a direct adaptation of the proof of
  \cite[Theorem~3]{DBLP:conf/lics/ClairambaultF23}, which shows that $\Fam$,
  seen as a pseudomonad on $\pmFunct$, is sent to a pseudocomonad on $\Thin$.
  Indeed, the proof is not specific to $\Fam$ and can be specialized to the case
  of a monad on $\pmFunct$.
\end{proof}

\subsection{Permutations and the multiplication of $\Sym$}

The full definition of the multiplication $\mu$ of the monad $\Sym$ on groupoids
relies on operations on permutations that we introduce below. In the following,
given $n\in \N$, we write $\nset n$ for the set $\set{1,\ldots,n}$.

The category $\permcat$ of \emph{permutations} is defined as the category whose
objects are the natural numbers $n \in \N$, and whose morphisms from $n$ to $n$
are the elements of the symmetric group $\Sgroup_n$, that is, the bijections
from $\nset{n}$ to itself, and with no morphisms from $m$ to $n$ for $m \neq n$;
composition of morphisms is given by the composition of the underlying
functions, and the identity morphism on $n$ is the identity function on
$\nset{n}$. We will often write $\permcat_n$ for $\permcat(n,n)$.

The category $\permcat$ is equipped with a tensor product $\permplus$ defined by
putting $m \permplus n = m + n$ for $m,n \in \N$ and, for $\lambda \in \permcat_m$
and $\rho \in \permcat_n$, by defining $\lambda \permplus \rho$ as the bijection
$\nu \co \nset{m+n}\to \nset{m+n}$ such that $\nu(i) = \lambda(i)$ for $i \in
\nset{m}$, and $\nu(m+i) = m+\rho(i)$ for $i \in \nset{n}$. More generally,
given $k \in \N$, $\vec n = (n_1,\ldots,n_k) \in \N^k$ and bijections $\nu_i \co
n_i \to n_i$, we write $\permplus_{1\le i \le k} \nu_i$ for the bijection
$(\cdots(\nu_1 \permplus \nu_2) \permplus \cdots ) \permplus \nu_k$. The natural
number $0$ is the unit object for this tensor product, making $\permplus$ a
strict monoidal category. It is even a strict symmetric monoidal category: for
$m,n \in \N$, one defines a bijection $\sigma_{m,n} \co \nset{m+n} \to
\nset{n+m}$ by putting $\sigma_{m,n}(i) = n+i$ for $i \in \nset{m}$ and
$\sigma_{m,n}(m+i) = i$ for $i \in \nset{n}$, and one readily verifies that it
gives an adequate symmetry for the monoidal structure. More generally, given $k
\in \N$, $\vec n = (n_1,\ldots,n_k) \in \N^k$ and $\rho \in \Sgroup_k$, one can
define a $k$-ary symmetry $\sigma_{\vec n,\rho} \co \nset{n_1+\cdots+n_k} \to
\nset{n_{\rho(1)} + \cdots + n_{\rho(k)}}$ by putting
\[
  \sigma_{\vec n,\rho}(n_1+ \cdots + n_{l-1} + i) = n_{\finv \rho(1)} + \cdots + n_{\finv \rho(\rho(l)-1)}+ i
\]
for $l \in \nset{k}$ and $i \in \nset{n_l}$. Given permutations $\tau_l \in
\permcat_{n_l}$ for every $l \in \nset{k}$, we will often write
$\plact{\rho}{(\tau_l)_{l\in \nset k}}$ for the composite $\sigma_{\vec n,\rho}
\circ (\permplus_{l\in\nset k}\tau_l)$. One easily verifies the following property:
\begin{lemma}
  \label{lem:plact-functorial}
  Given $l \in \N$, $(m_i)_{i \in \nset l} \in \N^l$ and $(n_{i,j})_{i\in\nset
    l,j\in \nset{m_i}}$, permutations $\rho_{i,j} \in \permcat_{n_{i,j}}$ for
  $i\in\nset l,j\in \nset{m_i}$ and $\sigma_i \in \Sgroup_{m_i}$ for $i \in
  \nset l$ and $\tau \in\Sgroup_l$, we have
  \[
    \plact{\rho}{(\plact{\sigma_i}{(\tau_{i,j})_j})_i}
    =
    \plact{(\plact{\tau}{(\sigma_i)_i})}{(\tau_{i,j})_{i,j}}
    \zbox.
  \]
\end{lemma}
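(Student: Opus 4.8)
The plan is to unfold the definition $\plact{\rho}{(\tau_l)_l}=\sigma_{\vec n,\rho}\circ(\permplus_l\tau_l)$ on both sides and reduce the claimed equality to a purely structural statement about the block-permutations $\sigma_{-,-}$, with the innermost permutations factored out. Following the names of the displayed identity (and reading the statement's top-level $\rho$ and $\tau$ as the \emph{same} permutation of $\nset l$, which we write $\tau\in\Sgroup_l$), set $N_i=\sum_{j}n_{i,j}$, $\vec N=(N_1,\ldots,N_l)$, $\vec m=(m_1,\ldots,m_l)$, $N=\sum_{i,j}n_{i,j}$, and let $\vec n^{\flat}$ be the length-$M$ size-vector $(n_{i,j})_{(i,j)}$ with pairs in lexicographic order, $M=\sum_i m_i$. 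Since $\permplus$ is a strict bifunctor (hence preserves composition) and is strictly associative, the left-hand side becomes
\[
\sigma_{\vec N,\tau}\circ\bigl(\permplus_{i}\,\sigma_{\vec{n}_i,\sigma_i}\bigr)\circ\bigl(\permplus_{(i,j)}\,\tau_{i,j}\bigr),
\]
whereas, setting $Q:=\plact{\tau}{(\sigma_i)_i}=\sigma_{\vec m,\tau}\circ(\permplus_i\sigma_i)\in\Sgroup_M$, the right-hand side becomes $\sigma_{\vec{n}^{\flat},Q}\circ(\permplus_{(i,j)}\tau_{i,j})$. The iterated tensor $\permplus_{(i,j)}\tau_{i,j}$ is literally the common right factor of both (using strict associativity to identify $\permplus_i\permplus_j$ with the flat $\permplus_{(i,j)}$), and every morphism of $\permcat$ is invertible; so it suffices to prove the structural identity
\[
\sigma_{\vec N,\tau}\circ\bigl(\permplus_{i}\,\sigma_{\vec{n}_i,\sigma_i}\bigr)=\sigma_{\vec{n}^{\flat},Q}.\qquad(\star)
\]

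To prove $(\star)$ I would argue at the level of blocks rather than elements, using the characterization of $\sigma_{\vec p,\rho}$ as the unique permutation sending, for each $l$, the $l$\nobreakdash-th source block order-preservingly onto the block at target-position $\rho(l)$. Both sides of $(\star)$ permute the fixed sub-block partition $\vec n^{\flat}$ of the source order-preservingly — for the left side, $\permplus_i\sigma_{\vec n_i,\sigma_i}$ permutes, within each group $i$, the sub-blocks by $\sigma_i$ while leaving groups in place, and $\sigma_{\vec N,\tau}$ moves each group $i$ wholesale and order-preservingly to group-position $\tau(i)$ (hence also moves sub-blocks order-preservingly) — so each side is of the form $\sigma_{\vec n^{\flat},R}$ for some $R\in\Sgroup_M$, and it remains only to match $R$. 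Composing the two stages on the left sends sub-block $(i,j)$ onto the sub-block sitting at group-position $\tau(i)$ and within-group position $\sigma_i(j)$; reading $Q=\sigma_{\vec m,\tau}\circ(\permplus_i\sigma_i)$ on the flattened index set shows $Q$ realizes exactly this assignment. Hence the induced block permutations agree and $(\star)$ holds.

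The only genuine subtlety — and the step I expect to be the main obstacle — is the index bookkeeping hidden in the definition of $\sigma_{\vec p,\rho}$, whose target offsets are sums of block sizes reindexed by $\finv\rho$; propagating these through a doubly-nested composite by direct element-chasing is error-prone. The plan deliberately avoids computing element images, working instead with the block-to-block assignment, where the two-stage (within groups, then across groups) and one-stage descriptions of $Q$ are immediately seen to coincide. If a fully explicit check is wanted as a backup, the same identity follows by evaluating both $\sigma_{\vec N,\tau}\circ(\permplus_i\sigma_{\vec n_i,\sigma_i})$ and $\sigma_{\vec n^{\flat},Q}$ on a general element $\sum_{i'<i}N_{i'}+\sum_{j'<j}n_{i,j'}+x$ and comparing offsets — but this is precisely the tedium the block-level argument is designed to sidestep.
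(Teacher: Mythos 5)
Your proof is correct. There is, however, nothing in the paper to compare it against: the paper states this lemma with only the remark that ``one easily verifies'' it, and gives no proof at all, either in the body or in the appendix. Your argument is a sound way to carry out that verification: you reduce the equation, via bifunctoriality and strict associativity of $\permplus$ (which lets you cancel the common invertible right factor $\permplus_{(i,j)}\tau_{i,j}$), to the purely structural identity $\sigma_{\vec N,\tau}\circ\bigl(\permplus_{i}\sigma_{\vec n_i,\sigma_i}\bigr)=\sigma_{\vec n^{\flat},Q}$, and then settle that identity using the characterisation of $\sigma_{\vec p,\rho}$ as the unique permutation sending each source block order-preservingly onto its assigned target slot --- both sides are block permutations of the partition $\vec n^{\flat}$, and both induce the assignment ``block $(i,j)$ goes to group position $\tau(i)$, within-group position $\sigma_i(j)$,'' which is exactly what $Q=\sigma_{\vec m,\tau}\circ(\permplus_i\sigma_i)$ computes on flattened indices. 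You also correctly repaired the statement's notational slip (its quantified $\rho_{i,j}$ versus the $\tau_{i,j}$ in the displayed equation, and the outer $\rho$ versus $\tau$) by identifying the two outer permutations, which is clearly the intended reading.
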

We can now use the definition of $\plact{(-)}{(-)}$ to give a precise definition
of $\mu$: it is the natural transformation
\[
  \mu \co \Sym \circ \Sym \To \Sym
\]
defined on a groupoid $A$ as the functor $\mu_A$, defined as follows. Given
$\seq{\seq{a_{i,j}}_{j\in \nset{n_i}}}_{i\in\nset m}\in \Sym(\Sym(A))$, we have
\[
  \mu_A(\seq{{a_{i,j}}_{j\in \nset{n_i}}}_{i\in\nset
    m}) = \seq{a_{i,j}}_{i\in\nset m,j\in\nset{n_i}}
\]
and, given a morphism $ u =
(\sigma,\seq{(\tau_i,\seq{f_{i,j}}_{j\in\nset{n_i}})}_{i\in\nset m}) \in
\Sym(\Sym(A)) $, we have
\[
  \mu_A(u) = (\plact\sigma{(\tau_i)_{i\in\nset
      m}},(f_{i,j})_{i\in\nset m, j\in\nset{n_i}})
  \zbox.
\]
One can then use \Cref{lem:plact-functorial} to show that $\mu$ is associative
in the monadic sense.

\subsection{The $\Sym$ monad on $\pmFunct$}

In order to show that the $\Sym$ monad on $\Gpd$ induces an adequate comonad
$\oc$ on $\Thin$, we just need to lift the monadic structure on $\Gpd$ to a
monadic structure on $\pmFunct$, and then conclude by
\Cref{prop:check-preserves-comonads}. We first show that the $\Sym$ endofunctor
on $\Gpd$ can be lifted to a $1$\morphism of $\pmFunct$.

\begin{proposition}
  \label{prop:sym-pres-pb-bpb}
  $\Sym$ preserves pullbacks and sends pullbacks that are bipullbacks to
  bipullbacks.
\end{proposition}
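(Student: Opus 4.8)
The plan is to reduce everything to explicit descriptions of objects and morphisms in $\Sym$, exploiting that pullbacks in $\Gpd$ are computed componentwise, and to prove the two assertions separately. For preservation of strict pullbacks, I would fix a cospan $S \xrightarrow{f} A \xleftarrow{g} T$ with strict pullback $P$, whose objects are pairs $(s,t)$ with $f(s)=g(t)$ and whose morphisms are pairs $(u,v)$ with $f(u)=g(v)$. I would then compare $\Sym(P)$ directly with the strict pullback $Q$ of $\Sym(f)$ and $\Sym(g)$. An object of $Q$ is a pair $(\seq{s_1,\dots,s_n},\seq{t_1,\dots,t_m})$ with $\Sym(f)\seq{s_i}=\Sym(g)\seq{t_j}$; since $\Sym(f)$ and $\Sym(g)$ act pointwise and keep the length fixed, this equation forces $n=m$ and $f(s_i)=g(t_i)$ for all $i$, i.e.\ exactly a sequence $\seq{(s_i,t_i)}_i$ of objects of $P$. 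The same pointwise reasoning on morphisms — a morphism of $Q$ being a pair $((\pi,\seq{u_i}),(\rho,\seq{v_i}))$ equalised by $\Sym(f),\Sym(g)$, which forces $\pi=\rho$ and $f(u_i)=g(v_i)$ — identifies it with a morphism $(\pi,\seq{(u_i,v_i)})$ of $\Sym(P)$. This yields an isomorphism $\Sym(P)\iso Q$ commuting with the projections.

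For the second claim I would use the concrete characterisation of bipullbacks in $\Gpd$ recalled after \eqref{eq:pb-of-prestrategies}. Assume the base square for $f,g$ is a bipullback; by the first part, $\Sym$ of it is again the strict pullback of $\Sym(f),\Sym(g)$, so it suffices to verify the concrete condition. Take $\vec s\in\Sym(S)$, $\vec t\in\Sym(T)$ and a symmetry $\vec\theta:\Sym(f)(\vec s)\to\Sym(g)(\vec t)$ in $\Sym(A)$. Such a $\vec\theta$ is a pair $(\pi,\seq{\theta_i}_i)$ with $\pi\in\Sgroup_n$ and $\theta_i:f(s_i)\to g(t_{\pi(i)})$ an iso in $A$. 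For each $i$ I would apply the bipullback property of the base square to $s_i$, $t_{\pi(i)}$ and $\theta_i$, obtaining $u_i:s_i\to s'_i$ in $S$ and $w_i:r_i\to t_{\pi(i)}$ in $T$ with $f(s'_i)=g(r_i)$ and $\theta_i=g(w_i)\circ f(u_i)$. Reassembling, I set $\vec s'=\seq{s'_1,\dots,s'_n}$ and $\vec r=\seq{r_1,\dots,r_n}$, so that $\Sym(f)(\vec s')=\Sym(g)(\vec r)$, and I take $\vec u=(\id,\seq{u_i}):\vec s\to\vec s'$ and $\vec v=(\pi,\seq{w_i}):\vec r\to\vec t$. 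A direct computation of $\Sym(g)(\vec v)\circ\Sym(f)(\vec u)$, using that $\Sym(f)$ and $\Sym(g)$ leave the permutation part untouched and act on components, gives permutation $\pi\circ\id=\pi$ and $i$-th component $g(w_i)\circ f(u_i)=\theta_i$, i.e.\ exactly $\vec\theta$. This is the required witness, so $\Sym$ of the square is a bipullback.

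The main obstacle is purely the index bookkeeping around the permutation $\pi$ carried by $\vec\theta$. The base property only supplies data comparing $s_i$ with $t_{\pi(i)}$, so one must reindex the $T$-side so that $\vec u$ carries the identity permutation while $\vec v$ carries $\pi$; only with this placement does the composite in $\Sym(A)$ recombine the $\theta_i$ along $\pi$ and recover $\vec\theta$ on the nose. The remaining naturality and functoriality verifications are routine from the pointwise action of $\Sym$.
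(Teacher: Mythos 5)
Your proof is correct and follows essentially the same route as the paper's: first a componentwise identification of $\Sym$ of the strict pullback with the strict pullback of $\Sym(f)$ and $\Sym(g)$, then a componentwise application of the concrete bipullback factorization criterion. The only cosmetic difference is that the paper first factors the given symmetry to reduce to the identity-permutation case (absorbing $(\sigma,\seq{\id})$ into the image of the $\Sym R$-side), whereas you keep the permutation $\pi$ and place it directly in the $T$-side witness $\vec v = (\pi,\seq{w_i})$ --- the same bookkeeping, done in one step instead of two.
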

\begin{proof}
  Consider a pullback
  \begin{equation}
    \label{eq:sym-pres-pullbacks:pb}
    \begin{tikzcd}
      P
      \ar[r,"{r}"]
      \ar[d,"{l}"']
      \phar[rd,very near start,"\drcorner"]
      &
      R
      \ar[d,"f^R"]
      \\
      L
      \ar[r,"f^L"']
      &
      M
    \end{tikzcd}
  \end{equation}
  in $\Gpd$. In order to show that this pullback is preserved by $\Sym$, we just
  need to show that a pair of morphisms $(\sigma^L,(u^L_i)_{1 \le i \le n^L}) \in
  \Sym L$ and $(\sigma^R,(u^R_i)_{1 \le i \le n^R}) \in
  \Sym R$ which are projected to the same morphism in $\Sym M$ lifts to a unique
  morphism of $\Sym P$. But it is quite immediate, since the common projection
  on $\Sym M$ implies that $n^L = n^R$, $\sigma^L = \sigma^R$ and that
  $f^L(u^L_i) = f^R(u^R_i)$ for every $i \in \set{1,\ldots,n^L}$. Thus, $\Sym$
  preserves pullbacks.

  Now, assuming that \eqref{eq:sym-pres-pullbacks:pb} is moreover a bipullback,
  we are required to show that its image by $\Sym$ is also a bipullback. For
  this, we use the criterion given by
  \cite[Proposition~9]{DBLP:conf/lics/ClairambaultF23}. Let $\vec
  a=\seq{a_1,\ldots,a_n} \in \Sym L$ and $\vec b = \seq{b_1,\ldots,b_m} \in \Sym
  R$, and a morphism $v = (\sigma,(v_i)_{1 \le i \le n})$ between
  $\Sym(f^L)(\vec a)$ and $\Sym(f^R)(\vec b)$. We need to show that $v =
  \Sym(f^R)(u^R) \circ \Sym(f^L)(u^L)$ for some $u^L \in \Sym L$ and $u^R \in
  \Sym R$. Since $v = (\sigma,(\id)_{1 \le i \le n}) \circ (\id,(v_i)_{1 \le i
    \le n})$ and that $(\sigma,(\id)_{1 \le i \le n})$ is in the image of $\Sym
  R$, we may assume that $\sigma = \id$. Since~\eqref{eq:sym-pres-pullbacks:pb}
  is a bipullback, we have that $v_i = f^R(u^R_i) \circ f^L(u^L_i)$ for some
  $u^L_i \co a_i \to a'_i \in L$ and $u^R_i \co b'_i \to b_i \in R$ for every
  $i$. By taking $u^L = (\id,\seq{u^L_i}_{i})$ and $u^R =
  (\id,\seq{u^R_i}_{i})$, we have $v = \Sym(f^R)(u^R) \circ \Sym(f^L)(u^L)$ as
  wanted. Thus, by \cite[Proposition~9]{DBLP:conf/lics/ClairambaultF23}, the
  image of~\eqref{eq:sym-pres-pullbacks:pb} by $\Sym$ is a bipullback.
\end{proof}
Moreover, it is immediate to check that $\Sym$ preserves functor $f \co A \to B
\in \Gpd$ that are bijective on objects.

We now define $\Symp$ as the functor $\Gpd \to \Gpd$ mapping a groupoid $X \in
\Gpd$ to the subgroupoid of $\Sym X$ with the same objects but whose morphisms
are restricted to be the ones of the form $(\id,\seq{u_i}_i)$ in $\Sym X$, and
with the evident image of functors $X \to Y$. This functor comes with a
canonical embedding natural transformation $\iota \co \Symp \To \Sym$. Just like
for $\Sym$, we have
\begin{proposition}
  \label{prop:symp-pres-pb-bpb}
  $\Symp$ preserves pullbacks and sends pullbacks that are bipullbacks to
  bipullbacks.
\end{proposition}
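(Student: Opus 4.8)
The plan is to follow the proof of \Cref{prop:sym-pres-pb-bpb} almost verbatim, exploiting that $\Symp$ differs from $\Sym$ only in that its morphisms are restricted to carry the identity permutation. Conceptually, $\Symp X \iso \coprod_{n \in \N} X^n$, so preservation of pullbacks and of bipullbacks is expected since finite products and coproducts are well-behaved in $\Gpd$; but I would carry out the direct verification, which turns out to be strictly simpler than the $\Sym$ case.

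For preservation of pullbacks, I would take a pullback square with vertex $P$, projections $l \co P \to L$, $r \co P \to R$, over maps $f^L \co L \to M$ and $f^R \co R \to M$, and check the universal property of its image under $\Symp$. Concretely, I would show that a pair of morphisms $(\id,\seq{u^L_i}_{1 \le i \le n^L}) \in \Symp L$ and $(\id,\seq{u^R_i}_{1 \le i \le n^R}) \in \Symp R$ projected to the same morphism in $\Symp M$ lifts uniquely to a morphism of $\Symp P$. The common projection forces $n^L = n^R$ and $f^L(u^L_i) = f^R(u^R_i)$ for each $i$ (the permutations are identities on both sides, by definition of $\Symp$), so each component pair $(u^L_i,u^R_i)$ lifts uniquely to some $u^P_i$ in $P$ by the universal property of the pullback, and $(\id,\seq{u^P_i}_i)$ is the required unique lift; the same reasoning on objects completes the argument.

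For the bipullback case I would invoke the criterion of \cite[Proposition~9]{DBLP:conf/lics/ClairambaultF23}: assuming the original square is a bipullback, I take $\vec a = \seq{a_1,\ldots,a_n} \in \Symp L$, $\vec b = \seq{b_1,\ldots,b_m} \in \Symp R$ and a morphism $v$ from $\Symp(f^L)(\vec a)$ to $\Symp(f^R)(\vec b)$ in $\Symp M$. Since $v$ is a $\Symp$-morphism it is of the form $(\id,\seq{v_i}_i)$ with $n = m$, so each $v_i$ is a morphism $f^L(a_i) \to f^R(b_i)$ in $M$. Applying the bipullback property of the original square to each $v_i$ yields $u^L_i \co a_i \to a'_i$ in $L$ and $u^R_i \co b'_i \to b_i$ in $R$ with $v_i = f^R(u^R_i) \circ f^L(u^L_i)$; setting $u^L = (\id,\seq{u^L_i}_i)$ and $u^R = (\id,\seq{u^R_i}_i)$, which are genuine $\Symp$-morphisms, gives $v = \Symp(f^R)(u^R) \circ \Symp(f^L)(u^L)$, as wanted.

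The main point to watch, rather than a genuine obstacle, is that the reassembled lifts $u^L$ and $u^R$ indeed lie in $\Symp$, i.e.\ carry identity permutations — which holds automatically by construction. Note in particular that the reduction step present in the $\Sym$ proof, where one first factors out a permutation $(\sigma,(\id)_i)$ to reduce to $\sigma = \id$, is simply unnecessary here, since every morphism of $\Symp M$ already has trivial permutation. I would close by remarking, as for $\Sym$, that $\Symp$ visibly preserves functors that are bijective on objects.
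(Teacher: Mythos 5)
Your proposal is correct and matches the paper's approach exactly: the paper's own proof simply states that the argument for \Cref{prop:sym-pres-pb-bpb} ``directly adapts'' to $\Symp$, and your text is precisely that adaptation carried out in detail, including the accurate observation that the reduction to $\sigma = \id$ becomes vacuous. Nothing is missing; if anything, your write-up is more explicit than the paper's one-line proof.
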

\begin{proof}
  The proof for $\Sym$ of \Cref{prop:sym-pres-pb-bpb} directly adapts to the
  case of $\Symp$.
\end{proof}
\begin{proposition}
  \label{prop:iota-bicartesian}
  The natural transformation $\iota \co \Symp \To \Sym$ is bicartesian.
\end{proposition}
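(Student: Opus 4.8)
The plan is to fix a functor $f \co A \to B$ in $\Gpd$ and analyze the naturality square
\[
  \begin{tikzcd}
    \Symp A \ar[r,"\Symp f"] \ar[d,"\iota_A"'] & \Symp B \ar[d,"\iota_B"] \\
    \Sym A \ar[r,"\Sym f"'] & \Sym B
  \end{tikzcd}
\]
and to show it is simultaneously a pullback and a bipullback, which is exactly what bicartesianness of $\iota$ demands. Throughout I use that $\iota$ is the identity on objects (so $\Symp X$ and $\Sym X$ share the same sequences as objects), that the square commutes strictly (both composites send $(\id,\seq{e_j}_j)$ to $(\id,\seq{f(e_j)}_j)$), and, crucially, that by definition of $\Symp$ every morphism of $\Symp X$ has the shape $(\id,\seq{u_i}_i)$ with trivial permutation.

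For the pullback, I would compute the strict pullback $P$ of $\iota_B$ along $\Sym f$ directly: its objects are pairs $(\seq{f(a_i)}_i,\seq{a_i}_i)$, and since equality of the two projections into $\Sym B$ forces the permutation component to be the identity, its morphisms are precisely the pairs $\bigl((\id,\seq{f(h_i)}_i),(\id,\seq{h_i}_i)\bigr)$, each determined by a single morphism $(\id,\seq{h_i}_i)$ of $\Symp A$. Hence the canonical comparison functor $\Symp A \to P$ is bijective on both objects and morphisms, i.e. an isomorphism, so the square is a strict pullback.

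For the bipullback, I would reuse the criterion \cite[Proposition~9]{DBLP:conf/lics/ClairambaultF23} already applied in the proof of \Cref{prop:sym-pres-pb-bpb}: since the square is a pullback, it suffices to show that for all $a = \seq{a_1,\ldots,a_n} \in \Sym A$, all $b \in \Symp B$, and every iso $v = (\pi,\seq{v_i}_i) \co \Sym f(a) \to \iota_B(b)$ in $\Sym B$ (so $b = \seq{b_1,\ldots,b_n}$ with $v_i \co f(a_i) \to b_{\pi(i)}$), there exist $u^L \in \Sym A$ and $u^R \in \Symp B$ with $v = \iota_B(u^R) \circ \Sym f(u^L)$. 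The key idea is that, because $u^R$ lives in $\Symp B$ and therefore cannot permute, the permutation $\pi$ must be absorbed entirely on the $\Sym A$ side. Concretely I would set $a' = \seq{a_{\finv\pi(1)},\ldots,a_{\finv\pi(n)}}$ and take
\[
  u^L = (\pi,\seq{\id_{a_i}}_i) \co a \to a'
  \qand
  u^R = (\id,\seq{v_{\finv\pi(j)}}_j) \co \Sym f(a') \to b\,,
\]
which typecheck since $a'_{\pi(i)} = a_i$ and $v_{\finv\pi(j)} \co f(a_{\finv\pi(j)}) \to b_j$. Unfolding the composition convention of $\Sym$ then yields $\iota_B(u^R) \circ \Sym f(u^L) = (\pi,\seq{v_i}_i) = v$, as required.

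The main obstacle — indeed the only genuinely delicate point — is precisely this handling of the permutation. One cannot factor $v$ by choosing $u^L$ to be a componentwise family of isos over the \emph{same} sequence, since that would require morphisms $a_i \to a_{\pi(i)}$ in $A$, which need not exist because $f$ is not assumed full. Recognizing that $u^L$ should be a pure reordering (identity components, permutation $\pi$) while $u^R$ carries the actual isos $v_i$ is what makes the factorization go through; verifying it is then a routine unfolding of the composition rule of $\Sym$. The remaining conditions on $\iota$ required by the definition of a \pmfunctor, namely that it is pointwise monomorphic and surjective on objects of the groupoids, are immediate ($\iota_X$ is the identity on objects and faithfully includes $\Symp X$-morphisms into $\Sym X$), and can be dispatched separately.
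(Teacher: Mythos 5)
Your proposal is correct and follows essentially the same route as the paper: the pullback is established by observing that agreement of the two projections in $\Sym B$ forces the permutation component to be trivial, so the morphism lifts (uniquely) to $\Symp A$; and the bipullback is obtained from the criterion of \cite[Proposition~9]{DBLP:conf/lics/ClairambaultF23} via the same factorization, absorbing the permutation $\pi$ into a morphism of $\Sym A$ with identity components and leaving the iso components as a trivially-permuting morphism of $\Symp B$. The only cosmetic differences are that you phrase the pullback via the comparison functor onto the explicit pullback rather than the lifting property, and you write out the factorization explicitly where the paper phrases it as a reduction to the case $\sigma = \id$.
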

\begin{proof}
  Given $F \co A \to B \in \Gpd$, consider the natural square
  \begin{equation}
    \label{eq:iota-bicartesian:pb-bpb}
    \begin{tikzcd}
      \Symp A
      \ar[r,"\Symp (F)"]
      \ar[d,"\iota_A"']
      &
      \Symp B
      \ar[d,"\iota_B"]
      \\
      \Sym A
      \ar[r,"\Sym(F)"']
      &
      \Sym B
    \end{tikzcd}
  \end{equation}
  and consider a pair of morphisms $u = (\rho,(u_i)_{1 \le i \le m}) \in
  \Sym A$ and $v = (\sigma,(v_i)_{1 \le i \le n}) \in \Symp B$ which are
  projected to the same morphism in $\Sym B$. For the
  square~\eqref{eq:iota-bicartesian:pb-bpb} to be a pullback, we need to show
  that this pair can be lifted to a unique morphism of $\Symp A$. From the
  common projection on $\Sym B$, we get that $\rho = \sigma = \id$ and that $v_i
  = F(u_i)$ for every $i$. Thus, $u$ actually lifts to $\Symp A$ through
  $\iota_A$, which is the required unique lifting.
  Thus,~\eqref{eq:iota-bicartesian:pb-bpb} is a pullback.

  We now show that~\eqref{eq:iota-bicartesian:pb-bpb} is a bipullback. Let $\vec
  a=\seq{a_1,\ldots,a_n} \in \Sym A$ and $\vec b = \seq{b_1,\ldots,b_m} \in
  \Symp B$, and a morphism $v = (\sigma,(v_i)_{1 \le i \le n})$ between
  $\Sym(F)(\vec a)$ and $\iota_B(\vec b) = \vec b$. We need to show that $v =
  \iota_B(u^R) \circ \Sym(F)(u^L)$ for some $u^L \in \Sym A$ and $u^R \in \Symp
  B$. Since $v = (\id,(v_{\finv\sigma(i)})_{1 \le i \le n}) \circ
  (\sigma,(\id)_{1 \le i \le n})$ and that $(\sigma,(\id)_{1 \le i \le n})$ is
  in the image of $\Sym (F)$, we may assume that $\sigma = \id$. But then, $v$
  is in the image of $\iota_B$, so that, by
  \cite[Proposition~9]{DBLP:conf/lics/ClairambaultF23},~\eqref{eq:iota-bicartesian:pb-bpb}
  is a bipullback.
\end{proof}
We thus get that
\begin{proposition}
  $(\Sym,\Symp,\iota)$ is a \pmfunctor.
\end{proposition}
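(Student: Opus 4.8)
The plan is to verify each clause in the definition of a \pmfunctor directly, observing that the substantive conditions have already been established in the preceding propositions. First I would recall that bicartesianness of $\Sym$ and of $\Symp$ -- that is, preservation of pullbacks together with sending bipullbacks to bipullbacks -- is precisely the content of Propositions \ref{prop:sym-pres-pb-bpb} and \ref{prop:symp-pres-pb-bpb}, while bicartesianness of the natural transformation $\iota \co \Symp \To \Sym$ is Proposition \ref{prop:iota-bicartesian}. These discharge the homotopical requirements, which are the only nontrivial ones.

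It then remains to check the more elementary clauses. Preservation of functors bijective on objects by $\Sym$ was already noted as immediate just above; the identical argument applies to $\Symp$, since $\Symp$ and $\Sym$ act the same way on objects and $\Symp$ merely restricts the allowed morphisms. For preservation of discrete groupoids by $\Symp$: if $X$ is discrete, then any morphism of $\Symp X$ is of the form $(\id,\seq{u_i}_i)$ with each $u_i$ a morphism of $X$, hence each $u_i = \id$, so $\Symp X$ has only identities and is discrete.

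Finally I would verify the two pointwise conditions on $\iota$. Since $\Symp X$ and $\Sym X$ share the same objects and $\iota_X$ is the identity on objects, $\iota_X$ is surjective on objects; and since $\iota_X$ is the inclusion of a wide subgroupoid, it is injective on objects and faithful, hence a monomorphism in $\Gpd$, so $\iota$ is pointwise monomorphic. Collecting these facts shows that $(\Sym,\Symp,\iota)$ meets every clause of the definition of a \pmfunctor.

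I do not expect a genuine obstacle at this stage: all the real difficulty has been pushed into the three bicartesianness propositions, each of which ultimately reduces to the concrete characterisation of bipullbacks in $\Gpd$ from \cite[Proposition~9]{DBLP:conf/lics/ClairambaultF23}. The present statement is essentially the bookkeeping step that assembles those results together with the trivial checks on objects, discrete groupoids, and monicity of $\iota$.
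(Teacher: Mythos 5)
Your proof is correct and follows essentially the same route as the paper: the bicartesianness of $\Sym$, $\Symp$ and $\iota$ is delegated to the three preceding propositions, and the remaining clauses (preservation of bijective-on-objects functors, discreteness under $\Symp$, and monicity plus surjectivity on objects of $\iota$) are checked directly. In fact the paper's proof simply states that these remaining conditions are "readily verified," so your write-up is just a more explicit version of the same argument.
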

\begin{proof}
  The other conditions for being a \pmfunctor are readily verified.
\end{proof}
We now provide liftings in $\pmFunct$ for the natural transformations $\eta \co
\unit{\Gpd} \To \Sym$ and $\mu \co \Sym \circ \Sym \To \Sym$. For this, we
provide $\eta^+ \co \unit{\Gpd} \To \Symp$ and $\mu^+ \co \Symp \circ \Symp \To
\Symp$ so that $(\eta,\eta^+)$ and $(\mu,\mu^+)$ define \pmtransformations.
Actually, this is easy: by the equation~\eqref{eq:pmtransformation-eq}, $\eta^+$
is essentially $\eta$ and
$\mu^+_X$ is the adequate restriction of $\mu_X$ to the subgroupoid $\Symp(\Symp
X)$ of $\Sym(\Sym X)$.
\begin{proposition}
  We have \pmtransformations
  \begin{gather*}
    (\eta,\eta^+)\co (\unit{\Gpd},\unit{\Gpd},\unit{\unit{\Gpd}}) \To (\Sym,\Symp,\iota)
    \shortintertext{and}
    (\mu,\mu^+) \co (\Sym\Sym,\Symp\Symp,\iota\iota)\To (\Sym,\Symp,\iota)
    \zbox.
  \end{gather*}
\end{proposition}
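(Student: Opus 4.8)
The plan is to verify, for each of the two candidate pairs, the three conditions packaged in the definition of a \pmtransformation: that both components are natural transformations, that the undecorated component ($\eta$, resp.\ $\mu$) is bicartesian, and that the coherence equation~\eqref{eq:pmtransformation-eq} holds. I would first observe that bicartesianness of $\eta^+$ and $\mu^+$ need not be checked separately, since it follows automatically from \Cref{lem:pmtransformation-one-pb-bpb}. As flagged before the statement, $\eta^+$ will be $\eta$ corestricted to $\Symp$ — each $\eta_A(a) = \seq a$ is a singleton and each $\eta_A(f) = (\id,\seq f)$ has trivial permutation part, hence lies in $\Symp A$ — while $\mu^+_A$ will be the restriction of $\mu_A$ to $\Symp(\Symp A)$: on a morphism $(\id,\seq{(\id,\seq{f_{i,j}}_j)}_i)$ the multiplication returns $(\plact{\id}{(\id)_i},(f_{i,j})_{i,j}) = (\id,(f_{i,j})_{i,j})$, which indeed lands in $\Symp A$. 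With these definitions naturality of $\eta^+$ and $\mu^+$ is inherited from $\eta$ and $\mu$, and~\eqref{eq:pmtransformation-eq} holds on the nose by construction, both sides being the relevant map viewed in $\Sym(\cdots)$ through $\iota$.

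It then remains to show that $\eta$ and $\mu$ are bicartesian, \ie that their naturality squares are simultaneously pullbacks and bipullbacks. For $\eta$ at a functor $F \co A \to B$, I would argue that the pullback of $\Sym F$ against $\eta_B$ consists of pairs $(\vec a, b)$ with $\Sym F(\vec a) = \seq b$; since $\Sym F$ preserves length this forces $\vec a = \seq a$ with $F(a) = b$, so the pullback is carried by $\eta_A$ and the square is a pullback. For the bipullback property I would invoke the criterion of \cite[Proposition~9]{DBLP:conf/lics/ClairambaultF23}: given $\vec a \in \Sym A$, $b \in B$ and $\theta \co \Sym F(\vec a) \to \seq b$, the target being a singleton forces $\vec a = \seq a$ and $\theta = (\id,\seq g)$ with $g \co F(a) \to b$, so taking the identity on the left and $g$ on the right exhibits $\theta$ in the required factored form. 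This settles $\eta$.

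For $\mu$ at $F \co A \to B$, the relevant square relates $\Sym\Sym F$ along $\mu_A,\mu_B$ to $\Sym F$. For the pullback part I would note that an object of the pullback of $\Sym F$ against $\mu_B$ is a pair $(\vec a,\vec{\vec b})$ with $\Sym F(\vec a) = \mu_B(\vec{\vec b})$, whereupon the block decomposition of $\vec{\vec b}$ cuts $\vec a$ into contiguous blocks of matching lengths, producing a unique $\vec{\vec a} \in \Sym\Sym A$ with $\mu_A(\vec{\vec a}) = \vec a$ and $\Sym\Sym F(\vec{\vec a}) = \vec{\vec b}$; functoriality of this lifting is governed precisely by \Cref{lem:plact-functorial}, so the square is a pullback.

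The hard part will be the bipullback property of $\mu$, where $\theta$ may permute elements \emph{across} the block boundaries of $\mu_B(\vec{\vec b})$ — something no single morphism of $\Sym\Sym B$ can realize. My idea is to absorb the permutation of $\theta$ into a \emph{pure} permutation on the left. Given $\theta = (\pi,\seq{h_k}_k) \co \Sym F(\vec a) \to \mu_B(\vec{\vec b})$, I would take $u = (\pi,\seq{\id}) \co \vec a \to \vec a'$ reordering $\vec a$ by $\pi$; then $\Sym F(u)$ carries all of $\pi$, so $\theta \circ \Sym F(u)^{-1}$ has trivial permutation part and reduces to the componentwise morphisms $h_k$, suitably reindexed. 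Grouping these according to the blocks of $\vec{\vec b}$ yields $v \co \vec{\vec b}' \to \vec{\vec b}$ in $\Sym\Sym B$ whose source $\vec{\vec b}'$ carries $\Sym F(\vec a')$ with the blocks of $\vec{\vec b}$, so that $\mu_B(\vec{\vec b}') = \Sym F(\vec a')$ and $\theta = \mu_B(v) \circ \Sym F(u)$ as the criterion of \cite[Proposition~9]{DBLP:conf/lics/ClairambaultF23} demands. Together with the previous paragraphs this shows $\mu$ bicartesian, and hence that both $(\eta,\eta^+)$ and $(\mu,\mu^+)$ are \pmtransformations.
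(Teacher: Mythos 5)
Your proof is correct and follows essentially the same route as the paper's: reduce to bicartesianness of $\eta$ and $\mu$ via \Cref{lem:pmtransformation-one-pb-bpb}, take $\eta^+,\mu^+$ to be the evident (co)restrictions so that \eqref{eq:pmtransformation-eq} holds by construction, and establish the pullback/bipullback properties of the naturality squares, with the bipullback for $\mu$ obtained by absorbing the permutation part of $\theta$ into the image of $\Sym F$ and the remaining block-respecting morphism into the image of $\mu_B$, exactly as in the paper's appeal to \cite[Proposition~9]{DBLP:conf/lics/ClairambaultF23}. The only deviations are cosmetic: you spell out the $\eta$ case that the paper dismisses as ``quite easy,'' and your invocation of \Cref{lem:plact-functorial} for morphism-lifting in the pullback square for $\mu$ is not quite the right tool (the paper instead notes that the common projection forces $\sigma = \plact{\rho}{(\rho'_i)_i}$ and decomposes $u$ accordingly), but the underlying block-decomposition idea is the same.
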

\begin{proof}
  By \Cref{lem:pmtransformation-one-pb-bpb}, we just need to show that
  $\eta,\mu$ are bicartesian natural transformations. We only give the proof for
  $\mu$, since the bicartesianness of $\eta$ is quite easy to show.

  Consider a functor $F \co A \to B \in \Gpd$ and the natural square
  \begin{equation}
    \label{eq:eta-mu-pmtransformations:nat-square}
    \begin{tikzcd}[cs=6em]
      \Sym(\Sym A)
      \ar[r,"\Sym(\Sym (F))"]
      \ar[d,"\mu_A"']
      &
      \Sym(\Sym B)
      \ar[d,"\mu_B"]
      \\
      \Sym A
      \ar[r,"\Sym(F)"']
      &
      \Sym B
      \zbox.
    \end{tikzcd}
  \end{equation}
  In order to show that it is a pullback, we consider a pair of morphisms $u =
  (\sigma,\seq{u_i}_{1 \le i \le n}) \in \Sym A$ and $v =
  (\rho,\seq{(\rho'_i,\seq{v_{i,j}}_{1 \le j \le m_i})}_{1 \le i \le l}) \in
  \Sym(\Sym B)$ that project on the same morphism of $\Sym B$, and show that it
  can be lifted to a unique morphism of $\Sym(\Sym A)$ adequately projecting on
  $u$ and $v$. By the common projection on $\Sym B$, we have that $\sigma =
  \plact{\rho}{(\rho'_i)_i}$, so that $u$ can be written $\mu_A(\tilde u)$ for
  $\tilde u = (\rho,\seq{(\rho'_i,\seq{\tilde u_{i,j}}_{1 \le j \le m_i})}_{1
    \le i \le l})$ for some adequate morphisms $\tilde u_{i,j}$. We moreover
  have that $F(\tilde u_{i,j}) = v_{i,j}$, so that $\tilde u$ is a lift for the
  pair $(u,v)$, and it can be easily proved to be unique, so that the
  square~\eqref{eq:eta-mu-pmtransformations:nat-square} is a pullback.

  We now show that it is a bipullback. So consider $\vec a = \seq{a_i}_{1 \le i
    \le n} \in \Sym A$ and $\vec b = \seq{\seq{b_{i,j}}_{1 \le j \le m_i}}_{1
    \le i \le l} \in \Sym(\Sym B)$ and a morphism $w = (\tau,\seq{w_i}_{1 \le i
    \le n}) \co \Sym(F)(\vec a) \to \mu_B(\vec b) \in \Sym B$. We have to show
  that $w = \mu_B(v) \circ \Sym(F)(u)$ for some $u \co \vec a \to \vec a' \in
  \Sym A$ and $v \co \vec b' \to \vec b \in \Sym(\Sym B)$. Since $w =
  (\id,\seq{w_{\finv\tau(i)}}_{1 \le i \le n}) \circ (\tau,\seq{\id}_{1 \le i
    \le n})$ and that $(\tau,\seq{\id}_{1 \le i \le n})$ is in the image of
  $\Sym(F)$, we may assume that $\tau = \id$. But then, it is clear that we may
  find $v \co \vec b' \to \vec b$ such that $w = \mu_B(v)$, so
  that~\eqref{eq:eta-mu-pmtransformations:nat-square} is a bipullback
  by~\cite[Proposition~9]{DBLP:conf/lics/ClairambaultF23}.
\end{proof}
\begin{proposition}
  \label{prop:sym-monad}
  The triple $((\Sym,\Symp,\iota),(\eta,\eta^+),(\mu,\mu^+))$ defines a monad on
  $\pmFunct$.
\end{proposition}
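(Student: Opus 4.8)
The plan is to observe that, since the previous propositions already establish that $(\Sym,\Symp,\iota)$ is a \pmfunctor and that $(\eta,\eta^+)$ and $(\mu,\mu^+)$ are \pmtransformations of the correct source and target, the only thing left is to verify the three monad equations---the two unit laws and associativity---as equalities of \pmtransformations in the $2$\category $\pmFunct$. A \pmtransformation is a pair of natural transformations, and two such pairs are equal exactly when their two components coincide; hence each monad equation splits into one equation on the $\Sym$\nbd-components and one on the $\Symp$\nbd-components, and I will dispatch these two families separately.

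For the $\Sym$\nbd-components, the required equations are literally the monad laws for $(\Sym,\eta,\mu)$ as a monad on $\Gpd$, which hold by construction. The unit laws amount to the fact that both $\mu_A \circ \eta_{\Sym A}$ and $\mu_A \circ \Sym(\eta_A)$ act as the identity on a sequence $\seq{a_1,\ldots,a_n}$, concatenating respectively the single outer sequence $\seq{\seq{a_1,\ldots,a_n}}$ and the sequence of singletons $\seq{\seq{a_1},\ldots,\seq{a_n}}$; and associativity $\mu \circ \mu\Sym = \mu \circ \Sym\mu$ follows from the coherence of the $\plact{(-)}{(-)}$ operation recorded in \Cref{lem:plact-functorial}, exactly as noted when $\mu$ was defined.

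For the $\Symp$\nbd-components, recall that $\eta^+$ and $\mu^+$ are the restrictions of $\eta$ and $\mu$ to the positive subgroupoids; here $\mu^+$ is well defined because a morphism of $\Symp(\Symp X)$ carries only identity permutations, and $\plact{\id}{(\id)_i} = \id$, so its image under $\mu$ again carries an identity permutation and thus lands in $\Symp X$. To verify the $\Symp$\nbd-equations I will post-compose with the embedding $\iota$, which is pointwise monomorphic. Using the compatibility equation \eqref{eq:pmtransformation-eq}, which for our two \pmtransformations reads $\iota \circ \eta^+ = \eta$ and $\iota \circ \mu^+ = \mu \circ (\iota\iota)$, together with the naturality of $\eta$ and $\iota$, each $\Symp$\nbd-equation whiskered by $\iota$ reduces to the corresponding $\Sym$\nbd-equation already settled above. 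For instance $\iota \circ \mu^+ \circ (\eta^+\Symp) = \mu \circ (\iota\iota) \circ (\eta^+\Symp) = \mu \circ (\eta\Sym) \circ \iota = \iota$, whence $\mu^+ \circ (\eta^+\Symp) = \id_{\Symp}$ since $\iota$ is mono; the other unit law and associativity follow by the same pattern.

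The only genuine work is bookkeeping: one must keep track of the horizontal composite $\iota\iota$ appearing as the source of $(\mu,\mu^+)$, evaluating it via the naturality of $\iota$ (so that $(\iota\iota)_X = \Sym(\iota_X)\circ\iota_{\Symp X}$) so the reduction to the $\Sym$\nbd-laws lines up on the nose. No conceptual ingredient is needed beyond the monomorphicity of $\iota$ and \Cref{lem:plact-functorial}, and once the monad is in hand, \Cref{prop:check-preserves-comonads} yields the pseudocomonad $\oc$ on $\Thin$.
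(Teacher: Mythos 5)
Your proof is correct and follows essentially the same route as the paper: the paper's own proof simply observes that the monad axioms for the triple follow directly from the monad axioms of $(\Sym,\eta,\mu)$ on $\Gpd$. Your additional detail --- splitting each law into its $\Sym$- and $\Symp$-components, and using the pointwise monomorphicity of $\iota$ together with \eqref{eq:pmtransformation-eq} to reduce the positive components to the already-established $\Gpd$-level laws --- is exactly the elaboration that the paper leaves implicit.
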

\begin{proof}
  One just need to check the monad axioms for this triple. But they directly
  follow from the monad axioms satisfied by the monad $(\Sym,\eta,\mu)$ on
  $\Gpd$.
\end{proof}
We may now conclude that
\begin{proposition}
  The $\Sym$ functor defines a pseudocomonad $\check\Sym$ on $\Thin$, with
  $\check\eta$ as counit and $\check\mu$ as comultiplication.
\end{proposition}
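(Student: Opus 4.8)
The plan is to obtain this statement as a direct synthesis of the two preceding propositions, with no further genuine analytic work required: all the substantive content---preservation of pullbacks and bipullbacks, the bicartesian conditions, and the abstract monad laws---has already been discharged. Concretely, Proposition~\ref{prop:sym-monad} furnishes a monad $((\Sym,\Symp,\iota),(\eta,\eta^+),(\mu,\mu^+))$ on the $3$-category $\pmFunct$. By Proposition~\ref{prop:check-preserves-comonads}, the pseudofunctor $\check{(-)} : \pmFunct^\cocat \to \Bicat(\Thin,\Thin)$ of Proposition~\ref{prop:pmfunct-bifunctor} carries any monad on $\pmFunct$ to a pseudocomonad on $\Thin$. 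Applying it to the $\Sym$-monad therefore yields a pseudocomonad on $\Thin$, whose underlying endo-pseudofunctor is $\check\Sym := \check{(\Sym,\Symp,\iota)}$.

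It then remains only to identify the components with those announced earlier. First I would check that $\check\Sym$ agrees with $\oc$ on objects: unfolding the generic construction $\cA \mapsto \defpmfletter\cA$ for $\defpmfletter = (\Sym,\Symp,\iota)$ reproduces exactly the thin groupoid $\oc A$ of the main text, with $(\Sym A)_- = \Sym(A_-)$, $(\Sym A)_+ = \Symp(A_+)$, and with uniform and thin classes generated as $\{\Sym S\}^{\perp\perp}$ and $\{\Sym S\}^{\pperp\pperp}$. Second, since $\check{(-)}$ is contravariant on $2$-cells (its source being $\pmFunct^\cocat$), it reverses the direction of $\eta$ and $\mu$: the unit $\eta : \unit{\Gpd} \To \Sym$ is sent to the counit $\check\eta_A \in \Thin[\oc A, A]$, namely the span $\oc A \ot A \to A$ with legs $\eta_A$ and $\id$, while the multiplication $\mu : \Sym\Sym \To \Sym$ is sent to the comultiplication $\check\mu_A \in \Thin[\oc A, \oc\oc A]$, matching the memberships recorded in the main text.

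The one point to verify---rather than a true obstacle---is that this direction reversal is coherent, i.e. that the pseudocomonad laws for $(\check\Sym, \check\eta, \check\mu)$ follow from the monad laws for $(\Sym,\eta,\mu)$ together with the coherence witnessing $\pmFunct \to \Bicat(\Thin,\Thin)$ as a pseudofunctor. But this is exactly the content of Proposition~\ref{prop:check-preserves-comonads}, so nothing beyond invoking it is needed. I expect the entire argument to reduce to these two citations plus the routine unfolding of $\check{(\Sym,\Symp,\iota)}$ on objects, and I would write it in precisely that order.
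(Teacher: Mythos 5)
Your proposal is correct and is exactly the paper's own argument: the paper proves this proposition by citing Proposition~\ref{prop:sym-monad} (the lifted monad on $\pmFunct$) and Proposition~\ref{prop:check-preserves-comonads} (that $\check{(-)}$ sends monads on $\pmFunct$ to pseudocomonads on $\Thin$), which is precisely your two-citation synthesis. The extra unfolding you sketch --- identifying $\check\Sym$ with $\oc$ on objects and noting the direction reversal giving $\check\eta$ and $\check\mu$ as counit and comultiplication --- is consistent with the general construction in the paper and adds only routine detail beyond its proof.
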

\begin{proof}
  This is a consequence of \Cref{prop:check-preserves-comonads,prop:sym-monad}.
\end{proof}

\section{Details on intersection types}

\subsection{The correspondence for the objects}

\begin{proof}[{Proof of \Cref{prop:corres-typeinterp-it}}] We prove the
  statement by induction on $A$. If $A = o$, then $\typeinterp A =
  (\termcat,\ldots)$ (the unique thin groupoid with the terminal groupoid as
  underlying groupoid) so that we can take $K_A = \unit{\termcat}$. If $A = B
  \synto C$, then $K_A$ is defined as the composite
  \[
    \begin{tikzcd}[cs=5em]
      \Ob(\typeinterp {B \synto C})
      \ar[r,equals]
      &[-3em]
      \Ob(\oc \typeinterp B) \times \Ob(\typeinterp C)
      \\
      \cdots
      \ar[r,"K^\oc_{B} \times K_C"]
      &
      \begin{multlined}[t]
        \set{ \seq{\beta_1,\ldots,\beta_n} \mid \itrefinfj{\beta_i }{ B }\text{ for $i \in
            \set{1,\ldots,n}$}}
        \times
        \set{\gamma \mid \itrefinfj{\gamma }{ C}}
      \end{multlined}
      \\[-1em]
      \cdots
      \ar[r,"\sim"]
      &
      \set{\delta \mid \itrefinfj{\delta }{ (B \synto C)}}\zbox.
    \end{tikzcd}
  \]
  Finally, we put $K^\oc_{A} = \oc K_A$, assuming the same encoding of sequences
  for the $\oc$ construction and the multilinear refinement types, for
  simplicity.
\end{proof}
\begin{proof}[{Proof of \Cref{prop:charact-terminterp-obj}}]
  Let us first give some precisions on the \eq{and similarly for $\terminterpoc
    M$} part. By that, we mean that, given a derivation $\simplefj\Gamma{M}{A}$,
  for every $\gamma \in \ctxtinterp \Gamma$ and $\vec a \in \oc \typeinterpb A$,
  we have a bijection
  \[
    \stsupp{\terminterpoc M}_{\gamma,\vec a} \simeq \set{\pi \mid \text{$\pi$ is a
        derivation of $\itfj{K_\Gamma(\gamma)}{\Gamma}{M}{K^\oc_A(\vec a)}{A}$}}
    \zbox.
  \]
  We then prove the property by induction on the derivations
  $\simplefj{\Gamma}{M}{A}$ and $\simplefj{\Gamma}{M}{A}$:
  \begin{itemize}
  \item in the case of the variable typing rule $\simplefj{x_1: A_1,\ldots,x_n:
      A_n}{x_i}{A_i}$, if $\gamma =
    (\seq{},\ldots,\seq{},\seq{a},\seq{},\ldots,\seq{})$ then
    \[
      \begin{tikzcd}
        \Ob(\stsupp{\terminterp {x_i}}_{\gamma,a})
        = \set{a}
        \ar[r,"\sim"]
        &
        \begin{multlined}[t]
          \{\pi \mid \text{$\pi$ is
            a derivation of}\\
          \hskip-3em \itfj{\itrefinfj[x_1]{\seq{}}{A_1},\ldots,\itrefinfj[x_i]{\seq{K_{A_i}(a)}}{A_i},\ldots,\itrefinfj[x_n]{\seq{}}{A_n}}{\Gamma}{x}{K_{A_i}(a)}{A} \}
        \end{multlined}
        \\[-1em]
        \hbox{}
        \ar[r,"\sim"]
        &
        \set{\pi \mid \text{$\pi$ is
            a derivation of $\itfj{K_\Gamma(\gamma)}{\Gamma}{x}{K_{A_i}(a)}{A_i}$}}
      \end{tikzcd}
    \]
    which is canonically isomorphic by $K_{A_i}$ to $\set{\alpha \mid
      \itrefinfj{\alpha}{A_i}}$. Otherwise, for other $\gamma$'s, we have
    $\Ob(\stsupp{\terminterp M}_{\gamma,a}) = \emptyset$ and no derivations for
    $\itfj{K_\Gamma(\gamma)}{\Gamma}{x}{K_{A_i}(a)}{A_i}$;
  \item in the case of an intersection typing $\simplefj{\Gamma}{M}{A}$ derived
    from a proof of $\simplefj{\Gamma}{M}{A}$, and given $\vec a =
    \seq{a_1,\ldots,a_l} \in \typeinterp A$, we have
    \begin{align*}
      \terminterpoc{M}_{\gamma,\vec a}
      &
        \simeq
        \bigsqcup_{\substack{\vec \gamma = \seq{\gamma_1,\ldots,\gamma_l} \in \oc\ctxtinterp\Gamma\\ \gamma_1\ctxtplus\cdots\ctxtplus\gamma_l  = \gamma}}
      \;\;
      \prod_{i=1}^l \terminterp {M}_{\gamma_i,a_i}
      \\
      &
        \simeq
        \bigsqcup_{\substack{\vec \gamma = \seq{\gamma_1,\ldots,\gamma_l} \in \oc\ctxtinterp\Gamma\\ \gamma_1\ctxtplus\cdots\ctxtplus\gamma_l  = \gamma}}
      \;\;
      \prod_{i=1}^l \set{\pi \mid \text{$\pi$ derivation of $\itfj{K_\Gamma(\gamma_i)}{\Gamma}{M}{K_{A}(a_i)}{A}$}}
      \\
      &\simeq
        \set{\pi \mid \text{$\pi$ derivation of $\itfj{K_\Gamma(\gamma)}{\Gamma}{M}{K^\oc_{A}(\vec a)}{A}$}}\zbox;
    \end{align*}

  \item in the case of an application $\Gamma \judge M\, N: B$ constructed from
    two derivations of $\Gamma \judge M : A \synto B$ and $\Gamma \judgem N : A$
    for some unique simple type~$A$,
    considering the definition $\terminterp{M\,N}$, we have that
    \begin{align*}
      \terminterp{M\,N}_{\gamma,b}
      &
        \simeq
        \bigsqcup_{\substack{\gamma_1,\gamma_2 \in \ctxtinterp \Gamma\\ \gamma_1\ctxtplus\gamma_2 = \gamma}}
        \;
        \;
        \bigsqcup_{\vec a \in \oc \typeinterp A}
      \;
      \;
      \terminterp {M}_{\gamma_1,(\vec a,b)}
      \times
      \terminterpoc {N}_{\gamma_2,\vec a}
      \\
      & 
        \simeq
        \bigsqcup_{\substack{\gamma_1,\gamma_2 \in \ctxtinterp \Gamma\\ \gamma_1\ctxtplus\gamma_2 = \gamma}}
        \;
        \;
        \bigsqcup_{\vec a \in \oc \typeinterp A}
      \begin{aligned}[t]
        &\set{\pi_1 \mid \text{$\pi_1$ derivation of $\itfj{K_\Gamma(\gamma_1) }{ \Gamma }{ M }{ K_{A \synto B}(\vec a,b) }{ A \synto B}$}}
        \\
        &\quad\quad\times
          \set{\pi_2 \mid \text{$\pi_2$ derivation of $\itfj{K_\Gamma(\gamma_2) }{ \Gamma }{ N }{ K^\oc_{A}(\vec a) }{ A}$}}
      \end{aligned}
      \\
      &\simeq
        \set{\pi \mid \text{$\pi$ derivation of $\itfj{K_\Gamma(\gamma) }{ \Gamma }{ M\,N }{ K_B(b) }{ B}$}}\zbox;
    \end{align*}
  \item in case of a lambda-abstraction $\Gamma \judge \lambda x.M : A \synto
    B$, given $\gamma \in \ctxtinterp \Gamma$ and $(\kappa, \beta) \in
    \typeinterp{A \synto B}$, we have
    \begin{align*}
      \terminterp{\lambda x.M}_{\gamma,(\kappa, \beta)}
      &\simeq
        \terminterp{M}_{\lstconsrp\gamma\kappa,\beta}
      \\
      &\simeq
        \set{\pi \mid \text{$\pi$ derivation of $\itfj{K_{(\Gamma,x:A)}(\lstconsrp\gamma\kappa) }{ (\Gamma,x:A) }{ M }{ K_{B}(\beta) }{ B}$}}
      \\
      &\simeq
        \set{\pi \mid \text{$\pi$ derivation of $\itfj{K_{\Gamma}(\gamma) }{ \Gamma }{ \lambda x.M }{ K_{A \synto B}((\kappa, \beta)) }{ A \synto B}$}}
    \end{align*}
    where, for $\gamma = (\gamma_1,\ldots,\gamma_n) \in \ctxtinterp \Gamma$, we write $\gamma\lstconsr \kappa$ for
    $(\kappa_1,\ldots,\kappa_n,\kappa)\in\ctxtinterp {\Gamma,x:A}$.
  \end{itemize}
\end{proof}

\subsection{The groupoids of intersection types}

Here, we give details about the structure of the groupoids $\itgpd A$ (and its
multilinear version) of intersection types and the associated morphisms that
refine a simple type~$A$.

Given a simple type $A$, we first define $\Ob(\itgpd A)$ as the set of linear
intersection types $\alpha$ such that $\itrefinfj{\alpha }{ A}$. We then define
the symmetries\todo{faire une passe sur tout le document pour appeler morphismes
  \eq{symétries} pour les groupoïdes?} $\alpha \to \alpha'$ of $\itgpd A$ as the
linear intersection type morphisms $\phi$ such that $\itmrefinfj{\phi }{ \alpha
}{ \alpha' }{ A}$. We have the convenient property that
\begin{lemma}
  \label{lem:unique-src-tgt-for-itmorphs}
  Given a linear intersection type morphism $\phi$, there is at most one pair
  $(\alpha,\alpha')$ of linear intersection types such that $\itmrefinfj{\phi }{\alpha}{\alpha' }{A}$, and similarly for multilinear intersection type
  morphisms $\theta$.
\end{lemma}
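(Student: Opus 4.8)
The plan is to prove both claims simultaneously by mutual structural induction on the morphism term $\phi$ (respectively $\widetilde\phi$), establishing the slightly sharper statement that the source and the target are each determined by it. The crucial observation is that each syntactic form of a morphism matches exactly one of the three refinement rules, so in any derivation of $\itmrefinfj{\phi}{\alpha}{\alpha'}{A}$ the last rule applied is forced by the outermost constructor of $\phi$; the source and target can then be read off compositionally from those of the immediate subterms, which are controlled by the induction hypotheses.

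Concretely, I would treat the cases as follows. If $\phi = \itsstarid$, only the axiom applies, forcing $\alpha = \alpha' = \itsstar$, so the pair is uniquely $(\itsstar, \itsstar)$. If $\phi = \widetilde\psi \synlinto \chi$, only the arrow rule applies; any derivation decomposes it into subderivations $\itmrefinfj{\widetilde\psi}{\vec\alpha}{\vec\alpha'}{A'}$ and $\itmrefinfj{\chi}{\beta}{\beta'}{B'}$ with $\alpha = \vec\alpha \synlinto \beta$ and $\alpha' = \vec\alpha' \synlinto \beta'$. By the induction hypotheses applied to $\widetilde\psi$ and to $\chi$, the pairs $(\vec\alpha, \vec\alpha')$ and $(\beta, \beta')$ are each uniquely determined, hence so is $(\alpha, \alpha')$.

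For the multilinear case $\widetilde\phi = (\sigma, \seq{\phi_1, \ldots, \phi_n})$, only the sequence rule applies; it forces $\itmrefinfj{\phi_i}{\alpha_i}{\alpha'_{\sigma(i)}}{A}$ for each $i$, so that the source is $\seq{\alpha_1, \ldots, \alpha_n}$ and the target $\seq{\alpha'_1, \ldots, \alpha'_n}$. By induction each $\phi_i$ determines its own source $\alpha_i$ and its own target, the latter being by definition $\alpha'_{\sigma(i)}$. The only point requiring care is the permutation bookkeeping: since $\sigma$ is part of the data of $\widetilde\phi$ and is a bijection, knowing the target of each $\phi_i$ to be $\alpha'_{\sigma(i)}$ pins down each entry $\alpha'_j$ as the target of $\phi_{\finv\sigma(j)}$, so both $\vec\alpha$ and $\vec\alpha'$ are uniquely determined.

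Finally, I would remark on why the statement speaks of the pair $(\alpha, \alpha')$ rather than the full triple including the refining type $A$: the type is not in general recoverable from the morphism, the witness being the empty multilinear morphism $(\id, \seq{})$, which refines $\seq{} \To \seq{}$ over every simple type $A$. This is precisely why the induction is phrased in terms of source and target only, and it propagates harmlessly through the arrow case (where a subterm $\widetilde\psi$ may be empty). There is no real obstacle here; the argument is a direct structural induction, with the permutation in the multilinear case the only piece of bookkeeping to watch.
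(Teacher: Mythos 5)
Your proof is correct and takes essentially the same route as the paper, whose entire proof is the remark ``By a simple induction on $\phi$ and $\theta$'': your case analysis --- the last rule being forced by the outermost constructor of the morphism, and the permutation bookkeeping identifying $\alpha'_j$ as the target of $\phi_{\finv\sigma(j)}$ in the multilinear case --- is exactly that mutual induction written out in full. Your closing observation that the refining simple type $A$ is \emph{not} determined by the morphism (witnessed by $(\id,\seq{})$) is accurate and consistent with the statement, which claims uniqueness only of the pair $(\alpha,\alpha')$.
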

\begin{proof}
  By a simple induction on $\phi$ and $\theta$.
\end{proof}
Thus, given a morphism $\phi$ of $\itgpd A$, we may write $\src(\phi)$ and
$\tgt(\phi)$ for the unique $\alpha$ and $\alpha'$ such that $\itmrefinfj{\phi }{\alpha}{\alpha' }{A}$.

We may define similarly (the beginning of) a groupoid $\itgpdoc{A}$, whose
objects are the multilinear intersection types $\vec\alpha$ such that
$\itrefinfj{\vec\alpha }{ A}$, and whose morphisms $\vec\alpha \to \vec\alpha'$
are the multilinear intersection type morphisms are the $\widetilde\phi$'s such
that $\itmrefinfj{\widetilde\phi}{\vec\alpha}{\vec\alpha'}{A}$. By
\Cref{lem:unique-src-tgt-for-itmorphs}, we may write $\src(\widetilde\phi)$ and
$\tgt(\widetilde\phi)$ for these unique $\vec\alpha$ and $\vec\alpha'$.

Given $\alpha \in \Ob(\itgpd A)$ and $\vec\alpha \in\Ob(\itgpdoc{A})$, we define
$\itsid\alpha \in \Ar(\itgpd A)$ and $\itsid{\vec\alpha} \in \Ar(\itgpdoc {A})$
such that $\itmrefinfj{\itsid\alpha}{\alpha}{\alpha}{A}$ and
$\itmrefinfj{\itsid{\vec\alpha} }{ \vec\alpha }{ \vec\alpha }{ A}$ by mutual
induction on the derivations of $\itrefinfj{\alpha}{A}$ and $\itrefinfj{\vec\alpha
}{ A}$:
\begin{itemize}
\item in the case of the axiom $\itrefinfj{\itsstar }{ o}$, we put $\itsid\star =
  \itsstarid$. We then have $\itmrefinfj{\itsid\star }{ \itsstar }{ \itsstar }{
  A}$ by corresponding rule for morphisms;

\item in the case of the linear arrow $\itrefinfj{(\vec\beta \synlinto \gamma)
  }{ (B \synto C)}$, by induction hypothesis, we get $\itsid{\vec\beta} \in
  \Ar(\itgpdoc{B})$ and $\itsid\gamma \in \Ar(\itgpd{C})$ such that
  $\itmrefinfj{\itsid{\vec\beta} }{ {\vec\beta} }{ \vec\beta }{ B}$ and
  $\itmrefinfj{\itsid\gamma }{\gamma }{\gamma }{C}$. We then put $\itsid{\vec\beta
    \synlinto \gamma}$ to be $\itsid{\vec\beta} \synlinto \itsid\gamma$, for which we
  are able to derive $\itmrefinfj{\itsid{\vec\beta \synlinto \gamma} }{(\vec\beta
    \synlinto \gamma) }{(\vec\beta \synlinto \gamma) }{B \synto C}$ using the
  corresponding rule for morphisms;
\item in the case of the multilinear formation $\itrefinfj{\vec\alpha }{ A}$ for
  $\vec\alpha = \seq{\alpha_1,\ldots,\alpha_n}$, by induction hypothesis, we get
  $\itsid{\alpha_i} \in \Ar(\itgpd A)$ such that $\itmrefinfj{\itsid{\alpha_i}
  }{\alpha_i }{\alpha_i}{A}$ for $i\in \set{1,\ldots,n}$. We then put
  $\itsid{\vec\alpha} =
  (\unit{},\seq{\itsid{\alpha_1},\ldots,\itsid{\alpha_n}})$, for which we can
  easily derive that $\itmrefinfj{\itsid{\vec\alpha} }{\vec\alpha}{\vec\alpha}{
    A}$ using the corresponding rule for morphisms.
\end{itemize}

Given $\phi_1,\phi_2 \in \Ar(\itgpd A)$ (\resp
$\widetilde\phi_1,\widetilde\phi_2 \in \Ar(\itgpdoc {A})$), we say that they are
\emph{composable} when $\tgt(\phi_1) = \src(\phi_2)$ (\resp
$\tgt(\widetilde\phi_1) = \src(\widetilde\phi_2)$). It happens that two
composable intersection type morphisms are very \eq{similar} in their
construction:
\begin{lemma}
  \label{lem:composable-implications}
  Given $\phi_1,\phi_2 \in \Ar(\itgpd A)$ such that they are composable, we have
  that:
  \begin{itemize}
  \item if $\phi_1 = \itsstarid$, then $A = o$ and $\phi_2 = \itsstarid$;
  \item if $\phi_1 = \widetilde\phi_1 \synlinto \psi_1$, then
    \begin{itemize}
    \item $A = B \synto C$ for some unique simple types $B$ and $C$,
    \item $\widetilde\phi_1 \in \Ar(\itgpdoc B)$ and $\psi_1 \in \Ar(\itgpd C)$,
    \item $\phi_2 = \widetilde\phi_2 \synlinto \psi_2$ for some unique $\widetilde\phi_2 \in
      \Ar(\itgpd B)$ and $\psi_1 \in \Ar(\itgpd C)$,
    \item and $\widetilde\phi_1,\widetilde\phi_2$ (\resp $\psi_1,\psi_2$) are composable.
    \end{itemize}
  \end{itemize}
  Similarly, given $\widetilde\phi_1 = (\sigma_1,\seq{\phi_{1,1},\ldots,\phi_{1,n_1}})$ and
  $\widetilde\phi_2 = (\sigma_2,\seq{\phi_{2,1},\ldots,\phi_{2,n_2}})$ in $\Ar(\itgpdoc{A})$
  for some $n_i \in \N$, $\sigma_i \in \Sgroup_{n_i}$ and intersection type
  morphisms $\phi_{i,1},\ldots,\phi_{i,n_i}$ for $i \in \set{1,2}$, such that
  $\widetilde\phi_1$ and $\widetilde\phi_2$ are composable, we have that $n_1 = n_2$ and
  $\phi_{1,j}$ is composable with $\phi_{2,\sigma_1(j)}$ for $j\in
  \set{1,\ldots,n_1}$.
\end{lemma}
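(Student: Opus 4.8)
The plan is to argue entirely by syntactic inversion, exploiting that the refinement rules for intersection type morphisms are syntax-directed: each possible shape of a morphism is the conclusion of exactly one rule, so from $\phi \in \Ar(\itgpd A)$ (or $\widetilde\phi \in \Ar(\itgpdoc A)$) one can read off, via \Cref{lem:unique-src-tgt-for-itmorphs}, which rule was used and hence the shape of $\src(\phi)$ and $\tgt(\phi)$. Recall that composability of $\phi_1,\phi_2$ means exactly $\tgt(\phi_1) = \src(\phi_2)$.

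For the linear statement I would split on the shape of $\phi_1$. If $\phi_1 = \itsstarid$, the only rule concluding a judgement $\itmrefinfj{\itsstarid}{\alpha}{\alpha'}{A}$ forces $A = o$ and $\alpha = \alpha' = \itsstar$, so $\tgt(\phi_1) = \itsstar$; composability then gives $\src(\phi_2) = \itsstar$, and since at type $o$ the only intersection type is $\itsstar$ and its only morphism is $\itsstarid$, we get $\phi_2 = \itsstarid$. If instead $\phi_1 = \widetilde\phi_1 \synlinto \psi_1$, the only applicable rule is the arrow rule, forcing $A = B \synto C$ (with $B,C$ unique, since $A$ is fixed and $\synto$ is injective), $\widetilde\phi_1 \in \Ar(\itgpdoc B)$, $\psi_1 \in \Ar(\itgpd C)$, and $\tgt(\phi_1) = \tgt(\widetilde\phi_1) \synlinto \tgt(\psi_1)$. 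Since $\phi_2$ refines at the arrow type $B \synto C$, the same rule inversion forces $\phi_2 = \widetilde\phi_2 \synlinto \psi_2$ with $\widetilde\phi_2 \in \Ar(\itgpdoc B)$ and $\psi_2 \in \Ar(\itgpd C)$, these components being unique as they are read directly off $\phi_2$. Finally, composability reads $\tgt(\widetilde\phi_1) \synlinto \tgt(\psi_1) = \src(\widetilde\phi_2) \synlinto \src(\psi_2)$, and injectivity of $\synlinto$ on intersection types yields $\tgt(\widetilde\phi_1) = \src(\widetilde\phi_2)$ and $\tgt(\psi_1) = \src(\psi_2)$, i.e. the two claimed composabilities.

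For the multilinear statement I would read the source and target of each multilinear morphism off its (unique) defining rule. Writing $\widetilde\phi_1 = (\sigma_1, \seq{\phi_{1,1},\ldots,\phi_{1,n_1}})$, inversion of the multilinear rule gives $\src(\widetilde\phi_1) = \seq{\src(\phi_{1,i})}_{1 \le i \le n_1}$ and $\tgt(\widetilde\phi_1) = \seq{\tgt(\phi_{1,\finv{\sigma_1}(j)})}_{1 \le j \le n_1}$, and similarly $\src(\widetilde\phi_2) = \seq{\src(\phi_{2,j})}_{1 \le j \le n_2}$. Composability $\tgt(\widetilde\phi_1) = \src(\widetilde\phi_2)$ equates two sequences, so comparing lengths gives $n_1 = n_2$ and comparing entries gives $\tgt(\phi_{1,\finv{\sigma_1}(j)}) = \src(\phi_{2,j})$ for all $j$; substituting $j = \sigma_1(i)$ rewrites this as $\tgt(\phi_{1,i}) = \src(\phi_{2,\sigma_1(i)})$, which is precisely composability of $\phi_{1,i}$ and $\phi_{2,\sigma_1(i)}$.

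The argument is purely structural and raises no real conceptual difficulty; the one delicate point is the index bookkeeping in the multilinear case. There, the target sequence of $\widetilde\phi_1$ is reindexed by $\finv{\sigma_1}$, so it is the permutation $\sigma_1$ carried by $\widetilde\phi_1$ -- and not $\sigma_2$ -- that realigns the component morphisms when comparing against $\src(\widetilde\phi_2)$. Getting this direction right is the only place where a slip is likely.
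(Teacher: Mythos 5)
Your proposal is correct and matches the paper's proof, which is stated simply as a mutual induction on the derivations of $\phi_1$ and $\widetilde\phi_1$: your rule-inversion case analysis (justified via \Cref{lem:unique-src-tgt-for-itmorphs}) is exactly what that induction unrolls to, and your index bookkeeping in the multilinear case — in particular that it is $\sigma_1$, via $\tgt(\widetilde\phi_1) = \seq{\tgt(\phi_{1,\finv{\sigma_1}(j)})}_j$, that realigns the components — agrees with the paper's rules and its definition of composition.
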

\begin{proof}
  By mutual induction on the derivations of $\phi_1$ and $\widetilde\phi_1$.
\end{proof}
Given a simple type $A$ and two composable linear (\resp multilinear)
intersection type morphisms $\phi_1,\phi_2 \in \Ar(\itgpd{A})$ (\resp
$\widetilde\phi_1,\widetilde\phi_2 \in \Ar(\itgpdoc{A})$), we now define their
\emph{composition} $\phi_2 \itscirc \phi_1$ (\resp $\widetilde\phi_2 \itscirc \widetilde\phi_1$) by
mutual induction. We use \Cref{lem:composable-implications} to give a complete
definition with a minimal case analysis:
\begin{itemize}
\item we put $\itsstarid \itscirc \itsstarid = \itsstarid \circ \itsstarid$;
\item given composable $\phi_1 = \zeta_1 \synlinto \psi_1$ and $\phi_2 = \zeta_2
  \synlinto \psi_2$, we put $\phi_2\itscirc\phi_1 = (\zeta_2 \itscirc \zeta_1)
  \synlinto (\psi_2 \itscirc \psi_1)$;
\item given composable $\widetilde\phi_1 = (\sigma_1,\seq{\psi_{1,1},\ldots,\psi_{1,n}})$ and
  $\widetilde\phi_2 = (\sigma_2,\seq{\psi_{2,1},\ldots,\psi_{2,n}})$, we put
  \[
    \widetilde\phi_2 \itscirc \widetilde\phi_1 = (\sigma_2 \circ \sigma_1,\seq{\psi_{2,\sigma(1)} \itscirc \psi_{1,1},\ldots,\psi_{2,\sigma(n)} \itscirc \psi_{1,n}})
    \zbox.
  \]
\end{itemize}
Given a simple type $A$ and a composable linear (\resp multilinear) intersection
type morphism $\phi \in \itgpd{A}$ (\resp $\widetilde\phi \in \Ar(\itgpdoc{A})$), we
now define its \emph{inverse} $\finv\phi$ (\resp $\finv{\widetilde\phi}$) by induction on
the derivations:
\begin{itemize}
\item we put $\finv {\itsstarid} = \itsstarid$;
\item given $\phi = \widetilde\phi \synlinto \psi$, we put $\finv\phi = \finv{\widetilde\phi}
  \synlinto \finv\psi$;
\item given $\widetilde\phi = (\sigma,\seq{\psi_{1},\ldots,\psi_{n}})$, we put $\finv{\widetilde\phi} =
  (\finv\sigma,\seq{\psi_{\finv\sigma(1)},\ldots,\psi_{\finv\sigma(n)}})$.
\end{itemize}
The above operations assemble into a groupoidal structure:
\begin{lemma}
  Given a simple type $A$, we have that
  \begin{itemize}
  \item $\itscirc$ is an associative composition operation on $\itgpd A$ (\resp
    $\itgpdoc {A}$) with $\itsid{}$ as unit, making $\itgpd {A}$ (\resp
    $\itgpdoc{A}$) a category;
  \item for every $\phi \in \Ar(\itgpd A)$ (\resp $\widetilde\phi \in \Ar(\itgpdoc
    {A})$), we have $\finv\phi \itscirc \phi = \itsid{\src(\phi)}$ and $\phi
    \itscirc \finv\phi = \itsid{\tgt(\phi)}$ (\resp $\finv{\widetilde\phi} \itscirc \widetilde\phi
    = \itsid{\src(\widetilde\phi)}$ and $\widetilde\phi \itscirc \finv{\widetilde\phi} =
    \itsid{\tgt(\widetilde\phi)}$), so that $\itgpd A$ (\resp $\itgpdoc {A}$) is a
    groupoid.
  \end{itemize}
\end{lemma}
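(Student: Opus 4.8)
The plan is to establish both assertions simultaneously by induction on the simple type $A$, taking advantage of the fact that the clauses defining $\itsid\alpha$, $\itscirc$ and $\finv{(-)}$ are purely structural. Before checking the axioms one records the well-definedness of these operations: by \Cref{lem:composable-implications} two composable morphisms have matching shapes, so exactly one clause of the definition of $\itscirc$ applies in each situation, and an immediate induction using the refinement rules for morphisms together with the uniqueness of source and target from \Cref{lem:unique-src-tgt-for-itmorphs} shows that $\itsid\alpha$, $\phi_2 \itscirc \phi_1$ and $\finv\phi$ are again morphisms of $\itgpd A$ with the expected endpoints, and similarly in the multilinear case.

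The cleanest route is then to observe that the three clauses reproduce standard groupoid constructions. In the base case $A = o$, the groupoid $\itgpd o$ has a single object $\itsstar$ and a single morphism $\itsstarid$, so it is trivially a groupoid. For $A = B \synto C$, the clauses on $\synlinto$ act componentwise, exhibiting $\itgpd{B \synto C}$ as the product category $\itgpdoc B \times \itgpd C$ (under the evident pairing $\widetilde\phi \synlinto \psi \mapsto (\widetilde\phi, \psi)$); by the induction hypothesis both factors are groupoids, hence so is their product. Finally, the objects of $\itgpdoc A$ are sequences of objects of $\itgpd A$, its morphisms are pairs $(\sigma, \seq{\phi_i})$ with $\sigma \in \Sgroup_n$, and composition is $(\sigma_2 \circ \sigma_1, \seq{\psi_{2,\sigma_1(i)} \itscirc \psi_{1,i}}_i)$ with identities and inverses defined exactly as for the free symmetric monoidal groupoid; that is, $\itgpdoc A$ is precisely $\Sym(\itgpd A)$ in the sense of Section~\ref{sec:thin_sym}. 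Since $\Sym$ takes values in $\Gpd$, $\itgpdoc A$ is a groupoid as soon as $\itgpd A$ is, closing the induction.

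If one prefers to avoid these identifications, associativity and the unit and inverse laws can instead be verified directly by mutual induction on the derivations of the morphisms involved. The arrow and base cases follow immediately by applying the induction hypothesis componentwise. The one delicate point, which I expect to be the main obstacle, is the multilinear case: there one must carefully track the reindexing of the component morphisms by the permutations. For associativity, for instance, one expands both sides of $(\widetilde\phi_3 \itscirc \widetilde\phi_2) \itscirc \widetilde\phi_1 = \widetilde\phi_3 \itscirc (\widetilde\phi_2 \itscirc \widetilde\phi_1)$, uses associativity of composition in $\Sgroup_n$ on the permutation parts, and checks that the $i$-th components coincide after reindexing, which reduces to associativity of $\itscirc$ on linear morphisms via the induction hypothesis; the unit and inverse laws are treated the same way, using that $\itsid{\vec\alpha}$ carries the identity permutation and that $\finv{(-)}$ carries $\finv\sigma$. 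This is exactly the bookkeeping already packaged in the identity $\itgpdoc A = \Sym(\itgpd A)$, which is why the structural argument above is preferable.
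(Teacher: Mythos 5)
Your proposal is correct, but it takes a different route from the paper, whose entire proof of this lemma is the single sentence \enquote{By simple inductions} --- i.e.\ exactly the direct mutual induction on derivations that you relegate to your final paragraph as the fallback option. Your main argument instead transfers the groupoid axioms from known constructions: $\itgpd{o}$ is the terminal groupoid, the clauses for $\synlinto$ exhibit $\itgpd{B \synto C}$ as the product $\itgpdoc{B} \times \itgpd{C}$, and the clauses for multilinear morphisms (permutation part composed in $\Sgroup_n$, components reindexed by $\sigma$) are precisely those of the free strict symmetric monoidal groupoid, so $\itgpdoc{A} \iso \Sym(\itgpd{A})$; associativity, units and inverses are then inherited rather than re-verified. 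This is a legitimate and arguably more informative argument: it makes explicit an identification that the paper only uses implicitly later (the proof of Proposition~\ref{prop:isom-itgpd-typeinterp} relies on the fact that multilinear intersection type morphisms \enquote{closely follow} the action of $\oc$, i.e.\ of $\Sym$, on groupoids), and it localizes the only genuinely delicate bookkeeping --- the interaction of permutations with component reindexing --- inside the already-established $\Sym$ construction. What the paper's direct induction buys in exchange is self-containedness: it never needs to argue that the syntactically defined operations coincide with those of $\Sym(\itgpd{A})$ under the evident bijection, a (small) verification your structural route does require and which you correctly flag by checking that composition, $\itsid{}$ and $\finv{(-)}$ match clause by clause. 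Your preliminary well-definedness step, resting on Lemmas~\ref{lem:unique-src-tgt-for-itmorphs} and~\ref{lem:composable-implications}, matches the paper's setup exactly.
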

\begin{proof}
  By simple inductions.
\end{proof}

\begin{proof}[Proof of \Cref{prop:isom-itgpd-typeinterp}]%
  The functors $K_A$ and $K^\oc_A$ are built as the direct extensions to
  symmetries of the ones built in the proof of \Cref{prop:corres-typeinterp-it},
  since the definition of multilinear intersection type morphisms closely
  follows the definition of the action of $\oc$ on groupoids and their
  symmetries.
\end{proof}
\subsection{Resource context groupoid}

We recall that we only consider contexts, resource contexts and resource
morphism contexts that are well-formed, so that we will often omit to precise
that such contexts are well-formed for conciseness.

Given a (well-formed) context $\Gamma = (x_1:A_1,\ldots,x_n:A_n)$, we give some
details about the definition of the groupoid $\itgpd\Gamma$: its objects are the
resource contexts $\Theta = (\itrefinfj[x_1 ]{\kappa_1 }{ A_1}, \ldots,\itrefinfj[ x_n ]{\kappa_n
}{ A_n})$, and its morphisms of type $\Theta \to \Theta'$, for another
resource context $\Theta' = (\itrefinfj[x_1 ]{\kappa'_1 }{ A_1}, \ldots,\itrefinfj[ x_n ]{\kappa'_n
}{ A_n})$, are the resource morphisms contexts $\Xi = (\itmrefinfj[x_1]{\theta_1 }{\kappa_1 }{\kappa'_1 }{A_1},\ldots,\itmrefinfj[x_n]{\theta_n }{\kappa_n }{\kappa'_n }{A_n})$. We then write $\src(\Xi)$ for $\Theta$ and $\tgt(\Xi)$
for $\Theta'$. Two resource morphism contexts $\Xi_1$ and $\Xi_2$ of $\itgpd
\Gamma$ defined by
\[
  \Xi_i = (\itmrefinfj[x_1]{\theta_{i,1} }{\kappa_{i,1} }{\kappa'_{i,1} }{A_1},\ldots,\itmrefinfj[x_n]{\theta_{i,n} }{\kappa_{i,n} }{\kappa'_{i,n} }{A_n})
\]
are \emph{composable} when $\tgt(\Xi) = \src(\Xi')$. In this case, we define
their composite as
\[
  \Xi_2 \ctxtcirc \Xi_1 = (\itmrefinfj[x_1]{\theta_{2,1} \itscirc \theta_{1,1} }{\kappa_{1,1} }{\kappa'_{2,1} }{A_1},\ldots,\itmrefinfj[x_n]{\theta_{2,n} \itscirc
  \theta_{1,n} }{\kappa_{1,n} }{\kappa'_{2,n} }{A_n})
  \zbox.
\]
Moreover, given $\Xi$ as above, there is a resource morphism context $\finv \Xi$
defined by
\[
  \finv \Xi = (\itmrefinfj[x_1]{\finv\theta_1 }{\kappa'_1 }{\kappa_1 }{A_1},\ldots,\itmrefinfj[x_n]{\finv\theta_n }{\kappa'_n }{\kappa_n }{A_n})
\]
and which is the \emph{inverse} of $\Xi$. Given a resource context $\Theta =
(\itrefinfj[x_1]{\kappa_1 }{ A_1},\ldots,\itrefinfj[x_n]{\kappa_n }{ A_n})$, there is an
\emph{identity resource morphism context} $\ctxtid\Theta$ defined by
\[
  \ctxtid\Theta
  =
  (\itmrefinfj[x_1]{\itsid{\kappa_1} }{\kappa_1 }{\kappa_1 }{A_1},\ldots,\itmrefinfj[x_n]{\itsid{\kappa_n} }{\kappa_n }{\kappa_n }{A_n})
  \zbox.
\]
Following what was done in the previous section, we readily have that
\begin{proposition}
  $\itgpd \Gamma$ has a structure of groupoid and, as such, it is isomorphic to
  the groupoid $\itgpdoc {A_1} \times \cdots \times \itgpdoc {A_n}$.
\end{proposition}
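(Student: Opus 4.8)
The plan is to observe that every operation defining $\itgpd{\Gamma}$ is performed binding-by-binding, so that $\itgpd{\Gamma}$ is nothing but the product of the groupoids $\itgpdoc{A_i}$ up to reorganising the data, and to package this observation as a strict isomorphism of groupoids.

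First I would make the groupoid structure on $\itgpd{\Gamma}$ explicit. The composition $\ctxtcirc$, the identities $\ctxtid{\Theta}$, and the inverses $\finv{\Xi}$ were all defined componentwise from the corresponding operations $\itscirc$, $\itsid{(-)}$, and $\finv{(-)}$ on the groupoids $\itgpdoc{A_i}$. Consequently every category and groupoid axiom for $\itgpd{\Gamma}$ — associativity, the unit laws, and the inverse laws — reduces in each coordinate to the axiom already established for $\itgpdoc{A_i}$ in the preceding lemma, and therefore holds.

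Next I would define the comparison functor $F \colon \itgpd{\Gamma} \to \itgpdoc{A_1} \times \cdots \times \itgpdoc{A_n}$. On objects, it sends a resource context $\Theta = (\itrefinfj[x_i]{\kappa_i}{A_i})_{1 \le i \le n}$ to the tuple $(\kappa_i)_{1 \le i \le n}$; on morphisms, it sends a resource morphism context $\Xi = (\itmrefinfj[x_i]{\theta_i}{\kappa_i}{\kappa'_i}{A_i})_{1 \le i \le n}$ to $(\theta_i)_{1 \le i \le n}$. Functoriality is immediate: $F$ preserves composition because $\ctxtcirc$ is componentwise $\itscirc$, and preserves identities because $\ctxtid{\Theta}$ is the tuple of the $\itsid{\kappa_i}$.

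Finally I would check that $F$ is an isomorphism by exhibiting the evident componentwise inverse. On objects this is clear, since a resource context for $\Gamma$ is by definition exactly a tuple of bindings $\itrefinfj[x_i]{\kappa_i}{A_i}$, so the object assignment is a bijection. On morphisms, by \Cref{lem:unique-src-tgt-for-itmorphs} each $\theta_i$ already determines its source $\kappa_i$ and target $\kappa'_i$, so a resource morphism context carries no information beyond the tuple $(\theta_i)_i$; hence $F$ is a bijection on every hom-set, and its inverse is functorial for the same componentwise reasons. This makes $F$ an isomorphism of groupoids, proving the claim. I do not anticipate any real obstacle here; the only point deserving care is the bookkeeping that the product groupoid structure on the right coincides with the componentwise structure used to define $\itgpd{\Gamma}$, after which $F$ is manifestly invertible.
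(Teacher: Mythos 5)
Your proof is correct and matches the paper's intent exactly: the paper simply asserts the result "readily" follows from the componentwise definitions of $\ctxtcirc$, $\ctxtid{\Theta}$, and $\finv{\Xi}$, and your argument spells out precisely that verification together with the evident comparison functor. The appeal to \Cref{lem:unique-src-tgt-for-itmorphs} to see that a resource morphism context is determined by its tuple of components is the right justification for bijectivity on hom-sets.
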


\subsection{Morphisms between derivations}

Here, we give more details about the definition of the $\plact{\rho}{(-)}$
operation on families of multilinear intersection type morphisms and resource
morphism contexts.

Let $m \in \N$, $\rho \in \Sgroup_m$. Given a family of morphisms
$(\itmrefinfj{\widetilde\phi_j}{\vec{\alpha}_j }{\vec{\alpha}'_j }{A})_{1 \le j
  \le m}$ where $\widetilde\phi_j = (\sigma_j,\seq{\phi_{j,k}}_{1 \le k \le
  l_j})$ with $l_j$ the length of $\vec{\alpha}_j$ (and $\vec{\alpha}'_j$) for
every $j \in \set{1,\ldots,m}$, we write $\plact{\rho}{(\widetilde\phi_j)_{1\le
    j \le m}}$ for the  multilinear intersection type morphism $\widetilde\phi$
defined by
\[
  \widetilde\phi
  \qp=
  (\plact{\rho}{(\sigma_j)_{1 \le j \le m}},\seq{\phi_{1,k}}_{1 \le k \le l_1} \seqplus \cdots
  \seqplus \seq{\phi_{m,k}}_{1 \le k \le l_m})
\]
where $\seq{\phi_{1,k}}_{1 \le k \le l_1} \seqplus \cdots \seqplus
\seq{\phi_{m,k}}_{1 \le k \le l_m}$ is the mere concatenation of the sequences
of morphisms. Note that we then have the refinement
\[
  \itmrefinfj{\widetilde\phi}%
  {\vec{\alpha}_1 \seqplus \cdots \seqplus \vec{\alpha}_m}%
  {\vec{\alpha}'_{\rho^{-1}(1)} \seqplus \cdots \seqplus \vec{\alpha}'_{\rho^{-1}(m)}}{A}
  \zbox.
\]
Now, given a family
\[
  (\Xi_j)_{1 \le j \le m} =
  ((\itmrefinfj[x_{i}]{\widetilde\phi_{i,j}}{\vec{\alpha}_{i,j}}{\vec{\alpha}'_{i,j}}{A_{i}})_{1
    \le i \le n})_{1 \le j \le m}
\]
of resource morphism contexts, all refining a common context $\Gamma = x_1 :
A_1,\ldots, x_n : A_n$, we define $\rmcplact{\rho}{(\Xi_j)_{1 \le j \le m}}$ as
the resource morphism context
\[
  \rmcplact{\rho}{(\Xi_j)_{1 \le j \le m}}
  \qp=
  (\itmrefinfj[x_{i}]{\plact{\rho}{(\widetilde\phi_{i,j})_{1 \le j \le
        m}}}{\vec\alpha_{i,1} \seqplus \cdots \seqplus\vec\alpha_{i,m}}{\vec\alpha_{i,\finv\rho(1)} \seqplus \cdots \seqplus\vec\alpha_{i,\finv\rho(m)}}{A_{i}})_{1 \le i \le n}
  \zbox.
\]

\subsection{The intersection type groupoid for a term}

We now give some details about the definition of the groupoid $\itgpd M$ for a
well-typed \lterm $\Gamma \judge M : A$. Given a derivation $\pi$ of
$\itmfj{\Xi}{\Gamma}{M}{\phi}{\alpha}{\alpha'}{A}$, one can define a derivation
$\src(\pi)$ of $\itfj{\dom(\Xi)}{\Gamma}{M}{\alpha}{A}$ and a derivation
$\tgt(\pi)$ of $\itfj{\cod(\Xi)}{\Gamma}{M}{\alpha'}{A}$ by induction on $\pi$
(and similarly for multilinear judgements). We only give the definition of
$\src(\pi)$ in \Cref{fig:def-src-pi} since the one of $\tgt(\pi)$ is similar.
\begin{figure}
  \centering
  \[
    \hss
    \begin{gathered}
      \src\left(
        \scalebox{0.7}{$
          \inferrule{\incomment{(x_1:A_1,\ldots,x_n:A_n)\ctxtwf\quad} 
            (\itmrefinfj{\phi}{\alpha}{\alpha'}{A_i})
          }{
            \itmfj{\ldots, \itmrefinfj[x_i]{(\unit{\set{1}},\seq{\phi})}{\seq{\alpha}}{\seq{\alpha'}}{A_i}, \ldots }%
            {}{x_i }{\phi }{\alpha }{\alpha' }{A_i}}
          $}
      \right) =
      \scalebox{0.7}{$
        \inferrule{\incomment{(x_1:A_1,\ldots,x_n:A_n)\ctxtwf\quad} 
          (\itrefinfj{\alpha}{A_i}) }{
          \itfj{\ldots, \itrefinfj[x_i]{\seq{\alpha}}{A_i}, \ldots}{x_1:A_1,\ldots,x_n:A_n}{x_i}{\alpha}{A_i}}
        $}
      \\
      \src\left(
        \scalebox{0.7}{$
          \inferrule{\inferrule{\pi_1}{\itmfj{\Xi}{}{M}{(\widetilde{\phi} \synlinto \psi)}{(\vec{\alpha} \synlinto \beta)}{(\vec{\alpha}' \synlinto \beta')}{A \synto B}}%
            \quad%
            \inferrule{\pi_2}{\itmfj{\Xi'}{}{N}{\widetilde{\phi}}{\vec{\alpha}}{\vec{\alpha}'}{A}}}%
          {\itmfj{\Xi \rmctxtplus\Xi' }{}{M\,N}{\psi }{\beta }{\beta' }{B}}
          $}
      \right) =
      \scalebox{0.7}{$
        \inferrule{\inferrule{\src(\pi_1)}{\itfj{\dom(\Xi)}{}{M}{(\vec{\alpha} \synlinto \beta)}{A \synto B}}%
          \quad%
          \inferrule{\src(\pi_2)}{\itfj{\dom(\Xi')}{\Gamma}{N}{\vec{\alpha}}{A}}}%
        {\itfj{\dom(\Xi \rmctxtplus\Xi' )}{\Gamma}{M\,N}{\beta }{B}}
        $}
      \\
      \src\left(
        \scalebox{.70}{$
          \inferrule{\inferrule{\pi'}{\itmfj{\Xi,\itmrefinfj[x]{\widetilde{\phi}}{\vec{\alpha}}{\vec{\alpha}'}{A}}{}{M}{\psi}{\beta}{\beta'}{B}}}%
          {\itmfj{\Xi}{}{\lambda x.\,M}{(\widetilde{\phi} \synlinto \psi)}{(\vec{\alpha} \synlinto \beta)}{(\vec{\alpha}' \synlinto \beta')}{A \synto B}}
          $}
      \right) =
      \scalebox{.70}{$
        \inferrule{\inferrule{\src(\pi')}{\itfj{\dom(\Xi),\itrefinfj[x]{\vec{\alpha}}{A}}{\Gamma,x:A}{M}{\beta}{B}}}%
        {\itfj{\dom(\Xi)}{\Gamma}{\lambda x.\,M}{(\vec{\alpha} \synlinto \beta)}{A \synto B}}
        $}
      \\
      \src\left(
        \scalebox{.70}{$
          \inferrule{n\in \N\quad \sigma \in \Sgroup_n\quad \forall i \in
            \set{1,\ldots,n},\;\inferrule*{\pi_i}{\itmfj{\Xi_i}{}{M}{\phi_i}{\alpha_i}{\alpha'_i}{A}}}%
          {\itmfj{\plact{\sigma}{(\Xi_i)_{1 \le i\le n}}
            }{}{M}{(\sigma,\seq{\phi_1,\ldots,\phi_n})
            }{\seq{\alpha_1,\ldots,\alpha_n}
            }{\seq{\alpha'_{\finv\sigma(1)},\ldots,\alpha'_{\finv\sigma(n)}}}{A}}
          $}
      \right)
      =
      \scalebox{.70}{$
        \inferrule{%
          \text{$\forall i \in \set{1,\ldots,k}$,}\quad%
          \inferrule*{\src(\pi_i)}{\itfj{\dom(\Xi_i)}{\Gamma}{M}{\alpha_i}{A}}%
        }%
        {\itfj{\dom(\plact{\sigma}{(\Xi_i)_{1 \le i\le n}})}{\Gamma}{M}{\seq{\alpha_1,\ldots,\alpha_k}}{A}}
        $}
    \end{gathered}
  \]
  \caption{The definition of $\src(\pi)$}
  \label{fig:def-src-pi}
\end{figure}
The correction of this definition relies on the following easy compatibility
property between resource contexts and the $\dom$ operation:
\begin{proposition}
  Let $\Gamma$ be a context. We have:
  \begin{enumerate}[label=(\alph*),ref=(\alph*)]
    \item given two resource morphism contexts $\itmor{\Xi},\itmor{\Xi'} \refin
      \Gamma$, we have $\dom(\Xi \ctxtplus \Xi') = \dom(\Xi)
      \ctxtplus \dom(\Xi')$;
      
    \item given resource morphism contexts $\itmor{\Xi_1},\ldots,\itmor{\Xi_n}
      \refin \Gamma$ and $\Sgroup_n$, we have
      \[
        \dom(\plact{\sigma}{(\Xi_i)_i}) = \dom(\Xi_1) \ctxtplus \cdots \ctxtplus \dom(\Xi_n)
      \]
  \end{enumerate}
  and similarly for $\cod$.
\end{proposition}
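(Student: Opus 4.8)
The plan is to observe that both identities follow by simply unfolding the relevant definitions: the source $\dom(\itmor{\Xi})$ and target $\cod(\itmor{\Xi})$ of a resource morphism context are read off componentwise from its defining refinement, and both $\ctxtplus$ and $\plact{\sigma}{(-)}$ are defined variable-by-variable in terms of the corresponding operations on multilinear intersection type morphisms. Throughout, I rely on the well-definedness of $\dom$ and $\cod$, that is, the uniqueness of source and target recorded in \Cref{lem:unique-src-tgt-for-itmorphs}, applied componentwise. No induction or nontrivial computation is needed.

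For part~(a), recall that $\itmor{\Xi} \ctxtplus \itmor{\Xi'}$ is the componentwise concatenation, and that on a single binding the concatenation of two multilinear morphisms $(\sigma_1,\vec{\phi})$ and $(\sigma_2,\vec{\phi}')$ is $(\sigma_1 \permplus \sigma_2, \vec{\phi}\seqplus\vec{\phi}')$. The block form of $\sigma_1 \permplus \sigma_2$ keeps the first and second groups of indices separate, so the refinement rule for multilinear morphisms gives source $\dom((\sigma_1,\vec{\phi})) \seqplus \dom((\sigma_2,\vec{\phi}'))$ and target $\cod((\sigma_1,\vec{\phi})) \seqplus \cod((\sigma_2,\vec{\phi}'))$. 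Reading this off at each variable $x_i$ of $\Gamma$ yields $\dom(\itmor{\Xi} \ctxtplus \itmor{\Xi'}) = \dom(\itmor{\Xi}) \ctxtplus \dom(\itmor{\Xi'})$ --- which is exactly the refinement already used to define the concatenation of morphism contexts --- and symmetrically for $\cod$.

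For part~(b), I would unfold the definition of $\plact{\sigma}{(-)}$: on a family $(\widetilde{\phi}_j)_{1 \le j \le n}$, the underlying sequence of component morphisms is the plain concatenation in the order $j = 1, \dots, n$, so its source is the plain concatenation $\vec{\alpha}_1 \seqplus \cdots \seqplus \vec{\alpha}_n$ of the individual sources, \emph{independently of $\sigma$}; the permutation only reorders the blocks in the target, producing $\vec{\alpha}'_{\finv\sigma(1)} \seqplus \cdots \seqplus \vec{\alpha}'_{\finv\sigma(n)}$. Taking this componentwise over the variables of $\Gamma$ gives precisely the claimed $\dom(\plact{\sigma}{(\Xi_j)_j}) = \dom(\Xi_1) \ctxtplus \cdots \ctxtplus \dom(\Xi_n)$. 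The $\cod$ counterpart retains the permutation, reading $\cod(\plact{\sigma}{(\Xi_j)_j}) = \cod(\Xi_{\finv\sigma(1)}) \ctxtplus \cdots \ctxtplus \cod(\Xi_{\finv\sigma(n)})$, again immediate from the defining refinement. The only point demanding attention --- and the sole realistic source of error, rather than a genuine obstacle --- is keeping the bookkeeping of $\sigma$ versus $\finv\sigma$ straight in the target blocks of part~(b).
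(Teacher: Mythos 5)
Your proof is correct and takes essentially the same route as the paper, whose entire proof is ``by direct computation'': you simply unfold the componentwise definitions of $\ctxtplus$ and $\plact{\sigma}{(-)}$ and read off sources and targets from the defining refinements, which is the intended argument. Your handling of the only delicate point --- that $\dom$ of $\plact{\sigma}{(-)}$ is the plain concatenation independently of $\sigma$, while $\cod$ picks up the block permutation $\finv\sigma$ --- matches the paper's definitions exactly.
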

\begin{proof}
  By direct computation.
\end{proof}

We can now start the definition of $\itgpd M$. Its objects are the derivations
$\pi$ of $\itfj{\Theta}{\Gamma}{M}{\alpha}{A}$ and its morphisms between two
objects
\[
  \pi_s\co \itfj{\Theta}{\Gamma}{M}{\alpha}{A}
  \qqand
  \pi_t\co \itfj{\Theta'}{\Gamma}{M}{\alpha'}{A}
\]
are the derivations $\pi$ of $\itmfj{\Xi}{\Gamma}{M}{\phi}{\alpha}{\alpha'}{A}$
such that $\src(\pi) = \pi_s$ and $\tgt(\pi) = \pi_t$. Given two composable
morphisms $\pi_1 \co \itmfj{\Xi_1}{\Gamma}{M}{\phi_1}{\alpha}{\alpha'}{A}$ and 
$\pi_2 \co \itmfj{\Xi_2}{\Gamma}{M}{\phi_2}{\alpha'}{\alpha''}{A}$, their
composition $\pi_2 \circ \pi_1$ using the rules of
\Cref{fig:def-derivation-composition}.
\begin{figure}
  \centering
  \[
    \hss
    \begin{gathered}
      \scalebox{0.7}{$
        \inferrule{\incomment{(x_1:A_1,\ldots,x_n:A_n)\ctxtwf\quad} 
          (\itmrefinfj{\phi_2}{\alpha'}{\alpha''}{A_i})
        }{
          \itmfj{\ldots, \itmrefinfj[x_i]{(\unit{\set{1}},\seq{\phi_2})}{\seq{\alpha'}}{\seq{\alpha''}}{A_i}, \ldots }%
          {}{x_i }{\phi_2}{\alpha'}{\alpha''}{A_i}}
        $}
      \circ
      \scalebox{0.7}{$
        \inferrule{\incomment{(x_1:A_1,\ldots,x_n:A_n)\ctxtwf\quad} 
          (\itmrefinfj{\phi_1}{\alpha}{\alpha'}{A_i})
        }{
          \itmfj{\ldots, \itmrefinfj[x_i]{(\unit{\set{1}},\seq{\phi_1})}{\seq{\alpha}}{\seq{\alpha'}}{A_i}, \ldots }%
          {}{x_i }{\phi_1 }{\alpha }{\alpha' }{A_i}}
        $}
      \\
      =
      \scalebox{0.7}{$
        \inferrule{\incomment{(x_1:A_1,\ldots,x_n:A_n)\ctxtwf\quad} 
          (\itmrefinfj{\phi_2 \circ \phi_1}{\alpha}{\alpha''}{A_i})
        }{
          \itmfj{\ldots, \itmrefinfj[x_i]{(\unit{\set{1}},\seq{\phi_2 \circ \phi_1})}{\seq{\alpha}}{\seq{\alpha''}}{A_i}, \ldots }%
          {}{x_i}{\phi_2 \circ \phi_1}{\alpha}{\alpha''}{A_i}}
        $}
      \\
      \scalebox{0.7}{$
        \inferrule{\inferrule{\pi_{2,1}}{\itmfj{\Xi_2}{}{M}{(\widetilde{\phi}_2 \synlinto \psi_2)}{(\vec{\alpha}' \synlinto \beta')}{(\vec{\alpha}'' \synlinto \beta'')}{A \synto B}}%
          \quad%
          \inferrule{\pi_{2,2}}{\itmfj{\Xi'_2}{}{N}{\widetilde{\phi}_2}{\vec{\alpha}'}{\vec{\alpha}''}{A}}}%
        {\itmfj{\Xi_2 \rmctxtplus\Xi'_2}{}{M\,N}{\psi_2}{\beta'}{\beta''}{B}}
        $}
      \circ
      \scalebox{0.7}{$
        \inferrule{\inferrule{\pi_{1,1}}{\itmfj{\Xi_1}{}{M}{(\widetilde{\phi}_1 \synlinto \psi_1)}{(\vec{\alpha} \synlinto \beta)}{(\vec{\alpha}' \synlinto \beta')}{A \synto B}}%
          \quad%
          \inferrule{\pi_{1,2}}{\itmfj{\Xi'_1}{}{N}{\widetilde{\phi}_1}{\vec{\alpha}}{\vec{\alpha}'}{A}}}%
        {\itmfj{\Xi_1 \rmctxtplus\Xi'_1}{}{M\,N}{\psi_1}{\beta }{\beta' }{B}}
        $}
      \\
      =
      \scalebox{0.7}{$
        \inferrule{\inferrule{\pi_{2,1}\circ\pi_{1,1}}{\itmfj{\Xi_2 \circ \Xi_1}{}{M}{(\widetilde{\phi}_2 \synlinto \psi_2)\circ(\widetilde{\phi}_1 \synlinto \psi_1)}{(\vec{\alpha} \synlinto \beta)}{(\vec{\alpha}'' \synlinto \beta'')}{A \synto B}}%
          \quad%
          \inferrule{\pi_{2,2}\circ\pi_{1,2}}{\itmfj{\Xi'_2\circ\Xi'_1}{}{N}{\widetilde{\phi}_2\circ\widetilde{\phi}_1}{\vec{\alpha}}{\vec{\alpha}''}{A}}}%
        {\itmfj{(\Xi_2 \rmctxtplus\Xi'_2)\circ(\Xi_1 \rmctxtplus\Xi'_1)}{}{M\,N}{\psi_2\circ\psi_1}{\beta}{\beta''}{B}}
        $}
      \\
        \scalebox{.70}{$
          \inferrule{\sigma_2 \in \Sgroup_n\quad \forall i \in
            \set{1,\ldots,n},\;\inferrule*{\pi_{2,i}}{\itmfj{\Xi_{2,i}}{}{M}{\phi_{2,i}}{\alpha'_{\finv\sigma_1(i)}}{\alpha''_{\finv\sigma_1(i)}}{A}}}%
          {\itmfj{\plact{\sigma_2}{(\Xi_{2,i})_{1 \le i\le n}}
            }{}{M}{(\sigma_2,\seq{\phi_{2,1},\ldots,\phi_{2,n}})
            }{\seq{\alpha'_{\finv\sigma_1(1)},\ldots,\alpha'_{\finv\sigma_1(n)}}
            }{\seq{\alpha''_{\finv\sigma_1(\finv\sigma_2(1))},\ldots,\alpha''_{\finv\sigma_1(\finv\sigma_2(n))}}}{A}}
          $}
        \circ
        \scalebox{.70}{$
          \inferrule{n\in \N\quad \sigma_1 \in \Sgroup_n\quad \forall i \in
            \set{1,\ldots,n},\;\inferrule*{\pi_{1,i}}{\itmfj{\Xi_{1,i}}{}{M}{\phi_{1,i}}{\alpha_i}{\alpha'_i}{A}}}%
          {\itmfj{\plact{\sigma_1}{(\Xi_{1,i})_{1 \le i\le n}}
            }{}{M}{(\sigma_1,\seq{\phi_{1,1},\ldots,\phi_{1,n}})
            }{\seq{\alpha_1,\ldots,\alpha_n}
            }{\seq{\alpha'_{\finv\sigma_1(1)},\ldots,\alpha'_{\finv\sigma_1(n)}}}{A}}
          $}
      \\
      =
      \scalebox{.70}{$
        \inferrule{\forall i \in
          \set{1,\ldots,n},\;\inferrule*{\pi_{2,\sigma_1(i)}\circ \pi_{1,i}}{\itmfj{\Xi_{2,\sigma_1(i)}\circ\Xi_{1,i}}{}{M}{\phi_{2,\sigma_1(i)}\circ\phi_{1,i}}{\alpha_i}{\alpha''_i}{A}}}%
        {\itmfj{(\plact{\sigma_2}{(\Xi_{2,i})_{1 \le i\le n}}) \circ (\plact{\sigma_1}{(\Xi_{1,i})_{1 \le i\le n}})
          }{}{M}{(\sigma_2,\seq{\phi_{2,1},\ldots,\phi_{2,n}}) \circ (\sigma_1,\seq{\phi_{1,1},\ldots,\phi_{1,n}})
          }{\seq{\alpha_1,\ldots,\alpha_n}
          }{\seq{\alpha''_{\finv\sigma_1(\finv\sigma_2(1))},\ldots,\alpha''_{\finv\sigma_1(\finv\sigma_2(n))}}}{A}}
        $}
      \\
      \scalebox{.70}{$
        \inferrule{\inferrule{\pi'_2}{\itmfj{\Xi_2,\itmrefinfj[x]{\widetilde{\phi}_2}{\vec{\alpha}'}{\vec{\alpha}''}{A}}{}{M}{\psi_2}{\beta'}{\beta''}{B}}}%
        {\itmfj{\Xi_2}{}{\lambda x.\,M}{(\widetilde{\phi}_2 \synlinto \psi_2)}{(\vec{\alpha}' \synlinto \beta')}{(\vec{\alpha}'' \synlinto \beta'')}{A \synto B}}
        $}
      \circ
      \scalebox{.70}{$
        \inferrule{\inferrule{\pi'_1}{\itmfj{\Xi_1,\itmrefinfj[x]{\widetilde{\phi}_1}{\vec{\alpha}}{\vec{\alpha}'}{A}}{}{M}{\psi_1}{\beta}{\beta'}{B}}}%
        {\itmfj{\Xi_1}{}{\lambda x.\,M}{(\widetilde{\phi}_1 \synlinto \psi_1)}{(\vec{\alpha} \synlinto \beta)}{(\vec{\alpha}' \synlinto \beta')}{A \synto B}}
        $}
      \\
      =
      \scalebox{.70}{$
        \inferrule{\inferrule{\pi'_2 \circ \pi'_1}{\itmfj{\Xi_2 \circ
              \Xi_1,\itmrefinfj[x]{\widetilde{\phi}_2
                \circ\widetilde{\phi}_1}{\vec{\alpha}}{\vec{\alpha}''}{A}}{}{M}{\psi_2
            \circ \psi_1}{\beta}{\beta''}{B}}}%
        {\itmfj{\Xi_2 \circ \Xi_1}{}{\lambda x.\,M}{(\widetilde{\phi}_2
            \synlinto \psi_2) \circ (\widetilde{\phi}_1 \synlinto \psi_1)}{(\vec{\alpha} \synlinto \beta)}{(\vec{\alpha}'' \synlinto \beta'')}{A \synto B}}
        $}
    \end{gathered}
    \hss
  \]
  \caption{The definition of composition of intersection type morphism
    derivations}
  \label{fig:def-derivation-composition}
\end{figure}
Note that these are the only required rules, since, when $\pi_1$ and $\pi_2$ are
composable, they \eq{have the same shape}, because they are derivations for the
same term $M$, and the fact that $\tgt(\pi_1) = \src(\pi_2)$ allows one to infer
other constraints. The rules produce derivations of the adequate type since we
have:
\begin{proposition}
  The followings hold:
  \begin{enumerate}[label=(\alph*),ref=(\alph*)]
    \item given composable $\Xi_1,\Xi_2$ and composable $\Xi'_1,\Xi'_2$ that all
      refine a context $\Gamma$, we have
      \[
        (\Xi_2 \ctxtplus \Xi'_2) \circ (\Xi_1 \ctxtplus \Xi'_1) = (\Xi_2 \circ \Xi_1) \ctxtplus (\Xi'_2 \circ \Xi'_1);
      \]
    \item given $n \in \N$ and $\sigma_1,\sigma_2 \in \Sgroup_n$ and resource
      morphism contexts $\Xi_{1,1},\ldots,\Xi_{1,n}$ and
      $\Xi_{2,1},\ldots,\Xi_{2,n}$ such that they all refine a context $\Gamma$
      and $\Xi_{1,i}$ is composable with $\Xi_{2,\sigma_1(i)}$ for every $i \in
      \set{1,\ldots,n}$, we have
      \[
        (\plact{\sigma_2}{(\Xi_{2,i})_{1 \le i\le n}}) \circ (\plact{\sigma_1}{(\Xi_{1,i})_{1 \le i\le n}})
        =
        \plact{(\sigma_2 \circ \sigma_1)}{(\Xi_{2,\sigma(i)} \circ \Xi_{1,i})_{1 \le i\le n}}
        \zbox.
      \]
  \end{enumerate}
\end{proposition}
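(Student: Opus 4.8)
The plan is to reduce both identities to purely local statements about multilinear intersection type morphisms, and then to isolate a single combinatorial fact about block permutations as the genuine content.

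First I would use that composition $\circ$, concatenation $\ctxtplus$, and the operation $\plact{\sigma}{(-)}$ on resource morphism contexts are all defined variable by variable (as recalled earlier in this appendix). Consequently it suffices to establish the two identities after restricting to a single binding, i.e. to prove the corresponding statements for multilinear intersection type morphisms: namely
\[
  (\widetilde\phi_2 \seqplus \widetilde\psi_2) \circ (\widetilde\phi_1 \seqplus \widetilde\psi_1)
  = (\widetilde\phi_2 \circ \widetilde\phi_1) \seqplus (\widetilde\psi_2 \circ \widetilde\psi_1)
\]
for (a), and
\[
  (\plact{\sigma_2}{(\widetilde\phi_{2,i})_i}) \circ (\plact{\sigma_1}{(\widetilde\phi_{1,i})_i})
  = \plact{(\sigma_2\circ\sigma_1)}{(\widetilde\phi_{2,\sigma_1(i)} \circ \widetilde\phi_{1,i})_i}
\]
for (b). Both reductions are immediate once one checks that the componentwise composability hypotheses are inherited.

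Each multilinear morphism carries a permutation part in $\permcat$ and a sequence-of-morphisms part, and since $\circ$, $\seqplus$ and $\plact{}{}$ all act separately on these two layers, I would treat the two layers independently. For (a) the permutation layer reduces to bifunctoriality of the monoidal product $\permplus$ on $\permcat$, namely $(\sigma_2\permplus\sigma_2')\circ(\sigma_1\permplus\sigma_1') = (\sigma_2\circ\sigma_1)\permplus(\sigma_2'\circ\sigma_1')$, while the morphism layer is a direct re-indexing: unfolding the composition formula $\seq{\phi'_{\sigma(p)}\circ\phi_p}_p$ along the flat index $p$ of the concatenated sequences shows the two sides agree block by block. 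This part is routine bookkeeping.

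The real content is the permutation identity underlying (b), the interchange (or substitution) law for block permutations: writing $\plact{\sigma}{(\tau_i)_i} = \sigma_{\vec n,\sigma}\circ(\permplus_i \tau_i)$ as in the appendix, one must show $\plact{\sigma_2}{(\tau_{2,i})_i}\circ\plact{\sigma_1}{(\tau_{1,i})_i} = \plact{(\sigma_2\circ\sigma_1)}{(\tau_{2,\sigma_1(i)}\circ\tau_{1,i})_i}$. Conceptually this is the naturality of the $k$-ary symmetry $\sigma_{\vec n,\sigma}$ with respect to the within-block permutations, combined with bifunctoriality of $\permplus$: after the first block reshuffle by $\sigma_1$, the original block $i$ sits in position $\sigma_1(i)$, so it is subsequently acted on by $\tau_{2,\sigma_1(i)}$, which is exactly why the composability hypothesis pairs $\Xi_{1,i}$ with $\Xi_{2,\sigma_1(i)}$ and why the $\sigma_1(i)$ indexing appears on the right. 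I expect this naturality computation -- chasing the explicit formula for $\sigma_{\vec n,\sigma}$ through the composite -- to be the main obstacle; it is a sibling of the associativity law \Cref{lem:plact-functorial} and can either be derived from the strict symmetric monoidal structure of $\permcat$ or verified directly from the definition of $\sigma_{\vec n,\sigma}$. Once the permutation layer is settled, the morphism layer of (b) follows by the same flat-index bookkeeping as in (a): the $\tau_{1,i}$ inside $\plact{\sigma_1}{(-)}$ aligns $\phi_{1,i,j}$ with $\phi_{2,\sigma_1(i),\tau_{1,i}(j)}$, matching the sequence produced on the right-hand side.
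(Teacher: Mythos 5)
Your proposal is correct and takes essentially the same route as the paper: the paper disposes of this proposition with ``by direct computation,'' and your argument is exactly that computation, properly organized (reduction to a single binding since all three operations are componentwise, then the permutation layer versus the morphism layer, with the block-permutation interchange law $\plact{\sigma_2}{(\tau_{2,i})_i}\circ\plact{\sigma_1}{(\tau_{1,i})_i} = \plact{(\sigma_2\circ\sigma_1)}{(\tau_{2,\sigma_1(i)}\circ\tau_{1,i})_i}$ correctly isolated as the only substantive point, distinct from \Cref{lem:plact-functorial}). Note also that the $\sigma_1(i)$-indexing you derive is the intended reading of the statement, whose right-hand side contains a typo ($\sigma$ for $\sigma_1$).
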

\begin{proof}
  By direct computation.
\end{proof}
By a similar inductive definition, we can define the identity $\id_\pi$ of a
derivation $\pi$ of a judgement $\itfj{\Theta}{\Gamma}{M}{\alpha}{A}$. Moreover,
following the definition of inverses for intersection type morphisms, we can
define the inverse $\finv\pi$ of a derivation $\pi$ of a judgement
$\itmfj{\Xi}{\Gamma}{M}{\phi}{\alpha}{\alpha'}{A}$. It is then routine to check that
\begin{proposition}
  The above operations equip $\itgpd M$ with a structure of groupoid.
\end{proposition}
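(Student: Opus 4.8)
The plan is to verify the three groupoid axioms for $\itgpd M$ --- associativity of composition, the unit laws for the identities $\id_\pi$, and the inverse laws --- each by structural induction on the derivation(s) involved. The guiding observation, already noted just above the statement, is that composition, identity and inverse on morphism derivations were all defined inductively so as to propagate, componentwise, exactly the corresponding operations on the underlying groupoids $\itgpd A$, $\itgpdoc A$ of intersection-type morphisms and $\itgpd \Gamma$ of resource contexts. Consequently each groupoid axiom for $\itgpd M$ reduces, case by case, to the very same axiom already established for those groupoids.

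First I would check associativity. Given three composable morphism derivations, I proceed by induction following the rules of \Cref{fig:def-derivation-composition}. Since composable derivations ``have the same shape'', the case analysis is driven by the last rule of each derivation. In the variable, application and abstraction cases the composite is built by applying the composition operations of $\itgpd A$, $\itgpdoc A$ and $\itgpd \Gamma$ to the immediate components, so associativity follows at once from associativity in those groupoids together with the induction hypothesis on the sub-derivations. The unit and inverse laws are handled analogously: since $\id_\pi$ and $\finv\pi$ are defined by propagating the identities and inverses of the underlying groupoids through the derivation, the laws $\id_{\tgt(\pi)} \circ \pi = \pi$, $\pi \circ \id_{\src(\pi)} = \pi$, $\finv\pi \circ \pi = \id_{\src(\pi)}$ and $\pi \circ \finv\pi = \id_{\tgt(\pi)}$ each reduce inductively to the corresponding identities in $\itgpd A$, $\itgpdoc A$ and $\itgpd \Gamma$.

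The one case requiring care --- and the main obstacle --- is promotion, i.e.\ the last rule of \Cref{fig:def-derivation-composition}, where a permutation $\sigma$ is introduced and the resource contexts are rearranged by the operation $\plact{(-)}{(-)}$. There, composition multiplies the permutations as $\sigma_2 \circ \sigma_1$ while simultaneously reindexing the sub-derivations, and the well-typedness of the composite derivation already relies on the compatibility of $\plact{(-)}{(-)}$ with componentwise composition established in the preceding proposition. For associativity in this case I would invoke the functoriality of $\plact{(-)}{(-)}$ recorded in \Cref{lem:plact-functorial}: this coherence law, relating nested applications of $\plact{(-)}{(-)}$ to a single one, is precisely what matches the two regroupings arising when associating a triple of promotion derivations. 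The unit and inverse laws in the promotion case follow similarly, from the fact that $\plact{(-)}{(-)}$ with the identity permutation is plain concatenation and that $\finv\sigma$ reindexes $\plact{\sigma}{(-)}$ back, both immediate from the definitions.

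Having thus checked that the operations are well-typed --- via the preceding proposition --- and that they satisfy associativity together with the unit and inverse laws, I conclude that $\itgpd M$ carries a groupoid structure. The whole argument is a careful but routine bookkeeping induction; its only genuinely non-formal ingredient is the permutation coherence of \Cref{lem:plact-functorial}, used in the promotion case.
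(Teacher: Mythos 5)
Your overall strategy is precisely the check that the paper declares ``routine'' and omits: since composition, identities and inverses of morphism derivations are defined by propagating the corresponding operations of $\itgpd A$, $\itgpdoc{A}$ and $\itgpd\Gamma$ componentwise through the derivation, each groupoid axiom for $\itgpd M$ reduces, by induction on derivations (using that composable derivations have the same shape), to the same axiom in those groupoids. On that level your proposal is the intended argument and is sound.

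There is, however, one misattributed justification worth fixing: \Cref{lem:plact-functorial} is \emph{not} what closes the promotion case. That lemma is a coherence law for \emph{nested} applications of $\plact{(-)}{(-)}$ (a permutation acting on a family of permuted families); in the paper it serves to prove associativity of the multiplication $\mu$ of the $\Sym$ monad, and its hypotheses do not even match the situation here, since composing two or three promotion derivations never produces nested placts --- each promotion rule introduces a single plact, and composition combines placts \emph{at the same level}. The ingredient that actually does the work is part (b) of the (unnumbered) compatibility proposition stated right after \Cref{fig:def-derivation-composition}, namely $(\plact{\sigma_2}{(\Xi_{2,i})_{i}}) \circ (\plact{\sigma_1}{(\Xi_{1,i})_{i}}) = \plact{(\sigma_2 \circ \sigma_1)}{(\Xi_{2,\sigma_1(i)} \circ \Xi_{1,i})_{i}}$. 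Applying it twice to each side of the associativity equation rewrites both sides to a single plact along $\sigma_3 \circ \sigma_2 \circ \sigma_1$ whose entries are triple composites $\Xi_{3,\sigma_2\sigma_1(i)} \circ \Xi_{2,\sigma_1(i)} \circ \Xi_{1,i}$, bracketed two different ways; these agree by associativity in $\itgpd\Gamma$, while the derivation components agree by the induction hypothesis and the permutation parts by associativity in $\Sgroup_n$. Since you already cite this compatibility proposition for well-typedness of the composite, the repair costs nothing --- but as written, the key step of your promotion case rests on a lemma that does not apply to it.
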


\subsection{The correspondence interpretation/derivation correspondence}

\begin{proof}[Proof of \Cref{thm:isom-terminterp-itgpd}]%
  In fact, we prove the following stronger statement:

  {\it Let $\Gamma$ be a well-typed context. Given a derivation of $\Gamma
    \judge M : A$, there is a canonical morphism of groupoid $K_M \co
    \terminterp M \to \itgpd M$ (\resp $K^\oc_M \co \terminterpoc M \to \itgpdoc M$)
    making the squares of the following diagram commute:
    \begin{equation}
      \label{eq:thm:charact-terminterp:cd}
      \begin{tikzcd}
        \ctxtinterp\Gamma
        \ar[d,"K_\Gamma"']
        &
        \terminterp M
        \ar[l,"{\stdisp[\altidisp{\ctxtinterp\Gamma}{l}]{\terminterp M}}"']
        \ar[r,"{\stdisp[\altidisp{\typeinterp A}{r}]{\terminterp M}}"]
        \ar[d,"K_M"]
        &
        \typeinterp A
        \ar[d,"K_A"]
        \\
        \itgpd\Gamma
        &
        \itgpd M
        \ar[l,"{\stdisp[l]{M}}"]
        \ar[r,"{\stdisp[r]{M}}"']
        &
        \itgpd A
      \end{tikzcd}
      \qquad
      \left(
        \text{\resp}
        \begin{tikzcd}
          \ctxtinterp\Gamma
          \ar[d,"K_\Gamma"']
          &
          \terminterpoc M
          \ar[l,"{\stdisp[l]{\terminterpoc M}}"']
          \ar[r,"{\stdisp[r]{\terminterpoc M}}"]
          \ar[d,"K^\oc_M"]
          &
          \oc \typeinterp {A}
          \ar[d,"K^\oc_{A}"]
          \\
          \itgpd\Gamma
          &
          \itgpdoc M
          \ar[l,"{\stdispoc[l]{M}}"]
          \ar[r,"{\stdispoc[r]{M}}"']
          &
          \itgpdoc {A}
        \end{tikzcd}
      \right)
      \zbox.
    \end{equation}
  }

  We prove it by induction on a derivation of $\Gamma \judge M : A$:
  \begin{itemize}
  \item in the case of the variable typing rule $x_1: A_1,\ldots,x_n: A_n \judge
    x_i : A_i$, $K_M$ is defined as the functor sending $u \co a \to a'
    \in \typeinterp A = \terminterp {x_i}$ to the unique derivation of
    \[
      \begin{split}
        &\hskip-3em (\itmrefinfj[x_1]{\seq{} }{\seq{} }{\seq{} }{A_1},\ldots,
          \itmrefinfj[x_i]{(\unit{\set{1}},\seq{K_{A_i}(u)})}{\seq{K_{A_i}(a)}}{\seq{K_{A_i}(a')}}{A_i},\\
        &\ldots,\itmrefinfj[x_n]{\seq{} }{\seq{} }{\seq{}}{A_n}) \refin \Gamma
          \judgel x : \itmor{K_{A_i}(u)} \itsco \itype{K_{A_i}(a)} \itsto
          \itype{\seq{K_{A_i}(a')}} \refin A_i
          \zbox.
      \end{split}
    \]
    It is immediate that the squares of~\eqref{eq:thm:charact-terminterp:cd}
    commute for this definition;
  \item in the case of an application $\Gamma \judge M\, N: B$, we
    have that there exists a unique simple type~$A$ such that $\Gamma
    \judge N : A$, so that the typing of the application $M\, N$ is constructed
    from two derivations $\Gamma \judge M : A \synto B$ and $\Gamma \judge N :
    A$. We then have by the rules for intersection type morphism judgements that
    $\itgpd {M\,N}$ is the pullback
    \[
      \begin{tikzcd}
        \itgpd {M\,N}
        \ar[rr,"P_N",dashed]
        \ar[d,"P_M"',dashed]
        \phar[rd,very near start,"\drcorner"]
        &
        &
        \itgpdoc{N}
        \ar[d,"{\stdisp[r]{N}}"]
        \\
        \itgpd M
        \ar[r,"{\stdisp[r]M}"']
        &
        \itgpd{A \synto B}
        \ar[r,"P^{A\synto B}_A"']
        &
        \itgpdoc A
        \zbox.
      \end{tikzcd}
    \]
    where $P^{A\synto B}_A$ is the functor projecting a derivation of
    $\itmrefinfj{\widetilde{\phi} \synlinto
      \psi}{\vec\alpha\linto\beta}{\vec\alpha'\linto\beta}{A \synto B}$ to the
    associated derivation
    $\itmrefinfj{\widetilde{\phi}}{\vec\alpha}{\vec\alpha'}{A}$. Similarly, by
    considering again the definition of the groupoid $\terminterp{M\,N}$, we see
    that it can be alternatively expressed as the pullback
    \[
      \begin{tikzcd}
        \terminterp {M\,N}
        \ar[rr,"P'_N",dashed]
        \ar[d,"P'_M"',dashed]
        \phar[rd,very near start,"\drcorner"]
        &&
        \terminterp{N}^{\oc}
        \ar[d,"{\stdisp[r]{\terminterpoc{N}}}"]
        \\
        \terminterp M
        \ar[r,"{\stdisp[r]{\terminterp M}}"']
        &
        \oc\typeinterp{A} \times \typeinterp B
        \ar[r,"{\pl}"']
        &
        \oc \typeinterp A
        \zbox.
      \end{tikzcd}
    \]
    Then, using $K^\oc_A$ and the inductively defined $K_M$, $K^\oc_N$, we build
    an isomorphism between the underlying cospans of these pullbacks, so that we
    get a factorizing isomorphism $K_{M\,N} \co \terminterp {M\,N} \to
    \itgpd{M\,N}$. Concerning the commutativity condition we have the diagram
    \[
      \begin{tikzcd}
        \ctxtinterp \Gamma
        \ar[d,"K_{\Gamma}"']
        &
        \ctxtinterp \Gamma \times \ctxtinterp \Gamma
        \ar[l,"(\ctxtplus)"']
        \ar[d,"K_{\Gamma}\times K_{\Gamma}"{description}]
        &
        \terminterp M \times \terminterpoc N
        \ar[l,"{\stdisp[l]{\terminterp M} \times \stdisp[l]{\terminterpoc N}}"']
        \ar[d,"K_{M}\times K_N"{description}]
        &
        \terminterp {M\,N}
        \ar[l,"{(P'_M,P'_N)}"']
        \ar[d,"K_{M\,N}"]
        \\
        \itgpd \Gamma
        &
        \itgpd \Gamma \times \itgpd \Gamma
        \ar[l,"(\ctxtplus)"]
        &
        \itgpd M \times \itgpdoc N
        \ar[l,"{\stdisp[l]{M}\times\stdispoc[l]{N}}"]
        &
        \itgpd {M\,N}
        \ar[l,"{(P_M,P_N)}"]
      \end{tikzcd}
    \]
    where each rectangle commutes and the top row is precisely
    $\stdisp[l]{\terminterp{M\,N}}$ and the bottom row $\stdisp[l]{M\,N}$.

    On the side of $B$, we have the diagram
    \[
      \begin{tikzcd}
        \terminterp{M\,N}
        \ar[r,"P'_M"]
        \ar[d,"K_{M\,N}"']
        &
        \terminterp M
        \ar[r,"{\stdisp[r]{\terminterp{M}}}"]
        \ar[d,"K_{M}"{description}]
        &
        \oc \typeinterp {A} \times \typeinterp{B}
        \ar[r,"\pr"]
        \ar[d,"K_{A\synto B}"{description}]
        &
        \typeinterp {B}
        \ar[d,"K_{B}"]
        \\
        \itgpd{M\,N}
        \ar[r,"P_M"']
        &
        \itgpd{M}
        \ar[r,"{\stdisp[r]{M}}"']
        &
        \itgpd {A \synto B}
        \ar[r,"P^{A\synto B}_B"']
        &
        \itgpd B
      \end{tikzcd}
    \]
    where $P^{A\synto B}_B$ is defined like $P^{A\synto B}_A$, where every
    rectangle commutes, where the top row is $\stdisp[r]{\terminterp{M\,N}}$ and
    the bottom row is $\stdisp[r]{M\,N}$. Which concludes the proof of the
    commutativity conditions;
    
  \item in the case of a $\lambda$-abstraction $\Gamma\judge \lambda x.M: A
    \synto B$, we get by induction the commutative diagram
    \[
      \begin{tikzcd}
        \ctxtinterp{\Gamma,x:A}
        \ar[d,"K_{(\Gamma,x:A)}"']
        &
        \terminterp M
        \ar[l,"{\stdisp[l]{\terminterp M}}"']
        \ar[r,"{\stdisp[r]{\terminterp M}}"]
        \ar[d,"K_{M}"{description}]
        &
        \typeinterp B
        \ar[d,"K_{B}"]
        \\
        \itgpd{\Gamma,x:A}
        &
        \itgpd M
        \ar[l,"{\stdisp[l]{M}}"]
        \ar[r,"{\stdisp[r]{M}}"']
        &
        \itgpd B
      \end{tikzcd}
    \]
    so that, using the isomorphisms $\ctxtinterp{\Gamma,x:A} \cong
    \ctxtinterp\Gamma \times \oc \typeinterp A$, $\itgpd{\Gamma,x:A} \cong \itgpd
    \Gamma \times \itgpdoc A$, $\oc \typeinterp A \times \typeinterp B \cong
    \typeinterp{A \synto B}$ and $\itgpdoc A \times \itgpd B \cong \itgpd {A
      \synto B}$, we are able to get a similar commutative diagram for $\lambda
    x. M$;

  \item finally, we define the multilinear interpretation of a judgement $\Gamma
    \judge M : A$ from the above cases: we have a commutative diagram
    \[
      \begin{tikzcd}
        \ctxtinterp \Gamma
        \ar[d,equals]
        &
        &
        \terminterpoc M
        \ar[ll,"{\stdisp[l]{\terminterpoc M}}"']
        \ar[r,"{\stdisp[r]{\terminterpoc M}}"]
        \ar[d,equals]
        &
        \oc \typeinterp A
        \ar[d,equals]
        \\
        \ctxtinterp \Gamma
        \ar[d,"K_\Gamma"{description}]
        &
        \oc \ctxtinterp \Gamma
        \ar[l,"\ctxtmu_\Gamma"']
        \ar[d,"\oc K_\Gamma"{description}]
        &
        \oc \terminterp {M}
        \ar[l,"{\oc\stdisp[l]{\terminterp{M}}}"']
        \ar[r,"{\oc\stdisp[r]{\terminterp{M}}}"]
        \ar[d,"\oc K_{M}"{description}]
        &
        \oc \typeinterp A
        \ar[d,"\oc K_A"{description}]
        \\
        \itgpd \Gamma
        \ar[d,equals]
        &
        \oc \itgpd \Gamma
        \ar[l,"{\itsctxtmu_\Gamma}"']
        &
        \oc \itgpd {M}
        \ar[l,"{\oc\stdisp[l]{M}}"']
        \ar[r,"{\oc\stdisp[r]{M}}"]
        \ar[d,"\sim"{description}]
        &
        \oc \itgpd A
        \ar[d,"\sim"{description}]
        \\
        \itgpd \Gamma
        &
        &
        \itgpdoc {M}
        \ar[ll,"{\stdispoc[l]{M}}"']
        \ar[r,"{\stdispoc[r]{M}}"]
        &
        \itgpdoc {A}
      \end{tikzcd}
    \]
    where the morphism $\oc \itgpd {M} \xto\sim \itgpdoc {M}$ is basically the
    multilinear introduction rule---a sequence $\seq{\pi_i}_{1\le i \le n}$ of
    derivations $\pi_i$ of $\itfj{\Theta_i}{\Gamma}{M}{\alpha_i}{A}$ is mapped
    to the derivation of $\itfj{\Theta_1 \ctxtplus \cdots \ctxtplus
      \Theta_n}{\Gamma}{M}{\seq{\alpha_1,\ldots,\alpha_n}}{A}$ and similarly for
    sequences of morphism derivations---and the morphism $\oc \itgpd A \xto\sim
    \itgpdoc{A}$ is similarly the multilinear refinement introduction rule, and
    $\itsctxtmu$ is the functor mapping a sequence of resource contexts
    $\seq{\Theta_1,\ldots,\Theta_n} \in \Ob(\oc\itgpd\Gamma)$ to $\Theta_1
    \rctxtplus \ldots \rctxtplus \Theta_n$ and a morphism
    $(\sigma,\seq{\Xi_i}_{1 \le i \le n}) \in \Ar(\itgpd\Gamma)$ between two
    sequences $\seq{\Theta_1,\ldots,\Theta_n}$ and
    $\seq{\Theta'_1,\ldots,\Theta'_n}$ to $\rmcplact{\sigma}{(\Xi_i)_{1 \le i
        \le n}}$. It is quite immediate to check that every rectangle commutes.
    \qedhere
  \end{itemize}
\end{proof}

\section{Postponed proofs for relational collapses}

\label{app:aux_collapse}

\subsection{Functoriality of the collapse to $\Rel$}
\label{subsec:funct_rel}

We give the postponed proof of the following result:

\makeatletter
\@ifundefined{functrel}{}{\functrel*}%
\makeatother
\begin{proof}
It is obvious that the identity span $A \ot A \to A$ is sent to the
identity relation on $\syms{A}$. For functoriality, it is obvious by
definition that $\syms{T\odot S} \subseteq \syms{T} \circ \syms{S}$ --
but the other direction is not, since composition in $\Thin$ is more
constrained than in $\Rel$.

So consider $(\ca, \cb) \in \syms{S}$ and $(\cb, \cc) \in \syms{T}$.
By
definition, there are $s \in S$ such that $\ca = \class{s_A}$ and $\cb
= \class{s_B}$, and $t \in T$ such that $\cb = \class{t_B}$ and $\cc =
\class{t_C}$. Since $\class{s_B} = \class{t_B}$ those two are
symmetric, but they might not be equal, meaning that the pair $(s, t)$
may not be a valid element of $T \odot S$. However, by Lemma
\ref{lem:comp_upto_sym}
there must be $\varphi^S \in S[s, s']$ and $\varphi^T \in T[t, t']$
such that $s'_B = t'_B$ and we can now form $(s', t') \in T\odot S$
with $\class{(s', t')_A} = s'_A = \ca$ and $\class{(s', t')_C} =
\class{t'_C} = \cc$ as required. 
\end{proof}

\subsection{Bijection for the quantitative collapse}
\label{app:bijquant}

\makeatletter
\@ifundefined{witbij}{}{\witbij*}%
\makeatother
\begin{proof}
From $(\theta_A^-, s, \theta_B^+)$ and $(\Omega_B^-, t, \Omega_B^+)$,
we can apply Lemma \ref{lem:comp_upto_sym} and compose $s$ and $t$ via
$\Omega_B^- \circ \theta_B^+$, giving us unique $\omega^S, \nu^T$
such that the big rectangle commutes, $\omega^S_A$ negative and
$\nu^T_C$ positive. We get $\Theta$ as either path around the
rectangle, and $\psi^-_A, \psi^+_C$ by composition. Reciprocally, from
$(\psi_A^-, s', \Theta, t', \psi_C^+)$ we obtain uniquely the remaining
data by Proposition \ref{prop:act_strat}. 
\end{proof}

\section{Seely functors and their Kleisli lifting}
\label{app:seely_lifting}

Here we include a few folkore results that we required regarding the
adequate definition of morphisms between Seely categories, along with
the fact that they admit a lifting to cartesian closed functors between
the Kleisli categories.

\begin{definition}\label{def:seely_functor}
Consider $\catC$ and $\catD$ two Seely categories. 

A \textbf{Seely functor} $F : \catC \to \catD$ is a functor,
additionally equipped with isomorphisms 
\[
\begin{array}{rcrcl}
t^\oc_A &:& \oc F A &\to& F \oc A\\
t^\tensor_{A, B} &:& F A \tensor FB &\to& F(A\tensor B)\\
t^\with_{A, B} &:& FA \with FB &\to & F(A \with B)\\
t^\lin_{A, B} &:& FA \lin FB &\to& F(A \lin B) 
\end{array}
\]
such that $t^\oc_A$ is natural in $A$ and $t^\tensor_{A, B}$ is natural
in $A$ and $B$, and subject to the following coherence conditions:
\[
\xymatrix{
&\oc FA	\ar[rr]^{t^\oc_A}
	\ar[dl]_{\delta^\catD_{FA}}&&
F\oc A	\ar[dr]^{F \delta^\catC_A}\\
\oc \oc FA
	\ar[rr]_{\oc t^\oc_A}&&
\oc F \oc A
	\ar[rr]_{t^{\oc}_{\oc A}}&&
F\oc \oc A
}
\xymatrix{
\oc FA	\ar[rr]^{t^\oc_A}
	\ar[dr]_{\epsilon^\catD_{FA}}&&
F\oc A	\ar[dl]^{F \epsilon^\catC_A}\\
&FA
}
\]
\[
\xymatrix@R=20pt@C=20pt{
&\oc FA \tensor \oc F B
	\ar[dl]_{\see^\catD_{FA, FB}}
	\ar[dr]^{t^\oc_A \tensor t^\oc_B}\\
\oc (FA \with FB)
	\ar[d]_{\oc t^\with_{A, B}}&&
F\oc A \tensor F\oc B
	\ar[d]^{t^\tensor_{\oc A, \oc B}}\\
\oc F(A\with B)
	\ar[dr]_{t^\oc_{A \with B}}&&
F(\oc A \tensor \oc B)
	\ar[dl]^{F \see^\catC_{A, B}}\\
&F\oc (A \with B)
}
\xymatrix{
&FA \with FB
	\ar[dl]_{\pi^\catD_1}
	\ar[dd]|{t^\with_{A, B}}
	\ar[dr]^{\pi^\catD_2}\\
FA && FB\\
&F(A \with B)
	\ar[ul]^{F \pi^\catC_1}
	\ar[ur]_{F \pi^\catC_2}
}
\]
\[
\xymatrix{
(FA \lin FB) \tensor FA
	\ar[r]^{t^{\lin}_{A, B} \tensor FA}
	\ar[dr]_{\evm^\catD_{FA, FB}}&
F(A\lin B) \tensor FA
	\ar[r]^{t^\tensor_{A\lin B, A}}&
F((A \lin B) \tensor A)
	\ar[dl]^{F\evm^\catC_{A, B}}\\
&FB
}
\]
\end{definition}

The main interest of those is that they lift to cartesian closed
functors between the Kleisli categories:

\begin{theorem}
Consider $\catC$ and $\catD$ two Seely categories, and $F : \catC \to
\catD$ a Seely functor.

Then, defining $F_\oc(A) = F(A)$ on objects and $F_\oc(f) = Ff \circ
t^\oc_A$ for $f \in \catC[\oc A, B]$, we get
\[
F_\oc : \catC_\oc \to \catD_\oc
\]
a cartesian closed functor.
\end{theorem}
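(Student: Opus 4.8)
The plan is to show first that $F_\oc$ is a functor and then that it preserves finite products and exponentials. Recall that in $\catC_\oc$ the identity on $A$ is $\epsilon^\catC_A$ and that composition sends $f \in \catC[\oc A, B]$ and $g \in \catC[\oc B, C]$ to $g \bullet f = g \circ \oc f \circ \delta^\catC_A$. Preservation of identities is immediate: $F_\oc(\epsilon^\catC_A) = F(\epsilon^\catC_A) \circ t^\oc_A = \epsilon^\catD_{FA}$ is exactly the counit triangle of \Cref{def:seely_functor}. For composition, I would expand $F_\oc(g \bullet f) = Fg \circ F\oc f \circ F\delta^\catC_A \circ t^\oc_A$ and $F_\oc(g) \bullet F_\oc(f) = Fg \circ t^\oc_B \circ \oc Ff \circ \oc t^\oc_A \circ \delta^\catD_{FA}$; after cancelling the common factor $Fg$, the remaining identity follows by rewriting $F\delta^\catC_A \circ t^\oc_A$ with the comultiplication coherence diagram of \Cref{def:seely_functor} and then commuting $F\oc f$ across $t^\oc_{\oc A}$ by naturality of $t^\oc$.

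The cartesian closed part rests on the canonical functor $\iota \colon \catD \to \catD_\oc$ sending $h$ to $h \circ \epsilon^\catD$: using the comonad laws and naturality of $\epsilon^\catD$ one verifies in one line that $\iota$ is a functor which sends isomorphisms to isomorphisms. Since $\catC_\oc$ has finite products given by $\with$, with projections $\pi^\catC_i \circ \epsilon^\catC$, the counit triangle yields $F_\oc(\pi^\catC_i \circ \epsilon^\catC) = \iota(F\pi^\catC_i)$. I would then take as product comparison the isomorphism $\iota(t^\with_{A, B}) \colon FA \with FB \to F(A \with B)$ in $\catD_\oc$; functoriality of $\iota$ gives $F_\oc(\pi^{\catC_\oc}_i) \bullet \iota(t^\with_{A, B}) = \iota(F\pi^\catC_i \circ t^\with_{A, B}) = \iota(\pi^\catD_i) = \pi^{\catD_\oc}_i$ by the $\with$-coherence triangle. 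As this is an isomorphism transporting the projection cone of $FA \with FB$ onto that of $F(A \with B)$, the latter is a product of $FA$ and $FB$ in $\catD_\oc$, so $F_\oc$ preserves binary products; the terminal object is handled by the nullary instance of the same argument.

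Finally, the internal hom of $A$ and $B$ in $\catC_\oc$ is $\oc A \lin B$, so I must build an isomorphism $\oc FA \lin FB \cong F(\oc A \lin B)$ in $\catD_\oc$ compatible with evaluation. In $\catD$ this is the composite $t^\lin_{\oc A, B} \circ ((t^\oc_A)^{-1} \lin FB)$ of structural isomorphisms, which $\iota$ carries to an isomorphism of $\catD_\oc$. The main obstacle is checking compatibility with the Kleisli evaluation $\evm^{\catC_\oc} \colon (\oc A \lin B) \with A \to B$: unfolding it through the Seely isomorphism $\oc((\oc A \lin B) \with A) \iso \oc(\oc A \lin B) \tensor \oc A$ down to the linear $\evm^\catC$, one must verify that $F_\oc$ of this map agrees, via the product comparison above together with the exponential comparison, with $\evm^{\catD_\oc}$. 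This is a lengthy but purely diagrammatic pasting of the evaluation coherence diagram, the naturality of $t^\tensor$, the Seely coherence diagram and the counit triangle of \Cref{def:seely_functor}. Once evaluation is preserved, preservation of currying, and hence of the exponential adjunction, follows from uniqueness of transposes, which completes the proof that $F_\oc$ is cartesian closed.
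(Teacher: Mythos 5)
Your proposal is correct and follows essentially the same route as the paper: your comparison maps $\iota(t^\with_{A,B})$ and $\iota\bigl(t^\lin_{\oc A,B} \circ ((t^\oc_A)^{-1} \lin FB)\bigr)$ are exactly the paper's $k^\with_{A,B}$ and $k^\tto_{A,B}$ (the images under the canonical functor $\catD \to \catD_\oc$ of composites of structural isomorphisms), and both arguments reduce canonicity to the same deferred diagram chase for the projection and evaluation coherences. The only addition is that you verify functoriality of $F_\oc$ explicitly via the counit and comultiplication coherences, a step the paper leaves implicit.
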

\begin{proof}
We must show that products and arrows are preserved up to (canonical)
isomorphism. For that, we construct the following morphisms in $\catD$:
\[
\begin{array}{rclcrcl}
k^\with_{A, B} &=& t^\with_{A, B} \circ \epsilon^\catD_{FA \with FB}
&:& \oc (FA \with FB) &\to& F(A\with B)\\
k^\tto_{A, B} &=& t^\lin_{\oc A, B} \circ ((t^\oc_A)^{-1} \lin B) \circ
\epsilon^\catD_{\oc FA \lin FB}
&:& \oc (\oc FA \lin FB)&\to& F(\oc A \lin B)\,,
\end{array}
\]
which we regard as $k^\with_{A, B} \in \catD_{\oc}[FA \with FB,
F(A\with B)]$ and $k^\tto_{A, B} \in \catD_{\oc}[FA \tto FB, F(A\tto
B)]$ where $A \tto B = \oc A \lin B$. By construction those are
isomorphisms, and to show canonicity we must prove (in $\catD_\oc$) the
diagrams corresponding to the last two diagrams of Definition
\ref{def:seely_functor}, which is a lengthy diagram chase.
\end{proof}


\else
\fi

\end{document}


\section{Introduction}
ACM's consolidated article template, introduced in 2017, provides a
consistent \LaTeX\ style for use across ACM publications, and
incorporates accessibility and metadata-extraction functionality
necessary for future Digital Library endeavors. Numerous ACM and
SIG-specific \LaTeX\ templates have been examined, and their unique
features incorporated into this single new template.

If you are new to publishing with ACM, this document is a valuable
guide to the process of preparing your work for publication. If you
have published with ACM before, this document provides insight and
instruction into more recent changes to the article template.

The ``\verb|acmart|'' document class can be used to prepare articles
for any ACM publication --- conference or journal, and for any stage
of publication, from review to final ``camera-ready'' copy, to the
author's own version, with {\itshape very} few changes to the source.

\section{Template Overview}
As noted in the introduction, the ``\verb|acmart|'' document class can
be used to prepare many different kinds of documentation --- a
double-anonymous initial submission of a full-length technical paper, a
two-page SIGGRAPH Emerging Technologies abstract, a ``camera-ready''
journal article, a SIGCHI Extended Abstract, and more --- all by
selecting the appropriate {\itshape template style} and {\itshape
  template parameters}.

This document will explain the major features of the document
class. For further information, the {\itshape \LaTeX\ User's Guide} is
available from
\url{https://www.acm.org/publications/proceedings-template}.

\subsection{Template Styles}

The primary parameter given to the ``\verb|acmart|'' document class is
the {\itshape template style} which corresponds to the kind of publication
or SIG publishing the work. This parameter is enclosed in square
brackets and is a part of the {\verb|documentclass|} command:
\begin{verbatim}
  \documentclass[STYLE]{acmart}
\end{verbatim}

Journals use one of three template styles. All but three ACM journals
use the {\verb|acmsmall|} template style:
\begin{itemize}
\item {\texttt{acmsmall}}: The default journal template style.
\item {\texttt{acmlarge}}: Used by JOCCH and TAP.
\item {\texttt{acmtog}}: Used by TOG.
\end{itemize}

The majority of conference proceedings documentation will use the {\verb|acmconf|} template style.
\begin{itemize}
\item {\texttt{acmconf}}: The default proceedings template style.
\item{\texttt{sigchi}}: Used for SIGCHI conference articles.
\item{\texttt{sigplan}}: Used for SIGPLAN conference articles.
\end{itemize}

\subsection{Template Parameters}

In addition to specifying the {\itshape template style} to be used in
formatting your work, there are a number of {\itshape template parameters}
which modify some part of the applied template style. A complete list
of these parameters can be found in the {\itshape \LaTeX\ User's Guide.}

Frequently-used parameters, or combinations of parameters, include:
\begin{itemize}
\item {\texttt{anonymous,review}}: Suitable for a ``double-anonymous''
  conference submission. Anonymizes the work and includes line
  numbers. Use with the \texttt{\acmSubmissionID} command to print the
  submission's unique ID on each page of the work.
\item{\texttt{authorversion}}: Produces a version of the work suitable
  for posting by the author.
\item{\texttt{screen}}: Produces colored hyperlinks.
\end{itemize}

This document uses the following string as the first command in the
source file:
\begin{verbatim}
\documentclass[sigconf]{acmart}
\end{verbatim}

\section{Modifications}

Modifying the template --- including but not limited to: adjusting
margins, typeface sizes, line spacing, paragraph and list definitions,
and the use of the \verb|\vspace| command to manually adjust the
vertical spacing between elements of your work --- is not allowed.

{\bfseries Your document will be returned to you for revision if
  modifications are discovered.}

\section{Typefaces}

The ``\verb|acmart|'' document class requires the use of the
``Libertine'' typeface family. Your \TeX\ installation should include
this set of packages. Please do not substitute other typefaces. The
``\verb|lmodern|'' and ``\verb|ltimes|'' packages should not be used,
as they will override the built-in typeface families.

\section{Title Information}

The title of your work should use capital letters appropriately -
\url{https://capitalizemytitle.com/} has useful rules for
capitalization. Use the {\verb|title|} command to define the title of
your work. If your work has a subtitle, define it with the
{\verb|subtitle|} command.  Do not insert line breaks in your title.

If your title is lengthy, you must define a short version to be used
in the page headers, to prevent overlapping text. The \verb|title|
command has a ``short title'' parameter:
\begin{verbatim}
  \title[short title]{full title}
\end{verbatim}

\section{Authors and Affiliations}

Each author must be defined separately for accurate metadata
identification.  As an exception, multiple authors may share one
affiliation. Authors' names should not be abbreviated; use full first
names wherever possible. Include authors' e-mail addresses whenever
possible.

Grouping authors' names or e-mail addresses, or providing an ``e-mail
alias,'' as shown below, is not acceptable:
\begin{verbatim}
  \author{Brooke Aster, David Mehldau}
  \email{dave,judy,steve@university.edu}
  \email{firstname.lastname@phillips.org}
\end{verbatim}

The \verb|authornote| and \verb|authornotemark| commands allow a note
to apply to multiple authors --- for example, if the first two authors
of an article contributed equally to the work.

If your author list is lengthy, you must define a shortened version of
the list of authors to be used in the page headers, to prevent
overlapping text. The following command should be placed just after
the last \verb|\author{}| definition:
\begin{verbatim}
  \renewcommand{\shortauthors}{McCartney, et al.}
\end{verbatim}
Omitting this command will force the use of a concatenated list of all
of the authors' names, which may result in overlapping text in the
page headers.

The article template's documentation, available at
\url{https://www.acm.org/publications/proceedings-template}, has a
complete explanation of these commands and tips for their effective
use.

Note that authors' addresses are mandatory for journal articles.

\section{Rights Information}

Authors of any work published by ACM will need to complete a rights
form. Depending on the kind of work, and the rights management choice
made by the author, this may be copyright transfer, permission,
license, or an OA (open access) agreement.

Regardless of the rights management choice, the author will receive a
copy of the completed rights form once it has been submitted. This
form contains \LaTeX\ commands that must be copied into the source
document. When the document source is compiled, these commands and
their parameters add formatted text to several areas of the final
document:
\begin{itemize}
\item the ``ACM Reference Format'' text on the first page.
\item the ``rights management'' text on the first page.
\item the conference information in the page header(s).
\end{itemize}

Rights information is unique to the work; if you are preparing several
works for an event, make sure to use the correct set of commands with
each of the works.

The ACM Reference Format text is required for all articles over one
page in length, and is optional for one-page articles (abstracts).

\section{CCS Concepts and User-Defined Keywords}

Two elements of the ``acmart'' document class provide powerful
taxonomic tools for you to help readers find your work in an online
search.

The ACM Computing Classification System ---
\url{https://www.acm.org/publications/class-2012} --- is a set of
classifiers and concepts that describe the computing
discipline. Authors can select entries from this classification
system, via \url{https://dl.acm.org/ccs/ccs.cfm}, and generate the
commands to be included in the \LaTeX\ source.

User-defined keywords are a comma-separated list of words and phrases
of the authors' choosing, providing a more flexible way of describing
the research being presented.

CCS concepts and user-defined keywords are required for for all
articles over two pages in length, and are optional for one- and
two-page articles (or abstracts).

\section{Sectioning Commands}

Your work should use standard \LaTeX\ sectioning commands:
\verb|section|, \verb|subsection|, \verb|subsubsection|, and
\verb|paragraph|. They should be numbered; do not remove the numbering
from the commands.

Simulating a sectioning command by setting the first word or words of
a paragraph in boldface or italicized text is {\bfseries not allowed.}

\section{Tables}

The ``\verb|acmart|'' document class includes the ``\verb|booktabs|''
package --- \url{https://ctan.org/pkg/booktabs} --- for preparing
high-quality tables.

Table captions are placed {\itshape above} the table.

Because tables cannot be split across pages, the best placement for
them is typically the top of the page nearest their initial cite.  To
ensure this proper ``floating'' placement of tables, use the
environment \textbf{table} to enclose the table's contents and the
table caption.  The contents of the table itself must go in the
\textbf{tabular} environment, to be aligned properly in rows and
columns, with the desired horizontal and vertical rules.  Again,
detailed instructions on \textbf{tabular} material are found in the
\textit{\LaTeX\ User's Guide}.

Immediately following this sentence is the point at which
Table~\ref{tab:freq} is included in the input file; compare the
placement of the table here with the table in the printed output of
this document.

\begin{table}
  \caption{Frequency of Special Characters}
  \label{tab:freq}
  \begin{tabular}{ccl}
    \toprule
    Non-English or Math&Frequency&Comments\\
    \midrule
    \O & 1 in 1,000& For Swedish names\\
    $\pi$ & 1 in 5& Common in math\\
    \$ & 4 in 5 & Used in business\\
    $\Psi^2_1$ & 1 in 40,000& Unexplained usage\\
  \bottomrule
\end{tabular}
\end{table}

To set a wider table, which takes up the whole width of the page's
live area, use the environment \textbf{table*} to enclose the table's
contents and the table caption.  As with a single-column table, this
wide table will ``float'' to a location deemed more
desirable. Immediately following this sentence is the point at which
Table~\ref{tab:commands} is included in the input file; again, it is
instructive to compare the placement of the table here with the table
in the printed output of this document.

\begin{table*}
  \caption{Some Typical Commands}
  \label{tab:commands}
  \begin{tabular}{ccl}
    \toprule
    Command &A Number & Comments\\
    \midrule
    \texttt{{\char'134}author} & 100& Author \\
    \texttt{{\char'134}table}& 300 & For tables\\
    \texttt{{\char'134}table*}& 400& For wider tables\\
    \bottomrule
  \end{tabular}
\end{table*}

Always use midrule to separate table header rows from data rows, and
use it only for this purpose. This enables assistive technologies to
recognise table headers and support their users in navigating tables
more easily.

\section{Math Equations}
You may want to display math equations in three distinct styles:
inline, numbered or non-numbered display.  Each of the three are
discussed in the next sections.

\subsection{Inline (In-text) Equations}
A formula that appears in the running text is called an inline or
in-text formula.  It is produced by the \textbf{math} environment,
which can be invoked with the usual
\texttt{{\char'134}begin\,\ldots{\char'134}end} construction or with
the short form \texttt{\$\,\ldots\$}. You can use any of the symbols
and structures, from $\alpha$ to $\omega$, available in
\LaTeX~\cite{Lamport:LaTeX}; this section will simply show a few
examples of in-text equations in context. Notice how this equation:
\begin{math}
  \lim_{n\rightarrow \infty}x=0
\end{math},
set here in in-line math style, looks slightly different when
set in display style.  (See next section).

\subsection{Display Equations}
A numbered display equation---one set off by vertical space from the
text and centered horizontally---is produced by the \textbf{equation}
environment. An unnumbered display equation is produced by the
\textbf{displaymath} environment.

Again, in either environment, you can use any of the symbols and
structures available in \LaTeX\@; this section will just give a couple
of examples of display equations in context.  First, consider the
equation, shown as an inline equation above:
\begin{equation}
  \lim_{n\rightarrow \infty}x=0
\end{equation}
Notice how it is formatted somewhat differently in
the \textbf{displaymath}
environment.  Now, we'll enter an unnumbered equation:
\begin{displaymath}
  \sum_{i=0}^{\infty} x + 1
\end{displaymath}
and follow it with another numbered equation:
\begin{equation}
  \sum_{i=0}^{\infty}x_i=\int_{0}^{\pi+2} f
\end{equation}
just to demonstrate \LaTeX's able handling of numbering.

\section{Figures}

The ``\verb|figure|'' environment should be used for figures. One or
more images can be placed within a figure. If your figure contains
third-party material, you must clearly identify it as such, as shown
in the example below.

Your figures should contain a caption which describes the figure to
the reader.

Figure captions are placed {\itshape below} the figure.

Every figure should also have a figure description unless it is purely
decorative. These descriptions convey what’s in the image to someone
who cannot see it. They are also used by search engine crawlers for
indexing images, and when images cannot be loaded.

A figure description must be unformatted plain text less than 2000
characters long (including spaces).  {\bfseries Figure descriptions
  should not repeat the figure caption – their purpose is to capture
  important information that is not already provided in the caption or
  the main text of the paper.} For figures that convey important and
complex new information, a short text description may not be
adequate. More complex alternative descriptions can be placed in an
appendix and referenced in a short figure description. For example,
provide a data table capturing the information in a bar chart, or a
structured list representing a graph.  For additional information
regarding how best to write figure descriptions and why doing this is
so important, please see
\url{https://www.acm.org/publications/taps/describing-figures/}.

\subsection{The ``Teaser Figure''}

A ``teaser figure'' is an image, or set of images in one figure, that
are placed after all author and affiliation information, and before
the body of the article, spanning the page. If you wish to have such a
figure in your article, place the command immediately before the
\verb|\maketitle| command:
\begin{verbatim}
  \begin{teaserfigure}
    \includegraphics[width=\textwidth]{sampleteaser}
    \caption{figure caption}
    \Description{figure description}
  \end{teaserfigure}
\end{verbatim}

\section{Citations and Bibliographies}

The use of \BibTeX\ for the preparation and formatting of one's
references is strongly recommended. Authors' names should be complete
--- use full first names (``Donald E. Knuth'') not initials
(``D. E. Knuth'') --- and the salient identifying features of a
reference should be included: title, year, volume, number, pages,
article DOI, etc.

The bibliography is included in your source document with these two
commands, placed just before the \verb|\end{document}| command:
\begin{verbatim}
  \bibliographystyle{ACM-Reference-Format}
  \bibliography{bibfile}
\end{verbatim}
where ``\verb|bibfile|'' is the name, without the ``\verb|.bib|''
suffix, of the \BibTeX\ file.

Citations and references are numbered by default. A small number of
ACM publications have citations and references formatted in the
``author year'' style; for these exceptions, please include this
command in the {\bfseries preamble} (before the command
``\verb|\begin{document}|'') of your \LaTeX\ source:
\begin{verbatim}
  \citestyle{acmauthoryear}
\end{verbatim}

  Some examples.  A paginated journal article \cite{Abril07}, an
  enumerated journal article \cite{Cohen07}, a reference to an entire
  issue \cite{JCohen96}, a monograph (whole book) \cite{Kosiur01}, a
  monograph/whole book in a series (see 2a in spec. document)
  \cite{Harel79}, a divisible-book such as an anthology or compilation
  \cite{Editor00} followed by the same example, however we only output
  the series if the volume number is given \cite{Editor00a} (so
  Editor00a's series should NOT be present since it has no vol. no.),
  a chapter in a divisible book \cite{Spector90}, a chapter in a
  divisible book in a series \cite{Douglass98}, a multi-volume work as
  book \cite{Knuth97}, a couple of articles in a proceedings (of a
  conference, symposium, workshop for example) (paginated proceedings
  article) \cite{Andler79, Hagerup1993}, a proceedings article with
  all possible elements \cite{Smith10}, an example of an enumerated
  proceedings article \cite{VanGundy07}, an informally published work
  \cite{Harel78}, a couple of preprints \cite{Bornmann2019,
    AnzarootPBM14}, a doctoral dissertation \cite{Clarkson85}, a
  master's thesis: \cite{anisi03}, an online document / world wide web
  resource \cite{Thornburg01, Ablamowicz07, Poker06}, a video game
  (Case 1) \cite{Obama08} and (Case 2) \cite{Novak03} and \cite{Lee05}
  and (Case 3) a patent \cite{JoeScientist001}, work accepted for
  publication \cite{rous08}, 'YYYYb'-test for prolific author
  \cite{SaeediMEJ10} and \cite{SaeediJETC10}. Other cites might
  contain 'duplicate' DOI and URLs (some SIAM articles)
  \cite{Kirschmer:2010:AEI:1958016.1958018}. Boris / Barbara Beeton:
  multi-volume works as books \cite{MR781536} and \cite{MR781537}. A
  couple of citations with DOIs:
  \cite{2004:ITE:1009386.1010128,Kirschmer:2010:AEI:1958016.1958018}. Online
  citations: \cite{TUGInstmem, Thornburg01, CTANacmart}.
  Artifacts: \cite{R} and \cite{UMassCitations}.

\section{Acknowledgments}

Identification of funding sources and other support, and thanks to
individuals and groups that assisted in the research and the
preparation of the work should be included in an acknowledgment
section, which is placed just before the reference section in your
document.

This section has a special environment:
\begin{verbatim}
  \begin{acks}
  ...
  \end{acks}
\end{verbatim}
so that the information contained therein can be more easily collected
during the article metadata extraction phase, and to ensure
consistency in the spelling of the section heading.

Authors should not prepare this section as a numbered or unnumbered {\verb|\section|}; please use the ``{\verb|acks|}'' environment.

\section{Appendices}

If your work needs an appendix, add it before the
``\verb|\end{document}|'' command at the conclusion of your source
document.

Start the appendix with the ``\verb|appendix|'' command:
\begin{verbatim}
  \appendix
\end{verbatim}
and note that in the appendix, sections are lettered, not
numbered. This document has two appendices, demonstrating the section
and subsection identification method.

\section{Multi-language papers}

Papers may be written in languages other than English or include
titles, subtitles, keywords and abstracts in different languages (as a
rule, a paper in a language other than English should include an
English title and an English abstract).  Use \verb|language=...| for
every language used in the paper.  The last language indicated is the
main language of the paper.  For example, a French paper with
additional titles and abstracts in English and German may start with
the following command
\begin{verbatim}
\documentclass[sigconf, language=english, language=german,
               language=french]{acmart}
\end{verbatim}

The title, subtitle, keywords and abstract will be typeset in the main
language of the paper.  The commands \verb|\translatedXXX|, \verb|XXX|
begin title, subtitle and keywords, can be used to set these elements
in the other languages.  The environment \verb|translatedabstract| is
used to set the translation of the abstract.  These commands and
environment have a mandatory first argument: the language of the
second argument.  See \verb|sample-sigconf-i13n.tex| file for examples
of their usage.

\section{SIGCHI Extended Abstracts}

The ``\verb|sigchi-a|'' template style (available only in \LaTeX\ and
not in Word) produces a landscape-orientation formatted article, with
a wide left margin. Three environments are available for use with the
``\verb|sigchi-a|'' template style, and produce formatted output in
the margin:
\begin{description}
\item[\texttt{sidebar}:]  Place formatted text in the margin.
\item[\texttt{marginfigure}:] Place a figure in the margin.
\item[\texttt{margintable}:] Place a table in the margin.
\end{description}

\begin{acks}
To Robert, for the bagels and explaining CMYK and color spaces.
\end{acks}

\bibliographystyle{ACM-Reference-Format}
\bibliography{sample-base}

\appendix









\end{document}
\endinput